\documentclass[11pt, reqno]{amsart}

\usepackage{graphics, stackrel}
\usepackage{amsmath, amssymb, amsthm}
\numberwithin{equation}{section}
\usepackage{graphicx}
\usepackage{mathtools}
\usepackage{verbatim}
\usepackage{amsfonts}
\usepackage{natbib}
\usepackage{enumitem}

\usepackage{caption}
\usepackage{subcaption}
\usepackage{placeins}
\usepackage{flafter}
\usepackage{tikz}
\usetikzlibrary{arrows.meta,positioning,fit,calc,backgrounds}

\usepackage{fancyvrb}
\usepackage[dvipsnames]{xcolor}
\usepackage{mdwlist}

\usepackage{enumitem}
\setlist[enumerate]{itemsep=2pt,topsep=3pt}
\setlist[itemize]{itemsep=2pt,topsep=3pt}
\setlist[enumerate,1]{label={\upshape (\roman*)}}

\usepackage{mathrsfs}  
\usepackage{bbm}
\usepackage{bm}        

\usepackage[left=1in, right=1in, top=0.8in, bottom=0.8in, includehead, includefoot]{geometry}
\renewcommand{\leq}{\leqslant}
\renewcommand{\geq}{\geqslant}

\usepackage[ruled, linesnumbered]{algorithm2e}

\usepackage{setspace}
\usepackage{footmisc}

\usepackage[citecolor=Brown, colorlinks=true, linkcolor=blue]{hyperref}

\DeclareMathOperator{\ran}{ran}
\DeclareMathOperator{\diag}{diag}

\newcommand{\sss}{\scriptscriptstyle}

\newcommand{\1}{\mathbf 1}
\newcommand{\id}{\operatorname{id}}
\newcommand{\supp}{\operatorname{supp}}

\newcommand{\tr}{\operatorname{tr}}

\newcommand{\Gs}{\scriptscriptstyle{G}}
\newcommand{\Xs}{\scriptscriptstyle{X}}

\newcommand{\Pis}{\scriptscriptstyle{\Pi}}

\newcommand*\diff{\mathop{}\!\mathrm{d}}
\newcommand*\e{\mathrm{e}}

\newcommand{\cC}{\mathscr C}
\newcommand{\dD}{\mathscr D}
\newcommand{\kK}{\mathscr K}
\newcommand{\kKs}{\!\scriptscriptstyle\mathscr K}
\newcommand{\sS}{\mathscr S}

\newcommand{\fF}{\mathscr F}

\newcommand{\RR}{\mathbb R}
\newcommand{\CC}{\mathbb C}
\newcommand{\QQ}{\mathbb Q}
\newcommand{\NN}{\mathbb N}

\newcommand{\PP}{\mathbb P}
\newcommand{\EE}{\mathbb E}

\newcommand{\Xsf}{\mathsf X}

\newcommand{\Zsf}{\mathsf Z}
\newcommand{\Ecal}{\mathcal E}
\newcommand{\Gcal}{\mathcal G}

\newcommand{\Ccal}{\mathcal C}

\newcommand{\s}{\scriptscriptstyle}
\newcommand{\Lx}{L_{\scriptscriptstyle X}}
\newcommand{\Lp}{L_{\scriptscriptstyle \Pi}}
\newcommand{\LpN}{L_{\scriptscriptstyle \mathcal N}}
\newcommand{\LpD}{L_{\scriptscriptstyle \mathcal D}}
\newcommand{\LpDN}{L_{\scriptscriptstyle \mathcal D\mathcal N}}
\newcommand{\lambdaN}{\lambda_{\scriptscriptstyle\mathcal N}}
\newcommand{\lambdaD}{\lambda_{\scriptscriptstyle\mathcal D}}
\newcommand{\phiN}{\varphi_{\scriptscriptstyle\mathcal N}}
\newcommand{\phiD}{\varphi_{\scriptscriptstyle\mathcal D}}
\newcommand{\PhiD}{\Phi_{\scriptscriptstyle\mathcal D}}
\newcommand{\psiN}{\psi_{\scriptscriptstyle\mathcal N}}
\newcommand{\psiD}{\psi_{\scriptscriptstyle\mathcal D}}
\newcommand{\PsiD}{\Psi_{\scriptscriptstyle\mathcal D}}
\newcommand{\gN}{g_{\scriptscriptstyle\mathcal N}}
\newcommand{\gD}{g_{\scriptscriptstyle\mathcal D}}
\renewcommand{\t}{{\scriptscriptstyle{\top}}}

\renewcommand{\phi}{\varphi}
\renewcommand{\epsilon}{\varepsilon}

\theoremstyle{plain}

\newtheorem{theorem}{Theorem}[section]
\newtheorem{assumption}[theorem]{Assumption}
\newtheorem{corollary}[theorem]{Corollary}
\newtheorem{lemma}[theorem]{Lemma}
\newtheorem{proposition}[theorem]{Proposition}

\theoremstyle{definition}
\newtheorem{definition}{Definition}

\tikzset{
  state/.style={circle,draw,minimum size=7mm,inner sep=1pt,font=\scriptsize},
  classbox/.style={draw,dashed,rounded corners=3pt,inner sep=5pt},
  classnode/.style={draw,rounded corners=3pt,minimum width=20mm,
    minimum height=9mm,align=center,font=\small},
  corridorclass/.style={classnode,very thick,fill=black!10},
  flow/.style={-{Latex[length=2mm]},thick},
  rareflow/.style={-{Latex[length=2mm]},thick,dashed},
  biflow/.style={{Latex[length=1.8mm]}-{Latex[length=1.8mm]},thick}
}

\usepackage[normalem]{ulem}

\begin{document}

\hypersetup{
  pdftitle={Rare States and Long-Run Pricing},
  pdfauthor={Ye Lu and John Stachurski},
  pdfkeywords={Long-Run asset pricing, pricing dominance, pricing corridors,
  reducible and nearly reducible Markov dynamics, continuous-time models, positive
  pricing semigroups, rare transitions, disaster recovery}
}

\begin{center}
  \Large
  Rare States and Long-Run Pricing
    \vspace{1em}

  \large
  Ye Lu\textsuperscript{a}
  and John Stachurski\textsuperscript{b} \par \bigskip

  \small
  \textsuperscript{a} University of Sydney\par
  \textsuperscript{b} National Graduate Institute for Policy Studies
  \medskip

  \normalsize
  September 2026
\end{center}
\bigskip

\normalsize {\sc abstract}: Macroeconomic crises are rare, yet they can have large
effects on asset prices. Because the probability of entering a crisis from normal
times is small relative to the probability of recovery, the dynamics are nearly
reducible: setting the crisis-entry probability to zero makes crisis states
unreachable from normal states. We develop a graph-theoretic and spectral analysis of
long-run pricing under reducible and nearly reducible dynamics. We show that, at the
reducible limit, physical and pricing dynamics induce the same classes of economic
states and accessibility relations, but rank their long-run importance differently:
by recurrence and by class-specific pricing rates. Moreover, a claim's price depends
only on its \emph{pricing corridor}---the classes of states lying on directed paths
from the initial state to states where the payoff is positive---and the highest
class-specific pricing rate within this corridor determines the claim's long-run
pricing rate. When the reducible-limit corridor excludes the globally
pricing-dominant class, we characterize how restoring rare transitions introduces a
contribution from that class with a weight that vanishes as the transitions become
rarer. This contribution becomes dominant only beyond a \emph{crossover
maturity}, which increases with the rarity of the connecting paths and decreases with
the dominant class's pricing-rate advantage. Finally, a consumption-based
disaster-recovery application shows how preferences, consumption risk,
cash-flow exposure, and recovery timing jointly shape the long-run pricing dominance
of disaster states and crossover maturities.

\smallskip

{\sc keywords}: 
Long-run asset pricing; Reducible Markov chains; Rare transitions; Pricing
corridors; Positive pricing semigroups; Crossover horizons; Rare disasters.

\newpage

\section{Introduction}
\label{sec:intro}

Severe macroeconomic disasters, such as those associated with wars,
depressions, sovereign-debt crises, and banking or financial crises, are rare, but
the resulting contractions can last for years before recovery begins.%
\footnote{
  Using twentieth-century international data,
  \citet{barro2006rare} estimates an annual disaster probability of 1.5--2\% and
  documents peak-to-trough declines in per capita GDP of 15--64\%. Using a 10\%
  threshold and data extending back to 1870, \citet{barro2008macroeconomic}
  estimate an average peak-to-trough duration of about 3.5 years. In an estimated
  model of consumption disasters, \citet{nakamura2013crises} find that the
  average disaster reaches its trough after about 6 years and that roughly half
  of the decline is subsequently reversed.
}
These episodes also coincide with marked changes in cash-flow prospects,
uncertainty, and risk compensation, creating pricing conditions distinct from
those prevailing in normal times.%
\footnote{
  \citet{nakamura2013crises} find sharply elevated uncertainty
  about future consumption growth during consumption disasters, with important
  consequences for asset prices. \citet{muir2017financial} finds that risk premia
  rise substantially during financial crises, while
  \citet{krishnamurthy2025credit} document a sharp repricing of credit at crisis
  onset.
}
The pricing importance of adverse states therefore cannot be inferred from their
physical frequency alone. It depends jointly on how rarely they are entered, how long
they persist once reached, and how they affect cash flows and discounting, as well as
on the payoff horizon.

More generally, economic dynamics may feature regions of the state space associated
with distinct economic conditions for which entry intensities are much lower than
exit intensities.
For an economy initially
outside such a region, ignoring rare entry may have little effect on the pricing of
near-term payoffs, yet lead to substantial pricing errors for more distant payoffs.
Two questions naturally arise. Under what conditions can a region that the economy
rarely enters and eventually leaves nevertheless govern how a claim's price grows or
declines as its payoff horizon increases? When this occurs, what determines the
horizon at which the region's influence becomes quantitatively important, and how
does that horizon change as entry becomes rarer?%
\footnote{
  The empirical literature on equity term structures also underscores the importance
  of the maturity dimension. Evidence from dividend strips and estimated equity term
  structures documents maturity-dependent risk premia and discount rates
  \citep{vanbinsbergen2012timing,vanbinsbergen2013equity,giglio2024equity}.
}

Two strands of literature address complementary aspects of these questions.
Rare-disaster models show how low-probability contractions can have large effects on
asset prices
\citep{barro2006rare,gourio2012disaster,gabaix2012variable,wachter2013time}, while
models with disaster recovery highlight how post-disaster dynamics can reshape term
structures \citep{hasler2016disaster,wu2025disaster}. Furthermore,
\citet{martin2012valuation} shows that exceptionally adverse aggregate outcomes can
drive the valuation of a broad class of long-dated assets, even without an explicit
disaster mechanism.

Separately, the operator-semigroup approach provides a general framework for studying
asset prices across maturities and characterizing their long-run behavior. For each
maturity, a pricing operator maps a future payoff function into its current price as
a function of the initial economic state. In a time-homogeneous continuous-time
Markov setting, these operators form a \emph{semigroup}
\citep{garman1985towards,hansen2009long}. 
The pricing semigroup combines physical state transitions with stochastic discounting
and can also incorporate cash-flow growth, thereby providing a unified framework for
analyzing the pricing of risk across investment horizons
\citep{hansen2012pricing,hansen2012dynamic}.
In particular, \citet{hansen2009long} factorize the stochastic discount
factor (SDF) into an exponential term, a positive martingale, and an eigenfunction
ratio. Under suitable recurrence and integrability conditions, the exponential term
determines the long-run pricing rate; see also \citet{qin2016positive}.%
\footnote{
  Related contributions extend long-term factorization to general semimartingale
  environments \citep{qin2017long}, develop recovery results linking state prices
  to physical probabilities and discounting under additional restrictions
  \citep{ross2015recovery}, and provide nonparametric methods for estimating the
  eigenfunction decomposition \citep{christensen2017nonparametric}. For spectral
  conditions for the existence and uniqueness of infinite-horizon asset prices,
  see \citet{borovicka2021stability}.
}

In standard finite-state formulations of the operator-semigroup approach, the state
process is commonly assumed to be \emph{irreducible}, meaning that every state is
accessible from every other. Given a strictly positive SDF process, Perron--Frobenius
theory then yields strictly positive left and right principal eigenvectors common to
the pricing operators. Consequently, every nonzero nonnegative terminal payoff has a
positive loading on the dominant component from every initial state, and all such
payoffs share the same long-run pricing rate.%
\footnote{
  See, for example, \citet[Example~6.1 and Section~7.1]{hansen2009long},
  \citet[Section~5]{hansen2012dynamic}, and \citet[Section~I.B]{borovicka2016misspecified}.
  Appendix~\ref{app:hs} reviews this benchmark in the finite-state irreducible
  setting.
}

The questions raised above lie at the intersection of these
literatures. Rare-disaster and recovery models illuminate the roles of arrival,
persistence, recovery, and risk compensation within particular economic
specifications. The operator-semigroup approach provides a general framework for
pricing across horizons, but its standard irreducible formulations do not by
themselves characterize how a region's contribution to a particular price
changes as access to that region becomes rarer.
An economy can remain irreducible while approaching a
\emph{reducible limit}: letting the intensities of entering into a rare region vanish while
retaining transitions out of the region gives a natural reducible approximation. At
that limit, the dominant eigenvectors need no longer be strictly positive, and the
dominant pricing rate need not govern a given claim's price from a given initial
state. Work on nearly decomposable systems
\citep{simon1961aggregation,ando1963near,courtois1975error,meyer1989stochastic}
provides related tools for studying weakly connected dynamics, but its primary focus
is on aggregation and, in stochastic settings, transition dynamics and stationary
distributions rather than the pricing of state-contingent claims.

This paper provides a unified characterization of how rare access to economically
distinct states shapes asset prices across initial states, claims, and payoff
horizons. We model the economic state as a finite-state, time-homogeneous
continuous-time Markov chain, allowing the dynamics to be reducible, and represent
prices by a positive pricing semigroup. Using a graph-theoretic approach, our
analysis begins with a structural observation: the physical and pricing dynamics
partition the state space into the same \emph{communicating classes} (or simply
\emph{classes})---maximal sets of mutually accessible states---and induce the same
accessibility relations among them. Yet the two dynamics need not attach the same
importance to these classes. Physical dynamics distinguish recurrent from transient
classes, whereas pricing dynamics assign each class its own exponential long-run
pricing rate. Physical transience therefore need not imply pricing irrelevance: a
class with vanishing long-run physical probability (i.e., a class of rare states)
can nevertheless have the highest long-run pricing rate.

Our analysis under reducible dynamics yields exact representations of claim
prices and their long-maturity asymptotics.
In particular, we show that the
class-specific pricing rates relevant for a given claim depend jointly on the initial
state and the payoff, and we summarize this dependence through an object we call the
\emph{pricing corridor}: the collection of classes of states lying on directed paths
from the class containing the initial state to classes on which the payoff is
positive. Restricting the pricing dynamics to this corridor leaves the
claim price unchanged at every maturity. Importantly, when the pricing corridor is
nonempty, the highest class-specific pricing rate within the corridor, which we call
the \emph{corridor rate}, determines the long-run exponential rate of the claim
price. If several classes attain the corridor rate, their accessibility
relations determine the degree of the accompanying polynomial factor.

We next consider rare transitions that restore connections absent at the reducible
limit. Suppose the limiting economy has a uniquely dominant pricing class that lies
outside the claim's pricing corridor and has a higher pricing rate than the corridor
rate. Rare transitions can create paths from the initial state through this class to
the payoff support. We show that the contribution of the dominant class then enters
the claim price with a small coefficient but eventually dominates at sufficiently
long maturities.

More specifically, let $\epsilon$ index the strength of the rare connections, with
$\epsilon=0$ corresponding to the reducible limit and sufficiently small $\epsilon>0$
restoring irreducibility. Using a weighted-path analysis related to
minimum-resistance methods \citep{young1993evolution,kandori1993learning}, we show
that the pricing component associated with the dominant class has a coefficient of
order $\epsilon^d$, where $d>0$ measures the rarity of the connecting paths. Under an
additional two-term spectral separation condition, we define the \emph{crossover
maturity}, $t_\epsilon^c$, as the maturity at which this component equals the
component associated with the reducible-limit corridor rate. Beyond this maturity,
the contribution associated with the dominant class exceeds the corridor-rate
component and ultimately governs the claim's price. We derive an explicit expression
for $t^c_\epsilon$ and show that \[ t_\epsilon^c \sim
\frac{d}{\Delta}\log\frac{1}{\epsilon}, \qquad \epsilon\downarrow0, \] where
$\Delta>0$ is the difference between the dominant class's pricing rate and the
corridor rate. Therefore, holding other quantities fixed, rarer access delays the
crossover, whereas a larger pricing-rate advantage brings it forward.

We illustrate these results in a consumption-based disaster-recovery economy with
CRRA preferences. Without disaster entry, the disaster block does not affect prices
from normal initial states, but it can govern long-maturity prices starting in
disaster even when the claim pays only after recovery. The dominance condition
connects risk aversion, consumption dynamics, cash-flow exposure, and recovery
intensity. Changing cash-flow growth exposure can change which class is pricing
dominant, holding physical dynamics and preferences fixed. Restoring rare entry
gives exact two-state prices and crossover maturities from the normal state.

Under an illustrative parameterization with empirically motivated entry frequencies,
disaster durations, and consumption growth characteristis, we show that the
disaster-associated dominant component accounts for roughly 95\% of the price of a
fixed real payment due in 25 years, starting from the normal state. Greater disaster
persistence can nevertheless delay crossover: it widens the pricing-rate gap but
reduces the dominant-component weight.

The rest of the paper is organized as follows. Section~\ref{sec:environment}
introduces the pricing environment and framework.
Sections~\ref{sec:gs}--\ref{sec:network} establish the global pricing rate and
class-based semigroup asymptotics, and Section~\ref{sec:corridors} derives their
state- and payoff-specific counterparts. Section~\ref{sec:illustration} develops the
disaster-recovery illustration and its consumption-based specification.
Section~\ref{sec:near} studies rare transitions, crossover maturities, and the
continuation of the application with rare disaster entry. The appendices provide the
semimartingale foundation, additional comparative statics, and proofs.

\section{Environment and Pricing Semigroup}
\label{sec:environment}

This section introduces the finite-state continuous-time environment and the pricing
objects used throughout the paper. The economic state follows a finite-state Markov
process, while the underlying information flow may include additional shocks to
stochastic discounting and cash-flow growth. We model the SDF and cash-flow growth
processes as strictly positive multiplicative functionals whose product generates the
pricing semigroup. Appendix~\ref{app:param} provides a semimartingale foundation for
this formulation and derives the pricing generator. We use operator and matrix
representations interchangeably; Appendix~\ref{app:notation} records their
equivalence and collects the associated notation. Throughout, as a terminological
convention, we use \emph{pricing} for payoffs delivered at specified maturities and
\emph{valuation} for infinite-horizon cash-flow streams.

\subsection{Markov environment}

Let $\Xsf=\{x_1,\ldots,x_n\}$ be a finite state space. Consider a filtered measurable
space $(\Omega,\fF,(\fF_t^\circ)_{t\geq0})$ supporting a c\`adl\`ag state process
$X=(X_t)_{t\geq0}$ that is adapted to $(\fF_t^\circ)$ and takes values in $\Xsf$. For
each $x\in\Xsf$, let $\PP_x$ be a probability measure on $(\Omega,\fF)$ such that
$\PP_x(X_0=x)=1$, and write $\EE_x$ for expectation under $\PP_x$.
Fix a full-support probability distribution $\varpi$ on $\Xsf$, and define
$\PP=\sum_x\varpi(x)\PP_x$.
We denote by $(\fF_t)_{t\geq0}$ the usual right-continuous $\PP$-augmentation of
the raw filtration $(\fF_t^\circ)_{t\geq0}$.
%
Let $\fF_\infty^\circ:=\sigma(\bigcup_{t\geq0}\fF_t^\circ)$, the smallest $\sigma$-algebra
containing $\fF_t^\circ$ for every $t\geq0$.
Let $(\theta_t)_{t\geq0}$ be a family of measurable shift maps
$\theta_t:\Omega\to\Omega$ satisfying $\theta_0=\id$,
$\theta_{t+u}=\theta_u\circ\theta_t$, and $X_u\circ\theta_t=X_{t+u}$ for all
$t,u\geq0$. We impose the following Markov condition on the economic environment.

\begin{assumption}[Time-homogeneous Markov environment] \label{ass:markov}
  For every $x\in\Xsf$, $t\geq0$, and bounded $\fF_\infty^\circ$-measurable random variable $Y$,
  \begin{equation}\label{eq:EY}
    \EE_x\left(Y\circ\theta_t\mid\fF_t\right)
    =
    \EE_{\scriptscriptstyle X_t}Y,
    \qquad
    \PP_x\text{-a.s.}
  \end{equation}
\end{assumption}

Assumption~\ref{ass:markov} is stronger than requiring $X$ alone to be a time-homogeneous
Markov process. In particular, taking $Y=f(X_u)$ in \eqref{eq:EY} and using
$X_u\circ\theta_t=X_{t+u}$ shows that, under each $\PP_x$, $X$ is a time-homogeneous
continuous-time Markov chain with respect to $(\fF_t)_{t\geq0}$.

Denote by $\RR^\Xsf$ the vector space of real-valued functions on $\Xsf$ and by
$\RR_+^\Xsf$ its cone of nonnegative functions. Equip $\RR^\Xsf$ with the
supremum norm
\(
  \|f\|_\infty\coloneqq\max_{x\in\Xsf}|f(x)|,
\)
and use $\|\cdot\|_\infty$ also for the induced operator norm. For
$f\in\RR^\Xsf$, the \emph{transition operators} $(P_t)_{t\geq0}$ of $X$ are defined by
\begin{equation} \label{eq:Pt}
  (P_tf)(x) \coloneqq \EE_x f(X_t), \qquad t\geq0.
\end{equation}
Identifying $\RR^\Xsf$ with $\RR^n$, each $P_t$ is represented by an $n\times n$
row-stochastic transition matrix. The family $(P_t)_{t\geq0}$ satisfies $P_0=I$ and
$P_{t+u}=P_tP_u$ for $t,u\geq0$.
Since $X$ is finite-state and c\`adl\`ag, $P_t\to I$ as $t\to0$; hence
$(P_t)_{t\geq0}$ is a continuous matrix semigroup. Its infinitesimal generator, which
we call the \emph{Markov generator} of $X$, is
\begin{equation}\label{eq:Lx}
  \Lx := \lim_{t\,\downarrow\,0}\frac{P_t-I}{t}.
\end{equation}
In particular, $\Lx$ is a Metzler matrix (i.e., a matrix with nonnegative
off-diagonal entries) with zero row sums, and $P_t=\exp(t\Lx)$ for every $t\geq0$.%
\footnote{
  For background on Markov transition semigroups and their infinitesimal generators
  on general state spaces, see, for example,
  \citet[Sections~3.1--3.2]{applebaum2009levy}. For continuous-time finite-state
  Markov chains, the properties of the generator matrix and the transition matrix
  semigroup are provided in \citet[Sections~2.1 and~2.8]{norris1997markov}.
}
We interpret $(\mathbb P_x)_{x\in\mathsf X}$ as the physical probability measures
and accordingly also refer to $\Lx$ as the \emph{physical generator}, to distinguish
it from the pricing generator introduced below.

\subsection{Stochastic discounting and cash-flow growth}

Let $S=(S_t)_{t\geq0}$ denote the stochastic discount factor (SDF) process, and let
$G=(G_t)_{t\geq0}$ denote a cumulative stochastic cash-flow growth process.
The process $S$ prices payoffs: under $\PP_x$, the time-zero price of an
$\fF_t$-measurable payoff $Z_t$ delivered at time $t$ is $\EE_x(S_tZ_t)$, provided
that $\EE_x\lvert S_t Z_t\rvert<\infty$.

To represent stochastic discounting and cash-flow growth in a time-homogeneous
manner, we use multiplicative functionals, which are standard in operator
formulations of long-run valuation; see \cite{hansen2009long,hansen2012pricing}.
A jointly measurable, $(\fF_t^\circ)$-adapted process $M=(M_t)_{t\geq 0}$ is called a
\emph{multiplicative functional} if \(M_0=1\) and
\[
  M_{t+u}=M_t(M_u\circ\theta_t), \qquad t,u\geq0, \qquad \PP\text{-a.s.}
\]
It is \emph{strictly positive} if $M_t>0$, $\PP$-almost surely, for every $t\geq0$.
Appendix~\ref{app:additive} shows how exponential additive functionals driven by
diffusion, state-transition, and non-transition jump shocks generate such processes.

\begin{assumption}[Discounting and growth] \label{ass:mf}
  The SDF process $S$ and the cash-flow growth process $G$ are strictly positive
  multiplicative functionals with c\`adl\`ag sample paths.
  Moreover, $\EE_x(S_tG_t)<\infty$ for $x\in\Xsf$ and $t\geq0$.
\end{assumption}

Appendix~\ref{app:additive} gives a semimartingale parameterization of $S$ and
$G$ in terms of continuous shocks, state-transition jumps, and non-transition
jumps. Such parameterizations can accommodate SDF and cash-flow processes
derived from more specific structural models.%
\footnote{
  See \citet[Sections~3.2--3.5]{hansen2009long} for related constructions and examples.
}

For $0\leq s\leq t$, multiplicativity and strict positivity imply 
\(
  S_t/S_s = S_{t-s}\circ\theta_s
\)
and
\(
  G_t/G_s = G_{t-s}\circ\theta_s,
\)
$\PP$-almost surely.
For \(S\), this identity expresses intertemporal consistency in dynamic
valuation \citep{garman1985towards,duffie1986intertemporal}; for \(G\),
it is the corresponding compounding rule for cumulative cash-flow
growth. Since the product of two multiplicative functionals is itself
multiplicative, the \emph{growth-adjusted SDF} process defined by
\begin{equation*}
  S^{\Gs}_t \coloneqq S_tG_t, \qquad t\geq 0,
\end{equation*}
is also a strictly positive multiplicative functional. 
The moment condition in Assumption~\ref{ass:mf} can therefore be written as
$\EE_xS_t^{\Gs}<\infty$ for $x\in\Xsf$ and $t\geq0$.

\subsection{Pricing semigroup and pricing generator}

A function $g\in\RR^\Xsf$ specifies a state-contingent payoff profile: $g(x)$
determines the terminal payoff associated with state $x$. Thus $G_tg(X_t)$ is the
payoff delivered at time $t$, with $G_t$ capturing cumulative cash-flow growth and
$g(X_t)$ capturing its dependence on the economic state at maturity.

For each $t\geq0$, we call the linear operator
$\Pi_t:\RR^\Xsf\to\RR^\Xsf$, defined by
\begin{equation} \label{eq:Pit}
  (\Pi_tg)(x)
  \coloneqq
  \EE_x \, S_t^{\Gs}g(X_t)
  =
  \EE_x \, S_t G_t g(X_t),
  \qquad x\in\Xsf,\quad g\in\RR^\Xsf,
\end{equation}
the \emph{pricing operator} for maturity-$t$ payoffs. Thus $(\Pi_tg)(x)$ is the
time-zero price of $G_tg(X_t)$, conditional on $X_0=x$.
Assumption~\ref{ass:mf} makes $\Pi_t$ well defined, since
\(
  |(\Pi_tg)(x)|
  \leq
  \|g\|_\infty\,\EE_x S_t^{\Gs}
  <\infty.
\)
Moreover, $S_0^{\Gs}=1$ and $X_0=x$, $\PP_x$-almost surely, imply $\Pi_0=I$. Since
$S_t^{\Gs}>0$, we also have $\Pi_tg\geq0$ whenever $g\geq0$.
Hence $\Pi_t$ is a positive linear operator. Identifying $\RR^\Xsf$ with $\RR^n$, its
representing matrix is entrywise nonnegative.
Moreover, Assumptions~\ref{ass:markov} and \ref{ass:mf} imply the \emph{semigroup} property:
\begin{equation} \label{eq:sgp}
  \Pi_{t+u}=\Pi_t\,\Pi_u, \qquad t,u\geq0.
\end{equation}
Thus $\Pi=(\Pi_t)_{t\geq0}$ is a positive matrix semigroup, which we call the
\emph{pricing semigroup}.%
\footnote{
  This semigroup is the counterpart of the cash-flow-growth semigroup $Q$ in
  \citet[Section~4.4 and Table~I]{hansen2009long}. Hansen and Scheinkman describe
  this object as valuation with stochastic growth. Under our terminology, we call it
  a pricing semigroup because $\Pi_t$ assigns a time-zero price to a payoff delivered
  at maturity $t$. Appendix~\ref{app:mf-semigroup} verifies the semigroup property
  under our assumptions.
}
To obtain the infinitesimal generator of the pricing semigroup, we impose the following
continuity condition.

\begin{assumption} \label{ass:continuity}
  For every $x\in\Xsf$,
  \(
    \lim_{t\to0}\EE_x\,S_tG_t=1.
  \)
\end{assumption}
Together with the strict positivity of $S$ and $G$, Assumption~\ref{ass:continuity}
and the c\`adl\`ag paths imply that $\Pi_t\to I$ in operator norm as $t\to0$;
Appendix~\ref{app:mf-semigroup} gives the short proof. Hence $\Pi$ is a continuous
matrix semigroup. Its infinitesimal generator, which we call the \emph{pricing
generator}, is 
\begin{equation} \label{eq:Lp}
  \Lp := \lim_{t\,\downarrow\,0}\frac{\Pi_t-I}{t},
\end{equation}
where the limit is taken in operator norm. Finite-dimensional semigroup theory then
gives
\[
  \Pi_t=\e^{t\Lp}, \qquad t\geq 0.
\]
Because $\Pi$ is a positive matrix semigroup, its generator $\Lp$ is a
Metzler matrix.%
\footnote{
  For the definition of the infinitesimal generator and the exponential
  representation of a continuous matrix semigroup, see \citet[Proposition~9.4 and
  Definition~9.14]{batkai2017positive}. The equivalence between positivity of the
  matrix semigroup and the Metzler property of its generator is given in their
  Theorem~7.1.
}
Unlike the physical generator $\Lx$, however, $\Lp$ need not have zero row sums,
because it incorporates stochastic discounting and cash-flow growth in addition to
physical state transitions.
Appendix~\ref{app:effective-discounting} derives the entries of $\Lp$ from the
semimartingale specification and gives its Feynman--Kac decomposition into
pricing-adjusted transition dynamics and state-specific effective discounting.

\subsection{Strip prices, yields, and infinite-horizon values}

For $t\geq 0$ and $g\in\RR^\Xsf$, define the \emph{maturity-$t$ strip price}
\begin{equation} \label{eq:strip}
  q_t(x,g) \coloneqq (\Pi_tg)(x) = \bigl(\e^{t\Lp}g\bigr)(x),
  \qquad x\in\Xsf.
\end{equation}
In light of \eqref{eq:Pit}, $q_t(x,g)$ is the time-zero price of the payoff $G_t
g(X_t)$, delivered at time $t$, conditional on $X_0=x$.
For $t>0$, whenever
$q_t(x,g)>0$, define the corresponding continuously compounded strip yield by
\[
  y_t(x,g) \coloneqq -\frac{1}{t}\log q_t(x,g),
  \qquad x\in\Xsf.
\]
Let $\1\in\RR^\Xsf$ denote the function identically equal to one. The special
case $g=\1$ gives
\begin{equation} \label{eq:zc}
  z_t(x) \coloneqq q_t(x,\1) = (\Pi_t\1)(x),
  \quad t\geq0,
  \qquad
  y_t(x) \coloneqq -\frac{1}{t}\log z_t(x),
  \quad t>0.
\end{equation}
Here $z_t(x)$ is the time-zero price of the payoff $G_t$ delivered at date $t$. We
call $z_t$ the \emph{growth-adjusted zero-coupon price} and $y_t$ its continuously
compounded yield. Since $S_t^{\Gs}>0$, $z_t(x)>0$ for every $x\in\Xsf$ and $t\geq0$.

To pass from single-maturity payoffs to infinite-horizon assets%
\footnote{
  Examples include common equity, modeled as a claim to an indefinite
  dividend stream, and perpetuities such as consols, which pay coupons without a
  fixed maturity.
},
fix $g\in\RR_+^\Xsf$ and consider an asset paying the instantaneous cash-flow rate
$G_tg(X_t)$ for $t\geq0$. Its cash flow over an interval $[a,b]$ is
$\int_a^bG_tg(X_t)\,\diff t$. We define the \emph{valuation operator} $V$ for such
assets by
\begin{equation} \label{eq:ihf}
  (Vg)(x)
  \coloneqq
  \EE_x\int_0^\infty
    S_t G_t g(X_t)\,\diff t
  =
  \int_0^\infty q_t(x,g)\,\diff t
  \in[0,\infty],
  \qquad x\in\Xsf.
\end{equation}
The equality follows from Tonelli's theorem. At this stage, $V$ may take the value
$\infty$; under the global valuation stability condition introduced in the next
section, it extends to a finite positive linear operator.


\section{Global Long-Run Pricing Rate and Valuation Stability}\label{sec:gs}

In this section, we identify the global long-run pricing rate and relate its sign to
the finiteness of infinite-horizon values. The starting point is the \emph{spectral
bound} of the pricing generator.
For a square matrix \(A\), let $\sigma(A)$ denote its spectrum, and
\(
  s(A)
  \coloneqq
  \max\{\Re\lambda:\lambda\in\sigma(A)\}
\)
denote its spectral bound. Define 
\[
  \lambda^* \coloneqq s(\Lp).
\]
Since $\Lp$ is Metzler, Lemma~\ref{lem:msb} implies that
\begin{equation} \label{eq:gsv}
  \lambda^*
  \in
  \sigma(\Lp)\cap\RR,
  \qquad
  \Re\mu<\lambda^*
  \text{ for every }\mu\in\sigma(\Lp)\setminus\{\lambda^*\}.
\end{equation}
That is, $\lambda^*$ is a real eigenvalue of $\Lp$ whose real part strictly exceeds
that of every distinct eigenvalue.

If $\Lp$ is irreducible, $\lambda^*$ is the algebraically simple principal eigenvalue
associated with the strictly positive eigenfunction/eigenvector used in the
Hansen--Scheinkman factorization of the growth-adjusted SDF; Appendix~\ref{app:hs}
reviews this case and its implications for long-run pricing. Under reducibility,
$\lambda^*$ need not be algebraically simple or admit strictly positive eigenvectors,
and hence it need not govern every individual price.

To identify the global pricing interpretation of $\lambda^*$ without imposing
irreducibility, define the \emph{statewise zero-coupon price envelope} and its
associated \emph{envelope yield} by
\begin{equation*}
  z_t^{\mathrm{env}}
  \coloneqq
  \max_{x\in\Xsf}z_t(x),
  \quad t\geq0,
  \qquad\ \ 
  y_t^{\mathrm{env}}
  \coloneqq
  -\frac{1}{t}\log z_t^{\mathrm{env}}
  = \min_{x\in\Xsf}y_t(x),
  \quad t>0.
\end{equation*}
The following proposition relates their long-maturity behavior to $\lambda^*$.

\begin{proposition} \label{prop:ge}
  Under Assumptions~\ref{ass:markov}--\ref{ass:continuity},
  \begin{equation*} 
    \lim_{t\to\infty}
    \frac{1}{t}\log z_t^{\mathrm{env}}
    =
    \lambda^*,
    \qquad
    \lim_{t\to\infty}
    y_t^{\mathrm{env}}
    =
    -\lambda^*.
  \end{equation*}
\end{proposition}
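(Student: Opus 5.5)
The plan is to identify $z_t^{\mathrm{env}}$ with the operator norm of $\Pi_t$ and then apply Gelfand's spectral radius formula to the matrix exponential. Since $\Pi_t$ is entrywise nonnegative and $\1$ is the constant function one, the induced sup-norm satisfies
\[
  \|\Pi_t\|_\infty
  =\max_{x\in\Xsf}\sum_{y}(\Pi_t)_{xy}
  =\max_{x\in\Xsf}(\Pi_t\1)(x)
  =z_t^{\mathrm{env}}.
\]
This uses only positivity of $\Pi_t$, which was established from $S_t^{\Gs}>0$. Hence the first claim reduces to showing $\frac1t\log\|\e^{t\Lp}\|_\infty\to s(\Lp)=\lambda^*$. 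The second claim then follows at once, because $y_t^{\mathrm{env}}=-\frac1t\log z_t^{\mathrm{env}}$; note that $z_t^{\mathrm{env}}>0$ since $z_t(x)>0$.

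For the limit I will use the standard finite-dimensional growth bound identity $\omega_0=s(A)$ for $A=\Lp$, arguing via the Jordan form.

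\emph{Lower bound.} By \eqref{eq:gsv}, $\lambda^*$ is an eigenvalue of $\Lp$ with some eigenvector $v\neq0$, so $\e^{t\Lp}v=\e^{t\lambda^*}v$. This gives $\|\e^{t\Lp}\|_\infty\ge \e^{t\lambda^*}$, and therefore $\liminf\frac1t\log z_t^{\mathrm{env}}\ge\lambda^*$.

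\emph{Upper bound.} Write $\Lp=TJT^{-1}$ in Jordan form. Each block of $\e^{tJ}$ with eigenvalue $\mu$ has entries of the form $\e^{t\mu}t^k/k!$ with $k<n$. This yields
\[
  \|\e^{t\Lp}\|_\infty\le C(1+t)^{n-1}\e^{t\lambda^*},\qquad t\ge0,
\]
for a constant $C$ depending on $T$. Taking $\frac1t\log$ and letting $t\to\infty$, the polynomial factor contributes $\frac{(n-1)\log(1+t)+\log C}{t}\to0$, so $\limsup\le\lambda^*$. Alternatively, one can apply Gelfand's formula $r(\e^{\Lp})=\lim_k\|\e^{k\Lp}\|^{1/k}=\e^{\lambda^*}$ along integers and interpolate to $t\in[k,k+1]$ using $\|\e^{t\Lp}\|\le\|\e^{k\Lp}\|\,\sup_{u\in[0,1]}\|\e^{u\Lp}\|$.

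There is no serious obstacle here. The only points needing care are the identity $\|\Pi_t\|_\infty=z_t^{\mathrm{env}}$, which relies on positivity of $\Pi_t$, and making sure the possibly nontrivial Jordan blocks at $\lambda^*$ under reducibility contribute only subexponential polynomial factors. Assumption~\ref{ass:continuity} enters only through $\Pi_t=\e^{t\Lp}$.
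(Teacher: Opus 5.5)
Your proposal is correct and follows essentially the same route as the paper: both identify $z_t^{\mathrm{env}}=\|\Pi_t\1\|_\infty=\|\Pi_t\|_\infty$ using positivity of $\Pi_t$, then invoke $\lim_{t\to\infty}t^{-1}\log\|\e^{t\Lp}\|_\infty=s(\Lp)$. The only difference is that the paper cites this growth-bound identity from Bátkai et al.\ (Proposition~4.7), whereas you prove it directly, with an eigenvector lower bound and a Jordan-form upper bound, which is fine.
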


Proposition~\ref{prop:ge} identifies $\lambda^*$ as the \emph{global long-run
pricing rate}: it is the asymptotic exponential rate of the statewise price
envelope, while $-\lambda^*$ is the corresponding long-run envelope yield.%
\footnote{
  The same rate also governs ex ante zero-coupon prices under any full-support
  distribution of the initial state; see Corollary~\ref{cor:ex-ante-rate} in
  Appendix~\ref{app:gvs}.
}
In fact, the sign of this rate also determines whether every modeled nonnegative
infinite-horizon cash-flow stream has finite value from every initial state. We call
this property \emph{global valuation stability}.

\begin{definition}\label{def:gs}
  We say that \emph{global valuation stability} holds if
  \begin{equation}\label{eq:gvs}
    (Vg)(x) < \infty
    \qquad
    \text{for every } g\in\RR_+^\Xsf
    \text{ and } x\in\Xsf.
  \end{equation}
\end{definition}

Because $0\leq g\leq\|g\|_\infty\1$ for every $g\in\RR_+^\Xsf$ and
$\Pi_t$ is positive,
\(
  0\leq q_t(x,g) \leq \|g\|_\infty z_t(x).
\)
Consequently, global valuation stability holds if and only if
\(
  (V\1)(x)
  =
  \int_0^\infty z_t(x)\diff t
  <\infty
\)
for every $x\in\Xsf$. Thus this finiteness condition can be tested using the
constant function $g=\1$ alone. The following theorem links this condition to
$\lambda^*$, the price envelope, and the yield envelope.

\begin{theorem}[Global valuation stability] \label{thm:gs}
  Under Assumptions~\ref{ass:markov}--\ref{ass:continuity}, the following statements
  are equivalent:
  \begin{enumerate}
    \item[{\rm (G1)}] Global valuation stability holds.
    \item[{\rm (G2)}] The global long-run pricing rate is negative: $\lambda^*<0$.
    \item[{\rm (G3)}] The long-run envelope yield is positive: $\lim_{t\to\infty}y_t^{\mathrm{env}}>0$.
    \item[{\rm (G4)}] The price envelope is below one at some maturity:
      $z_t^{\mathrm{env}}<1$ for some $t>0$.
    \item[{\rm (G5)}] The price envelope vanishes at long maturities:
      $\lim_{t\to\infty}z_t^{\mathrm{env}}=0$.
  \end{enumerate}
  When these conditions hold, the infinite-horizon valuation operator $V$ extends
  uniquely to a positive linear operator on $\RR^\Xsf$, and the operator-valued
  integral
  \(
    \int_0^\infty\Pi_t\,\diff t
  \)
  converges in norm. Moreover, $\Lp$ is invertible and $V$ admits the representation
  \[
    V
    =
    \int_0^\infty \Pi_t\,\diff t
    =
    -\Lp^{-1}.
  \]
\end{theorem}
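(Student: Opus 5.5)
The plan is to route every equivalence through (G2), using Proposition~\ref{prop:ge} and the norm identity for positive matrices. Since each $\Pi_t$ is entrywise nonnegative, its operator norm with respect to $\|\cdot\|_\infty$ is its maximal row sum, so
\[
\|\Pi_t\|_\infty = \max_x(\Pi_t\1)(x) = z_t^{\mathrm{env}}.
\]
By Gelfand's formula applied to the semigroup, $\tfrac1t\log\|\e^{t\Lp}\|_\infty\to s(\Lp)=\lambda^*$. This is exactly what Proposition~\ref{prop:ge} records, and it will be used freely.

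The chain (G2)$\Leftrightarrow$(G3) is immediate, because $\lim_t y^{\mathrm{env}}_t=-\lambda^*$ by Proposition~\ref{prop:ge}. For (G2)$\Rightarrow$(G5), pick $\lambda^*<\eta<0$. The exponential rate limit then gives $z_t^{\mathrm{env}}\leq \e^{\eta t}$ for all large $t$, and this tends to zero. The step (G5)$\Rightarrow$(G4) is trivial. For (G4)$\Rightarrow$(G2), suppose $\|\Pi_{t_0}\|_\infty=z^{\mathrm{env}}_{t_0}<1$. Submultiplicativity gives
\[
z^{\mathrm{env}}_{kt_0}\leq (z^{\mathrm{env}}_{t_0})^k,
\]
so along $t=kt_0$ we get $\tfrac1t\log z_t^{\mathrm{env}}\leq \tfrac1{t_0}\log z^{\mathrm{env}}_{t_0}<0$. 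Since the full limit exists by Proposition~\ref{prop:ge}, it follows that $\lambda^*<0$.

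For (G2)$\Rightarrow$(G1), the bound $z_t(x)\leq z_t^{\mathrm{env}}\leq C\e^{\eta t}$ with $\eta<0$ holds for all $t\ge0$. For small $t$ this uses continuity of $t\mapsto \Pi_t$, which follows from Assumption~\ref{ass:continuity} via the exponential representation. Hence $(V\1)(x)<\infty$, and by the reduction noted before the theorem this is enough. For (G1)$\Rightarrow$(G2), I would argue by contraposition, and this is the step I expect to be the main obstacle, since one must produce a divergent integral from $\lambda^*\ge0$ without knowing positivity of eigenvectors. The plan is to use the resolvent identity for Metzler matrices. If $\int_0^\infty z_t(x)\,\diff t<\infty$ for all $x$, then $\int_0^\infty \Pi_t\,\diff t$ converges entrywise, since $0\le(\Pi_t)_{xy}\le z_t(x)$, and hence in norm. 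Call the limit $R$. Letting $T\to\infty$ in $\Lp\int_0^T\Pi_t\,\diff t=\Pi_T-I$ requires $\Pi_T\to0$. To get this, note that $t\mapsto \|\Pi_t\|$ is integrable and $\Pi_{t+s}=\Pi_s\Pi_t$ with $\|\Pi_s\|$ bounded on $[0,1]$. Then $\|\Pi_{T}\|\le \sup_{s\le1}\|\Pi_s\|\int_{T-1}^{T}\|\Pi_t\|\,\diff t\to0$; more precisely, one bounds $\Pi_T=\Pi_{T-t}\Pi_t$ and averages over $t\in[T-1,T]$. This gives (G5), and hence (G2) by the previous paragraph. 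The argument simultaneously yields $\Lp R=R\Lp=-I$, which settles the invertibility and the formula $V=\int_0^\infty\Pi_t\,\diff t=-\Lp^{-1}$.

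Finally, under (G1)–(G5), define $V$ on all of $\RR^\Xsf$ by the norm-convergent integral $\int_0^\infty\Pi_t\,\diff t$. It is linear and positive, and it agrees with \eqref{eq:ihf} on $\RR_+^\Xsf$. It is the unique linear extension because $\RR_+^\Xsf$ spans $\RR^\Xsf$, via $g=g^+-g^-$.
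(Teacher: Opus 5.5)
Your proposal is correct and follows essentially the paper's argument. Both rest on the identity $\|\Pi_t\|_\infty=z_t^{\mathrm{env}}$, on the growth-bound characterization of $\lambda^*=s(\Lp)$, and on letting $T\to\infty$ in $\Lp\int_0^T\Pi_t\,\diff t=\Pi_T-I$. The differences are minor: the paper invokes the infimum form $\lambda^*=\inf_{t>0}t^{-1}\log\|\Pi_t\|_\infty$ to obtain (G2)$\Leftrightarrow$(G4) in one step, which your submultiplicativity argument effectively reproves. The paper also gets (G1)$\Rightarrow$(G4) by observing that an integrable $\|\Pi_t\|_\infty$ cannot stay $\geq 1$, so your averaging argument for $\Pi_T\to0$ is valid but more than that step requires.
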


Theorem~\ref{thm:gs} provides equivalent spectral, yield-based, and price-based
characterizations of global valuation stability. Specifically, global valuation
stability holds if and only if the global long-run pricing rate is negative, or
equivalently, the long-run envelope yield is positive. It also holds if and only if
the price envelope falls below one at some finite maturity, in which case the
envelope vanishes at long maturities.

The sign condition \(\lambda^*=s(\Lp)<0\) is stronger than invertibility of \(\Lp\).
For example, in a one-state economy with \(\Lp=a>0\), the equation
\(\Lp v=-1\) has the solution \(v=-1/a\), but this negative number cannot be
the value of a nonnegative cash-flow stream, whose infinite-horizon value is
infinite. Thus algebraic solvability of \(\Lp v=-1\) does not ensure that
\(-\Lp^{-1}\) is a positive valuation operator; stability, rather than
invertibility alone, justifies the inverse-generator representation.

The global rate $\lambda^*$ and the valuation stability criterion provide an
economy-wide characterization, but do not identify which parts of the state space
determine that rate or which prices inherit it. Addressing these questions requires
examining which states can reach one another and how this accessibility structure
interacts with pricing.

\section{Class Structure and Pricing Semigroup Asymptotics}
\label{sec:network}

We organize the state space into communicating classes and show how their
class-specific pricing rates and accessibility relations determine the pricing
semigroup's long-run behavior. In particular, the physical and pricing dynamics share
this class structure, but physical recurrence need not coincide with pricing
dominance. The matrix and graph-theoretic preliminaries are collected in
Appendix~\ref{app:preliminaries}.

\subsection{Communicating classes and pricing rates}
\label{sec:class-structure}

We call $\Gcal(\Lx)$ and $\Gcal(\Lp)$ the \emph{physical graph} and \emph{pricing
graph}, respectively. In each graph, directed edges correspond to positive
off-diagonal entries of the generator; see Appendix~\ref{app:metzler-graphs} for the
graph conventions.

Strict positivity of the growth-adjusted SDF ensures that the two graphs have
the same accessibility relations. Specifically, by \eqref{eq:Pt} and \eqref{eq:Pit},
for any states $x_i,x_j$,
\(
  (P_t)_{ij}=\PP_{x_i}(X_t=x_j)
\)
and
\(
  (\Pi_t)_{ij}
  =\EE_{x_i}\bigl[S_t^{\Gs}1_{\{X_t=x_j\}}\bigr].
\)
Since $S_t^{\Gs}>0$ almost surely, $(P_t)_{ij}>0$ if and only if $(\Pi_t)_{ij}>0$
for all $t>0$.
Lemma~\ref{lem:mr} therefore implies that the physical and pricing graphs
induce the same accessibility relation on $\Xsf$. Because communication is
mutual accessibility, they also partition $\Xsf$ into the same communicating
classes. Denote these classes by $\Ccal_1,\ldots,\Ccal_m$, and write
\[
  \cC\coloneqq\{\Ccal_1,\ldots,\Ccal_m\}.
\]
We call the economy \emph{irreducible} if all states belong to a single class
($m=1$), and \emph{reducible} otherwise. Clearly, this is equivalent to
irreducibility or reducibility of either $\Lx$ or $\Lp$.

Collapsing each class into a single vertex gives the \emph{reduced physical graph}
$\Gcal_{\mathrm r}(\Lx)$ and the \emph{reduced pricing graph}
$\Gcal_{\mathrm r}(\Lp)$. These graphs share the vertex set $\cC$ and the same
class-level accessibility relation. We write $x_i\leadsto x_j$ and
$\Ccal_k\leadsto\Ccal_\ell$ for the common state- and class-level accessibility
relations, respectively. Paths of length zero are allowed, so every state and
class is accessible from itself. Unless stated otherwise, $\Ccal_k\to\Ccal_\ell$
denotes a directed edge of the reduced pricing graph.

A class is \emph{closed} if it has no access to any other class. The common
accessibility relation therefore gives the physical and pricing graphs the same
closed classes. Under the finite-state physical Markov dynamics, closed classes
are \emph{recurrent}, while all other classes are \emph{transient}. From every
initial state, the probability of occupying a transient class tends to zero as
time increases.

The reduced graphs are acyclic, so we may order the classes such that
$\Ccal_k\leadsto\Ccal_\ell$ with $k\neq\ell$ implies $k>\ell$. We place all
closed classes first, in arbitrary order, and order the remaining classes subject
to this requirement.
The corresponding ordering of states puts the pricing generator $\Lp$ in block
lower-triangular Frobenius form
\citep[Section~2]{schneider1986influence}:
\begin{equation} \label{eq:frobenius}
  \Lp
  =
  \begin{pmatrix}
    L_1 & 0 & 0 & \cdots & 0\\
    L_{21} & L_2 & 0 & \cdots & 0\\
    L_{31} & L_{32} & L_3 & \cdots & 0\\
    \vdots & \vdots & \vdots & \ddots & \vdots\\
    L_{m1} & L_{m2} & L_{m3} & \cdots & L_m
  \end{pmatrix}.
\end{equation}
Each diagonal block $L_k$ is the principal submatrix of $\Lp$ indexed by the states in
$\Ccal_k$. Since each $\Ccal_k$ is a class of the pricing graph, $L_k$ is an
irreducible Metzler matrix. For $k>\ell$, the off-diagonal block $L_{k\ell}$ is
nonzero if and only if the reduced pricing graph contains the directed edge
$\Ccal_k\to\Ccal_\ell$.

\begin{figure}[t!]
\centering
\begin{subfigure}[t]{0.48\textwidth}
\centering
\begin{minipage}[c][30mm][c]{\linewidth}
\centering
\begin{tikzpicture}[scale=0.86,transform shape]
  \node[state] (d1) at (0,0.8) {$D_1$};
  \node[state] (d2) at (0,-0.8) {$D_2$};
  \node[state] (r1) at (2.5,0.8) {$R_1$};
  \node[state] (r2) at (2.5,-0.8) {$R_2$};
  \node[state] (n1) at (5,0.8) {$N_1$};
  \node[state] (n2) at (5,-0.8) {$N_2$};
  \draw[biflow] (d1) -- (d2);
  \draw[biflow] (r1) -- (r2);
  \draw[biflow] (n1) -- (n2);
  \draw[flow] (d2) to[bend right=8] (r2);
  \draw[flow] (d2) -- (r1);
  \draw[flow] (r1) to[bend left=8] (n1);
  \draw[flow] (r1) -- (n2);
  \begin{scope}[on background layer]
    \node[classbox,fit=(d1)(d2),label=above:{\scriptsize Disaster $\mathcal D$}] {};
    \node[classbox,fit=(r1)(r2),label=above:{\scriptsize Recovery $\mathcal R$}] {};
    \node[classbox,fit=(n1)(n2),label=above:{\scriptsize Normal $\mathcal N$}] {};
  \end{scope}
\end{tikzpicture}
\end{minipage}
\caption{Pricing graph $\Gcal(\Lp)$.}
\end{subfigure}\hfill
\begin{subfigure}[t]{0.48\textwidth}
\centering
\begin{minipage}[c][30mm][c]{\linewidth}
\centering
\begin{tikzpicture}[node distance=7mm,scale=0.82,transform shape]
  \node[classnode] (D) {Disaster\\$\mathcal D$};
  \node[classnode,right=of D] (R) {Recovery\\$\mathcal R$};
  \node[classnode,very thick,double,right=of R] (N) {Normal\\$\mathcal N$};
  \draw[flow] (D) -- (R);
  \draw[flow] (R) -- (N);
\end{tikzpicture}
\end{minipage}
\caption{Reduced pricing graph $\Gcal_{\mathrm{r}}(\Lp)$.}
\end{subfigure}
\caption{The pricing graph and its reduced graph in a disaster-recovery economy.
Panel~(b) collapses the classes of the pricing graph in panel~(a). The double border identifies
$\mathcal N$ as the unique closed class and hence the recurrent class supporting
the long-run physical distribution.}
\label{fig:graph}
\end{figure}

For each class, define its \emph{class long-run pricing rate} (or simply \emph{class
pricing rate}) by
\[
  \lambda_k \coloneqq s(L_k), \qquad k=1,\ldots,m.
\]
By Perron--Frobenius theory for irreducible Metzler matrices, $\lambda_k$ is
an algebraically simple real eigenvalue of $L_k$ that is strictly dominant in real
part over all other eigenvalues.
Since $\Lp$ is block triangular, $\sigma(\Lp) = \bigcup_{k=1}^m \sigma(L_k)$, and
therefore
\begin{equation} \label{eq:bs}
  \lambda^*
  =
  s(\Lp)
  =
  \max_{1\leq k\leq m}\lambda_k.
\end{equation}
We say a class \(\Ccal_k\in\cC\) is \emph{globally pricing-dominant} when
\(\lambda_k=\lambda^*\). Let
\begin{equation*} 
  \cC^*
  \coloneqq
  \{\Ccal_k\in\cC:\lambda_k=\lambda^*\}
\end{equation*}
denote the set of globally pricing-dominant classes. Together with \eqref{eq:bs},
Theorem~\ref{thm:gs} implies that global valuation stability holds if and only if
\(\lambda_k<0\) for every \(k=1,\ldots,m\).

The Frobenius form \eqref{eq:frobenius} separates class-level accessibility from
class pricing rates. Given the class decomposition, the zero pattern of the
off-diagonal blocks determines the reduced pricing graph and, through the common
accessibility relation, the recurrent--transient classification under the physical
dynamics. The diagonal blocks, by contrast, determine the class pricing rates and
hence the global rate \(\lambda^*\). Thus, global pricing dominance is a
diagonal-block property.

Figure~\ref{fig:graph} illustrates a pricing graph and its reduced graph in a
three-class disaster-recovery economy comprising a disaster class \(\mathcal D\), a
recovery class \(\mathcal R\), and a normal class \(\mathcal N\).
Communication within each class, together with the interclass edges shown in
panel~(a), gives every disaster state access to recovery and every recovery state
access to normal. Collapsing the classes therefore yields the reduced graph
\(\mathcal D\to\mathcal R\to\mathcal N\), which implies the class-level accessibility
relations
$\mathcal D\leadsto\mathcal R$,
$\mathcal R\leadsto\mathcal N$, and 
$\mathcal D\leadsto\mathcal N$.
Because the physical and pricing graphs induce the same class-level accessibility
relation, \(\mathcal N\) is the unique closed class in both reduced graphs and
therefore the unique recurrent class under the physical dynamics. The nonclosed
classes \(\mathcal D\) and \(\mathcal R\) are transient.
Consequently, the state process eventually enters the normal class with probability
one, and the probability of occupying that class tends to one.

Pricing dominance need not coincide with physical recurrence. Once discounting,
risk adjustment, and cash-flow growth are incorporated into the class blocks of
\(\Lp\), the recurrent normal class need not have the largest pricing rate. If the
disaster block has the largest spectral bound, the physically transient class
\(\mathcal D\) determines \(\lambda^*\) even though its physical probability vanishes
asymptotically. If this rate is nonnegative, global valuation stability fails despite
eventual physical recovery. Thus, a physically transient class can determine the
global long-run pricing rate.

\subsection{Unique pricing dominance}
\label{sec:unique-dominance}

Suppose that a single class attains the global pricing rate, $\cC^*=\{\Ccal^*\}$.
This condition allows the economy to be reducible. The following theorem shows that
the exponentially normalized pricing semigroup converges to a rank-one spectral
projection, whose right and left eigenvectors have supports determined by
accessibility to and from $\Ccal^*$, respectively.

\begin{theorem} \label{thm:Pi-asym}
  Under Assumptions~\ref{ass:markov}--\ref{ass:continuity}, suppose that
  $\cC^*=\{\Ccal^*\}$. Then $\lambda^*$ is an algebraically simple eigenvalue of
  $\Lp$. There exist unique nonzero nonnegative right and left eigenvectors $\phi$
  and $\psi$, normalized by $\1^\t\phi=1$ and $\psi^\t\phi=1$, such that
  $\Lp\phi=\lambda^*\phi$ and $\Lp^\t\psi=\lambda^*\psi$. 
  Moreover, $\phi\psi^\t$ is the rank-one spectral projection of $\Lp$ associated with
  $\lambda^*$, and
  \begin{equation} \label{eq:r1lim}
    \lim_{t\to\infty}\e^{-\lambda^*t}\,\Pi_t
    =
    \phi\psi^\t.
  \end{equation}
  The supports of these eigenvectors are determined by class-level accessibility. For
  each $\Ccal\in\cC$, let $\phi_{\scriptscriptstyle\Ccal}$ and
  $\psi_{\scriptscriptstyle\Ccal}$ denote the subvectors of $\phi$ and $\psi$ indexed
  by the states in $\Ccal$. Then,
  \begin{equation} \label{eq:supports}
    \phi_{\scriptscriptstyle\Ccal}
    \begin{cases}
      \gg0, & \text{if } \Ccal\leadsto\Ccal^*,\\
      =0, & \text{otherwise},
    \end{cases}
    \qquad
    \psi_{\scriptscriptstyle\Ccal}
    \begin{cases}
      \gg0, & \text{if } \Ccal^*\leadsto\Ccal,\\
      =0, & \text{otherwise}.
    \end{cases}
  \end{equation}
\end{theorem}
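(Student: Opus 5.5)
Our plan is to work directly with the Frobenius form \eqref{eq:frobenius} and the identity $\sigma(\Lp)=\bigcup_k\sigma(L_k)$. Let $\Ccal^*=\Ccal_j$.

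\emph{Simplicity.} Because $\Lp$ is block triangular, its characteristic polynomial factors as $\prod_k\det(\lambda I-L_k)$. Perron--Frobenius for the irreducible Metzler block $L_j$ gives $\lambda^*$ algebraic multiplicity one in $L_j$. Since $\lambda_k<\lambda^*$ for $k\neq j$, $\lambda^*\notin\sigma(L_k)$ for those $k$. Hence $\lambda^*$ is algebraically simple for $\Lp$.

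\emph{Construction of $\phi$.} We build $\phi$ block by block, following the class order, and use that $\lambda^*I-L_k$ is invertible for $k\neq j$ with $(\lambda^*I-L_k)^{-1}=\int_0^\infty \e^{-\lambda^* t}\e^{tL_k}\diff t$. This inverse is entrywise strictly positive: the integral converges because $s(L_k)<\lambda^*$, and $\e^{tL_k}\gg0$ for $t>0$ by irreducibility.
\begin{itemize}
\item For classes $k<j$ (earlier in the order), set $\phi_k=0$.
\item Set $\phi_j=u\gg0$, the Perron vector of $L_j$.
\item For $k>j$, solve recursively $\phi_k=(\lambda^*I-L_k)^{-1}\sum_{\ell<k}L_{k\ell}\phi_\ell$.
\end{itemize}
To be consistent, we must also check that $\phi_k=0$ solves the equations for $k<j$ and for any class not accessing $\Ccal^*$. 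The $k$-th block equation reads $(\lambda^*I-L_k)\phi_k=\sum_{\ell<k}L_{k\ell}\phi_\ell$.

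\emph{Support of $\phi$.} We argue by induction over the topological order. If $\Ccal_k\leadsto\Ccal^*$ with $k\neq j$, then $\Ccal_k$ has an edge to some $\Ccal_\ell$ with $\Ccal_\ell\leadsto\Ccal^*$, and by induction $\phi_\ell\gg0$. Then $L_{k\ell}\phi_\ell\geq0$ is nonzero, all other terms in the sum are nonnegative, and applying the strictly positive inverse gives $\phi_k\gg0$. If $\Ccal_k\not\leadsto\Ccal^*$, every $\ell$ with $L_{k\ell}\neq0$ also fails to access $\Ccal^*$, so by induction the right-hand side vanishes and $\phi_k=0$.

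\emph{Construction and support of $\psi$.} The symmetric argument applies to $\Lp^\t$, which is block upper triangular, processing classes in the reverse order and using left Perron vectors. This gives $\psi_{\Ccal}\gg0$ exactly when $\Ccal^*\leadsto\Ccal$.

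\emph{Uniqueness and normalization.} Uniqueness up to scaling follows from geometric multiplicity one, which is implied by algebraic simplicity. Because $\phi_j,\psi_j\gg0$, we have $\psi^\t\phi\geq\psi_j^\t\phi_j>0$, and $\1^\t\phi>0$, so the normalizations are possible.

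\emph{Rank-one projection.} For a simple eigenvalue with right and left eigenvectors satisfying $\psi^\t\phi=1$, the Riesz spectral projection equals $\phi\psi^\t$. This is standard: decompose $\RR^n=\operatorname{span}\phi\oplus\ker\psi^\t$, and note that both summands are $\Lp$-invariant.

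\emph{Convergence.} On $\ker\psi^\t$, the restriction of $\Lp$ has spectrum $\sigma(\Lp)\setminus\{\lambda^*\}$, all of which has real part strictly below $\lambda^*$ by \eqref{eq:gsv}. Hence
\[
\e^{-\lambda^*t}\Pi_t=\phi\psi^\t+\e^{-\lambda^*t}\e^{t\Lp}(I-\phi\psi^\t),
\]
and the second term is bounded by $C t^{n}\e^{-\delta t}\to0$ for some $\delta>0$, using the Jordan form. This proves \eqref{eq:r1lim}.

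The main obstacle is the support characterization. It hinges on the strict positivity of $(\lambda^*I-L_k)^{-1}$ and on the induction over the acyclic reduced graph, which must be organized so that every predecessor's support is already determined; the topological ordering of the Frobenius form handles this.
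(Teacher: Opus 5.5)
Your proposal is correct and follows essentially the same route as the paper's proof. The paper gets simplicity from the factorization of the characteristic polynomial over the diagonal blocks. It solves the block eigenvector equations forward for $\phi$ and backward for $\psi$ along the Frobenius order, using $R(\lambda^*;L_k)\gg0$ for the non-dominant blocks. It then derives the limit from the decomposition $\e^{-\lambda^*t}\Pi_t=\phi\psi^\t+\e^{-\lambda^*t}\e^{t\Lp}(I-\phi\psi^\t)$ together with \eqref{eq:gsv}. The only cosmetic difference is that you construct $\phi$ and then invoke one-dimensionality of the eigenspace, whereas the paper starts from an arbitrary eigenvector and derives its block structure, which amounts to the same argument.
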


For every $x\in\Xsf$ and nonzero $g\in\RR_+^\Xsf$, \eqref{eq:r1lim} implies that
\[
  \e^{-\lambda^*t}q_t(x,g)\to\phi(x)\psi^\t g,
  \qquad \text{as } t\to\infty.
\]
Let $\Ccal$ be the class containing $x$. By \eqref{eq:supports}, the 
limit is positive precisely when $\Ccal\leadsto\Ccal^*\leadsto\Ccal'$ for some class
$\Ccal'$ on which $g$ is nonzero. Thus the global pricing rate governs the price's leading
asymptotic behavior exactly when the initial state can reach the dominant class and
that class can reach a state where the payoff is positive.

The dominant eigenvectors also give a Markov interpretation of long-run pricing.
Under reducibility, the right eigenvector may vanish, so the
Hansen--Scheinkman change of measure is available only from states in its support.
Let $\phi_{\scriptscriptstyle+}$ be the restriction of $\phi$ to $\supp(\phi)$
and $\Lp^{\phi_+}$ the corresponding principal submatrix of $\Lp$.
The \emph{right Doob chain} has generator
\[
  \Lp^\rightarrow
  \coloneqq
  \diag(\phi_{\scriptscriptstyle+})^{-1}
  \bigl(\Lp^{\phi_+}-\lambda^*I\bigr)
  \diag(\phi_{\scriptscriptstyle+}).
\]
This is the Doob $h$-transform with $h=\phi_{\scriptscriptstyle+}$
\citep[see, e.g.,][]{doob1957conditional}. Under irreducibility, it reduces to the
whole-state Hansen--Scheinkman transform reviewed in Appendix~\ref{app:hs}.

The right Doob chain has $\Ccal^*$ as its unique recurrent class, and its
distribution converges from every state in $\supp(\phi)$ to
$\pi^{\mathrm{LR}}\coloneqq\phi\odot\psi$, extended by zero outside that support.
We call $\pi^{\mathrm{LR}}$ the \emph{long-run pricing law}. It is supported on
$\Ccal^*$, so a physically transient class can carry all long-run pricing mass.%
\footnote{The same weights determine the sensitivity of the long-run envelope yield
  to state-specific effective discount rates; see
  Appendix~\ref{app:local-pass-through}.
}
Appendix~\ref{app:doob} establishes these claims and the change-of-measure
construction. It also constructs the left Doob chain from the adjoint pricing
semigroup. The left chain has the same long-run pricing law, but its state space
can differ.

Returning to Figure~\ref{fig:graph}, suppose the disaster class $\mathcal D$ is
uniquely pricing-dominant. Then $\supp(\phi)=\mathcal D$ and $\supp(\psi)=\Xsf$. Thus
the disaster pricing rate governs every nonzero nonnegative payoff priced from a
disaster state, including payoffs delivered only after recovery, but does not govern
prices starting in recovery or normal states. The right Doob chain remains within
$\mathcal D$, and its long-run pricing law is concentrated there, although the
physical state process eventually enters $\mathcal N$.

\subsection{Tied pricing dominance}

When several classes attain the global pricing rate $\lambda^*$, their
accessibility relations determine whether polynomial factors accompany the leading
exponential term. The relevant quantity is the largest number of globally dominant
classes that can be visited along a single directed path. We call this number the
\emph{global critical depth} and define
\begin{equation} \label{eq:global-critical-depth}
  \nu^*
  \coloneqq
  \max_{\substack{\gamma\text{ is a chain of classes}\\
                   \gamma\subseteq\cC}}
  \#\{\Ccal_k\in\gamma:\lambda_k=\lambda^*\}.
\end{equation}
Here \emph{chains} are defined by accessibility in the reduced pricing graph
$\Gcal_{\mathrm r}(\Lp)$; see Appendix~\ref{app:metzler-graphs}.%
\footnote{
  Upon a scalar shift that makes $\Lp$ nonnegative, the globally dominant classes in
  $\cC^*$ are precisely the \emph{basic classes} of \citet{rothblum1975algebraic}.
  In that terminology, the \emph{length} of a chain is the number of basic classes
  it contains; $\nu^*$ is the maximum such length over all chains.
}

The following theorem describes the long-maturity asymptotics of the pricing
semigroup, allowing for multiple globally dominant classes. Let
$P^*\coloneqq P_{\lambda^*}(\Lp)$ be the spectral projection onto the generalized
eigenspace associated with $\lambda^*$, and define
\(
  B^*\coloneqq [(\Lp-\lambda^*I)P^*]^{\nu^*-1}P^*/(\nu^*-1)!.
\)

\begin{theorem}\label{thm:Pi-asym-tied}
  Under Assumptions~\ref{ass:markov}--\ref{ass:continuity}, the largest Jordan block
  of $\Lp$ associated with $\lambda^*$ has size $\nu^*$. Moreover,
  \begin{equation} \label{eq:price-glim}
    \lim_{t\to\infty} t^{1-\nu^*}\e^{-\lambda^*t}\,\Pi_t
    =
    B^*,
  \end{equation}
  where $B^*$ is a nonzero, nonnegative matrix satisfying
  $\Lp B^*=B^*\Lp=\lambda^*B^*$. In particular, $\lambda^*$ is
  semisimple if and only if no chain of classes contains more than one class in
  $\cC^*$; in that case $\nu^*=1$ and $B^*=P^*$.
\end{theorem}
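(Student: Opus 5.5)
The plan is to reduce everything to the classical Rothblum index theorem for nonnegative matrices and then read the asymptotics off the Jordan decomposition. First, fix $c>0$ so large that $A\coloneqq\Lp+cI$ is entrywise nonnegative. Then $A$ has the same Frobenius form \eqref{eq:frobenius}, with diagonal blocks $L_k+cI$ that are irreducible and nonnegative, and the same reduced graph. Its spectral radius is $\rho(A)=\lambda^*+c$. The classes with $\rho(L_k+cI)=\rho(A)$ are exactly $\cC^*$; these are the basic classes, as in the footnote to \eqref{eq:global-critical-depth}.

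Rothblum's theorem \citep{rothblum1975algebraic} states that the index of $\rho(A)$ equals the maximal number of basic classes on a chain, which here is $\nu^*$. Since Jordan structure is invariant under the shift by $cI$, the largest Jordan block of $\Lp$ for $\lambda^*$ has size $\nu^*$. If one prefers not to cite this, I would prove it by induction on the number of classes via the block-triangular structure. A chain with $\nu^*$ basic classes yields a chain of generalized eigenvectors of length $\nu^*$, using the positivity of the coupling blocks along paths. The upper bound follows because $(\Lp-\lambda^*I)$ restricted to the $\lambda^*$-generalized eigenspace raises "basic depth" by one.

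Second, I would use the spectral decomposition $\Pi_t=\e^{t\Lp}=\sum_{\mu}\e^{\mu t}\sum_{j\ge0}\frac{t^j}{j!}(\Lp-\mu I)^jP_\mu$. Multiply by $t^{1-\nu^*}\e^{-\lambda^*t}$. By \eqref{eq:gsv}, every $\mu\neq\lambda^*$ has $\Re\mu<\lambda^*$, so those terms vanish exponentially. In the $\lambda^*$ term, $N\coloneqq(\Lp-\lambda^*I)P^*$ is nilpotent with $N^{\nu^*}=0$ and $N^{\nu^*-1}\neq0$, by the first step. Hence only $j=\nu^*-1$ survives, and the limit is $N^{\nu^*-1}P^*/(\nu^*-1)!=B^*$, because $P^*$ commutes with $\Lp$ and is idempotent. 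Then $B^*\neq0$. Nonnegativity follows because $B^*$ is a limit of nonnegative matrices $t^{1-\nu^*}\e^{-\lambda^*t}\Pi_t$. The identity $\Lp B^*=B^*\Lp=\lambda^*B^*$ follows from $(\Lp-\lambda^*I)B^*\propto N^{\nu^*}P^*=0$ and from commutation.

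Finally, $\lambda^*$ is semisimple iff its index is $1$ iff $\nu^*=1$, i.e., no chain contains two classes of $\cC^*$. In that case $B^*=N^0P^*=P^*$.

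The main obstacle is the first step, the exact identification of the index with $\nu^*$. I intend to invoke Rothblum's theorem after the shift. The only thing to check is that its hypotheses (any square nonnegative matrix, with chains defined via the reduced graph) match the paper's conventions for chains and accessibility in Appendix~\ref{app:metzler-graphs}.
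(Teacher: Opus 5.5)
Your proof is correct. Its second half matches the paper's argument: the Jordan expansion of $\e^{t\Lp}$, the survival of only the $j=\nu^*-1$ term at $\lambda^*$, nonnegativity of $B^*$ as a limit of nonnegative matrices, and $\Lp B^*=B^*\Lp=\lambda^*B^*$ from $N^{\nu^*}=0$ and commutation.

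You differ from the paper in how you identify the index. You shift by $cI$ and invoke Rothblum's index theorem. Your hypothesis check is right: after the shift the basic classes are exactly $\cC^*$, and Rothblum's chain length is the count in \eqref{eq:global-critical-depth}.

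The paper instead proves this step directly from the resolvent path expansion of Lemma~\ref{lem:res-exp}. Each path product $\mathcal R_\gamma(z;\Lp)$ has a pole at $\lambda^*$ whose order equals the number of classes of $\cC^*$ on $\gamma$. Dominant blocks contribute simple Perron poles, while $R(\lambda^*;L_k)\gg0$ is analytic for the other blocks. The leading Laurent coefficients are products of strictly positive Perron projections, strictly positive resolvents, and nonzero nonnegative coupling blocks, so they cannot cancel. Hence the maximal pole order at $\lambda^*$, i.e.\ the size of the largest Jordan block, is $\nu^*$.

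Your route is shorter but treats the index as a black box. The paper's self-contained route is reusable. The same no-cancellation path counting later gives strict positivity of the individual coefficient $e_x^\t N_{\kK}^{\nu-1}P_{\kK}g_{\kK}$ in Lemma~\ref{lem:positive-corridor-coefficient}, for a specific state and payoff. The index theorem alone does not provide that.

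Your fallback sketch via chains of generalized eigenvectors and ``basic depth'' is too imprecise to stand on its own. As written, your argument therefore rests on the citation. That is acceptable, since the paper itself presents the theorem as the Metzler version of that classical result.
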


Theorem~\ref{thm:Pi-asym-tied} is the continuous-time Metzler formulation of classical
reduced-graph index results
\citep{rothblum1975algebraic,friedland1980growth,schneider1986influence}.
It covers \eqref{eq:r1lim} as a special case: when there is a unique pricing
dominant class, $\nu^*=1$ and $B^*=P^*=\phi\psi^\t$. If $\nu^*>1$, then the long-run
behavior of the pricing operators includes an additional polynomial factor.%
\footnote{
  Tied class rates alone do not produce polynomial factors: if no globally dominant
  class can reach another, $\nu^*=1$ and exponential normalization suffices.
}

For a given initial state $x$ and payoff $g$, the price term associated with the
global exponential rate and polynomial order has coefficient $(B^*g)(x)$, which may
be zero. A positive price may then have a lower exponential rate, or retain the
global exponential rate with a lower polynomial order. In the next section,
we refine the analysis and identify the leading rate and order by restricting the
pricing generator to classes lying on paths from the initial state to states where
the payoff is positive.

\section{Pricing Corridors and State- and Payoff-Specific Asymptotics}
\label{sec:corridors}

\begin{figure}[t!]
  \centering
  \begin{tikzpicture}[node distance=12mm and 20mm,scale=0.9,transform shape]
    \node[corridorclass] (D) {Disaster\\$\mathcal D$};
    \node[corridorclass,right=of D] (R) {Recovery\\$\mathcal R$};
    \node[corridorclass,right=of R] (N) {Normal\\$\mathcal N$};
    \node[classnode,below=of R] (L) {Liquidation\\$\mathcal L$};
    \node[classnode,above=of N] (F) {Fiscal stress\\$\mathcal F$};
    \draw[flow] (D) -- (R);
    \draw[flow] (R) -- (N);
    \draw[flow] (D) -- (L);
    \draw[flow] (F) -- (N);
  \end{tikzpicture}
  \caption{The pricing corridor in an extension of the reduced pricing graph in
    Figure~\ref{fig:graph}(b). For current class $\mathcal D$ and payoff support
    $\sS(g)=\{\mathcal D,\mathcal N\}$, the shaded classes form the corridor
    $\kK(\mathcal D,g)=\{\mathcal D,\mathcal R,\mathcal N\}$: each is accessible
    from disaster and can reach a class where the payoff is positive.
    Liquidation cannot reach the payoff support, while fiscal stress is
    inaccessible from disaster.}
  \label{fig:vc}
\end{figure}

For a given initial state and nonnegative payoff, restricting the pricing generator
and payoff to the pricing corridor defined below preserves the price at every
maturity. The restricted generator determines the price's asymptotic exponential rate
and polynomial order, as well as whether the corresponding perpetual cash-flow stream
has finite value.

\subsection{Pricing corridors and restricted pricing generators}

For $g\in\RR_+^\Xsf$, define its state-level support by
\(
  \supp(g)\coloneqq\{x\in\Xsf:g(x)>0\}.
\)
Writing $g_{\Ccal}$ for its restriction to $\Ccal\in\cC$, define its
\emph{class-level support} by
\(
  \sS(g)\coloneqq\{\Ccal\in\cC:g_{\Ccal}\neq0\}.
\)
For an initial class $\Ccal\in\cC$, we call the following collection of classes
the \emph{pricing corridor}:
\begin{equation} \label{eq:corridor}
  \kK(\Ccal,g)
  \coloneqq
  \{\Ccal'\in\cC:\Ccal\leadsto\Ccal'\leadsto\Ccal''
  \text{ for some }\Ccal''\in\sS(g)\}.
\end{equation}
Thus, $\kK(\Ccal,g)$ contains exactly the classes lying on at least one directed
class path from the current class $\Ccal$ to a class on which the payoff $g$ is
nonzero.

Figure~\ref{fig:vc} extends the disaster--recovery example by adding liquidation
and fiscal-stress classes. With initial class $\mathcal D$ and payoff support
$\sS(g)=\{\mathcal D,\mathcal N\}$, the pricing corridor is
$\kK(\mathcal D,g)=\{\mathcal D,\mathcal R,\mathcal N\}$.
Liquidation is accessible from disaster but cannot reach the payoff support;
fiscal stress can reach the payoff support but is inaccessible from disaster.
For the same payoff, $\kK(\mathcal N,g)=\{\mathcal N\}$ and
$\kK(\mathcal L,g)=\varnothing$. Thus, holding the pricing graph and payoff fixed,
the corridor may contain several classes, one class, or no class, depending on the
initial state.

Restricting the pricing generator and payoff to the corridor leaves the price
unchanged at every maturity. To state this precisely, for a nonempty corridor
$\kK=\kK(\Ccal,g)$, let
\[
  \Xsf_{\kKs}
  \coloneqq
  \bigcup_{\Ccal'\in\kK}\Ccal',
  \qquad
  L_{\kKs}
  \coloneqq
  \Lp(\Xsf_{\kKs}, \Xsf_{\kKs}),
  \qquad
  g_{\kKs}
  \coloneqq
  g_{\scriptscriptstyle\Xsf_{\kKs}}.
\]
The \emph{restricted} pricing generator $L_{\kKs}$ is the principal submatrix of
$\Lp$ indexed by the corridor states.

\begin{proposition} \label{prop:corridor}
  Under Assumptions~\ref{ass:markov}--\ref{ass:continuity}, let $g\in\RR_+^\Xsf$ and
  fix $\Ccal\in\cC$.
  \begin{enumerate}
    \item If $\kK(\Ccal,g)=\varnothing$, then $q_t(x,g)=0$ for every $x\in\Ccal$ and
      $t\geq0$.
    \item If $\kK(\Ccal,g)\neq\varnothing$, then $\Ccal\in\kK(\Ccal,g)$,
      $\sS(g)\cap\kK(\Ccal,g)\neq\varnothing$, and $g_{\kKs}\neq0$. The reduced graph
      of $L_{\kKs}$ is the subgraph of the reduced pricing graph induced by
      $\kK(\Ccal,g)$, and
      \begin{equation} \label{eq:corridor-restriction}
        q_t(x,g)
        =
        \big( \e^{t \Lp} g \big)(x)
        =
        \big( \e^{t L_{\kKs}} g_{\kKs} \big)(x),
        \qquad x\in\Ccal, \ t\geq0.
      \end{equation}
      Moreover, \(q_t(x,g)>0\) for every \(x\in\mathcal C\) and \(t>0\).
  \end{enumerate}
\end{proposition}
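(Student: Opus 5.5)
The plan is to work entirely with the entrywise representation $(\Pi_t)_{ij}>0$ if and only if $x_i\leadsto x_j$ (for $t>0$). This follows from the equivalence $(P_t)_{ij}>0\iff(\Pi_t)_{ij}>0$ established in Section~\ref{sec:class-structure}, together with Lemma~\ref{lem:mr}. We then combine it with the block structure of $\Lp$.

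For part (i), write $q_t(x,g)=\sum_{y}(\Pi_t)_{xy}g(y)$. If $\kK(\Ccal,g)=\varnothing$, no class in $\sS(g)$ is accessible from $\Ccal$. Hence every $y\in\supp(g)$ lies in a class not accessible from $\Ccal$, so $(\Pi_t)_{xy}=0$ for $t>0$. The sum vanishes, and at $t=0$ we have $g(x)=0$ because $\Ccal\notin\sS(g)$.

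For part (ii), the first claims are immediate from the definition \eqref{eq:corridor}. If some $\Ccal'\in\kK$, then $\Ccal\leadsto\Ccal'\leadsto\Ccal''\in\sS(g)$. Thus $\Ccal\leadsto\Ccal''$, so $\Ccal\in\kK$ (a path of length zero is allowed), and $\Ccal''\in\kK$ trivially. Hence $\sS(g)\cap\kK\neq\varnothing$ and $g_{\kKs}\neq0$. For the reduced-graph claim, the diagonal blocks of $L_{\kKs}$ are the irreducible blocks $L_k$ for $\Ccal_k\in\kK$, so its classes are exactly the corridor classes. Its interclass edges are the nonzero blocks $L_{k\ell}$ with both classes in $\kK$, which are precisely the edges of the induced subgraph. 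We then check that accessibility inside the induced subgraph agrees with the ambient accessibility restricted to $\kK$. This uses convexity: if $\Ccal_k,\Ccal_\ell\in\kK$ and $\Ccal_k\leadsto\Ccal_j\leadsto\Ccal_\ell$, then $\Ccal\leadsto\Ccal_j\leadsto\Ccal_\ell\leadsto\sS(g)$, so $\Ccal_j\in\kK$.

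The main step is the restriction identity \eqref{eq:corridor-restriction}. I would decompose $\Xsf$ into three sets:
\begin{itemize}
\item $A$, the states in classes not accessible from $\Ccal$;
\item $B$, the states accessible from $\Ccal$ that cannot reach $\sS(g)$;
\item $\Xsf_{\kKs}$, the corridor states.
\end{itemize}
Let $U=\Xsf_{\kKs}\cup B$ be the set of states accessible from $\Ccal$. This set is closed under accessibility, so $\Lp(U,\Xsf\setminus U)=0$. It follows that $(\e^{t\Lp}g)_U=\e^{t\Lp(U,U)}g_U$, since the block lower-triangular structure gives $(\e^{tM})(U,U)=\e^{tM(U,U)}$ when $M(U,U^c)=0$. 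Within $U$, the set $B$ is also closed: any state reachable from $B$ is accessible from $\Ccal$ and cannot reach $\sS(g)$. Moreover, $g_B=0$. Ordering $U$ as $(B,\Xsf_{\kKs})$ makes $\Lp(U,U)$ block lower triangular, with blocks $\Lp(B,B)$ and $L_{\kKs}$ on the diagonal. Therefore $\e^{t\Lp(U,U)}$ is block lower triangular with diagonal blocks $\e^{t\Lp(B,B)}$ and $\e^{tL_{\kKs}}$. Applying it to $(0,g_{\kKs})$ gives $\e^{tL_{\kKs}}g_{\kKs}$ on the corridor coordinates. Since $\Ccal\subseteq\Xsf_{\kKs}$, this yields \eqref{eq:corridor-restriction}. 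The only real care is getting the triangular orientation right: row index $i$ depends on column $j$ only if $i\leadsto j$.

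Finally, for positivity, fix $x\in\Ccal$ and pick $y\in\supp(g)$ in some $\Ccal''\in\sS(g)\cap\kK$. Since $\Ccal\leadsto\Ccal''$ and classes are internally communicating, $x\leadsto y$, so $(\Pi_t)_{xy}>0$ for $t>0$. By nonnegativity of the other terms, $q_t(x,g)\geq(\Pi_t)_{xy}g(y)>0$. I expect no serious obstacle. The most delicate point is stating cleanly the fact that closed index sets give invariant blocks of the exponential, which I would justify via the power series of the block-triangular matrix.
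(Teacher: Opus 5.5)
Your proof is correct. Part~(i), the preliminary claims in (ii), the reduced-graph claim, and the final positivity statement follow the paper's argument, since both go through Lemma~\ref{lem:mr}. You could apply that lemma directly to the Metzler matrix $\Lp$; the detour through $P_t$ is harmless but unnecessary.

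Where you differ is the restriction identity \eqref{eq:corridor-restriction}. You split the states accessible from $\Ccal$ into the corridor states and a forward-closed dead-end set $B$ with $g_B=0$. You then read off $\e^{tL_{\kKs}}g_{\kKs}$ from the nested block-triangular structure of the exponential.

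The paper instead applies the resolvent path expansion of Lemma~\ref{lem:res-exp} to both $\Lp$ and $L_{\kKs}$ at large real $z$. Every class path from $\Ccal$ to a payoff-support class lies in the corridor and uses only blocks shared by the two matrices, so the path contributions coincide and $e_x^\t R(z;\Lp)g=e_x^\t R(z;L_{\kKs})g_{\kKs}$. Uniqueness of the Laplace transform, together with continuity in $t$, then transfers this identity to the semigroups.

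Your route is more elementary and self-contained. It needs only that forward-closed index sets produce zero off-diagonal blocks (hence invariant blocks of the power series), with no Laplace-uniqueness step. It also shows directly that the identity holds on every corridor coordinate, not only on $\Ccal$.

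The paper's route is chosen for coherence with its later machinery. The same path-by-path resolvent decomposition drives Lemma~\ref{lem:positive-corridor-coefficient}, where the leading coefficient is identified through the highest-order pole along critical paths. The resolvent form of the identity is also what the proof of Theorem~\ref{thm:cd} exploits in \eqref{eq:cdcr}, via rational continuation, to show that the poles at $\Sigma_\circ$ cancel. With your approach, you would simply take Laplace transforms afterward to recover that form.
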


Proposition~\ref{prop:corridor} distinguishes two cases. If the corridor is
empty, no state with a positive payoff is accessible from the current class,
so the price is zero at every maturity. If the corridor is nonempty, the
price is strictly positive at every positive maturity and can be computed
entirely from the restricted generator and payoff.
Every class path from $\Ccal$ to $\sS(g)$ lies entirely within the corridor, and the
restricted generator retains every such path. The restriction is therefore
exact at every maturity, not merely asymptotic.

\subsection{Corridor-specific price and yield asymptotics}

The exact corridor restriction allows us to apply Theorem~\ref{thm:Pi-asym-tied} to the
restricted pricing generator and thereby characterize the price of a given
maturity-$t$ payoff from a given initial state. The price's asymptotic exponential
rate and polynomial order are determined by the highest class pricing rate within the
corridor and the largest number of classes attaining that rate along a single
directed path within it. 

Specifically, for a nonempty pricing corridor $\kK(\Ccal,g)$, define
the \emph{corridor long-run pricing rate} (or simply \emph{corridor pricing rate}) by
\begin{equation} \label{eq:local-rate}
  \lambda(\Ccal,g)
  \coloneqq
  \max_{\Ccal_k\in\kK(\Ccal,g)}\lambda_k
  =
  s(L_{\kKs}).
\end{equation}
We call a class $\Ccal_k\in\kK(\Ccal,g)$ \emph{corridor-critical} when
$\lambda_k=\lambda(\Ccal,g)$. The \emph{critical depth} is the largest number of
corridor-critical classes that can be visited along a single directed path:
\begin{equation*} 
  \nu(\Ccal,g)
  \coloneqq
  \max_{\substack{\gamma\text{ is a chain of classes}\\
                   \gamma\subseteq\kK(\Ccal,g)}}
  \#\{\Ccal_k\in\gamma:\lambda_k=\lambda(\Ccal,g)\}.
\end{equation*}
In Figure~\ref{fig:vc}, the corridor from $\mathcal D$ satisfies
$\mathcal D\leadsto\mathcal R\leadsto\mathcal N$ and hence forms a chain of classes.
Its critical depth is one if only one of these classes is critical, two if two are
critical, and three if all three are critical.

The corridor-critical classes are precisely the globally dominant classes of the
restricted pricing generator $L_{\kKs}$. Theorem~\ref{thm:Pi-asym-tied}, applied to
$L_{\kKs}$, therefore identifies $\nu(\Ccal,g)$ as the size of its largest Jordan
block at $\lambda(\Ccal,g)$. To state the coefficient of the leading price term,
abbreviate $\lambda=\lambda(\Ccal,g)$ and $\nu=\nu(\Ccal,g)$, let $P_{\kKs}\coloneqq
P_\lambda(L_{\kKs})$ be the spectral projection associated with $\lambda$, and
define
\begin{equation} \label{eq:BK}
  B_{\kKs}
  \coloneqq
  [(L_{\kKs}-\lambda I)P_{\kKs}]^{\nu-1}P_{\kKs}/(\nu-1)!.
\end{equation}
An equivalent eigenvector representation of $B_{\kKs}$ is given in
Appendix~\ref{app:corridor-proofs}.

\begin{theorem} \label{thm:pyc}
  Under Assumptions~\ref{ass:markov}--\ref{ass:continuity}, let $g\in\RR_+^\Xsf$ be
  nonzero and fix $\Ccal\in\cC$. Suppose that $\kK(\Ccal,g)\neq\varnothing$.
  Then, for every $x\in\Ccal$,
  \begin{equation} \label{eq:price-clim}
    \lim_{t\to\infty} t^{1-\nu(\Ccal,\,g)}\e^{-\lambda(\Ccal,\,g)t}\,q_t(x,g)
    =
    (B_{\kKs} g_{\kKs})(x) \eqqcolon c_x(g) > 0.
  \end{equation}
  The corresponding yield satisfies
  \begin{equation} \label{eq:yield-clim}
    y_t(x,g)
    =
    -\lambda(\Ccal,g)
    -[\nu(\Ccal,g)-1]\, \frac{\log t}{t}
    -\frac{1}{t}\log c_x(g)
    +o(t^{-1}),
    \qquad \mbox{as } t\to\infty.
  \end{equation}
\end{theorem}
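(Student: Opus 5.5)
We will reduce the statement to Theorem~\ref{thm:Pi-asym-tied} applied to the restricted generator. By Proposition~\ref{prop:corridor}(ii), for $x\in\Ccal$ we have $q_t(x,g)=(\e^{tL_{\kKs}}g_{\kKs})(x)$. The matrix $L_{\kKs}$ is a Metzler principal submatrix of $\Lp$, and its reduced graph is the induced subgraph on $\kK=\kK(\Ccal,g)$. Its classes are therefore exactly the classes in $\kK$, with the same diagonal blocks $L_k$, and its spectral bound is $\lambda=\max_{\Ccal_k\in\kK}\lambda_k$. Chains in its reduced graph are chains of classes contained in $\kK$, so its global critical depth is $\nu=\nu(\Ccal,g)$. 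Theorem~\ref{thm:Pi-asym-tied} is stated for $\Lp$ under Assumptions~\ref{ass:markov}--\ref{ass:continuity}, but its proof only uses Metzler structure and Frobenius form. Alternatively, we could realize $L_{\kKs}$ as the pricing generator of a killed or restricted environment. Either way, we obtain $t^{1-\nu}\e^{-\lambda t}\e^{tL_{\kKs}}\to B_{\kKs}$, where $B_{\kKs}$ is nonnegative and nonzero. This gives the limit in \eqref{eq:price-clim}, with value $(B_{\kKs}g_{\kKs})(x)\geq 0$.

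The main obstacle is strict positivity, $c_x(g)>0$ for every $x\in\Ccal$. We plan to use the eigenvector representation of $B_{\kKs}$ (Appendix~\ref{app:corridor-proofs}), or argue directly by induction along chains. Choose a chain $\gamma\subseteq\kK$ realizing $\nu$ critical classes $\Ccal_{k_1}\leadsto\cdots\leadsto\Ccal_{k_\nu}$. Extend it by the definition of the corridor: $\Ccal\leadsto\Ccal_{k_1}$, and $\Ccal_{k_\nu}\leadsto\Ccal''$ for some $\Ccal''\in\sS(g)$. Maximality of $\nu$ does not immediately ensure that every $x\in\Ccal$ sees such a chain, but every state of $\Ccal$ reaches every state of $\Ccal$, so accessibility is uniform over the class.

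Positivity then follows from a lower-bound argument. Write $\e^{tL_{\kKs}}\geq$ a concatenation along the chain: for $t=t_0+s_1+\dots+s_\nu+t_1$, use positivity of the semigroup to bound the entry
\[
\bigl(\e^{tL_{\kKs}}\bigr)_{x,y}\;\geq\;\int\!\cdots\!\int \prod_{j}\bigl(\e^{s_jL_{k_j}}\bigr)\times(\text{connecting off-diagonal terms}),
\]
which is a Duhamel/Dyson expansion keeping only nonnegative terms. Each factor $\e^{s_jL_{k_j}}$ is entrywise at least $c\,\e^{\lambda s_j}$ for $s_j\geq1$, by Perron--Frobenius for irreducible $L_{k_j}$. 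The transit segments contribute bounded positive constants. Integrating over the simplex $s_1+\dots+s_\nu\leq t-\text{const}$ gives a lower bound of order $t^{\nu-1}\e^{\lambda t}$ for $q_t(x,g)$. Hence the limit $c_x(g)$ is strictly positive. The upper bound is already provided by existence of the limit. The remaining care is in fixing the transit times and applying the multi-step Duhamel expansion for block-triangular generators. A cleaner route may be the Jordan-chain formula for $B_{\kKs}$ with nonnegative eigenvectors supported according to accessibility, as in \eqref{eq:supports}, but the path lower bound avoids needing that.

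Finally, \eqref{eq:yield-clim} follows by taking logarithms. We have $\log q_t=\lambda t+(\nu-1)\log t+\log c_x(g)+o(1)$, since the ratio tends to $c_x(g)>0$. Dividing by $-t$ gives $y_t(x,g)=-\lambda-(\nu-1)\frac{\log t}{t}-\frac1t\log c_x(g)+o(t^{-1})$. Positivity of $q_t$ for $t>0$, from Proposition~\ref{prop:corridor}, ensures $y_t$ is defined.
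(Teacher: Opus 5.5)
Your proposal is correct. The first and last steps match the paper's proof: exact restriction to the corridor, applying Theorem~\ref{thm:Pi-asym-tied} to $L_{\kKs}$, and taking logarithms using $q_t(x,g)>0$. The difference is in how you establish $c_x(g)>0$.

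The paper works in the Laplace domain, in Lemma~\ref{lem:positive-corridor-coefficient}:
\begin{itemize}
\item It expands $e_x^\t R(z;L_{\kKs})g_{\kKs}$ over class paths using Lemma~\ref{lem:res-exp} and substitutes the block Laurent expansions.
\item It shows that the coefficient of $(z-\lambda)^{-\nu}$ is a sum, over relevant paths with exactly $\nu$ critical classes, of strictly positive products. Each product is built from block Perron projections $\phi_k\psi_k^\t$, noncritical resolvents $R(\lambda;L_k)\gg0$, and coupling blocks.
\item It then identifies this coefficient with $e_x^\t N_{\kKs}^{\nu-1}P_{\kKs}g_{\kKs}$, via the nilpotent part of $L_{\kKs}$ on the dominant spectral subspace.
\end{itemize}

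Your Dyson-expansion lower bound is the time-domain counterpart of the same path expansion; it is essentially the inverse Laplace transform of Lemma~\ref{lem:res-exp}. You keep one path with $\nu$ critical classes. On that path you bound each critical block semigroup below by $c\,\e^{\lambda s}$ times an all-ones matrix for $s\geq1$, and you confine noncritical sojourns to a fixed time window. Integrating over the resulting $(\nu-1)$-dimensional simplex gives $q_t(x,g)\geq c'\,t^{\nu-1}\e^{\lambda t}$ for large $t$. Combined with the existence of the limit, this forces $c_x(g)>0$.

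Your route is more elementary. It needs only three things: nonnegativity of every Dyson term (which holds because the interclass part is strictly block lower triangular, so the expansion is finite); Perron--Frobenius asymptotics for each irreducible block; and the limit supplied by Theorem~\ref{thm:Pi-asym-tied}. It therefore avoids identifying a Laurent coefficient with $B_{\kKs}$.

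The paper's route gives more in return. It produces an explicit formula for $c_x(g)$ as a sum over maximal-critical paths, whereas your argument yields only a strictly positive lower bound. It also reuses the pole-order machinery from the paper's proof of the Jordan-block size in Theorem~\ref{thm:Pi-asym-tied}.
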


Unlike the global coefficient $(B^*g)(x)$, which may vanish, the corridor coefficient
$c_x(g) = (B_{\kKs} g_{\kKs})(x)$ in \eqref{eq:price-clim} is strictly positive. Thus
$\lambda(\Ccal,g)$ and $\nu(\Ccal,g)-1$ are the actual asymptotic exponential rate
and polynomial order of the price. For a fixed pricing generator, these depend only
on the initial class and payoff support: changing payoff magnitudes without changing
the support, or moving the initial state within its class, can change $c_x(g)$ but
not the rate or order.

In an irreducible economy, $\cC = \{\Xsf\}$, and every nonzero $g\in\RR^{\Xsf}_+$
satisfies $\lambda(\Xsf,g)=\lambda^*$, $\nu(\Xsf,g)=1$. Hence, for every initial state
$x\in\Xsf$,
\[
  q_t(x,g) \sim c_x(g)\, \e^{\lambda^* t},
  \qquad 
  y_t(x,g) = -\lambda^*-t^{-1}\log c_x(g) + o(t^{-1}), 
  \qquad 
  \text{as }t\to\infty.
\]
Reducibility allows not only the leading coefficient $c_x(g)$ but also the asymptotic
exponential rate and polynomial order to depend on the initial class and payoff
support.

If the corridor has a unique critical class, then $\nu(\Ccal,g)=1$, so the leading
price term is $c_x(g)\e^{\lambda(\Ccal,g)t}$, and \eqref{eq:BK}
gives $B_{\kKs} = P_{\kKs}$. Moreover, Theorem~\ref{thm:Pi-asym}, applied to
$L_{\kKs}$, implies that $\lambda(\Ccal,g)$ is an algebraically simple eigenvalue of
$L_{\kKs}$. Let $\phi_{\kKs}$ and $\psi_{\kKs}$ be the corresponding nonzero
nonnegative right and left eigenvectors, normalized by $\1^\t\phi_{\kKs}=1$ and
$\psi_{\kKs}^\t\phi_{\kKs}=1$. The spectral projection and leading price coefficient
therefore satisfy $P_{\kKs} = \phi_{\kKs}\psi_{\kKs}^\t$ and $c_x(g) =
\phi_{\kKs}(x)\,\psi_{\kKs}^\t g_{\kKs}>0$.
With multiple critical classes, $\nu(\Ccal,g)=1$ still holds if no directed path
within the corridor contains two of them. Otherwise, the leading price term includes
the polynomial factor $t^{\nu(\Ccal,g)-1}$.

\subsection{Perpetual values and state-dependent yields}

In a reducible economy, a perpetual cash-flow stream can have finite value from a
given initial state even when global valuation stability fails. The positive leading
coefficient in Theorem~\ref{thm:pyc} implies that, for a nonempty corridor, the price
is integrable over maturities exactly when the corridor pricing rate is negative.
Together with the zero-price conclusion of Proposition~\ref{prop:corridor} for an
empty corridor, this gives the following corollary.

\begin{corollary} \label{cor:corridor-stability}
  Under Assumptions~\ref{ass:markov}--\ref{ass:continuity}, let
  $g\in\RR_+^\Xsf$ and $x\in\Ccal$. If
  $\kK(\Ccal,g)=\varnothing$, then $(Vg)(x)=0$. Otherwise,
  $(Vg)(x)<\infty$ if and only if $\lambda(\Ccal,g)<0$.
\end{corollary}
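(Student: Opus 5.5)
The plan is to combine the representation $(Vg)(x)=\int_0^\infty q_t(x,g)\,\diff t$ from \eqref{eq:ihf} with the two cases of Proposition~\ref{prop:corridor} and the exact asymptotics of Theorem~\ref{thm:pyc}. Throughout, $g\in\RR_+^\Xsf$ and $x\in\Ccal$ are fixed.

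\emph{Empty corridor.} If $\kK(\Ccal,g)=\varnothing$, then Proposition~\ref{prop:corridor}(i) gives $q_t(x,g)=0$ for all $t\geq0$. Integrating over maturities yields $(Vg)(x)=0$.

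\emph{Nonempty corridor.} Here $\kK(\Ccal,g)\neq\varnothing$, so Proposition~\ref{prop:corridor}(ii) gives $g_{\kKs}\neq0$, and in particular $g\neq0$. Thus Theorem~\ref{thm:pyc} applies. Write $\lambda=\lambda(\Ccal,g)$ and $\nu=\nu(\Ccal,g)$. Equation \eqref{eq:price-clim} gives
\[
q_t(x,g)\sim c_x(g)\,t^{\nu-1}\e^{\lambda t}, \qquad c_x(g)>0 .
\]
So there is $T>0$ such that, for $t\geq T$,
\[
\tfrac12 c_x(g)\,t^{\nu-1}\e^{\lambda t}\;\leq\; q_t(x,g)\;\leq\; 2c_x(g)\,t^{\nu-1}\e^{\lambda t}.
\]
On $[0,T]$ the integrand $t\mapsto q_t(x,g)=(\e^{t\Lp}g)(x)$ is continuous, so its integral there is finite. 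Finiteness of $(Vg)(x)$ is therefore equivalent to finiteness of $\int_T^\infty t^{\nu-1}\e^{\lambda t}\,\diff t$.

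This tail integral is finite exactly when $\lambda<0$. If $\lambda<0$, the exponential decay beats any polynomial factor. If $\lambda\geq0$, the integrand is bounded below by $T^{\nu-1}>0$ on $[T,\infty)$, because $\nu\geq1$, so the integral diverges. Hence $(Vg)(x)<\infty$ if and only if $\lambda(\Ccal,g)<0$.

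The only point needing care is the positivity of $c_x(g)$, since it is what makes the lower bound, and hence the divergence when $\lambda\geq0$, valid. This positivity is already part of the statement of Theorem~\ref{thm:pyc}, so no further obstacle remains.
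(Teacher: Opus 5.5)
Your proposal is correct and follows essentially the same route as the paper's proof: zero prices from Proposition~\ref{prop:corridor}(i) in the empty case, and otherwise the asymptotic $q_t(x,g)\sim c_x(g)\,t^{\nu-1}\e^{\lambda t}$ with $c_x(g)>0$ from Theorem~\ref{thm:pyc}, plus finiteness on bounded maturity intervals. Checking that a nonempty corridor forces $g\neq0$, so that Theorem~\ref{thm:pyc} applies, is a small detail the paper leaves implicit.
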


For a nonempty corridor, finite value requires all class pricing rates within
the corridor to be negative, not all class rates in the economy. Classes with
nonnegative pricing rates therefore do not prevent finite value when they lie
outside the relevant corridor.

For $g=\1$, every class supports the payoff, so
$\kK(\Ccal,\1)=\{\Ccal_k\in\cC:\Ccal\leadsto\Ccal_k\}$ consists of all classes
accessible from $\Ccal$. Consequently, every modeled nonnegative perpetual cash-flow
stream has finite value from every state in an initial class $\Ccal$ if and only if
\[
  \max_{\Ccal_k:\,\Ccal\leadsto\Ccal_k}\lambda_k<0.
\]
Requiring this condition for every initial class recovers global valuation
stability. Applying the yield expansion \eqref{eq:yield-clim} in
Theorem~\ref{thm:pyc} to $g=\1$ gives the corresponding long-run zero-coupon yield:
\begin{corollary}[State-dependent long-run zero-coupon yields]
\label{cor:state-yields}
  Under Assumptions~\ref{ass:markov}--\ref{ass:continuity}, for every
  $\Ccal\in\cC$ and $x\in\Ccal$,
  \[
    \lim_{t\to\infty}y_t(x)
    =
    -\max_{\Ccal_k:\,\Ccal\leadsto\Ccal_k}\lambda_k.
  \]
\end{corollary}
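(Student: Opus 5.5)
The plan is to apply the yield expansion \eqref{eq:yield-clim} of Theorem~\ref{thm:pyc} to the payoff $g=\1$. The work consists of checking that its hypotheses hold and identifying the corridor rate.

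First, since $\1(x')=1>0$ for every state, $\1_{\Ccal'}\neq0$ for every class. Hence $\sS(\1)=\cC$. In definition \eqref{eq:corridor} we can take $\Ccal''=\Ccal'$, using that paths of length zero are allowed, so $\Ccal'\leadsto\Ccal'$. This shows
\[
  \kK(\Ccal,\1)=\{\Ccal_k\in\cC:\Ccal\leadsto\Ccal_k\}.
\]
The reverse inclusion is immediate from the definition. The corridor is nonempty because it contains $\Ccal$ itself. The payoff $\1$ is nonzero and nonnegative, so Theorem~\ref{thm:pyc} applies for every $x\in\Ccal$. By \eqref{eq:local-rate}, its rate is
\[
  \lambda(\Ccal,\1)=\max_{\Ccal_k:\,\Ccal\leadsto\Ccal_k}\lambda_k .
\]

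Second, note that $z_t(x)=q_t(x,\1)>0$ for all $t$, so $y_t(x)=y_t(x,\1)$ is well defined for $t>0$. The expansion \eqref{eq:yield-clim} then gives
\[
  y_t(x)=-\lambda(\Ccal,\1)-[\nu(\Ccal,\1)-1]\frac{\log t}{t}-\frac1t\log c_x(\1)+o(t^{-1}).
\]
The critical depth $\nu(\Ccal,\1)$ is a fixed finite integer, and $c_x(\1)>0$ is a fixed positive constant by Theorem~\ref{thm:pyc}. Therefore both $(\log t)/t$ and $t^{-1}\log c_x(\1)$ tend to zero. Letting $t\to\infty$ yields the stated limit.

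There is no real obstacle here. The only point needing care is the identification of the corridor for $g=\1$, which relies on the convention that accessibility is reflexive. The positivity of $c_x(\1)$ is what makes the logarithm finite, and it is supplied by Theorem~\ref{thm:pyc}.
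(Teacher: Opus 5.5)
Your proposal is correct and matches the paper's argument: identify $\kK(\Ccal,\1)$ as the set of classes accessible from $\Ccal$ (using $\sS(\1)=\cC$ and reflexive accessibility), then apply the yield expansion \eqref{eq:yield-clim} of Theorem~\ref{thm:pyc} with $g=\1$. The $(\log t)/t$ and $t^{-1}\log c_x(\1)$ terms vanish, which gives the stated limit.
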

Since each class is accessible from itself,
$\lambda^*=\max_{\Ccal\in\cC}\lambda(\Ccal,\1)$.
Thus the global long-run pricing rate is the maximum of the state-dependent
zero-coupon pricing rates, and the long-run envelope yield is the minimum of
their limiting yields.

\section{Disaster Recovery and Long-Run Pricing}
\label{sec:illustration}

We use disaster recovery to illustrate how a physically transient class can govern
long-horizon prices and how this possibility depends on preferences and cash-flow
exposure. Suppose the pricing graph has two communicating classes, a normal class
$\mathcal N$ and a disaster class $\mathcal D$, with the single interclass edge
$\mathcal D\to\mathcal N$. We call this a \emph{one-way disaster-recovery economy}.
The common physical and pricing accessibility relations imply that $\mathcal N$ is
physically recurrent and $\mathcal D$ is transient: recovery eventually occurs, and
the physical probability of disaster vanishes. Nevertheless, the disaster class may
have the higher long-run pricing rate.

Section~\ref{sec:owr} develops this structure allowing multiple states within each
class, specializing the general results of Sections~\ref{sec:network}
and~\ref{sec:corridors}. Section~\ref{sec:crra} then studies a two-state
consumption-based specification in which the pricing-rate gap is determined by risk
aversion, consumption dynamics, cash-flow exposure, and recovery intensity.

\subsection{One-way recovery and long-run prices}
\label{sec:owr}

Ordering normal states before disaster states gives the pricing generator
\begin{equation} \label{eq:Lp2}
  \Lp
  =
  \begin{pmatrix}
    \LpN&0\\ \LpDN&\LpD
  \end{pmatrix},
\end{equation}
where $\LpN$ and $\LpD$ are the irreducible Metzler blocks associated with the normal
and disaster classes, respectively, and $\LpDN\geq0$ is the nonzero recovery block.
Partition the payoff accordingly:
\begin{equation} \label{eq:g2}
  g
  =
  \begin{pmatrix}
    \gN \\ \gD
  \end{pmatrix},
\end{equation}
where $\gN$ and $\gD$ are the restrictions of $g$ to $\mathcal N$ and
$\mathcal D$. Define the class pricing rates 
\[
  \lambdaN\coloneqq s(\LpN), \qquad \lambdaD\coloneqq s(\LpD).
\]
Then $\lambda^*=s(\Lp)=\max\{\lambdaN,\lambdaD\}$. The next proposition gives the
exact pricing semigroup and the limiting strip yields implied by
Theorem~\ref{thm:pyc}.

\begin{proposition} \label{prop:owr}
  Under Assumptions~\ref{ass:markov}--\ref{ass:continuity}, suppose the pricing
  generator $\Lp$ has the block form \eqref{eq:Lp2}, where $\LpN$ and $\LpD$ are
  irreducible Metzler matrices and $\LpDN\geq0$ is nonzero.
  Then,
  \begin{equation} \label{eq:etLp}
    \Pi_t
    =
    \begin{pmatrix}
      \e^{t\LpN} & 0\\[3pt] H_t & \e^{t\LpD}
    \end{pmatrix},
    \qquad
    H_t
    =
    \int_0^t \e^{s\LpD} \LpDN\, \e^{(t-s)\LpN}\diff s,
    \qquad 
    t\geq0.
  \end{equation}
  Here $H_t$ is the recovery block. For every nonzero $g\in\RR_+^\Xsf$:
  \begin{enumerate}
    \item 
      If $x\in\mathcal N$ and $\gN\neq0$, then $\lim_{t\to\infty} y_t(x,g) =
      -\lambdaN$; if $\gN=0$, then $q_t(x,g)=0$ for $t\geq0$.
    \item 
      If $x\in\mathcal D$, then
      \[
        \lim_{t\to\infty}y_t(x,g)
        =
        \begin{cases}
          -\lambdaD, & \text{if }\gN=0,\\[2pt]
          -\max\{\lambdaN,\lambdaD\}, & \text{if } \gN\neq0.
        \end{cases}
      \]
  \end{enumerate}
\end{proposition}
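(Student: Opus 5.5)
The plan has two parts: derive the block formula for $\Pi_t$ by a variation-of-constants argument, then read off the limiting yields from Theorem~\ref{thm:pyc} once the relevant pricing corridors are identified.

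\emph{Block formula.} Since $\Pi_t=\e^{t\Lp}$ and $\Lp$ is block lower-triangular, $\Pi_t$ is block lower-triangular too, because each power of $\Lp$ is. Its diagonal blocks are $\e^{t\LpN}$ and $\e^{t\LpD}$, since the diagonal blocks of $\Lp^k$ are $\LpN^k$ and $\LpD^k$. For the off-diagonal block $H_t$, I will differentiate $\Pi_t$ in two ways.
\begin{itemize}
  \item From $\frac{\diff}{\diff t}\Pi_t=\Lp\Pi_t$, the lower-left block satisfies $\dot H_t=\LpDN\e^{t\LpN}+\LpD H_t$ with $H_0=0$.
  \item Variation of constants then gives $H_t=\int_0^t\e^{(t-s)\LpD}\LpDN\e^{s\LpN}\diff s$.
\end{itemize}
The substitution $s\mapsto t-s$ turns this into the stated form. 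Alternatively, I can verify directly that the stated $H_t$ solves the ODE by differentiating under the integral sign.

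\emph{Identifying the corridors.} The reduced pricing graph has two vertices with the single edge $\mathcal D\to\mathcal N$. Hence $\mathcal N\leadsto\mathcal C'$ only for $\mathcal C'=\mathcal N$, while $\mathcal D$ reaches both classes. By \eqref{eq:corridor}:
\begin{itemize}
  \item For $x\in\mathcal N$: $\kK(\mathcal N,g)=\{\mathcal N\}$ if $\gN\neq0$, and $\kK(\mathcal N,g)=\varnothing$ if $\gN=0$.
  \item For $x\in\mathcal D$ with $\gN\neq0$: both classes lie on the path $\mathcal D\to\mathcal N$ into the support, so $\kK(\mathcal D,g)=\{\mathcal D,\mathcal N\}$.
  \item For $x\in\mathcal D$ with $\gN=0$: $g\neq0$ forces $\sS(g)=\{\mathcal D\}$. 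Since $\mathcal N$ cannot reach $\mathcal D$, the corridor is $\{\mathcal D\}$.
\end{itemize}

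\emph{Limiting yields.} Case (i) with $\gN=0$ follows from Proposition~\ref{prop:corridor}(i). It is also immediate from the block form, since $q_t(x,g)=(\e^{t\LpN}\gN)(x)$ for $x\in\mathcal N$. In every nonempty case, Theorem~\ref{thm:pyc} applies and gives $\lim_{t\to\infty}y_t(x,g)=-\lambda(\mathcal C,g)$. The $\log t/t$ and $1/t$ terms in \eqref{eq:yield-clim} vanish in the limit, so the polynomial order $\nu$ plays no role; this covers the tie $\lambdaN=\lambdaD$, where $\nu=2$. The corridor rate is the maximum class rate over the corridor: $\lambdaN$ for corridor $\{\mathcal N\}$, $\lambdaD$ for $\{\mathcal D\}$, and $\max\{\lambdaN,\lambdaD\}$ for $\{\mathcal D,\mathcal N\}$.

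The main obstacle is being careful about the corridor when $x\in\mathcal D$ and $\gN=0$, namely checking that $\mathcal N$ is really excluded. The rest is a direct application of results already stated.
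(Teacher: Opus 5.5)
Your proposal is correct and follows essentially the same route as the paper's proof. Both derive the recovery block from the block ODE implied by $\frac{\diff}{\diff t}\Pi_t=\Lp\Pi_t$ via variation of constants and a change of variables $s\mapsto t-s$. Both then identify the corridors $\{\mathcal N\}$, $\varnothing$, $\{\mathcal D\}$, $\{\mathcal D,\mathcal N\}$ and apply Theorem~\ref{thm:pyc}, noting that the $\log t/t$ term vanishes so the tie $\lambdaN=\lambdaD$ needs no separate treatment.
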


The block representation of $\Pi_t$ in \eqref{eq:etLp}, together with \eqref{eq:g2},
implies that
\begin{equation} \label{eq:owrb}
  \Pi_t g = 
  \begin{pmatrix}
    \e^{t\LpN} \gN 
    \\[3pt]
    \e^{t\LpD} \gD + H_t \gN
  \end{pmatrix},
  \qquad t\geq0.
\end{equation}
From a normal initial state, the upper block in \eqref{eq:owrb} shows
that the price depends only on $\LpN$ and $\gN$ at every maturity. When $\gN\neq0$,
the pricing corridor is $\{\mathcal N\}$, so the disaster block $\LpD$ cannot affect
the price.%
\footnote{
  Allowing rare entry into disaster removes the block-triangular structure of
  $\Pi_t$, so prices from normal initial states also reflect paths through disaster.
  Their contribution vanishes as entry becomes rare at each fixed maturity, but
  Section~\ref{sec:near} shows that it can eventually dominate at long maturities
  when the disaster class pricing rate is higher.
}
For a disaster initial state $X_0 = x\in\mathcal D$, denote the recovery stopping
time by
\begin{equation}\label{eq:tau}
  \tau\coloneqq\inf\{s\geq0:X_s\in\mathcal N\}.
\end{equation}
Since the economy cannot return to disaster after recovery, the lower block in
\eqref{eq:owrb} decomposes the price $(\Pi_t g)(x)$ according to whether recovery has
occurred by maturity. In particular,
\[
  (\e^{t\LpD}\gD)(x)
  = 
  \EE_x \, S_tG_tg(X_t)1_{\{\tau>t\}},
  \qquad
  (H_t\gN)(x)
  =
  \EE_x \, S_tG_tg(X_t)1_{\{\tau\leq t\}}.
\]
These two terms average the same discounted payoff over sample paths \emph{without} and
\emph{with recovery} by maturity, respectively. In the integral representation of
$H_t$ in \eqref{eq:etLp}, $s$ is a possible recovery time, leaving $t-s$ units of
time in the normal class before maturity.

A case of particular interest is when the disaster class has the higher pricing rate:
\begin{equation} \label{eq:dd}
  \lambdaD > \lambdaN.
\end{equation}
Let $\phiN,\psiN\gg0$ be right and left eigenvectors of $\LpN$ associated with
$\lambdaN$, and let $\phiD,\psiD\gg0$ be the corresponding eigenvectors of $\LpD$
associated with $\lambdaD$, normalized by $\1^\t\phiN=\1^\t\phiD=1$ and
$\psiN^\t\phiN=\psiD^\t\phiD=1$. The following corollary gives the long-run
asymptotics of strip prices for both initial classes.

\begin{corollary} \label{cor:owr}
  Under the hypotheses of Proposition~\ref{prop:owr} and \eqref{eq:dd}, let
  $g\in\RR_+^\Xsf$ be nonzero.
  \begin{enumerate}
    \item
      For every $x\in\mathcal N$,
      \[
        \lim_{t\to\infty}\e^{-\lambdaN t}q_t(x,g) 
        =
        \phiN(x)\,\psiN^\t\gN,
      \]
      where the limit is positive for $\gN\neq0$.
    \item
      For every $x\in\mathcal D$,
      \begin{equation} \label{eq:dlim}
        \lim_{t\to\infty}\e^{-\lambdaD t}q_t(x,g)
        =
        \phiD(x)\,\psiD^\t
        \left[
          \gD
          +
          \int_0^\infty \e^{-\lambdaD v}\LpDN\e^{v\LpN}\gN\diff v
        \right]>0.
      \end{equation}
  \end{enumerate}
\end{corollary}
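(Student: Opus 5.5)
Part (i) is immediate from the upper block of \eqref{eq:owrb}: for $x\in\mathcal N$, $q_t(x,g)=(\e^{t\LpN}\gN)(x)$. Since $\LpN$ is an irreducible Metzler matrix, its spectral bound $\lambdaN$ is algebraically simple and strictly dominant in real part. Hence $\e^{-\lambdaN t}\e^{t\LpN}\to\phiN\psiN^\t$; this is the single-class case of Theorem~\ref{thm:Pi-asym} applied to $\LpN$. Positivity for $\gN\neq0$ follows from $\phiN,\psiN\gg0$.

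For part (ii), I will work with the lower block $\e^{t\LpD}\gD+H_t\gN$ and normalize by $\e^{-\lambdaD t}$. The first term converges to $\phiD\psiD^\t\gD$, again by Theorem~\ref{thm:Pi-asym} applied to $\LpD$. For the recovery term, substitute $v=t-s$ in \eqref{eq:etLp}:
\[
\e^{-\lambdaD t}H_t\gN=\int_0^t \bigl[\e^{-\lambdaD (t-v)}\e^{(t-v)\LpD}\bigr]\LpDN\,\e^{-\lambdaD v}\e^{v\LpN}\gN\,\diff v .
\]
Pointwise in $v$, the bracket converges to $\phiD\psiD^\t$ as $t\to\infty$. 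It is also uniformly bounded in $t-v\geq0$, since it converges and is continuous. The remaining factor satisfies $\|\e^{-\lambdaD v}\e^{v\LpN}\|\leq C\e^{-(\lambdaD-\lambdaN-\delta)v}$ for any small $\delta>0$, because $s(\LpN)=\lambdaN$. Under \eqref{eq:dd} I choose $\delta<\lambdaD-\lambdaN$, which makes this factor integrable on $[0,\infty)$. Extending the integrand by zero for $v>t$, dominated convergence then gives the limit $\phiD\psiD^\t\int_0^\infty \e^{-\lambdaD v}\LpDN\e^{v\LpN}\gN\,\diff v$. Adding the two limits yields \eqref{eq:dlim}.

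The main care point is strict positivity of the limit in (ii). Since $\phiD\gg0$ and $\psiD\gg0$, it suffices that the bracketed vector is nonnegative and nonzero. Nonnegativity holds because $\gD\geq0$, $\LpDN\geq0$ and $\e^{v\LpN}\geq0$. If $\gD\neq0$, the bracket is nonzero and we are done. If instead $\gD=0$, then $\gN\neq0$, and by irreducibility $\e^{v\LpN}\gN\gg0$ for every $v>0$. Since $\LpDN\geq0$ is nonzero, $\LpDN\e^{v\LpN}\gN$ is nonzero and nonnegative for every $v>0$, so the integral is nonzero. This completes the plan.
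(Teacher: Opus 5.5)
Your proof is correct. Part (i) matches the paper's proof; part (ii) takes a different route.

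The paper applies Theorem~\ref{thm:Pi-asym} to the full generator $\Lp$, with $\mathcal D$ the unique dominant class. It writes down the full-state eigenvectors $\PhiD=(0,\phiD^\t)^\t$ and $\PsiD$. The normal block of $\PsiD$, namely $R(\lambdaD;\LpN^\t)\LpDN^\t\psiD$, comes from solving the normal-class block of the left eigenvalue equation. The paper then identifies the limit as the rank-one Riesz projection $\PhiD\PsiD^\t$. Only at the end does it convert $R(\lambdaD;\LpN)$ into the Laplace integral, and positivity follows in one line from $\PsiD\gg0$.

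You instead stay with the block formula \eqref{eq:etLp}. You use the Perron limit only for the irreducible diagonal blocks $\LpN$ and $\LpD$, and pass to the limit in the Duhamel integral by dominated convergence. There \eqref{eq:dd} enters exactly as the integrability condition for $\e^{-\lambdaD v}\e^{v\LpN}$.

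Your route is more elementary and keeps visible both the split into paths without and with recovery and the precise role of $\lambdaD>\lambdaN$. Your case split on $\gD$ uses the same irreducibility fact that makes $\PsiD$ strictly positive.

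The paper's route yields more structure. It identifies the limit as the spectral projection of the whole generator and produces the explicit left eigenvector $\PsiD$. That vector is reused later, for example in $\psi_\epsilon\to\PsiD$ in Lemma~\ref{lem:fixed-recovery-expansion} and in the constant of Corollary~\ref{cor:rare-disaster-concentration}.
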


For a normal initial state and $\gN\neq0$, part~(i) of the corollary gives
$q_t(x,g)\sim\e^{\lambdaN t}\phiN(x)\psiN^\t\gN$, with $\lambdaN<\lambdaD$. For a
disaster initial state, the two terms in brackets in \eqref{eq:dlim}, premultiplied by
$\e^{\lambdaD t}\phiD(x)\psiD^\t$, give the leading contributions to the price from
sample paths without and with recovery, respectively. The recovery contribution is
positive whenever $\gN\neq0$, so even a payoff supported entirely on normal states
inherits the disaster pricing rate.

When $\gN\neq0$, the integrand in \eqref{eq:dlim} decays exponentially at rate
$\lambdaD-\lambdaN>0$ as $v\to\infty$, where $v$ represents the time between recovery
and maturity. The asymptotics imply that, at sufficiently long
maturities, most of the recovery contribution to the price comes from sample paths
that remain in disaster for almost the entire horizon and recover near maturity.
The pricing importance of these sample paths, however, does not imply that such late
recovery is physically likely.

\subsection{Consumption risk and pricing dominance}
\label{sec:crra}

We now specify economic primitives and characterize the conditions under which
\eqref{eq:dd} holds. Specialize the one-way recovery economy to two states,
$\Xsf=\{N,D\}$, with $\mathcal N=\{N\}$ and $\mathcal D=\{D\}$.
Recovery occurs at physical intensity $\kappa>0$, and the normal state is absorbing,
so the physical generator is
\[
  \Lx =
  \begin{pmatrix}
    0 & 0 \\ \kappa & -\kappa
  \end{pmatrix}.
\]
Conditional on $X_0=D$, the recovery stopping time $\tau$ defined in \eqref{eq:tau}
is exponentially distributed with mean $1/\kappa$, with density $f_\tau(s) \coloneqq
\kappa \e^{-\kappa s}$ for $s>0$. Therefore, a smaller $\kappa$ means a longer
expected disaster duration and hence higher disaster persistence.

\subsubsection{A consumption-based pricing specification}

We consider a consumption-based asset-pricing model with time-separable CRRA
preferences, as in \citet[Example~3.8]{hansen2009long}. Here, the state process $X$
is the two-state continuous-time Markov chain with physical generator $\Lx$ specified
above.

Let $C_t>0$ denote aggregate consumption at time $t$, with $C_0>0$. Between
state transitions, log consumption has drift $\mu_x\in\RR$ and volatility
$\sigma_x\geq0$ in state $x\in\{N,D\}$. At recovery, consumption is multiplied by
$\chi_{\sss DN}>0$, so log consumption jumps by $\log\chi_{\sss DN}$. Specifically, we
assume
\begin{equation}\label{eq:logc}
  \diff\log C_t
  =
  \mu_{X_{t-}}\diff t
  +\sigma_{X_{t-}}\diff W_t
  +\log\chi_{\sss DN}\,\diff N_t^{\Xs},
\end{equation}
where $W$ is a standard Brownian motion independent of $X$, and $N_t^{\Xs}$
counts the jumps of $X$ over $(0,t]$. Since recovery is the only possible
transition, $N_t^{\Xs}=\1_{\{\tau\leq t\}}$ when $X_0=D$.

Let $\alpha>0$ be the relative risk-aversion parameter of the CRRA utility function
$u$, and let $\beta>0$ be the subjective discount rate. Since $u'(c)=c^{-\alpha}$,
the SDF in this model takes the form
\begin{equation}\label{eq:sdf}
  S_t
  =\e^{-\beta t}\frac{u'(C_t)}{u'(C_0)}
  =\e^{-\beta t}(C_t/C_0)^{-\alpha}.
\end{equation}
We assume that cumulative cash-flow growth is a power of normalized consumption:
\begin{equation}\label{eq:gr}
  G_t=(C_t/C_0)^{\vartheta},\qquad\vartheta\geq0,
\end{equation}
where $\vartheta$ is the elasticity of cash-flow growth with respect to consumption.%
\footnote{
  Appendix~\ref{app:separate-exposures} relaxes this power specification by allowing
  cash-flow growth to have its own drift, Brownian exposure, and transition
  multipliers.
}
For $g=\1$, $\vartheta=0$ gives a fixed real payment and $\vartheta=1$ gives a
payment proportional to aggregate consumption. 

\subsubsection{The pricing generator and pricing-rate gap}

By \eqref{eq:logc}--\eqref{eq:gr}, we have
\[
  \diff\log(S_tG_t)
  =
  (-\beta-(\alpha-\vartheta)\mu_{X_{t-}})\diff t
  -(\alpha-\vartheta)\sigma_{X_{t-}}\diff W_t
  -(\alpha-\vartheta)\log\chi_{\sss DN}\,\diff N_t^{\Xs}.
\]
At recovery, $\log S_tG_t$ jump by $(\vartheta-\alpha)\log\chi_{\sss DN}$, so the
growth-adjusted SDF is multiplied by $\chi_{\sss DN}^{\vartheta-\alpha}$.
Proposition~\ref{prop:param-generator} then gives the pricing generator
\begin{equation}\label{eq:Lp-crra}
  \Lp 
  =
  \begin{pmatrix}
    -\varrho_{\sss N} & 0
    \\[3pt]
    \kappa\chi_{\sss DN}^{\vartheta-\alpha} & -\varrho_{\sss D} - \kappa
  \end{pmatrix},
\end{equation}
where  
\[
  \varrho_x\coloneqq\beta+(\alpha-\vartheta)\mu_x-\frac12\,(\alpha-\vartheta)^2\sigma_x^2,
  \qquad x\in\{N,D\}.
\]

Conditional on remaining in the initial state $x$, we have
$\EE_x(S_tG_t\mid N_t^{\Xs}=0)=\exp(-\varrho_x t)$.%
\footnote{
  On $\{N_t^{\Xs}=0\}$, starting from $x$,
  \(
    S_tG_t
    =
    \exp\{(-\beta-(\alpha-\vartheta)\mu_x)t-(\alpha-\vartheta)\sigma_xW_t\}.
  \)
  Independence of $W$ and $X$ yields $W_t\sim N(0,t)$ under this conditioning. The
  Gaussian exponential-moment formula then gives 
  \(
    \EE_x(S_tG_t\mid N_t^{\Xs}=0) 
    =
    \exp\{(-\beta-(\alpha-\vartheta)\mu_x+\tfrac12(\alpha-\vartheta)^2\sigma_x^2)t\}
    =
    \exp(-\varrho_x t).
  \)
  At fixed log-consumption drift, greater volatility raises this expectation
  through the convexity of the exponential, lowering the corresponding discount
  rate when $\alpha\neq\vartheta$.
}
Thus $\varrho_x$ is the within-state growth-adjusted discount rate, excluding
transition effects. Its adjustment to subjective discounting $\beta$ depends on
$\alpha-\vartheta$, the excess of marginal-utility sensitivity over cash-flow
sensitivity to consumption. When $\alpha>\vartheta$, higher expected log-consumption
growth raises $\varrho_x$, whereas greater volatility lowers it, holding other
parameters fixed.

If $\alpha>\vartheta$, $\mu_{\sss N}>\mu_{\sss D}$, and $\sigma_{\sss
D}^2>\sigma_{\sss N}^2$, then $\varrho_{\sss D}<\varrho_{\sss N}$.
The difference $\varrho_{\sss N}-\varrho_{\sss D}$ measures how much lower
the within-state discount rate is in disaster than in the normal state, before accounting
for recovery. In particular,
\begin{equation}\label{eq:rnd}
  \varrho_{\sss N}-\varrho_{\sss D}
  =(\alpha-\vartheta)(\mu_{\sss N}-\mu_{\sss D})
   +\frac12\,(\alpha-\vartheta)^2(\sigma_{\sss D}^2-\sigma_{\sss N}^2).
\end{equation}
Under the above conditions, greater risk aversion (i.e., a larger $\alpha$) widens the
discount-rate difference $\varrho_{\sss N}-\varrho_{\sss D}$, whereas greater
cash-flow sensitivity to consumption (i.e., a larger $\vartheta$) narrows it.%
\footnote{
  Under these conditions,
  $\partial(\varrho_{\sss N}-\varrho_{\sss D})/\partial\alpha
  =-\partial(\varrho_{\sss N}-\varrho_{\sss D})/\partial\vartheta
  =(\mu_{\sss N}-\mu_{\sss D})
   +(\alpha-\vartheta)(\sigma_{\sss D}^2-\sigma_{\sss N}^2)>0$.
 }

Since each class contains a single state, its pricing rate is the corresponding
diagonal entry of $\Lp$:
\begin{equation}\label{eq:pr-crra}
  \lambdaN = -\varrho_{\sss N},\qquad
  \lambdaD = -\varrho_{\sss D}-\kappa.
\end{equation}
Hence the pricing-rate gap is
\begin{equation}\label{eq:crra-gap}
  \Delta
  \coloneqq
  \lambdaD-\lambdaN 
  =
  \varrho_{\sss N}-\varrho_{\sss D}-\kappa.
\end{equation}
Disaster is pricing dominant exactly when its within-state discount rate is lower
than the normal-state rate by more than the recovery intensity:
$\varrho_N-\varrho_D>\kappa$. Holding other parameters fixed, greater disaster
persistence---a smaller $\kappa$---raises $\Delta$. 

\subsubsection{Exact prices and recovery timing}
\label{sec:crra-recovery}

Applying Proposition~\ref{prop:owr} to the pricing generator in \eqref{eq:Lp-crra}
gives the pricing semigroup
\[
  \Pi_t 
  =
  \begin{pmatrix}
    \e^{t\lambdaN} & 0\\[3pt] H_t & \e^{t\lambdaD}
  \end{pmatrix},
  \qquad
  t\geq0,
\]
where $\lambdaN$ and $\lambdaD$ are given by \eqref{eq:pr-crra}, and
\begin{equation}\label{eq:Hh}
  H_t = \int_0^t h_t(s)\diff s, 
  \qquad
  h_t(s)
  \coloneqq
  \kappa\chi_{\sss DN}^{\vartheta-\alpha} \,\e^{\lambdaD s}\e^{\lambdaN(t-s)} > 0,
  \quad 0<s<t.
\end{equation}
As in Section~\ref{sec:owr}, here $s$ represents a possible value of the recovery
stopping time $\tau$, and $t-s$ the time spent in the normal state before maturity.

For a payoff $g$, write $g_N = g(N)$ and $g_D = g(D)$. The corresponding maturity-$t$
strip prices are 
\[
  q_t(N,g)
  =
  \e^{\lambdaN t}g_N,
  \qquad
  q_t(D,g)
  =
  \e^{\lambdaD t}g_D + H_t g_N.
\]
To make the late-recovery interpretation at the end of Section~\ref{sec:owr}
explicit, assume $g_N>0$ and consider the \emph{recovery price-contribution} $H_tg_N$ to
$q_t(D,g)$. By \eqref{eq:Hh}, $g_Nh_t(s)$ is the \emph{density} of this recovery
contribution: integrating it over any subinterval of $(0,t)$ gives the price
contribution from paths that recover during that interval.

Using \eqref{eq:pr-crra}, we can factor $h_t(s)$ defined in \eqref{eq:Hh} as
\begin{equation}\label{eq:h-factor}
  h_t(s)
  =
  f_\tau(s)\, \EE_{\sss D}(S_tG_t\mid\tau=s),
  \qquad 
  0 < s < t,
\end{equation}
where $f_\tau(s) = \kappa \e^{-\kappa s}$ is the physical density of the recovery
time $\tau$ under $\PP_{\sss D}$, and 
\[
  \EE_{\sss D}(S_tG_t\mid\tau=s) 
  =
  \chi_{\sss DN}^{\vartheta-\alpha} \e^{-\varrho_{\sss D}s-\varrho_{\sss N}(t-s)}
\]
is the expected growth-adjusted SDF conditional on recovery at time $s\in(0,t)$.
Thus the second factor in \eqref{eq:h-factor} weights the physical recovery-time
density to obtain the recovery price-contribution density per unit of normal-state
payoff. Since
\[
  \frac{\diff}{\diff s} \log f_\tau(s) = -\kappa,
  \qquad
  \frac{\diff}{\diff s} \log \EE_{\sss D}(S_tG_t\mid\tau=s) 
  = 
  \varrho_{\sss N} - \varrho_{\sss D},
\]
we obtain
\[
  \frac{\diff}{\diff s}\log h_t(s)
  =\varrho_{\sss N}-\varrho_{\sss D}-\kappa
  =\Delta.
\]
Therefore, $h_t(s)$ is strictly increasing in $s$ exactly when $\Delta>0$, or,
equivalently, when the conditional expected growth-adjusted SDF increases with
recovery time at a faster exponential rate than the physical density decreases.

To describe how the recovery contribution is distributed across recovery times,
define the normalized recovery price-contribution density 
\begin{equation}\label{eq:h-til}
  \tilde h_t(s)
  \coloneqq
  \frac{h_t(s)}{H_t}
  =
  \frac{h_t(s)}{\int_0^t h_t(r)\diff r}
  ,
  \qquad 0<s<t.
\end{equation}
For comparison, conditioning the recovery time $\tau$ on recovery by maturity $t$
yields the density
\[
  f_\tau(s\mid\tau\leq t)
  =\frac{f_\tau(s)}{\PP_{\sss D}\{\tau\leq t\}}
  =\frac{\kappa\e^{-\kappa s}}{1-\e^{-\kappa t}},
  \qquad 0<s<t.
\]
Both densities integrate to one over $(0,t)$, but they have distinct interpretations.

\begin{figure}[!t]
  \centering
  \includegraphics[width=0.9\textwidth]{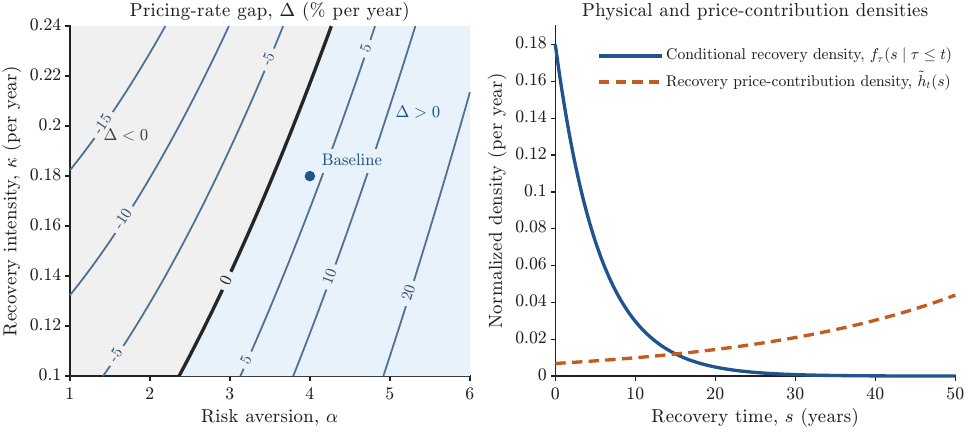}
  \caption{
    Pricing-rate gap and physical versus price-contribution densities.
    Parameters follow \eqref{eq:crra-benchmark} with $\vartheta=0$.
    Left: contours of $\Delta=\lambdaD-\lambdaN$ (\% per year), varying
    $\alpha\in[1,6]$ and $\kappa\in[0.10,0.24]$ per year. The bold $\Delta=0$
    boundary separates disaster (blue) from normal-state (gray) pricing dominance;
    the dot marks $(\alpha,\kappa)=(4,0.18)$.
    Right: the conditional physical density $f_\tau(s\mid\tau\leq t)$ (blue)
    and normalized recovery price-contribution density $\tilde h_t(s)=h_t(s)/H_t$
    (orange dashed), for $X_0=D$ and $t=50$ years at the baseline.
  }
  \label{fig:dp}
\end{figure}

When $\Delta>0$, \eqref{eq:Hh} and \eqref{eq:h-til} give
\(
  \tilde h_t(s)=\Delta\e^{\Delta s}/(\e^{\Delta t}-1)
\)
for $s\in(0,t)$. For fixed $\Delta>0$ and $v>0$, and $t>v$, the share of
the recovery contribution attributable to recovery during $[t-v,t]$ is
\[
  \int_{t-v}^t \tilde h_t(s)\diff s
  =\frac{1-\e^{-\Delta v}}{1-\e^{-\Delta t}}
  \to
  1-\e^{-\Delta v} > 0,
  \qquad \mbox{as } t\to\infty.
\]
Thus, for any $\varepsilon\in(0,1)$, choosing
$v\geq\Delta^{-1}\log(1/\varepsilon)$ ensures that recovery during $[t-v,t]$
accounts for at least $1-\varepsilon$ of the recovery price-contribution.
For any such fixed $v$, however, 
\[
  \int_{t-v}^t f_\tau(s\mid\tau\leq t)\diff s
  =
  \frac{\e^{\kappa v}-1}{\e^{\kappa t}-1}
  \to
  0
  \qquad \mbox{as } t\to\infty.
\]
This quantifies the distinction between pricing importance and physical likelihood.

\subsubsection{A quantitative illustration}

We measure time in years and set 
\begin{equation}\label{eq:crra-benchmark}
  \beta=0.02,\quad
  \mu_{\sss N} = 0.025,\quad\mu_{\sss D} = 0,
  \quad
  \sigma_{\sss N} = 0.02,\quad \sigma_{\sss D} = \sqrt{0.02^2+0.121^2},
  \quad
  \kappa=0.18.
\end{equation}
These inputs are illustrative, not a calibration. The disaster-growth,
excess-volatility, and recovery magnitudes are informed by
\citet[Table~1]{nakamura2013crises}.%
\footnote{The growth and excess-volatility inputs draw on their permanent
disaster-shock estimates. These are not simply contemporaneous consumption-growth
shocks, and their benchmark preferences are Epstein--Zin rather than CRRA. Thus
\eqref{eq:crra-benchmark} is not a replication of their consumption process or
equilibrium prices.}

The recovery intensity $\kappa$ has a direct duration interpretation: the mean
disaster spell is $1/\kappa$, and its probability of lasting more than one year is
$\e^{-\kappa}$. Matching the annual disaster-continuation probability of $0.835$
reported by \citet[Table~1]{nakamura2013crises} gives
$\kappa=-\log(0.835)\approx0.1803$. Our baseline therefore implies a mean disaster
spell of $5.56$ years, close in magnitude to the roughly six-year mean in their
annual model. The range $\kappa\in[0.10,0.24]$ in Figure~\ref{fig:dp} spans mean
durations of approximately 4.2--10 years. Recovery restores normal consumption-growth
dynamics but need not return consumption to its pre-disaster level.

Under \eqref{eq:crra-benchmark}, \eqref{eq:rnd} and \eqref{eq:crra-gap} imply that
\(
  \Delta=0.025\,(\alpha-\vartheta) + 0.00732\,(\alpha-\vartheta)^2 - 0.18.
\)
For $\alpha\geq\vartheta$, disaster dominance requires $\alpha-\vartheta>3.5369$. At
$\alpha=4$, a fixed payoff ($\vartheta=0$) gives $\lambdaN=-0.1168$,
$\lambdaD=-0.0797$, and $\Delta=0.0371$.%
\footnote{
  The implied normal-state real yield, $-\lambdaN=11.68\%$, belongs to the
  counterfactual economy with disaster entry ruled out. It is not a calibrated
  prediction for yields observed during normal times, when investors still price the
  possibility of future disasters.
}
By contrast, consumption-proportional cash flows
($\vartheta=1$) give $\Delta=-0.0391$; so the same physical economy and preferences
do not imply disaster dominance for every cash-flow family.

Figure~\ref{fig:dp} fixes $\vartheta=0$. The left panel varies risk aversion $\alpha$
and recovery intensity $\kappa$, holding the remaining parameters at the values in
\eqref{eq:crra-benchmark}. The pricing-rate gap $\Delta$ increases with $\alpha$ and
decreases with $\kappa$, so greater risk aversion and more persistent disasters favor
disaster dominance. The blue and gray regions correspond to disaster and normal-state
dominance, respectively. The upward-sloping zero-gap boundary shows that faster
recovery requires greater risk aversion for disaster dominance.

The right panel of Figure~\ref{fig:dp} considers the baseline setting
$(\alpha,\kappa)=(4,0.18)$ and maturity $t=50$ years, starting from disaster. The
conditional physical density places most probability on recovery within the first ten
years, whereas the normalized recovery price-contribution density increases toward
maturity. For example, recovery during the final ten years accounts for approximately
$36.8\%$ of the recovery price-contribution, but its physical probability conditional
on recovery by maturity is only $0.062\%$. Thus physically unlikely late recovery can
carry substantial pricing importance.

\section{Rare Transitions and Crossover Maturities}
\label{sec:near}

The discussion following \eqref{eq:owrb} in Section~\ref{sec:owr} shows
that the disaster class can have the highest pricing rate yet remain absent
from prices starting in the normal class. We now study how rare transitions
bring an otherwise excluded pricing-dominant class into the pricing corridor.
We characterize how the rarity of the connecting paths determines its
vanishing price loading and how this loading, together with the pricing-rate
gap, determines the crossover to long-maturity dominance.

\subsection{Path rarity and dominant-component loadings}
\label{sec:rare-entry}

Consider Metzler pricing generators $\Lp(\epsilon)$, $\epsilon\geq0$, such that
$\Lp(\epsilon)$ is irreducible for every sufficiently small $\epsilon>0$ and
converges to a reducible generator $\Lp(0)$ as $\epsilon\downarrow0$. For
$x\in\Xsf$ and $g\in\RR^\Xsf$, write
\[
  \Pi_t(\epsilon)\coloneqq\e^{t\Lp(\epsilon)},
  \qquad
  q_t(x,g;\epsilon)\coloneqq(\Pi_t(\epsilon)g)(x),
  \qquad t\geq0.
\]
Throughout Section~\ref{sec:near}, we write $\Lp\coloneqq\Lp(0)$ for the limiting
pricing generator, and maintain Assumptions~\ref{ass:markov}--\ref{ass:continuity}
for each sufficiently small $\epsilon\geq0$, on the same finite state space $\Xsf$.

Let $\cC=\{\Ccal_1,\ldots,\Ccal_m\}$, $m\geq2$, be the class partition of the
\emph{limiting} graph $\Gcal(\Lp(0))$. We retain these \emph{limiting classes} as
a fixed partition of $\Xsf$ for $\epsilon>0$, even though irreducibility makes
$\Gcal(\Lp(\epsilon))$ a single class. Partition
\[
  \Lp(\epsilon)=\bigl[L_{k\ell}(\epsilon)\bigr],
  \qquad k,\ell = 1,\ldots,m,
\]
conformably with \eqref{eq:frobenius}. Thus $L_{k\ell}(\epsilon)$ has rows in
$\Ccal_k$ and columns in $\Ccal_\ell$; when $k\neq\ell$, a nonzero block represents
a direct edge $\Ccal_k\to\Ccal_\ell$ in $\Gcal(\Lp(\epsilon))$.

Set $L_k\coloneqq L_{kk}(0)$ and $\lambda_k\coloneqq s(L_k)$. Assume that the
limiting network has a unique globally pricing-dominant class $\Ccal^*$:
\begin{equation}\label{eq:gvd}
  \lambda^* \coloneqq s(\Lp(0)) =s(L_{k^*}) > \max_{\Ccal_k\neq\Ccal^*}\lambda_k,
\end{equation}
where $L_{k^*}$ is the block indexed by $\Ccal^*$.

The next assumption is standard in singular perturbations of nearly decomposable
systems; see, for example,
\citet{schweitzer1968perturbation,meyer1989stochastic,YinZhang2012} and
\citet{avrachenkov2013analytic}.

\begin{assumption}[Rare-edge perturbation] \label{ass:edge}
  For every sufficiently small $\epsilon>0$, $\Lp(\epsilon)$ is Metzler and
  irreducible. There is a positive function $\eta(\epsilon)\to0$ such that
  \(
    L_{kk}(\epsilon) = L_k+O(\eta(\epsilon)),
  \)
  for every $k$. For each $k\neq\ell$, there is an edge-rarity exponent
  $w_{k\ell}\in[0,+\infty]$ such that, as $\epsilon\downarrow0$,
  \[
    L_{k\ell}(\epsilon)
    =
    \begin{cases}
      \epsilon^{w_{k\ell}}
      \bigl(K_{k\ell}+O(\eta(\epsilon))\bigr),
      & w_{k\ell}<+\infty,\\
      0, & w_{k\ell}=+\infty,
    \end{cases}
  \]
  where, in the first case, $K_{k\ell}\geq0$ and $K_{k\ell}\neq0$; in the
  second, the equality holds for every $\epsilon\geq0$.
\end{assumption}

Assumption~\ref{ass:edge} is inherited from a regular perturbation of the physical
Markov dynamics. For example, if $r(\epsilon)\to0$, $\bar Q_{k\ell}\geq0$ is
nonzero, $\bar\Gamma_{k\ell}\gg0$, and, for $k\neq\ell$ with
$w_{k\ell}<+\infty$,
\[
  [\Lx(\epsilon)]_{k\ell}
  =\epsilon^{w_{k\ell}}
  \bigl(\bar Q_{k\ell}+O(r(\epsilon))\bigr),
  \qquad
  \Gamma_{k\ell}(\epsilon)
  =\bar\Gamma_{k\ell}+O(r(\epsilon)),
\]
then Proposition~\ref{prop:param-generator} gives
\[
  L_{k\ell}(\epsilon)
  =[\Lx(\epsilon)]_{k\ell}\odot\Gamma_{k\ell}(\epsilon)
  =\epsilon^{w_{k\ell}}
  \bigl(\bar Q_{k\ell}\odot\bar\Gamma_{k\ell}+O(r(\epsilon))\bigr).
\]
Thus regular risk adjustment changes the leading coefficient but preserves the edge
exponent and zero pattern. If within-class errors are uniformly $O(r(\epsilon))$,
they and the diagonal effects of new exit intensities can be absorbed by taking, for
example,
\(
  \eta(\epsilon) = r(\epsilon)+\epsilon^{w_{\min}},
\)
where $w_{\min}$ is the smallest positive edge exponent. Positive limiting jump
multipliers imply that physical irreducibility yields pricing irreducibility;
singular or vanishing risk adjustments may instead change the effective exponents.

Define the weighted directed graph $\Gcal_w$ on $\cC$ by including the edge
$\Ccal_k\to\Ccal_\ell$ exactly when $w_{k\ell}<+\infty$ and assigning it weight
$w_{k\ell}$. It records which interclass edges can open and their asymptotic
orders. An edge of weight zero is \emph{non-rare} and is already present at the
limit; one of positive finite weight is \emph{rare} and opens at order
$\epsilon^{w_{k\ell}}$. An infinite weight means that the direct edge remains
absent, although an indirect path may exist. The zero-weight subgraph of $\Gcal_w$
is precisely the reduced pricing graph $\Gcal_{\mathrm r}(\Lp(0))$, while
irreducibility for $\epsilon>0$ makes $\Gcal_w$ strongly connected.

For $\Ccal,\Ccal'\in\cC$, let $\dD_w(\Ccal,\Ccal')$ be the set of directed paths
from $\Ccal$ to $\Ccal'$ in $\Gcal_w$.
Here a path $\gamma=(\Ccal_{i_0},\ldots,\Ccal_{i_p})$ is a sequence of distinct
vertices with $\Ccal_{i_0}=\Ccal$, $\Ccal_{i_p}=\Ccal'$, and a directed edge from
each vertex to the next.
Its weight is
\(
  W(\gamma) \coloneqq \sum_{r=1}^p w_{i_{r-1}i_r}.
\)
When $\Ccal=\Ccal'$, include the length-zero path with weight zero. Define
\begin{equation} \label{eq:rdm}
  d_{\scriptscriptstyle\Ccal\Ccal^*} \coloneqq \min_{\gamma\in\dD_w(\Ccal,\,\Ccal^*)}W(\gamma),
  \qquad
  d_{\scriptscriptstyle\Ccal^*\Ccal} \coloneqq \min_{\gamma\in\dD_w(\Ccal^*,\,\Ccal)}W(\gamma).
\end{equation}
The subscripts record direction, and $d_{\sss \Ccal^*\Ccal^*}=0$.

\begin{figure}[t!]
  \centering
  \begin{tikzpicture}
    \node[classnode] (C) at (0,0) {Distress class\\$\Ccal$};
    \node[classnode,very thick,fill=black!10] (S) at (4,1.65)
      {Dominant crisis class\\$\Ccal^*\in\sS(g)$};
    \node[classnode] (G) at (8,0)
      {Recovery class\\$\Ccal'\in\sS(g)$};
    \draw[flow] (C) -- node[below,font=\scriptsize] {$w=0$} (G);
    \draw[rareflow] (C) -- node[above left,font=\scriptsize] {$w=2$} (S);
    \draw[flow] (S) -- node[above right,font=\scriptsize] {$w=0$} (G);
    \draw[rareflow] (G.south) to[bend left=35]
      node[below,font=\scriptsize] {$w=1$} (C.south);
  \end{tikzpicture}
  \caption{A stylized distress--crisis--recovery weighted class graph. The payoff is
  supported on both the globally dominant crisis class $\Ccal^*$ and the recovery
  class $\Ccal'$. Solid and dashed arrows denote zero- and positive-weight edges,
  respectively. The rare relapse edge makes $\Gcal_w$ strongly connected.}
  \label{fig:weighted-rare-graph}
\end{figure}

For a nonzero $g\geq0$ with class-level support $\sS(g)$, define
\begin{equation} \label{eq:rare-price-distance}
  d(\Ccal,g)
  \coloneqq
  d_{\sss \Ccal\Ccal^*} + \min_{\Ccal'\in\sS(g)}d_{\sss \Ccal^*\Ccal'}.
\end{equation}
It is the minimum combined rarity exponent of two directed paths: one from the
initial class $\Ccal$ to $\Ccal^*$ and the other from $\Ccal^*$ to the payoff support.
Since zero-weight paths are exactly the paths in
the reduced graph of $\Lp(0)$,
\begin{equation} \label{eq:rare-corridor-equivalence}
  d(\Ccal,g)=0
  \quad\Longleftrightarrow\quad
  \Ccal^*\in\kK(\Ccal,g),
\end{equation}
where $\kK(\Ccal,g)$ is the limiting pricing corridor. Thus $d(\Ccal,g)>0$
exactly when at least one path in every such pair contains a rare edge.

Figure~\ref{fig:weighted-rare-graph} may be read as a stylized
distress--crisis--recovery network. Let $\sS(g)=\{\Ccal^*,\Ccal'\}$, so the payoff
is supported in both the globally dominant crisis class and the recovery class.
Starting from the distress class $\Ccal$, direct recovery is order one, whereas
escalation to $\Ccal^*$ occurs at order $\epsilon^2$. Recovery from crisis is again
order one, while relapse from $\Ccal'$ to $\Ccal$ occurs at order $\epsilon$.
This economic reading is illustrative; the results below require only a rarely
accessed globally dominant class, not a crisis interpretation.

The least-weight initial-state leg has weight $2$. Because
$\Ccal^*\in\sS(g)$, the length-zero path at $\Ccal^*$ makes the least-weight payoff
leg equal to zero; the solid recovery edge also gives
$d_{\scriptscriptstyle\Ccal^*\Ccal'}=0$. Hence
\[
  d_{\scriptscriptstyle\Ccal\Ccal^*}=2,
  \qquad
  \min_{\Ccal''\in\sS(g)}
  d_{\scriptscriptstyle\Ccal^*\Ccal''}=0,
  \qquad
  d(\Ccal,g)=2.
\]
The zero-weight graph gives the limiting corridor
$\kK(\Ccal,g)=\{\Ccal,\Ccal'\}$, which excludes $\Ccal^*$. Thus $\epsilon^2$ is the
order at which the dominant component enters the price, even though direct recovery
to a payoff class remains order one. The positive rarity exponent comes entirely
from the initial-state leg; neither singleton payoff support nor a rare payoff-side
edge is required.

For sufficiently small $\epsilon>0$, let $\lambda^*(\epsilon)=s(\Lp(\epsilon))$
and let $P_*(\epsilon)$ be its rank-one Riesz spectral projection. Write
$P_{*,\s\Ccal\Ccal'}(\epsilon)$ for the block with rows in $\Ccal$ and columns in
$\Ccal'$, and define the long-run pricing law by
\(
  \pi_\epsilon^{\mathrm{LR}}(x)\coloneqq P_*(\epsilon)(x,x).
\)

\begin{theorem}[Perron projection and pricing-mass asymptotics]
  \label{thm:rare-distance}
  Under Assumption~\ref{ass:edge} and the unique-dominant-class condition
  \eqref{eq:gvd}, as $\epsilon\downarrow0$,
  $\lambda^*(\epsilon)\to\lambda^*$ and, for every $\Ccal,\Ccal'\in\cC$, there is a
  matrix $\bar P_{\s\Ccal\Ccal'}\gg0$ such that
  \begin{equation} \label{eq:rp}
    \epsilon^{-\left(
      d_{\s\Ccal\Ccal^*}+d_{\s\Ccal^*\Ccal'}
    \right)} 
    P_{*,\s\Ccal\Ccal'}(\epsilon)
    \to \bar P_{\s\Ccal\Ccal'}.
  \end{equation}
  Moreover,
  \begin{equation} \label{eq:rv}
    \epsilon^{-\big(d_{\s\Ccal\Ccal^*}+d_{\s\Ccal^*\Ccal}\big)}
    \sum_{x\in\Ccal} \pi_\epsilon^{\mathrm{LR}}(x)
    \to
    \operatorname{tr}\bar P_{\s\Ccal\Ccal}>0.
  \end{equation}
  In particular, $\sum_{x\in\Ccal^*}\pi_\epsilon^{\mathrm{LR}}(x)\to1$ and
  $\sum_{x\in\Ccal}\pi_\epsilon^{\mathrm{LR}}(x)\to0$ for every $\Ccal\neq\Ccal^*$.
\end{theorem}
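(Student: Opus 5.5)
The plan is to write the Perron projection as an outer product of right and left Perron vectors and to analyze each vector blockwise, using the Schur-complement (resolvent) structure of the perturbed generator. By \eqref{eq:gvd}, $\lambda^*$ is an algebraically simple eigenvalue of $\Lp(0)$. Its projection $\phi\psi^\t$ comes from Theorem~\ref{thm:Pi-asym}. Continuity of simple eigenvalues and of their Riesz projections under the perturbation $\Lp(\epsilon)\to\Lp(0)$ then gives $\lambda^*(\epsilon)\to\lambda^*$ and $P_*(\epsilon)\to\phi\psi^\t$.

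For $\epsilon>0$, irreducibility gives strictly positive Perron vectors $\phi_\epsilon,\psi_\epsilon$, and $P_*(\epsilon)=\phi_\epsilon\psi_\epsilon^\t/(\psi_\epsilon^\t\phi_\epsilon)$. It therefore suffices to prove two blockwise asymptotics. The first is $\epsilon^{-d_{\Ccal\Ccal^*}}\phi_{\epsilon,\Ccal}\to\bar\phi_\Ccal\gg0$ when $\phi_\epsilon$ is normalized by $\phi_{\epsilon,\Ccal^*}\to\phi_{\Ccal^*}\gg0$. The second is $\epsilon^{-d_{\Ccal^*\Ccal'}}\psi_{\epsilon,\Ccal'}\to\bar\psi_{\Ccal'}\gg0$, and it follows by the same argument applied to $\Lp(\epsilon)^\t$ with the weighted graph reversed. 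The normalization $\psi_\epsilon^\t\phi_\epsilon$ converges to a positive constant, dominated by the $\Ccal^*$ block, because all other blocks vanish. Then \eqref{eq:rp} follows with $\bar P_{\Ccal\Ccal'}=\bar\phi_\Ccal\bar\psi_{\Ccal'}^\t/c\gg0$. Taking the trace of the diagonal block gives \eqref{eq:rv}, with $\tr\bar P_{\Ccal\Ccal}>0$ because the matrix is strictly positive. The final statements follow since $d_{\Ccal\Ccal^*}+d_{\Ccal^*\Ccal}>0$ for $\Ccal\neq\Ccal^*$: a cycle through $\Ccal$ and $\Ccal^*$ of zero weight would put both in one limiting class.

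The right-vector asymptotics is the main obstacle. I would eliminate the $\Ccal^*$ block. Let $\Xsf'=\Xsf\setminus\Ccal^*$ and let $A(\epsilon)$ be the principal submatrix of $\Lp(\epsilon)$ on $\Xsf'$. Then
\[
\phi_{\epsilon,\Xsf'}=(\lambda^*(\epsilon)I-A(\epsilon))^{-1}B(\epsilon)\phi_{\epsilon,\Ccal^*},
\]
where $B(\epsilon)$ is the block from $\Xsf'$ to $\Ccal^*$. Since $s(A(0))=\max_{k\neq k^*}\lambda_k<\lambda^*$ and $A(\epsilon)\to A(0)$, the matrix $R(\epsilon)=(\lambda^*(\epsilon)I-A(\epsilon))^{-1}$ is nonnegative and equals the Neumann-type series $\int_0^\infty \e^{-\lambda^*(\epsilon)t}\e^{tA(\epsilon)}\diff t$. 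I would prove by a path expansion that the block $R_{\Ccal\Ccal''}(\epsilon)$ is of exact order $\epsilon^{\delta(\Ccal,\Ccal'')}$ with a $\gg0$ leading coefficient, where $\delta$ is the least path weight within $\Gcal_w$ restricted to $\cC\setminus\{\Ccal^*\}$.

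For this I would induct on the number of classes, using a Schur complement to eliminate one class at a time, in the style of the minimum-resistance tree arguments. Each elimination adds to the edge weights $w_{k\ell}$ a correction $\min(w_{k\ell},w_{kj}+w_{j\ell})$ whose leading coefficient is positive. Nonnegativity of every term rules out cancellation, so the orders combine by min-plus and the coefficients stay positive. Irreducibility of each $L_k$ propagates positivity of any nonzero coefficient block to full positivity, since $(\lambda I-L_k)^{-1}\gg0$. Summing the contributions $R_{\Ccal\Ccal''}B_{\Ccal''\Ccal^*}$ then gives order $\epsilon^{\min_{\Ccal''}(\delta(\Ccal,\Ccal'')+w_{\Ccal''\Ccal^*})}=\epsilon^{d_{\Ccal\Ccal^*}}$ with a strictly positive limit.

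One technical point needs care: the $O(\eta(\epsilon))$ errors must be $o(1)$ relative to the leading terms. They are, because they only multiply leading coefficients by $1+o(1)$, and $\lambda^*(\epsilon)-\lambda^*\to0$ is harmless inside a resolvent that is uniformly bounded.
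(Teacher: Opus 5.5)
Your proposal is correct, and its architecture matches the paper's. It uses continuity of the simple Perron eigenvalue and its rank-one projection (Lemma~\ref{lem:rp*}), elimination of the $\Ccal^*$ block to write $\phi_{\epsilon,\Xsf'}=(\lambda^*(\epsilon)I-A(\epsilon))^{-1}B(\epsilon)\phi_{\epsilon,\Ccal^*}$, a cancellation-free positive path expansion of that resolvent, transposition for $\psi$, and the outer-product and trace step with $d_{\Ccal\Ccal^*}+d_{\Ccal^*\Ccal}>0$ for $\Ccal\neq\Ccal^*$.

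The real difference is how you obtain the resolvent orders, which is the content of Lemma~\ref{lem:rp}.

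\emph{The paper's route.} It splits the non-dominant submatrix as $M_0(\epsilon)+E(\epsilon)$. Here $M_0$ keeps the within-class and zero-weight interclass blocks, and $E$ keeps the rare edges. It expands $R_0\sum_{n\geq0}(ER_0)^n$ and indexes the terms by alternating zero-weight segments and single rare edges. It then needs a tail bound from $\|E(\epsilon)\|=O(\epsilon^{w_{\min}})$ and a walk-to-path argument to identify the minimal exponent with $d_{\Ccal\Ccal^*}$.

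\emph{Your route.} Class-by-class Schur complementation is a finite min-plus (Floyd--Warshall-type) elimination. No infinite series or truncation is needed, and least path weights come directly from the updates $\min(w_{k\ell},w_{kj}+w_{j\ell})$. The price is an induction through $\lambda$-dependent Schur complements, in which you must check two things:
\begin{itemize}
\item each complement is again a nonsingular M-matrix, with irreducible Metzler diagonal blocks $A_{kk}+A_{kj}R_jA_{jk}$ and coupling blocks of the form $\epsilon^{\tilde w}(\tilde K+o(1))$ with $\tilde K\geq0$ nonzero (not necessarily positive, as you wrote);
\item the inverse's blocks involving an eliminated class are recovered from the block-inverse formulas with the claimed orders and strictly positive limits.
\end{itemize}

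Two small repairs.
\begin{itemize}
\item In $\psi_\epsilon^\t\phi_\epsilon$, the blocks themselves do not all vanish. What vanishes for $\Ccal\neq\Ccal^*$ are the cross terms $\psi_{\epsilon,\Ccal}^\t\phi_{\epsilon,\Ccal}=O(\epsilon^{d_{\Ccal\Ccal^*}+d_{\Ccal^*\Ccal}})$.
\item The claim for $\Ccal^*$ needs $\operatorname{tr}P_*(\epsilon)=1$, or equivalently $\operatorname{tr}\bar P_{\Ccal^*\Ccal^*}=1$ under your normalization.
\end{itemize}
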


For fixed sufficiently small $\epsilon>0$, Perron--Frobenius theory gives
\(
  \e^{-\lambda^*(\epsilon)t}\Pi_t(\epsilon)\to P_*(\epsilon)
\)
and hence, for $x\in\Ccal$ and nonzero $g\geq0$,
\[
  q_t(x,g;\epsilon)
  \sim
  \e^{\lambda^*(\epsilon)t}\big(P_*(\epsilon)g\big)(x),
  \qquad t\to\infty.
\]
For such a payoff, let
\(
  \sS_{\min}(g)
  \coloneqq
  \operatorname*{arg\,min}_{\Ccal\in\sS(g)}d_{\s\Ccal^*\Ccal}.
\)
Combining this representation with \eqref{eq:rp} yields, for every
$\Ccal\in\cC$,
\begin{equation} \label{eq:rpl}
  \epsilon^{-d(\Ccal,g)} \bigl(P_*(\epsilon)g\bigr)_{\s\Ccal}
  \to
  \sum_{\Ccal'\in\sS_{\min}(g)}
  \bar P_{\s\Ccal\Ccal'}g_{\s\Ccal'} \gg0.
\end{equation}
Only the minimizing payoff classes survive in this limit. Hence $d(\Ccal,g)$ is
the loading exponent of the globally dominant component, while the two directed
exponents in \eqref{eq:rp} govern initial-state and terminal-payoff exposure.
Equations~\eqref{eq:rare-corridor-equivalence} and~\eqref{eq:rpl} also separate
the order-one and vanishing-loading cases. If $d(\Ccal,g)=0$, then
$\Ccal^*\in\kK(\Ccal,g)$,
$\lambda(\Ccal,g)=\lambda^*$, and the global loading has a strictly positive
order-one limit; both iterated long-run yield limits equal $-\lambda^*$. If
$d(\Ccal,g)>0$, then $\Ccal^*\notin\kK(\Ccal,g)$ and the global loading vanishes
at order $\epsilon^{d(\Ccal,g)}$. Only the latter case can generate a diverging
local-to-global crossover driven by a vanishing loading on $\Ccal^*$.

\subsection{Crossover to global pricing dominance}
\label{sec:crossover}

Fix $x\in\Ccal$ and nonzero $g\geq0$. Throughout this subsection, suppose the
limiting pricing corridor $\kK=\kK(\Ccal,g)$ is nonempty and
$d(\Ccal,g)>0$. Let $L_{\kKs}$ be its restricted pricing generator, as defined
in Section~\ref{sec:corridors}, evaluated at $\Lp=\Lp(0)$, and abbreviate
$\lambda_{\kKs}\coloneqq\lambda(\Ccal,g)=s(L_{\kKs})$.

By Proposition~\ref{prop:corridor}, $L_{\kKs}$ gives an exact representation of
the limiting price. Because $d(\Ccal,g)>0$ is equivalent to $\Ccal^*\notin\kK$ by
\eqref{eq:rare-corridor-equivalence}, we have
$\lambda_{\kKs}<\lambda^*$. Theorem~\ref{thm:rare-distance} nevertheless gives the
global mode a loading of order $\epsilon^d$. The key question is when its higher
exponential rate overcomes this vanishing loading, which requires a spectral
expansion uniform over maturities that diverge as $\epsilon\downarrow0$.

For every sufficiently small $\epsilon>0$, let $P_*(\epsilon)$ be the rank-one
Riesz projection of $\Lp(\epsilon)$ associated with $\lambda^*(\epsilon)$, and set
$P_*(0)\coloneqq P_*$, the corresponding projection of $\Lp$ at $\lambda^*$.
Unique global dominance makes $\lambda^*$ algebraically simple, while $\ker P_*$
is invariant under $\Lp$ and is the direct sum of the generalized eigenspaces
associated with all eigenvalues other than $\lambda^*$. Define
\begin{equation*}
  s_\circ \coloneqq s\big(\Lp\,|_{\ker P_*}\big),
  \qquad
  \Sigma_\circ
  \coloneqq \big\{z\in\sigma(\Lp):\Re z = s_\circ\big\}
  \subset \sigma(\Lp).
\end{equation*}
Thus $\Sigma_\circ$ consists of the eigenvalues other than $\lambda^*$ with the
largest real part. Although $s_\circ$ is real by definition, it need not itself be
an eigenvalue. Since $\sigma(\Lp|_{\ker P_*})=\sigma(\Lp)\setminus\{\lambda^*\}$ and
$\lambda_{\kKs}$ is a distinct eigenvalue of $\Lp$, we have
\(
  \lambda_{\kKs}\leq s_\circ<\lambda^*.
\)

For each $\lambda\in\sigma(\Lp)$, let $P_\lambda$ be its Riesz projection and
define
\[
  P_\circ
  \coloneqq
  \sum_{\lambda\in\Sigma_\circ}P_\lambda.
\]
Choose a bounded open neighborhood $U_\circ$ of $\Sigma_\circ$ such that
$\overline{U_\circ}\cap\sigma(\Lp)=\Sigma_\circ$. For all sufficiently small
$\epsilon\geq0$, define
\begin{equation}\label{eq:Sc}
  \Sigma_\circ(\epsilon)
  \coloneqq
  \sigma(\Lp(\epsilon))\cap U_\circ,
\end{equation}
and let $P_\circ(\epsilon)$ be its Riesz projection. Thus
$\Sigma_\circ(0)=\Sigma_\circ$ and $P_\circ(0)=P_\circ$.

Also let \( w_{\min} \coloneqq \min\{w_{k\ell}:0<w_{k\ell}<+\infty\}>0. \)
Such a positive-weight edge exists because the zero-weight subgraph is reducible
whereas $\Gcal_w$ is strongly connected. Fix a submultiplicative matrix norm.
Assumption~\ref{ass:edge} gives
\begin{equation} \label{eq:generator-perturbation-rate}
  r_L(\epsilon)
  \coloneqq
  \|\Lp(\epsilon)-\Lp\|
  =O\!\left(\eta(\epsilon)+\epsilon^{w_{\min}}\right)
  =o(1).
\end{equation}
This is the generator-level perturbation rate used below. In the physical
construction following Assumption~\ref{ass:edge},
$r_L(\epsilon)=O(r(\epsilon)+\epsilon^{w_{\min}})$, although the two error rates
need not coincide.

For sufficiently small $\epsilon\geq0$, define the global Perron loading and the
contribution associated with $\Sigma_\circ(\epsilon)$ by
\[
  a_*(x,g;\epsilon) \coloneqq \big(P_*(\epsilon)g\big)(x),
  \qquad
  b_t(x,g;\epsilon) \coloneqq \big(\e^{t\Lp(\epsilon)}P_\circ(\epsilon)g\big)(x),
  \quad t\geq0.
\]
The following result separates the vanishing global loading from the contribution
associated with $\Sigma_\circ(\epsilon)$ and controls all remaining spectral
contributions uniformly over maturity.

\begin{theorem} \label{thm:cd}
  Under Assumption~\ref{ass:edge} and the unique-dominant-class condition
  \eqref{eq:gvd}, fix $\Ccal\in\cC$, $x\in\Ccal$, and a nonzero payoff $g\geq0$.
  Suppose that $\kK=\kK(\Ccal,g)\neq\varnothing$ and $d=d(\Ccal,g)>0$.
  There exist constants $\epsilon_0,C,\delta>0$ such that, for every
  $\epsilon\in(0,\epsilon_0]$ and $t\geq0$,
  \begin{equation} \label{eq:usg}
    q_t(x,g;\epsilon)
    =
    \e^{\lambda^*(\epsilon)t} a_*(x,g;\epsilon) 
    +
    b_t(x,g;\epsilon) + R_{\epsilon,t},
    \qquad
    |R_{\epsilon,t}| \leq C \e^{(s_\circ-\delta)t}.
  \end{equation}
  Moreover, as $\epsilon\downarrow0$,
  \begin{equation} \label{eq:ggo}
    \lambda^*(\epsilon) = \lambda^*+O(r_L(\epsilon)),
    \qquad
    a_*(x,g;\epsilon) = a\,\epsilon^d(1+o(1)), \quad\text{for some } a>0.
  \end{equation}
  For every fixed $T<\infty$,
  $b_t(x,g;\epsilon)=b_t(x,g;0)+O(r_L(\epsilon))$ uniformly over $0\leq t\leq T$.
  At $\epsilon=0$, this contribution is
  \begin{equation} \label{eq:lsg}
    b_t(x,g;0)
    =
    \sum_{\lambda\in\Sigma_\circ} \e^{\lambda t}
    \sum_{j=0}^{\nu_\lambda-1}
    \frac{t^j}{j!} \big((\Lp-\lambda I)^jP_\lambda g\big)(x),
  \end{equation}
  where $\nu_\lambda$ is the size of the largest Jordan block at $\lambda$. If
  $\lambda_{\kK}<s_\circ$, then
  $b_t(x,g;0)\equiv0$: the spectral contribution associated with
  $\Sigma_\circ$ is absent from this limiting corridor price.
\end{theorem}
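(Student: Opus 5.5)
The plan is to use analytic perturbation theory for the finitely many eigenvalue groups of $\Lp(\epsilon)$ near $\sigma(\Lp)$. I would split the spectrum of $\Lp=\Lp(0)$ into three disjoint closed sets: $\{\lambda^*\}$, $\Sigma_\circ$, and the rest $\Sigma_r\coloneqq\sigma(\Lp)\setminus(\{\lambda^*\}\cup\Sigma_\circ)$. Every $z\in\Sigma_r$ has $\Re z<s_\circ$, so finiteness gives some $\delta_0>0$ with $\Re z\leq s_\circ-2\delta_0$ on $\Sigma_r$. I would choose disjoint contours $\Gamma_*$, $\partial U_\circ$ and $\Gamma_r$ enclosing these three sets, with $\Gamma_r$ lying in $\{\Re z\leq s_\circ-\frac32\delta_0\}$. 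Since $r_L(\epsilon)\to0$, the resolvent $(z-\Lp(\epsilon))^{-1}$ is uniformly bounded on all three contours for $\epsilon\leq\epsilon_0$, with a uniform $O(r_L)$ difference from $(z-\Lp)^{-1}$. Consequently the Riesz projections $P_*(\epsilon)$, $P_\circ(\epsilon)$ and $P_r(\epsilon)$ sum to $I$, are $O(r_L)$-continuous in $\epsilon$, and $\sigma(\Lp(\epsilon))$ stays inside the three regions. The eigenvalue inside $\Gamma_*$ must be the Perron root $\lambda^*(\epsilon)$. This is because it is simple by rank continuity, and $s(\Lp(\epsilon))$ is continuous in $\epsilon$ and therefore near $\lambda^*$.

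The decomposition \eqref{eq:usg} then follows from $\e^{t\Lp(\epsilon)}=\e^{\lambda^*(\epsilon)t}P_*(\epsilon)+\e^{t\Lp(\epsilon)}P_\circ(\epsilon)+\e^{t\Lp(\epsilon)}P_r(\epsilon)$. The remainder is $R_{\epsilon,t}=(\e^{t\Lp(\epsilon)}P_r(\epsilon)g)(x)$. I would bound it uniformly in $t$ by writing it as $\frac1{2\pi i}\oint_{\Gamma_r}\e^{tz}(z-\Lp(\epsilon))^{-1}\diff z\,g$. This gives $|R_{\epsilon,t}|\leq C\e^{(s_\circ-\frac32\delta_0)t}$, and taking $\delta=\delta_0$ yields the stated bound. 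The key point is that the contour representation gives uniformity in both $\epsilon$ and $t$ simultaneously, with no Jordan-block constants that blow up under perturbation.

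For \eqref{eq:ggo}, the eigenvalue estimate $\lambda^*(\epsilon)-\lambda^*=O(r_L)$ is standard for a simple eigenvalue, for example via $\lambda^*(\epsilon)=\tr(\Lp(\epsilon)P_*(\epsilon))$. The loading asymptotics follow from \eqref{eq:rpl}, which is derived from Theorem~\ref{thm:rare-distance}. It gives $\epsilon^{-d}(P_*(\epsilon)g)(x)\to a\coloneqq\sum_{\Ccal'\in\sS_{\min}(g)}(\bar P_{\Ccal\Ccal'}g_{\Ccal'})(x)>0$. For $0\leq t\leq T$, I would write $b_t(\epsilon)-b_t(0)=(\e^{t\Lp(\epsilon)}-\e^{t\Lp})P_\circ(\epsilon)g+\e^{t\Lp}(P_\circ(\epsilon)-P_\circ)g$. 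The first term is bounded by Duhamel, giving $\|\e^{t\Lp(\epsilon)}-\e^{t\Lp}\|\leq T\e^{T(\|\Lp\|+1)}r_L$. The second term is $O(r_L)$ by the resolvent estimate.

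Formula \eqref{eq:lsg} is the Jordan expansion of $\e^{t\Lp}P_\lambda$, namely $\e^{\lambda t}\sum_j \frac{t^j}{j!}(\Lp-\lambda I)^jP_\lambda$, summed over $\Sigma_\circ$.

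The main obstacle is the final claim. I would handle it with Proposition~\ref{prop:corridor}, which shows that the limiting price from $x$ equals $(\e^{tL_{\kK}}g_{\kK})(x)$. The key fact to establish is that, for $x$ in the corridor, the $x$-row of $P_\lambda$ at $\epsilon=0$ applied to $g$ equals the $x$-row of the Riesz projection of $L_{\kK}$ at $\lambda$ applied to $g_{\kK}$, which is zero when $\lambda\notin\sigma(L_{\kK})$. To prove it, I would use $(P_\lambda g)(x)=\frac1{2\pi i}\oint_\lambda((z-\Lp)^{-1}g)(x)\diff z$. By the Laplace transform of the corridor identity, $((z-\Lp)^{-1}g)(x)=((z-L_{\kK})^{-1}g_{\kK})(x)$ for $\Re z$ large, and the identity extends by analytic continuation in $z$ because both sides are rational. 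The same substitution handles the higher terms, $((\Lp-\lambda)^jP_\lambda g)(x)=\frac1{2\pi i}\oint(z-\lambda)^j((z-\Lp)^{-1}g)(x)\diff z$. If $\lambda_{\kK}<s_\circ$, then no $\lambda\in\Sigma_\circ$ lies in $\sigma(L_{\kK})$, since $\sigma(L_{\kK})\subset\{\Re z\leq\lambda_{\kK}\}$. The contour integrals therefore vanish and $b_t(x,g;0)\equiv0$.
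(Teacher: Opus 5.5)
Your proposal is correct and matches the paper's proof step by step. The paper also (i) uses a three-way Riesz splitting $\{\lambda^*\}$, $\Sigma_\circ$, $\Sigma_\dagger$ with the resolvent uniformly bounded on the contours, (ii) gets the eigenvalue rate from the trace identity, the loading from \eqref{eq:rpl}, and the bounded-maturity statement from Duhamel, and (iii) continues the corridor resolvent identity as a rational function to show the $\Sigma_\circ$ coefficients vanish. The only cosmetic difference is that you extract those coefficients as $\frac{1}{2\pi i}\oint (z-\lambda)^j R(z;\Lp)\,\diff z$, whereas the paper reads them off the principal part of the Laurent expansion; the two are equivalent.
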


For bounded maturities, Theorem~\ref{thm:cd} makes the global contribution vanish at
order $\epsilon^d$, while $b_t(x,g;\epsilon)$ converges to its reducible-limit
counterpart. For every fixed positive perturbation, however, the higher Perron rate
eventually governs the long run. The resulting order-of-limits discontinuity is
immediate.

\begin{corollary}[Noncommuting long-run limits] \label{cor:noncommuting}
  Under the conditions of Theorem~\ref{thm:cd},
  \begin{equation} \label{eq:noncommuting}
    \lim_{\epsilon\downarrow0}\lim_{t\to\infty}
    \left[-\frac1t\log q_t(x,g;\epsilon)\right]
    =-\lambda^*,
    \qquad
    \lim_{t\to\infty}\lim_{\epsilon\downarrow0}
    \left[-\frac1t\log q_t(x,g;\epsilon)\right]
    =-\lambda_{\kKs}
    =-\lambda(\Ccal,g).
  \end{equation}
\end{corollary}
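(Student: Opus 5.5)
We prove the two iterated limits in \eqref{eq:noncommuting} separately.

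\emph{First limit ($t\to\infty$ first).} Fix a sufficiently small $\epsilon\in(0,\epsilon_0]$. Since $\Lp(\epsilon)$ is irreducible and Metzler, Perron--Frobenius theory gives $\e^{-\lambda^*(\epsilon)t}\Pi_t(\epsilon)\to P_*(\epsilon)$. Hence
\[
  -\tfrac1t\log q_t(x,g;\epsilon)\to-\lambda^*(\epsilon),
\]
provided $a_*(x,g;\epsilon)>0$. This positivity follows from \eqref{eq:ggo}, or directly from \eqref{eq:rpl}, because $a_*(x,g;\epsilon)=a\epsilon^d(1+o(1))$ with $a>0$. Alternatively, one can read this off \eqref{eq:usg}. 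By Theorem~\ref{thm:cd} (or upper semicontinuity of the spectrum), for small $\epsilon$ every point of $\Sigma_\circ(\epsilon)$ has real part below $(s_\circ+\lambda^*)/2<\lambda^*(\epsilon)$. So $b_t$ and $R_{\epsilon,t}$ are $o(\e^{\lambda^*(\epsilon)t})$ as $t\to\infty$, and the first term dominates. Letting $\epsilon\downarrow0$ and using $\lambda^*(\epsilon)=\lambda^*+O(r_L(\epsilon))$ from \eqref{eq:ggo} gives the first limit $-\lambda^*$.

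\emph{Second limit ($\epsilon\downarrow0$ first).} Fix $t\geq0$. The matrix exponential is continuous in the generator, and $\Lp(\epsilon)\to\Lp$ by \eqref{eq:generator-perturbation-rate}. Therefore $q_t(x,g;\epsilon)\to q_t(x,g;0)=(\e^{t\Lp}g)(x)$. We know $\kK(\Ccal,g)\neq\varnothing$, so Proposition~\ref{prop:corridor} gives $q_t(x,g;0)>0$ for $t>0$. Continuity of $\log$ on $(0,\infty)$ then yields
\[
  \lim_{\epsilon\downarrow0}\Bigl[-\tfrac1t\log q_t(x,g;\epsilon)\Bigr]=y_t(x,g)
\]
for the limiting economy. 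Theorem~\ref{thm:pyc}, through \eqref{eq:yield-clim}, now gives $y_t(x,g)\to-\lambda(\Ccal,g)=-\lambda_{\kKs}$ as $t\to\infty$.

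The only delicate point is positivity of the arguments of the logarithms: $a_*>0$ for fixed small $\epsilon>0$, and $q_t(x,g;0)>0$. Both are supplied by earlier results, so no real obstacle remains. The strict inequality $\lambda_{\kKs}<\lambda^*$, noted before Theorem~\ref{thm:cd}, shows that the two limits genuinely differ.
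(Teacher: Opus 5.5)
Your proof is correct and follows essentially the same route as the paper: for the first limit, Perron--Frobenius for the irreducible $\Lp(\epsilon)$ with a strictly positive loading $a_*(x,g;\epsilon)$ and then $\lambda^*(\epsilon)\to\lambda^*$; for the second, fixed-maturity continuity in $\epsilon$, positivity of $q_t(x,g;0)$ from the nonempty corridor, and the yield expansion of Theorem~\ref{thm:pyc}. The differences are cosmetic. The paper gets $a_*(x,g;\epsilon)>0$ directly from strict positivity of the Perron vectors, which holds for every small $\epsilon>0$, and quantifies continuity with a Duhamel bound. You instead invoke the asymptotic \eqref{eq:ggo} and plain continuity of the matrix exponential, which also suffices here.
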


Thus taking maturity first selects the global rate, whereas taking the reducible
limit first restores the corridor rate. This conclusion does not require the local
contribution to be a single exponential term: by \eqref{eq:lsg}, it may combine
several eigenvalues and Jordan terms. Moreover, eigenvalues other than $\lambda^*$
with real parts above $\lambda_{\kKs}$ may have zero coefficients in the limiting
corridor price but acquire nonzero loadings when rare edges open, producing
intermediate crossovers.
A two-term exponential formula therefore requires the following spectral condition.

\begin{assumption}[Single simple eigenvalue at the corridor rate]
  \label{ass:single-corridor-eigenvalue}
  The corridor rate is the unique eigenvalue of $\Lp$, other than $\lambda^*$, with
  maximal real part:
  \begin{equation} \label{eq:single-corridor-eigenvalue-main}
    \lambda_{\kKs}=s_\circ,
    \qquad
    \Sigma_\circ=\{\lambda_{\kKs}\}.
  \end{equation}
  Moreover, $\lambda_{\kKs}$ is algebraically simple as an eigenvalue of $\Lp$.
\end{assumption}

Under Assumption~\ref{ass:single-corridor-eigenvalue}, set
\(
  \lambda_\circ\coloneqq s_\circ=\lambda_{\kKs}=\lambda(\Ccal,g).
\)
The assumption implies that $P_\circ=P_{\lambda_\circ}$ has rank one, that the
Jordan block at $\lambda_\circ$ has size one, and that every eigenvalue other than
$\lambda^*$ and $\lambda_\circ$ has real part strictly smaller than
$\lambda_\circ$. By the projection estimate
\eqref{eq:cp} in Lemma~\ref{lem:sd} and rank stability, for every sufficiently
small $\epsilon\geq0$ the set
$\Sigma_\circ(\epsilon)$ consists of a single real algebraically simple eigenvalue
$\lambda_\circ(\epsilon)$, with $\lambda_\circ(0)=\lambda_\circ$. Define
$b_\circ(x,g;\epsilon)\coloneqq(P_\circ(\epsilon)g)(x)$. Then
\begin{equation} \label{eq:corridor-exponential-term}
  b_t(x,g;\epsilon)
  =
  \e^{\lambda_\circ(\epsilon)t}b_\circ(x,g;\epsilon),
  \qquad t\geq0.
\end{equation}
The same projection estimate and the rank-one trace identity give, as
$\epsilon\downarrow0$,
\begin{equation} \label{eq:corridor-eigenvalue-rates}
  \lambda_\circ(\epsilon)
  =\lambda_\circ+O(r_L(\epsilon)),
  \qquad
  b_\circ(x,g;\epsilon)
  =b_\circ(x,g;0)+O(r_L(\epsilon)).
\end{equation}

Write $\Delta(\epsilon)\coloneqq\lambda^*(\epsilon)-\lambda_\circ(\epsilon)$ and
$\Delta\coloneqq\Delta(0)=\lambda^*-\lambda_\kK>0$.

\begin{theorem}\label{thm:crossover}
  Under the conditions of Theorem~\ref{thm:cd} and
  Assumption~\ref{ass:single-corridor-eigenvalue}, $b_\circ(x,g;0)>0$, and the
  following hold.
  \begin{enumerate}
  \item There exist $\epsilon_0,C,\delta>0$ such
  that, for every $\epsilon\in(0,\epsilon_0]$ and $t\geq0$,
  \begin{equation} \label{eq:uts}
    q_t(x,g;\epsilon)
    =
    a_*(x,g;\epsilon)\e^{\lambda^*(\epsilon)t}
    +b_\circ(x,g;\epsilon)\e^{\lambda_\circ(\epsilon)t}
    +R_{\epsilon,t},
    \qquad
    |R_{\epsilon,t}|
    \leq
    C\e^{(\lambda_\circ-\delta)t}.
  \end{equation}
  \item For all sufficiently small $\epsilon>0$,
  the two exponential terms in \eqref{eq:uts} are equal at the unique positive
  maturity
  \begin{equation} \label{eq:crossover-main}
    t_\epsilon^c
    =
    \frac{\log[b_\circ(x,g;\epsilon)/a_*(x,g;\epsilon)]}
    {\Delta(\epsilon)}.
  \end{equation}
  With $a>0$ as in \eqref{eq:ggo}, as $\epsilon\downarrow0$,
  \begin{equation}\label{eq:crossover-loading-correction}
    t_\epsilon^c
    =
    \frac{d\log(1/\epsilon)+\log[b_\circ(x,g;0)/a]}{\Delta(\epsilon)}
    +o(1).
  \end{equation}
  \item As $\epsilon\downarrow0$,
  \begin{equation} \label{eq:crossover}
    \frac{
      a_*(x,g;\epsilon)
      \e^{\lambda^*(\epsilon)(t_\epsilon^c+s)}
    }{
      q_{t_\epsilon^c+s}(x,g;\epsilon)
    }
    \to
    \frac{\e^{\Delta s}}{1+\e^{\Delta s}}
    \quad\text{uniformly for $s$ in compact subsets of $\RR$.}
  \end{equation}
  \end{enumerate}
\end{theorem}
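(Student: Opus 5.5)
The plan is to derive everything from Theorem~\ref{thm:cd}, specialized via Assumption~\ref{ass:single-corridor-eigenvalue} and \eqref{eq:corridor-exponential-term}--\eqref{eq:corridor-eigenvalue-rates}. The first task is positivity of $b_\circ(x,g;0)$. By Proposition~\ref{prop:corridor}, $q_t(x,g;0)=(\e^{tL_{\kKs}}g_{\kKs})(x)$. Since $\Ccal^*\notin\kK$, the spectral decomposition of $\Lp$ applied to $g$, evaluated at $x$, can carry no $\lambda^*$ contribution: a nonzero contribution would make $q_t(x,g;0)$ grow at rate $\lambda^*>\lambda_{\kKs}$, contradicting Theorem~\ref{thm:pyc}. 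Hence $q_t(x,g;0)=b_\circ(x,g;0)\e^{\lambda_\circ t}+O(\e^{(\lambda_\circ-\delta)t})$, using that all other eigenvalues have real part below $\lambda_\circ$. Theorem~\ref{thm:pyc} gives $\e^{-\lambda_\circ t}q_t(x,g;0)\to c_x(g)>0$, where the critical depth is $1$ because $\lambda_\circ$ is simple. Therefore $b_\circ(x,g;0)=c_x(g)>0$.

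Part (i) is \eqref{eq:usg} after substituting \eqref{eq:corridor-exponential-term} and $s_\circ=\lambda_\circ$; nothing further is needed.

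For part (ii), both $a_*$ and $b_\circ$ are positive for small $\epsilon$: $a_*>0$ by \eqref{eq:ggo}, and $b_\circ>0$ by continuity \eqref{eq:corridor-eigenvalue-rates}. Also $\Delta(\epsilon)\to\Delta>0$. The equation $a_*\e^{\lambda^*(\epsilon)t}=b_\circ\e^{\lambda_\circ(\epsilon)t}$ is therefore equivalent to $\Delta(\epsilon)t=\log(b_\circ/a_*)$. It has a unique solution, which is \eqref{eq:crossover-main}. This solution is positive because $a_*=O(\epsilon^d)\to0$ while $b_\circ$ stays bounded away from $0$. Substituting $\log a_*=\log a+d\log\epsilon+o(1)$ and $\log b_\circ(\epsilon)=\log b_\circ(0)+o(1)$ gives \eqref{eq:crossover-loading-correction}, since dividing an $o(1)$ error by $\Delta(\epsilon)\to\Delta$ keeps it $o(1)$.

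Part (iii) is the main obstacle, because the remainder must be controlled along the diverging times $t=t_\epsilon^c+s$. Divide \eqref{eq:uts} by $b_\circ\e^{\lambda_\circ(\epsilon)t}$. At $t=t^c_\epsilon+s$, the first term becomes $\e^{\Delta(\epsilon)s}$ by the definition of $t^c_\epsilon$, and $\Delta(\epsilon)s\to\Delta s$ uniformly on compacts. The remainder ratio is bounded by
\[
  C\e^{(\lambda_\circ-\lambda_\circ(\epsilon)-\delta)t}/b_\circ(x,g;\epsilon).
\]
The key check is that $|\lambda_\circ(\epsilon)-\lambda_\circ|\,t^c_\epsilon=O\bigl(r_L(\epsilon)\log(1/\epsilon)\bigr)$. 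This tends to $0$ when $r_L(\epsilon)=O(\eta(\epsilon)+\epsilon^{w_{\min}})$ is polynomially small. If it is not, I would use instead that the difference is eventually below $\delta/2$, so the exponent is at most $-\delta t/2$. Either way the remainder ratio is $O(\e^{-\delta t/2})\to0$ uniformly for $s$ in compact sets, because $t^c_\epsilon\to\infty$. The ratio in \eqref{eq:crossover} is therefore $\e^{\Delta(\epsilon)s}/(1+\e^{\Delta(\epsilon)s}+o(1))$, which converges uniformly to $\e^{\Delta s}/(1+\e^{\Delta s})$.
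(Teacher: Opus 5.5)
Your proposal is correct and follows the paper's proof essentially step for step. You get positivity of $b_\circ(x,g;0)$ by matching the $\epsilon=0$ spectral expansion against Theorem~\ref{thm:pyc}, part (i) directly from \eqref{eq:usg} and \eqref{eq:corridor-exponential-term}, part (ii) by taking logarithms with \eqref{eq:ggo} and \eqref{eq:corridor-eigenvalue-rates}, and part (iii) by dividing by the corridor term. In part (iii), the bound $|\lambda_\circ(\epsilon)-\lambda_\circ|<\delta/2$ alone suffices and is exactly what the paper uses, so your $r_L(\epsilon)\log(1/\epsilon)$ branch is unnecessary.
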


If the generator additionally satisfies
\begin{equation}\label{eq:crossover-rate-condition}
  r_L(\epsilon)\log(1/\epsilon)\to0,
  \qquad \epsilon\downarrow0,
\end{equation}
then \eqref{eq:crossover-loading-correction} simplifies to
\begin{equation}\label{eq:crossover-constant-correction}
  t_\epsilon^c
  =
  \frac{d}{\Delta}\log(1/\epsilon)
  +\frac{1}{\Delta}\log\frac{b_\circ(x,g;0)}{a}
  +o(1).
\end{equation}
The condition holds whenever $r_L(\epsilon)=O(\epsilon^p)$ for some $p>0$ and
ensures that replacing $\Delta(\epsilon)$ by $\Delta$ changes the approximation
by only $o(1)$. Holding $d$ and $\Delta$ fixed, a larger global loading
coefficient $a$ relative to $b_\circ(x,g;0)$ advances the crossover through the
constant correction.

Under the theorem's assumptions, \eqref{eq:crossover-loading-correction}
implies the leading scale $t_\epsilon^c\sim(d/\Delta)\log(1/\epsilon)$. Holding the
other quantity fixed, a larger rarity exponent $d$ delays the crossover on this
scale, whereas a larger pricing-rate gap $\Delta$ brings it forward. This is a
spectral peso problem: a rarely accessed class can be nearly invisible at ordinary
maturities yet govern the long end because its initially small price contribution has
a higher exponential rate. In \eqref{eq:crossover}, the global term's limiting share
of the full price passes through one half at the crossover. A larger $\Delta$ makes
this transition sharper as maturity moves through $t_\epsilon^c$.

\subsection{Rare disaster entry with fixed recovery}
\label{sec:rare-fixed-recovery}

In Section~\ref{sec:illustration}, we set normal-to-disaster entry equal to zero. We now
restore entry at order $\epsilon$, holding within-class dynamics and recovery
fixed, and apply Sections~\ref{sec:rare-entry}--\ref{sec:crossover} to determine
its price loading and crossover horizon.

\subsubsection{Pricing concentration and crossover in the two-class model}

Let $F_\epsilon$ collect the physical normal-to-disaster transition intensities, set
$D_\epsilon\coloneqq\diag(F_\epsilon\1_{\mathcal D})$, and let $\Gamma\gg0$ collect
the corresponding growth-adjusted SDF jump multipliers. With normal states ordered
first, the physical and pricing generators are
\begin{equation} \label{eq:rare-fixed-recovery-generators}
  \Lx(\epsilon)
  =
  \begin{pmatrix}
    Q-D_\epsilon&F_\epsilon\\
    H&U
  \end{pmatrix},
  \qquad
  \Lp(\epsilon)
  =
  \begin{pmatrix}
    \LpN-D_\epsilon&E_\epsilon\\
    \LpDN&\LpD
  \end{pmatrix},
  \qquad
    E_\epsilon\coloneqq F_\epsilon\odot\Gamma.
\end{equation}
Here $Q$ and $U$ govern the within-normal and within-disaster physical dynamics,
respectively, $H$ contains the recovery rates, and the pricing blocks are those
introduced in Section~\ref{sec:illustration}. The diagonal adjustment
$-D_\epsilon$ records the additional exit intensity from the normal class. Suppose
\begin{equation} \label{eq:rare-entry-main}
  F_\epsilon
  =
  \epsilon F+O(\epsilon^{1+\eta_F}),
  \qquad
  E_\epsilon
  =
  \epsilon J+O(\epsilon^{1+\eta_F}),
  \qquad
  J\coloneqq F\odot\Gamma,
\end{equation}
so regular risk adjustment changes the leading entry coefficient, through
$\Gamma$, but not its rarity exponent. The following assumption collects the
primitive regularity conditions; Appendix~\ref{app:rare-entry-primitive} derives
the pricing generator from those primitives.

\begin{assumption}[Rare entry with fixed recovery] \label{ass:illustration-near}
  The matrices $Q$ and $U$ are irreducible, $Q$ is a Markov generator, $U$ is a
  transient subgenerator, $H\geq0$, and
  $H\1_{\mathcal N}+U\1_{\mathcal D}=0$. Equation~\eqref{eq:rare-entry-main} holds
  for some $\eta_F>0$ and some nonzero $F\geq0$, and $F_\epsilon\geq0$ for every
  sufficiently small $\epsilon>0$. The matrices $\LpN$ and $\LpD$ are
  irreducible Metzler matrices, $\LpDN\geq0$ is nonzero, and $\Gamma$ is fixed and
  strictly positive.
\end{assumption}

These conditions imply that the physical and pricing generators in
\eqref{eq:rare-fixed-recovery-generators} are irreducible for every sufficiently
small $\epsilon>0$: the entry and recovery blocks connect the two irreducible
within-class graphs in opposite directions.

Maintain disaster pricing dominance, $\lambdaD>\lambdaN$. Let
$\pi_\epsilon^P$ denote the stationary distribution of the physical generator
$\Lx(\epsilon)$, and let $\pi_\epsilon^{\mathrm{LR}}$ be the long-run pricing law defined
in Section~\ref{sec:rare-entry}. The next result makes their contrasting limits
transparent.

\begin{corollary}[Physical rarity and pricing concentration]
\label{cor:rare-disaster-concentration}
Under Assumption~\ref{ass:illustration-near} and $\lambdaD>\lambdaN$, there exist
constants $c_P,c_{\mathrm{LR}}>0$ such that, as $\epsilon\downarrow0$,
\[
  \pi_\epsilon^P(\mathcal D)
  =
  c_P\epsilon+o(\epsilon),
  \qquad
  \pi_\epsilon^{\mathrm{LR}}(\mathcal D)
  =
  1-c_{\mathrm{LR}}\epsilon+o(\epsilon).
\]
\end{corollary}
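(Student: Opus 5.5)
The plan is to treat the two expansions separately. The physical one comes from an explicit block solution of the stationarity equations. The pricing one comes from specializing Theorem~\ref{thm:rare-distance} to the two-class weighted graph.

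\emph{Physical part.} Write $\pi_\epsilon^P=(\pi_{\mathcal N},\pi_{\mathcal D})$ conformably with \eqref{eq:rare-fixed-recovery-generators}. The equation $\pi_\epsilon^P\Lx(\epsilon)=0$ gives, in its $\mathcal D$-columns, $\pi_{\mathcal N}F_\epsilon+\pi_{\mathcal D}U=0$. Since $U$ is an irreducible transient subgenerator, $U$ is invertible and $-U^{-1}\gg0$. Hence
\[
  \pi_{\mathcal D}=\pi_{\mathcal N}F_\epsilon(-U^{-1}).
\]
Next I will show that $\pi_{\mathcal N}\to\pi_0$, the unique stationary law of $Q$, which satisfies $\pi_0\gg0$. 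For this I use continuity of the stationary distribution under perturbation of the generator: $\Lx(0)$ has the single closed class $\mathcal N$, so its stationary law is unique and equals $(\pi_0,0)$. Any limit point of $\pi_\epsilon^P$ is stationary for $\Lx(0)$, and compactness of the simplex then gives convergence. Substituting \eqref{eq:rare-entry-main},
\[
  \pi_\epsilon^P(\mathcal D)=\epsilon\,\pi_0F(-U^{-1})\1_{\mathcal D}+o(\epsilon),
\]
so $c_P\coloneqq\pi_0F(-U^{-1})\1_{\mathcal D}$. This constant is positive because $\pi_0\gg0$, $F\geq0$ is nonzero, and $-U^{-1}\gg0$.

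\emph{Pricing part.} First I verify the hypotheses of Theorem~\ref{thm:rare-distance}. By \eqref{eq:rare-entry-main}, $L_{\mathcal N\mathcal N}(\epsilon)=\LpN-D_\epsilon=\LpN+O(\epsilon)$ and $L_{\mathcal D\mathcal D}(\epsilon)=\LpD$. The edge blocks are $L_{\mathcal N\mathcal D}(\epsilon)=\epsilon(J+O(\epsilon^{\eta_F}))$ with $J=F\odot\Gamma\geq0$ nonzero, because $\Gamma\gg0$, and $L_{\mathcal D\mathcal N}(\epsilon)=\LpDN\neq0$. So Assumption~\ref{ass:edge} holds with $\eta(\epsilon)=\epsilon^{\min(1,\eta_F)}$, $w_{\mathcal N\mathcal D}=1$ and $w_{\mathcal D\mathcal N}=0$. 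Irreducibility for $\epsilon>0$ follows because the two irreducible blocks are joined in both directions, and \eqref{eq:gvd} holds with $\Ccal^*=\mathcal D$ since $\lambdaD>\lambdaN$.

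The weighted graph has only these two edges, so $d_{\mathcal N\mathcal D}=1$ and $d_{\mathcal D\mathcal N}=0$. Applying \eqref{eq:rv} with $\Ccal=\mathcal N$ then gives
\[
  \pi_\epsilon^{\mathrm{LR}}(\mathcal N)=\epsilon\operatorname{tr}\bar P_{\mathcal N\mathcal N}+o(\epsilon),
\]
with $\operatorname{tr}\bar P_{\mathcal N\mathcal N}>0$. Finally, $\sum_x\pi_\epsilon^{\mathrm{LR}}(x)=\operatorname{tr}P_*(\epsilon)=1$, since $P_*(\epsilon)$ is a rank-one projection. Therefore $\pi_\epsilon^{\mathrm{LR}}(\mathcal D)=1-c_{\mathrm{LR}}\epsilon+o(\epsilon)$ with $c_{\mathrm{LR}}=\operatorname{tr}\bar P_{\mathcal N\mathcal N}$.

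\emph{Main obstacle.} The delicate step is the convergence $\pi_{\mathcal N}\to\pi_0$ in the physical part; the limit-point argument is what handles it. The verification of Assumption~\ref{ass:edge}, in particular absorbing the diagonal correction $-D_\epsilon$ into $O(\eta(\epsilon))$, is routine bookkeeping.
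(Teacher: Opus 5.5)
Your argument is correct. Its physical half is the paper's. The paper also solves the disaster block as $(\pi^P_{\epsilon,\mathcal D})^\t=-(\pi^P_{\epsilon,\mathcal N})^\t F_\epsilon U^{-1}$ and uses a limit-point argument, applied to the normal block and $Q$ rather than to the whole vector and $\Lx(0)$. Your version needs $\mathcal N$ to be the only closed class of $\Lx(0)$; this holds because $H\1_{\mathcal N}=-U\1_{\mathcal D}\neq0$ when $U$ is transient.

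The pricing half takes a different route. The paper does not invoke Theorem~\ref{thm:rare-distance}. Instead it proves Lemma~\ref{lem:fixed-recovery-expansion} directly from the two-block Perron equations, obtaining $\phi_{\epsilon,\mathcal D}\to\phiD$, $\epsilon^{-1}\phi_{\epsilon,\mathcal N}\to R(\lambdaD;\LpN)J\phiD$ and $\psi_\epsilon\to\PsiD$. It then reads off $\pi^{\mathrm{LR}}_\epsilon(\mathcal N)=\psi_{\epsilon,\mathcal N}^\t\phi_{\epsilon,\mathcal N}$.

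You instead check Assumption~\ref{ass:edge} with $w_{\mathcal N\mathcal D}=1$, $w_{\mathcal D\mathcal N}=0$ and $\eta(\epsilon)=\epsilon^{\min(1,\eta_F)}$, then apply \eqref{eq:rv} with exponent $d_{\mathcal N\mathcal D}+d_{\mathcal D\mathcal N}=1$. This is shorter and shows the corollary is a direct specialization of the general rarity-exponent theorem. The paper performs exactly this verification of Assumption~\ref{ass:edge} when proving Corollary~\ref{cor:rare-disaster-crossover}.

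What the paper's direct computation buys is an explicit constant, $c_{\mathrm{LR}}=(\PsiD)_{\mathcal N}^\t R(\lambdaD;\LpN)J\phiD$. This is what yields the closed form $\ell\kappa(\chi_{\sss ND}\chi_{\sss DN})^{\vartheta-\alpha}/\Delta^2$ in the two-state application. Your constant $\operatorname{tr}\bar P_{\mathcal N\mathcal N}$ is the same number. With a single nondominant class, the Neumann series in the proof of Lemma~\ref{lem:rp} reduces to its $n=0$ term, so $\bar\phi_{\mathcal N}=R(\lambdaD;\LpN)J\phiD$ and $\bar\psi_{\mathcal N}=(\PsiD)_{\mathcal N}$. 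Seeing this, however, requires unwinding that proof.
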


Physical frequency and long-run pricing importance thus have opposite limits:
the disaster class becomes physically negligible while the long-run pricing law
concentrates on it. This does not determine when its contribution becomes
important in a price starting from the normal class.

For the crossover, fix $x\in\mathcal N$ and the growth-adjusted zero-coupon
payoff $g=\1$. A single rare entry edge reaches disaster, where $g$ is positive,
so $d(\mathcal N,\1)=1$; the limiting corridor is $\{\mathcal N\}$ with rate
$\lambdaN$. Theorem~\ref{thm:crossover} requires $\lambdaN$ to exceed the real
part of every other eigenvalue except $\lambdaD$. Irreducibility of $\LpN$
therefore leaves only the condition
\begin{equation} \label{eq:rds}
  \sup_{z\in\sigma(\LpD)\setminus\{\lambdaD\}}\Re z
  <
  \lambdaN,
\end{equation}
with the convention $\sup\varnothing=-\infty$. Under this condition,
Assumption~\ref{ass:single-corridor-eigenvalue} holds with
$\lambda_{\kK}=\lambda_\circ=\lambdaN$.

\begin{corollary}[Rare-entry crossover from the normal class] \label{cor:rare-disaster-crossover}
  Under Assumption~\ref{ass:illustration-near}, $\lambdaD>\lambdaN$, and
  \eqref{eq:rds}, fix $x\in\mathcal N$. Theorem~\ref{thm:crossover} applies to
  $g=\1$ with $d=1$ and $\Delta=\lambdaD-\lambdaN$. The coefficients
  \[
    a_x\coloneqq\lim_{\epsilon\downarrow0}
      \epsilon^{-1}a_*(x,\1;\epsilon),
    \qquad
    b_x\coloneqq b_\circ(x,\1;0)
  \]
  exist and are strictly positive. As $\epsilon\downarrow0$,
  \begin{equation} \label{eq:rare-disaster-crossover-main}
    t_\epsilon^c(x)
    =
    \frac{\log(1/\epsilon)+\log(b_x/a_x)}{\lambdaD-\lambdaN}
    +o(1).
  \end{equation}
\end{corollary}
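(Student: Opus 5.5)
The plan is to verify the hypotheses of Theorem~\ref{thm:crossover} for the perturbed generator $\Lp(\epsilon)$ in \eqref{eq:rare-fixed-recovery-generators}. The formula \eqref{eq:rare-disaster-crossover-main} then follows by specializing \eqref{eq:crossover-loading-correction}, and the refinement \eqref{eq:crossover-constant-correction} lets us replace $\Delta(\epsilon)$ by $\Delta$.

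First, I will check Assumption~\ref{ass:edge} with the two limiting classes $\mathcal N$ and $\mathcal D$:
\begin{itemize}
\item The diagonal blocks are $\LpN-D_\epsilon=\LpN+O(\epsilon)$ and $\LpD$.
\item The entry block is $E_\epsilon=\epsilon(J+O(\epsilon^{\eta_F}))$, with $J=F\odot\Gamma\geq0$ nonzero because $F\neq0$ and $\Gamma\gg0$. Hence $w_{\sss\mathcal N\mathcal D}=1$.
\item The recovery block is $\LpDN$, nonzero and independent of $\epsilon$, so $w_{\sss\mathcal D\mathcal N}=0$.
\end{itemize}
We may therefore take $\eta(\epsilon)=\epsilon+\epsilon^{\eta_F}$. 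This gives $r_L(\epsilon)=O(\epsilon^{\min\{1,\eta_F\}})$, so \eqref{eq:crossover-rate-condition} holds. Metzler structure and irreducibility for small $\epsilon>0$ follow from the remark after Assumption~\ref{ass:illustration-near}. Condition \eqref{eq:gvd} is $\lambdaD>\lambdaN$, with $\Ccal^*=\mathcal D$.

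Next, for $x\in\mathcal N$ and $g=\1$, I will compute the remaining quantities:
\begin{itemize}
\item $\sS(\1)=\{\mathcal N,\mathcal D\}$, so $\min_{\Ccal'}d_{\sss\mathcal D\Ccal'}=0$ via the length-zero path.
\item $d_{\sss\mathcal N\mathcal D}=1$, so $d(\mathcal N,\1)=1>0$.
\item The limiting corridor is $\kK(\mathcal N,\1)=\{\mathcal N\}\neq\varnothing$, because $\mathcal N$ is closed at $\epsilon=0$, and $\lambda_{\kK}=\lambdaN$.
\end{itemize}
For Assumption~\ref{ass:single-corridor-eigenvalue}, the block-triangular structure gives $\sigma(\Lp)=\sigma(\LpN)\cup\sigma(\LpD)$. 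By Perron--Frobenius for the irreducible Metzler matrix $\LpN$, $\lambdaN$ is simple in $\sigma(\LpN)$ and strictly dominates its other eigenvalues in real part. Condition \eqref{eq:rds} excludes every eigenvalue of $\LpD$ other than $\lambdaD$ from real parts $\geq\lambdaN$, and $\lambdaD\neq\lambdaN$. Hence $\lambdaN$ is algebraically simple in $\Lp$, $s_\circ=\lambdaN$, and $\Sigma_\circ=\{\lambdaN\}$. Theorem~\ref{thm:cd} and Theorem~\ref{thm:crossover} thus apply with $d=1$ and $\Delta=\lambdaD-\lambdaN$.

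The coefficients come directly from these results. \eqref{eq:ggo} yields $a_*(x,\1;\epsilon)=a\epsilon(1+o(1))$ with $a>0$, so $a_x=a$ exists and is positive. Equivalently, by \eqref{eq:rpl}, $a_x=\sum_{\Ccal'\in\sS_{\min}}(\bar P_{\sss\mathcal N\Ccal'}\1)(x)>0$; here only $\Ccal'=\mathcal D$ survives, since $d_{\sss\mathcal D\mathcal N}=0$ too, so both classes minimize. Theorem~\ref{thm:crossover} gives $b_x=b_\circ(x,\1;0)>0$.

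A direct check is also available. $P_{\lambdaN}$ of the block-lower-triangular $\Lp$ has upper-left block $\phiN\psiN^\t$ and zero upper-right block, so $b_x=\phiN(x)\psiN^\t\1>0$. Substituting into \eqref{eq:crossover-constant-correction} gives
\[
t^c_\epsilon=\frac{\log(1/\epsilon)+\log(b_x/a_x)}{\lambdaD-\lambdaN}+o(1).
\]

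The main obstacle is only bookkeeping: confirming that the diagonal correction $-D_\epsilon$ and the $O(\epsilon^{1+\eta_F})$ remainders fit the form $L_{kk}(\epsilon)=L_k+O(\eta)$ and $\epsilon^{w}(K+O(\eta))$ required by Assumption~\ref{ass:edge}. This holds since $D_\epsilon=O(\epsilon)$ and $\Gamma$ is fixed.
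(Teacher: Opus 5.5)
Your proposal is correct and follows essentially the same route as the paper's proof: check Assumption~\ref{ass:edge} with $w_{\sss\mathcal N\mathcal D}=1$ and $w_{\sss\mathcal D\mathcal N}=0$, and obtain $d(\mathcal N,\1)=1$ and the limiting corridor $\{\mathcal N\}$. You then use \eqref{eq:rds} and the block-triangular spectrum to verify Assumption~\ref{ass:single-corridor-eigenvalue}, read off $a_x$ and $b_x$ from \eqref{eq:ggo} and Theorem~\ref{thm:crossover}, and substitute into \eqref{eq:crossover-constant-correction}, exactly as the paper does.

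There is one small slip in your aside on \eqref{eq:rpl}. Because $d_{\sss\mathcal D\mathcal N}=d_{\sss\mathcal D\mathcal D}=0$, both classes lie in $\sS_{\min}(\1)$ and both contribute to $a_x$, not ``only $\mathcal D$''; this does not affect positivity. Also, directly $r_L(\epsilon)=O(\epsilon)$, although your weaker bound $O(\epsilon^{\min\{1,\eta_F\}})$ still suffices for \eqref{eq:crossover-rate-condition}.
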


With recovery held fixed, $r_L(\epsilon)=O(\epsilon)$, so
\eqref{eq:crossover-rate-condition} holds and yields the expansion
\eqref{eq:rare-disaster-crossover-main}.
Corollary~\ref{cor:noncommuting} also shows that the order of limits matters:
letting maturity tend to infinity before sending entry intensity to zero
gives the limiting yield $-\lambdaD$, whereas reversing these limits gives
$-\lambdaN$.

\subsubsection{A two-state consumption-based application}
\label{sec:crra-rare}

We retain the within-state consumption dynamics, preferences, and cash-flow
specification of Section~\ref{sec:crra}, together with the recovery intensity
$\kappa$ and consumption multiplier $\chi_{\sss DN}$, and vary only the physical
disaster-entry intensity $\epsilon\ell$. Here $\ell>0$ is a fixed reference
intensity, and $\epsilon>0$ controls the rarity of disaster entry. At entry to
disaster, consumption is multiplied by $\chi_{\sss ND}>0$, so the growth-adjusted SDF
is multiplied by $\chi_{\sss ND}^{\vartheta-\alpha}$.

With time measured in years, an economy starting in the normal state $N$
enters disaster within one year with probability $1-\e^{-\epsilon\ell}$,
and its mean waiting time to first entry is $1/(\epsilon\ell)$.
The recovery intensity $\kappa$, by contrast, determines the mean disaster
duration $1/\kappa$.

The physical and pricing generators are, respectively,
\begin{equation}\label{eq:crra-rare-generators}
  \Lx(\epsilon)
  =
  \begin{pmatrix}
    -\epsilon\ell&\epsilon\ell\\[2pt] \kappa&-\kappa
  \end{pmatrix},
  \qquad
  \Lp(\epsilon)
  =
  \begin{pmatrix}
    -\varrho_{\sss N}-\epsilon\ell&\epsilon\ell\chi_{\sss ND}^{\vartheta-\alpha}
    \\[2pt]
    \kappa\chi_{\sss DN}^{\vartheta-\alpha}&-\varrho_{\sss D}-\kappa
  \end{pmatrix}.
\end{equation}
Here $\varrho_{\sss N}$ and $\varrho_{\sss D}$ are the within-state growth-adjusted
discount rates from Section~\ref{sec:crra}. Both generators are irreducible for every
$\epsilon>0$. We retain $\lambdaN=-\varrho_{\sss N}$ and $\lambdaD=-\varrho_{\sss
D}-\kappa$ for the zero-entry class pricing rates and assume
\[
  \Delta=\lambdaD-\lambdaN > 0.
\]
The spectral separation condition \eqref{eq:rds} holds automatically as both
limiting class blocks are scalar.

The two eigenvalues of the pricing generator are given by
\begin{equation}\label{eq:crra-rare-eigenvalues}
  \lambda_\pm(\epsilon)
  =
  \frac12 [\lambdaN-\epsilon\ell+\lambdaD\pm\Delta(\epsilon)],
  \qquad
  \Delta(\epsilon)
  \coloneqq
  \sqrt{(\Delta+\epsilon\ell)^2
  + 4\, \kappa\, \epsilon\ell\, (\chi_{\sss ND}\chi_{\sss DN})^{\vartheta-\alpha}}
  >0.
\end{equation}
Here $\lambda_+(\epsilon)$ is the global long-run pricing rate and
$\Delta(\epsilon)$ is the eigenvalue gap; as $\epsilon\downarrow0$,
$\lambda_+(\epsilon)\to\lambdaD$, $\lambda_-(\epsilon)\to\lambdaN$, and
$\Delta(\epsilon)\to\Delta$.
Direct calculation gives the physical stationary mass and long-run pricing mass of
the disaster state as
\begin{equation}\label{eq:crra-rare-masses}
  \pi_\epsilon^P(D)
  =
  \frac{\epsilon\ell}{\kappa+\epsilon\ell},
  \qquad
  \pi_\epsilon^{\mathrm{LR}}(D)
   =\frac12\left(1+\frac{\Delta+\epsilon\ell}{\Delta(\epsilon)}\right).
\end{equation}
Their first-order expansions recover
Corollary~\ref{cor:rare-disaster-concentration}; Appendix~\ref{app:rare-entry-primitive}
derives these identities, their expansion constants, and the price formulas below.

\begin{figure}[!t]
  \centering
  \includegraphics[width=0.9\textwidth]{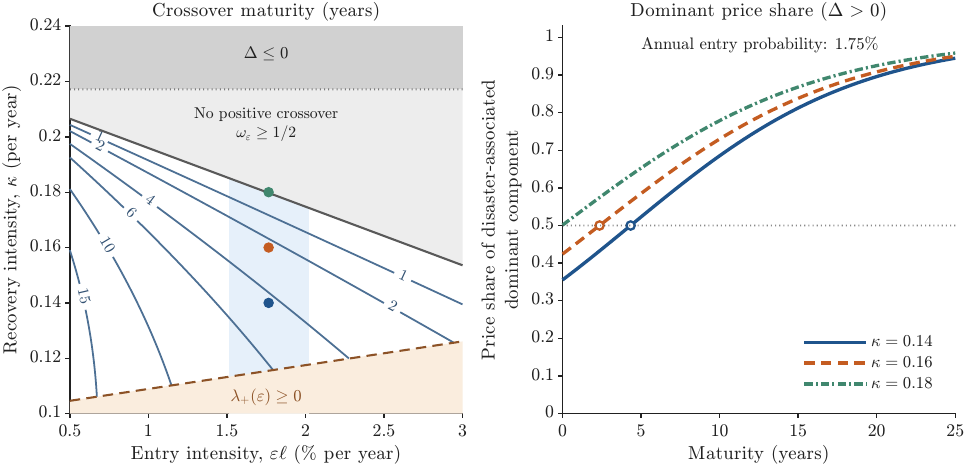}
  \caption{
    Exact crossover maturities and dominant-component price shares from the normal state.
    Both panels use $\alpha=4$ and $\vartheta=0$; remaining fixed parameters
    are given in the text.
    Left: crossover contours (years) where $\Delta>0$, $0<\omega_\epsilon<1/2$,
    and $\lambda_+(\epsilon)<0$; the blue band marks annual entry probabilities
    of 1.5--2\%. Light gray denotes $\omega_\epsilon\geq1/2$, dark gray
    $\Delta\leq0$, and tan $\lambda_+(\epsilon)\geq0$; the dashed line marks
    $\lambda_+(\epsilon)=0$.
    Right: $\omega_\epsilon\e^{\lambda_+(\epsilon)t}/z_t(N;\epsilon)$ at
    annual entry probability 1.75\% and $\kappa=0.14,0.16,0.18$ per year
    (all with $\Delta>0$). Matching filled points identify these cases on the
    left; open circles mark positive crossovers at share $1/2$ on the right.
  }
  \label{fig:rare-entry-crossover}
\end{figure}

As in Corollary~\ref{cor:rare-disaster-crossover}, take $g=\1$ and start in the
normal state $N$. The maturity-$t$ payment is $G_t$; it is a fixed real payment only when
$\vartheta=0$. Write $z_t(N;\epsilon)=q_t(N,\1;\epsilon)$. We have
\begin{equation}\label{eq:crra-rare-price}
  z_t(N;\epsilon)
  =
  \omega_\epsilon\, \e^{\lambda_+(\epsilon)t}
  +(1-\omega_\epsilon)\, \e^{\lambda_-(\epsilon)t},
  \qquad
  \omega_\epsilon
  \coloneqq
  \frac{\lambda_+(\epsilon)-\lambdaD +\epsilon\ell\chi_{\sss ND}^{\vartheta-\alpha}}
  {\Delta(\epsilon)}.
\end{equation}
The loading $\omega_\epsilon$ on the dominant exponential is strictly positive,
with%
\footnote{
  The term in the brackets is the coefficient of $\e^{\lambdaD t}$ in the long-run
  price $q_t(D,\1;0)$ derived in Section~\ref{sec:crra-recovery}. Its two terms arise
  from sample paths without and with recovery by maturity, respectively. 
}
\begin{equation}\label{eq:crra-rare-loading}
  \omega_\epsilon=\epsilon\omega+O(\epsilon^2),
  \qquad
  \omega\coloneqq\frac{\ell\chi_{\sss ND}^{\vartheta-\alpha}}{\Delta}
                         \left(1+\frac{\kappa\chi_{\sss DN}^{\vartheta-\alpha}}{\Delta}\right)>0.
\end{equation}
For all sufficiently small $\epsilon$, $0<\omega_\epsilon<1/2$. The two positive
price contributions in \eqref{eq:crra-rare-price} then become equal at the
strictly positive maturity
\begin{equation}\label{eq:crra-exact-crossover}
  t_\epsilon^c(N)
   =\frac{\log[(1-\omega_\epsilon)/\omega_\epsilon]}{\Delta(\epsilon)}
   =\frac{\log(1/\epsilon)}{\Delta}-\frac{\log\omega}{\Delta}+o(1),
  \qquad \epsilon\downarrow0.
\end{equation}
This specializes Corollary~\ref{cor:rare-disaster-crossover} with $a_N=\omega$
and $b_N=1$.

At any $\epsilon$ for which $0<\omega_\epsilon<1$, we have
\[
  \frac{\omega_\epsilon\e^{\lambda_+(\epsilon)t}}{z_t(N;\epsilon)}
  =
  \left[
    1+ \omega_\epsilon^{-1}(1-\omega_\epsilon)\e^{-\Delta(\epsilon) t}
  \right]^{-1}.
\]
This share reaches $1/2$ at $t_\epsilon^c(N)$ when the crossover is positive.
If $1/2\leq\omega_\epsilon<1$, the dominant contribution already accounts for at
least half of the price at maturity zero, so there is no strictly positive
crossover.%
\footnote{Outside the small-entry region, $1-\omega_\epsilon$ can be
nonpositive. The two-positive-contribution interpretation then fails, although
\eqref{eq:crra-rare-price} remains an exact identity.}

Figure~\ref{fig:rare-entry-crossover} uses the within-state consumption
dynamics and subjective discounting specified in \eqref{eq:crra-benchmark}.
Both panels set $\alpha=4$, $\vartheta=0$, $\ell=1$ per year,
$\chi_{\sss ND}=\e^{-0.111}$, $\chi_{\sss DN}=\e^{0.111}$, and $g=\1$.
With time measured in years, the zero-entry pricing-rate gap in
\eqref{eq:crra-gap} becomes
\[
  \Delta=\lambdaD-\lambdaN
  =4(0.025)+\frac12\,4^2(0.121^2)-\kappa
  =0.217128-\kappa.
\]

The contour plot on the left panel of Figure~\ref{fig:rare-entry-crossover} shows how
entry and recovery intensities affect crossover maturity from the normal state,
holding the remaining parameters fixed. The blue band marks annual entry
probabilities of 1.5--2\%, using the disaster probabilities reported by
\citet{barro2006rare} as a benchmark. With $\ell=1$ per year, these probabilities
correspond to $\epsilon\approx0.01511$--$0.02020$ and mean waiting times to entry of
approximately 50--66 years from the normal state.%
\footnote{
  These long waiting times reflect the rarity of severe disasters, rather than
  ordinary business-cycle downturns. These estimates provide an entry-frequency
  benchmark, not a full calibration of the consumption-based model.
}
The recovery-intensity range matches that in Figure~\ref{fig:dp}, corresponding to
mean disaster durations of approximately 4.2--10 years. Within the region displaying
crossover contours, higher entry intensity brings crossover forward. Higher recovery
intensity also brings crossover forward in the current setting. In particular, more
persistent disasters do not necessarily produce earlier crossover, because recovery
intensity affects both the pricing-rate gap and the dominant-component loading.

The right panel of Figure~\ref{fig:rare-entry-crossover} plots the
disaster-associated dominant component's share of the price from the normal state.
The annual disaster entry probability is fixed at 1.75\%, implying
$\epsilon=-\log(1-0.0175)\approx0.01765$ and a mean waiting time to entry of 56.6
years. Recovery intensities $\kappa=0.14,0.16,0.18$ correspond to mean disaster
durations of $7.14,6.25,5.56$ years, respectively. For the first two cases, the share
reaches one half at maturities of $4.34$ and $2.36$ years. In the third case,
$\omega_\epsilon\approx0.501$, so the dominant component already accounts for
slightly more than half the price at maturity zero. All three shares increase with
maturity, reaching approximately $94.5\%$, $95.0\%$, and $95.8\%$ at 25 years.

The three recovery intensities give
$\Delta\approx0.0771,0.0571,0.0371$ per year, respectively, so all
three cases satisfy $\Delta>0$. We call the dominant component
``disaster-associated'' because $\lambda_+(\epsilon)\to\lambdaD$ as
$\epsilon\downarrow0$; for every $\epsilon>0$, however, it is a global spectral
component involving both states. Its price share is neither a disaster probability
nor the price share of paths entering disaster. At the baseline $\kappa=0.18$, for
example, the physical stationary disaster probability is $8.9\%$, whereas the
disaster mass under the long-run pricing law is $71.9\%$, by
\eqref{eq:crra-rare-masses}. These stationary state masses measure different objects
from the maturity-dependent price share plotted in the right panel.

Both panels of Figure~\ref{fig:rare-entry-crossover} use exact finite-entry formulas.
The left-panel contours are computed from
$\log[(1-\omega_\epsilon)/\omega_\epsilon]/\Delta(\epsilon)$ in
\eqref{eq:crra-exact-crossover}, while the right-panel price shares are computed from
the decomposition in \eqref{eq:crra-rare-price}. At the benchmark entry probabilities
and parameter values considered here, the first-order loading approximation in
\eqref{eq:crra-rare-loading} and the asymptotic crossover expansion in
\eqref{eq:crra-exact-crossover} need not provide close approximations to the exact
quantities. Appendix~\ref{app:rare-entry-primitive} illustrates this approximation
error.

\section{Conclusion}

We show that physical recurrence and long-run pricing dominance are distinct.
Although the physical and pricing generators share the same communicating classes and
accessibility relations, stochastic discounting and cash-flow growth determine the
ordering of class pricing rates. A physically transient class can therefore govern
long-maturity prices.

The pricing corridor identifies when such dominance matters for a particular initial
state and nonnegative payoff. Restricting the pricing generator to this corridor
preserves the price at every maturity. For a nonempty corridor, its highest class
rate determines the long-run yield, while chains of classes tied at that rate
determine the polynomial order of the price. The corresponding perpetual cash-flow
stream has finite value exactly when the corridor rate is negative. Global valuation
instability can thus coexist with finite values for particular states and claims.

We characterize how rare transitions change these conclusions when the limiting
corridor excludes the uniquely pricing-dominant class. Minimum-rarity paths from the
initial class to the dominant class and onward to the payoff support determine the
asymptotic order of its price loading. Under two-term spectral separation, this
loading and the perturbed pricing-rate gap determine the crossover maturity, which
diverges logarithmically as the connecting transitions vanish. The reducible and
long-maturity limits need not commute: neglecting a rare transition can give an
accurate fixed-maturity price but a different long-run yield.

The consumption-based disaster-recovery application illustrates the economic
significance of this mechanism. With empirically motivated entry frequencies and
disaster durations, the disaster-associated dominant component can account for most
of a claim's price from a normal initial state at economically relevant maturities.
Quantitative assessment nevertheless requires jointly disciplining physical
transitions, cash-flow exposures, and stochastic discounting. Rarity alone does not
justify neglecting a transition: its pricing importance depends on the claim and
maturity.

\bigskip

\begin{center}
  \textsc{\large Appendix}
\end{center}

\appendix

\section{Multiplicative Functionals and the Primitive Pricing Generator}
\label{app:param}

Section~\ref{sec:environment} constructs the pricing semigroup abstractly from
the SDF and cash-flow growth processes. This appendix provides its semimartingale
foundation: it verifies the semigroup and continuity properties, parameterizes the
two processes as exponential additive functionals, and derives the pricing generator
from their primitive shock exposures. The construction builds on and extends the
specification in \citet{hansen2009long} by explicitly allowing the information flow
to contain continuous and jump shocks that affect stochastic discounting and
cash-flow growth without changing the finite-state economic state. Related
continuous-time formulations include \citet{hansen2012pricing},
\citet{borovicka2011risk}, and \citet{duffie2000transform}.

\subsection{From multiplicative functionals to the pricing semigroup}
\label{app:mf-semigroup}

Let $g\in\RR_+^\Xsf$. Multiplicativity of $S^{\Gs}$, iterated expectations, and
the Markov property \eqref{eq:EY}%
\footnote{
  By truncation and monotone convergence, \eqref{eq:EY} extends to every nonnegative
  $\fF_\infty^\circ$-measurable $Y$, with both sides interpreted as
  $[0,+\infty]$-valued random variables. We apply this extension to $Y = S_u^{\Gs}
  g(X_u)$, which is nonnegative and $\fF_\infty^\circ$-measurable.
}
give
\[
  \begin{aligned}
    (\Pi_{t+u}g)(x)
    =
    \EE_x\!\left[
      S_t^{\Gs}
      \bigl(S_u^{\Gs}g(X_u)\bigr)\circ\theta_t
    \right] 
    &=
    \EE_x\!\left\{
      S_t^{\Gs}\,
      \EE
      \left[
        \bigl(S_u^{\Gs}g(X_u)\bigr)\circ\theta_t \mid \mathscr F_t
      \right]
    \right\} \\
    &=
    \EE_x\!\left[
      S_t^{\Gs}\,
      \EE_{\scriptscriptstyle X_t}
      \bigl(S_u^{\Gs}g(X_u)\bigr)
    \right]
    =\bigl(\Pi_t\Pi_ug\bigr)(x).
  \end{aligned}
\]
Linearity extends this identity to every $g\in\RR^\Xsf$, proving the semigroup
property \eqref{eq:sgp}.

It remains to justify continuity at zero. The c\`adl\`ag paths and the normalization
$S_0^{\Gs}=1$ imply $S_t^{\Gs}\to1$, $\PP_x$-almost surely, as $t\to0$.
Assumption~\ref{ass:continuity} and positivity then imply
$\EE_x|S_t^{\Gs}-1|\to0$ by Scheff\'e's lemma. For
$\|g\|_\infty\leq1$,
\[
  \bigl|(\Pi_tg)(x)-g(x)\bigr|
  \leq
  \EE_x\bigl|S_t^{\Gs}-1\bigr|
  +2\PP_x(X_t\neq x).
\]
Right continuity of $X$ makes the second term vanish. Because $\Xsf$ is finite,
both convergences are uniform in $x$, so $\|\Pi_t-I\|\to0$. This proves the
continuity assertion used to define $\Lp$ in \eqref{eq:Lp}.

\subsection{Additive-functional semimartingale specification}
\label{app:additive}

Let $N^X$ be the jump measure of the state process,
\[
  N^X\bigl((0,t]\times B\bigr)
  \coloneqq
  \sum_{0<u\leq t}
  \1_{\{X_{u-}\neq X_u,\,X_u\in B\}},
  \qquad t\geq0,\quad B\subseteq\Xsf.
\]
Thus $N^X((0,t]\times B)$ counts transitions up to time $t$ whose destination
lies in $B$. If $\ell_{ij}=(\Lx)_{ij}$, its predictable compensator has, when
$X_{t-}=x_i$, intensity $\ell_{ij}$ at $x_j\neq x_i$.

In addition to these state-transition shocks, allow shocks that affect discounting
or growth without changing $X$ at the same instant. Let $W$ be an
$(\fF_t^\circ)$-adapted $d$-dimensional process that is an $(\fF_t)$-Brownian
motion under every $\PP_x$, and let $N$ be an integer-valued random measure on
$\RR_+\times\Zsf$, where $(\Zsf,\mathcal Z)$ is a measurable mark space. Under each
$\PP_x$, suppose that $N((0,t]\times B)$ is $\fF_t^\circ$-measurable and that $N$
has predictable compensator
$\diff t\,\nu(X_{t-},\diff z)$ for a finite state-dependent kernel $\nu$, and that
$N$ and $N^X$ have no common jumps. Assume also that the driving shocks are
compatible with the shifts: for $t,u\geq0$ and $B\in\mathcal Z$,
\[
  W_u\circ\theta_t=W_{t+u}-W_t,
  \qquad
  N\bigl((0,u]\times B\bigr)\circ\theta_t
  =N\bigl((t,t+u]\times B\bigr),
\]
$\PP_x$-almost surely for every $x$.

An $(\fF_t^\circ)$-adapted real-valued process $A$ is an \emph{additive
functional} relative to the shifts if $A_0=0$ and
\(
  A_{t+u}=A_t+A_u\circ\theta_t
\)
for $t,u\geq0$, $\PP$-a.s. Then $\exp(A)$ is a strictly positive multiplicative
functional. The following specification realizes the abstract processes in
Assumption~\ref{ass:mf} in this way.

For $K\in\{S,G\}$, assume
\begin{align}
  \log K_t
  &=
  \int_0^t\mu_K(X_{u-})\,\diff u
  +
  \int_0^t\varsigma_K(X_{u-})^\t\,\diff W_u
  \nonumber\\
  &\quad
  +
  \int_0^t\int_{\Xsf}
  \xi_K(X_{u-},y)N^X(\diff u,\diff y)
  +
  \int_0^t\int_{\Zsf}
  \zeta_K(X_{u-},z)N(\diff u,\diff z).
  \label{eq:param-appendix}
\end{align}
Here $\mu_K:\Xsf\to\RR$ is the local drift of $\log K$,
$\varsigma_K:\Xsf\to\RR^d$ is its continuous-shock exposure,
$\xi_K:\Xsf\times\Xsf\to\RR$ is the jump at a state transition, and
$\zeta_K:\Xsf\times\Zsf\to\RR$ is the response to a non-transition jump; set
$\xi_K(x,x)=0$. Assume these functions are measurable and finite and that
\[
  \max_{x\in\Xsf}
  \int_{\Zsf}
  \exp\bigl(\zeta_S(x,z)+\zeta_G(x,z)\bigr)\nu(x,\diff z)
  <\infty.
\]
Since $\Xsf$ and the jump intensities are finite, this exponential-moment condition
ensures the finite-horizon moment required in Assumption~\ref{ass:mf}. The shift
identities above make the right-hand side of \eqref{eq:param-appendix} additive, so
its exponential is a strictly positive c\`adl\`ag multiplicative functional; the
same conditions also give the continuity at zero in
Assumption~\ref{ass:continuity}.

The four coefficient families isolate three sources of uncertainty. Brownian
innovations generate continuous shocks without changing the state. A state transition
may simultaneously change $S$ and $G$ through $\xi_S$ and $\xi_G$. The random
measure $N$ generates discrete shocks that leave the current state unchanged through
$\zeta_S$ and $\zeta_G$. Because $S$ and $G$ share these driving processes, their
continuous and jump exposures may be correlated and state dependent.

This construction builds on the multiplicative-functional framework of
\citet[Sections~3.1--3.5]{hansen2009long}, while explicitly allowing discounting and
cash-flow growth to respond to shocks that do not change the finite-state economic
state.\footnote{\citet[Footnote~6]{hansen2009long} note that a larger filtration
could be accommodated but assume, for simplicity, that the filtration is generated
by the Markov state.}
The distinction between regime transitions and shocks conditional on the regime is
familiar from finite-state asset-pricing models \citep{BonomoGarcia1996}. Related
continuous-time formulations with state-dependent diffusion and jump exposures
include \citet{duffie2000transform}, \citet{borovicka2011risk}, and
\citet{hansen2012pricing}; structural applications featuring rare jump risk include
\citet{gabaix2012variable} and \citet{wachter2013time}.

For the growth-adjusted SDF $S^{\Gs}=SG$, define the aggregate coefficients
\[
  \mu=\mu_S+\mu_G,
  \quad
  \varsigma=\varsigma_S+\varsigma_G,
  \quad
  \xi=\xi_S+\xi_G,
  \quad
  \zeta=\zeta_S+\zeta_G.
\]
Then $\log S^{\Gs}$ has the representation \eqref{eq:param-appendix} with
$(\mu_K,\varsigma_K,\xi_K,\zeta_K)$ replaced by
$(\mu,\varsigma,\xi,\zeta)$.

\subsection{Pricing generator and effective discounting}
\label{app:effective-discounting}

The next proposition maps the primitive shock exposures into the matrix generator
defined abstractly in \eqref{eq:Lp}.

\begin{proposition}[Primitive pricing generator] \label{prop:param-generator}
  Under Assumptions~\ref{ass:markov}--\ref{ass:continuity} and the semimartingale
  specification above, let $\ell_{ij}=(\Lx)_{ij}$. The pricing generator has
  entries
  \begin{equation}
  \label{eq:param-generator}
    (\Lp)_{ij}
    =
    \begin{cases}
      \ell_{ii}-\varrho(x_i), & i=j,\\[3pt]
      \ell_{ij}\e^{\xi(x_i,x_j)}, & i\neq j,
    \end{cases}
  \end{equation}
  where $\|\cdot\|$ denotes the Euclidean norm on $\RR^d$ and
  \[
    \varrho(x)
    =
    -\mu(x) -\frac12\|\varsigma(x)\|^2 -
    \int_{\Zsf} \left(\e^{\zeta(x,z)}-1\right)\nu(x,\diff z).
  \]
  In particular, $\Lp$ and $\Lx$ have the same off-diagonal zero pattern.
\end{proposition}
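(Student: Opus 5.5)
The plan is to compute the generator directly from its definition \eqref{eq:Lp}, entry by entry, by differentiating $(\Pi_t)_{ij}=\EE_{x_i}[S_t^{\Gs}1_{\{X_t=x_j\}}]$ at $t=0$. Write $\log S_t^{\Gs}$ in the aggregate form of \eqref{eq:param-appendix} with coefficients $(\mu,\varsigma,\xi,\zeta)$. On a short interval $[0,t]$, condition on the number of state transitions. Under $\PP_{x_i}$, the event of no transition has probability $\e^{\ell_{ii}t}$. The event of exactly one transition, to $x_j$, has probability $\ell_{ij}t+o(t)$. Two or more transitions have probability $O(t^2)$.

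We use that $S^{\Gs}$ has finite moments uniformly over short horizons. This follows from bounded coefficients on finite $\Xsf$ and the exponential-moment condition on $\zeta$. By Cauchy--Schwarz or Hölder, the multi-transition event then contributes $o(t)$.

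\emph{Diagonal entries.} On the no-transition event, $X_u=x_i$ on $[0,t]$, and $\log S_t^{\Gs}=\mu(x_i)t+\varsigma(x_i)^\t W_t+\int_0^t\int_{\Zsf}\zeta(x_i,z)N(\diff u,\diff z)$. The key step is to compute the conditional expectation on this event. Given $X\equiv x_i$, the Brownian term has Gaussian moment $\e^{\frac12\|\varsigma(x_i)\|^2t}$. The random measure $N$ has compensator $\diff u\,\nu(x_i,\diff z)$ on this event. The exponential formula for an integer-valued measure with deterministic finite compensator gives $\exp\{t\int(\e^{\zeta}-1)\nu(x_i,\diff z)\}$.

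To justify this conditioning cleanly, I would avoid conditional independence claims. Instead, I would use the martingale route: the process $\exp(\log S^{\Gs}_t+\varrho(X_{t-})t)$ stopped at the first transition time is a local martingale by Itô's formula for semimartingales, using that $N$, $N^X$ and $W$ share no jumps. Localization and the moment bounds then make it a true martingale. Either route yields $(\Pi_t)_{ii}=\e^{(\ell_{ii}-\varrho(x_i))t}+o(t)$, hence $(\Lp)_{ii}=\ell_{ii}-\varrho(x_i)$.

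\emph{Off-diagonal entries.} For $i\neq j$, only the one-jump event matters to first order. At the jump, $S^{\Gs}$ is multiplied by $\e^{\xi(x_i,x_j)}$. The continuous and non-transition parts contribute $1+O(t)$ over an interval of length $t$. Hence $(\Pi_t)_{ij}=\ell_{ij}\e^{\xi(x_i,x_j)}t+o(t)$. Paths with two or more jumps ending at $x_j$ are $O(t^2)$ after the moment bound.

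The cleanest unified way is a Dynkin/Itô argument. Applying Itô's formula to $S_t^{\Gs}f(X_t)$ for $f\in\RR^\Xsf$ identifies the drift as $S_{t-}^{\Gs}(\mathcal L f)(X_{t-})$, with $(\mathcal Lf)(x_i)=-\varrho(x_i)f(x_i)+\sum_{j\neq i}\ell_{ij}(\e^{\xi(x_i,x_j)}f(x_j)-f(x_i))$. The contributing terms are the compensated jumps of $N^X$, weighted by the jump factor, plus the compensated $N$-jumps and the Itô correction. Taking expectations, once the local martingale part is shown to be a true martingale, gives $\Pi_tf-f=\int_0^t\Pi_u\mathcal Lf\,\diff u$. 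This yields $\Lp=\mathcal L$, whose matrix is \eqref{eq:param-generator}. The same zero pattern follows since $\e^{\xi}>0$.

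The main obstacle is the true-martingale step, i.e.\ uniform integrability of $S^{\Gs}$ on compact time intervals. I would handle it by bounding $\EE_x\sup_{u\le t}(S_u^{\Gs})^2$ via the square's own exponential additive structure. This requires finiteness of $\int\e^{2\zeta}\nu$; if only the first exponential moment is assumed, I would instead localize at jump counts of $N$ and pass to the limit by monotone convergence, using positivity.
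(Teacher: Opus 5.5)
Your proposal is correct, and its unified It\^o/Dynkin route is essentially the paper's proof. Both apply It\^o's product formula to $S_t^{\Gs}f(X_t)$, compensate $N^X$ and $N$ to read off the drift $S_{t-}^{\Gs}(\mathcal Lf)(X_{t-})$, take expectations to obtain $\Pi_tf-f=\int_0^t\Pi_u\mathcal Lf\,\diff u$, and let $t\downarrow0$; your $\mathcal L$ matches \eqref{eq:param-generator} via $\ell_{ii}=-\sum_{j\neq i}\ell_{ij}$.

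Your treatment of the true-martingale step is more careful than the paper's, which simply asserts it under the stated integrability conditions. There is one slip, in your auxiliary diagonal sketch only. The process stopped at the first transition time still picks up the factor $\e^{\xi}$ at that jump, so it is not a local martingale unless you instead kill it there, i.e.\ use $\e^{(\varrho(x_i)-\ell_{ii})t}S_t^{\Gs}$ on the no-transition event. The It\^o route does not rely on this.
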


\begin{proof}[Proof of Proposition~\ref{prop:param-generator}]
  Applying It\^o's formula to $S_t^{\Gs}=\exp(\log S_t^{\Gs})$ gives
  \begin{align*}
    \frac{\diff S_t^{\Gs}}{S_{t-}^{\Gs}}
    &=
    \left[
      \mu(X_{t-})
      +\frac12\|\varsigma(X_{t-})\|^2
    \right]\diff t
    +
    \varsigma(X_{t-})^\t\diff W_t
    \\
    &\quad
    +
    \int_{\Xsf}
    \left[e^{\xi(X_{t-},y)}-1\right]N^X(\diff t,\diff y)
    +
    \int_{\Zsf}
    \left[e^{\zeta(X_{t-},z)}-1\right]N(\diff t,\diff z).
  \end{align*}
  For $g\in\RR^\Xsf$, apply the product rule to $S_t^{\Gs}g(X_t)$. At a
  transition from $x_i$ to $x_j$, the product changes from
  $S_{t-}^{\Gs}g(x_i)$ to
  $S_{t-}^{\Gs}e^{\xi(x_i,x_j)}g(x_j)$. Compensation of $N^X$ and $N$
  therefore gives predictable drift
  \[
    S_{t-}^{\Gs}
    \left\{
    [\ell_{ii}-\varrho(x_i)]g(x_i)
    +
    \sum_{j\neq i}
    \ell_{ij}e^{\xi(x_i,x_j)}g(x_j)
    \right\}
  \]
  when $X_{t-}=x_i$. Let $\mathcal A$ denote the matrix in
  \eqref{eq:param-generator}. Under the stated integrability conditions, the
  compensated stochastic-integral terms are martingales, so the product formula
  gives
  \[
    S_t^{\Gs}g(X_t)
    =
    g(X_0)
    +\int_0^t S_{u-}^{\Gs}(\mathcal Ag)(X_{u-})\,\diff u
    +\mathcal M_t,
  \]
  where $\mathcal M$ has zero expectation. Consequently,
  $\Pi_tg-g=\int_0^t\Pi_u(\mathcal Ag)\,\diff u$. Dividing by $t$ and letting
  $t\downarrow0$ yields $\Lp g=\mathcal Ag$ for every $g$, which proves the claim.
\end{proof}

Formula~\eqref{eq:param-generator} separates transition and local pricing effects.
For $i\neq j$, the physical transition rate $\ell_{ij}$ is multiplied by
$\e^{\xi(x_i,x_j)}$, the proportional response of the growth-adjusted SDF to that
transition. Continuous shocks and non-transition jumps leave $X$ unchanged and
therefore enter only through the diagonal term $-\varrho(x_i)$.

To separate transition reweighting from local discounting, define
$\ell_{ij}^{\QQ}\coloneqq\ell_{ij}\e^{\xi(x_i,x_j)}$ for $i\neq j$, and let
$\Lx^{\QQ}$ be the Markov generator with these off-diagonal entries. Define the
state-specific \emph{growth-adjusted effective discount rate}
\begin{equation} \label{eq:rho-appendix}
  \rho(x_i)
  \coloneqq
  -(\Lp\1)(x_i)
  =
  \varrho(x_i)
  -\sum_{j\neq i}\ell_{ij}
  \left[\e^{\xi(x_i,x_j)}-1\right].
\end{equation}
The pricing generator then has the Feynman--Kac decomposition
\begin{equation} \label{eq:FK-appendix}
  \Lp=\Lx^{\QQ}-\diag(\rho),
  \qquad
  \Lp\1=-\rho.
\end{equation}
Let $(\QQ_x)_{x\in\Xsf}$ be the Markov family generated by $\Lx^{\QQ}$. The
finite-state Feynman--Kac formula yields
\[
  (\Pi_tg)(x)
  =
  \EE_x^{\QQ}\!\left[
    \exp\!\left(-\int_0^t\rho(X_u)\,\diff u\right)g(X_t)
  \right].
\]
This representation is a matrix identity and does not require an equivalent change
of measure on the original filtered space. Such a change of measure is available if
\(
  M_t^{\QQ}
  \coloneqq
  \exp(\int_0^t\rho(X_{u-})\,\diff u)S_t^{\Gs}
\)
is a true martingale under every $\PP_x$: then
\(
  \left.\frac{\diff\QQ_x}{\diff\PP_x}\right|_{\fF_t}=M_t^{\QQ}
\)
and $X$ has generator
$\Lx^{\QQ}$ under $\QQ_x$. When $G$ is nontrivial, $\QQ$ is a growth-adjusted
pricing law rather than, in general, the usual risk-neutral law, and $\rho$ need
not be the money-market short rate.

The row-sum identity in \eqref{eq:FK-appendix} gives $\rho$ a direct price
interpretation. From $z_t=\e^{t\Lp}\1$,
\[
  z_t(x)=1-\rho(x)t+o(t),
  \qquad
  \lim_{t\downarrow0}y_t(x)=\rho(x).
\]
Thus $\rho(x)$ is the short-horizon yield on the growth-adjusted unit payoff. By
contrast, the long-run envelope yield is $-\lambda^*=-s(\Lp)$. The row-sum bounds
for Metzler matrices imply
\[
  \min_{x\in\Xsf}\rho(x)
  \leq
  -\lambda^*
  \leq
  \max_{x\in\Xsf}\rho(x).
\]
Combining \eqref{eq:FK-appendix} with Theorem~\ref{thm:gs}, global valuation
stability is equivalent to
$s(\Lx^{\QQ}-\diag(\rho))<0$. Hence positive effective discounting in every state
is sufficient, but not necessary: a negative local rate can be offset by sufficiently
rapid departure under the pricing-adjusted dynamics.

\subsubsection{Recovery jumps and recovery intensity}

In the one-way CRRA model of Section~\ref{sec:crra}, the
effective-discounting representation distinguishes the recovery jump multiplier
from the physical recovery intensity. With the primitives of that model,
\eqref{eq:rho-appendix} gives
\[
  \kappa_D^{\Pis}=\kappa\chi_{\sss DN}^{\vartheta-\alpha},\qquad
  \rho_D=\varrho_D-\kappa(\chi_{\sss DN}^{\vartheta-\alpha}-1),\qquad
  \lambdaD=-(\rho_D+\kappa_D^{\Pis})=-\varrho_D-\kappa,
\]
where $\kappa_D^{\Pis}$ is the recovery intensity under the growth-adjusted
pricing law. Changing the recovery multiplier $\chi_{\sss DN}^{\vartheta-\alpha}$ alone changes this intensity and the effective
discount rate by offsetting amounts. It therefore changes the nonzero off-diagonal
entry of the pricing generator but not the disaster-class pricing rate. By contrast,
increasing the physical intensity $\kappa$ lowers $\lambdaD$ one-for-one and
also changes that off-diagonal entry, holding the other primitives fixed. Increasing pricing-adjusted
recovery while holding $\rho_D$ fixed is thus a different comparative static
from changing the recovery jump in this primitive model.
Appendix~\ref{app:local-pass-through} gives the local sensitivity to effective
discounting for general multi-state blocks.

\subsubsection{Separate consumption and cash-flow exposures.}
\label{app:separate-exposures}

The power specification in Section~\ref{sec:crra} is convenient but
not required. Retain the CRRA SDF in \eqref{eq:sdf} and let consumption
and cash-flow growth have separate log drifts $\mu_i^C,\mu_i^G$ and Brownian
exposure vectors $\sigma_i^C,\sigma_i^G\in\RR^d$:
\[
  \diff\log C_t=\mu_i^C\diff t+(\sigma_i^C)^\t\diff W_t,
  \qquad
  \diff\log G_t=\mu_i^G\diff t+(\sigma_i^G)^\t\diff W_t
\]
between transitions in state $i$. Normalize $G_0=1$ and let $\chi_{ij}>0$ and
$\chi_{ij}^G>0$ be the respective consumption and cash-flow multipliers at $i\to j$.
With no non-transition jumps, Proposition~\ref{prop:param-generator} gives
\begin{equation}\label{eq:crra-general-exposures}
  \varrho_i=\beta+\alpha\mu_i^C-\mu_i^G
              -\tfrac12\|\sigma_i^G-\alpha\sigma_i^C\|^2,
  \qquad m_{ij}=\chi_{ij}^{-\alpha}\chi_{ij}^G.
\end{equation}
In the one-way two-state economy, the class rates remain
$\lambdaN=-\varrho_N$ and $\lambdaD=-\varrho_D-\kappa$. Their difference is
\begin{align*}
  \lambdaD-\lambdaN
  &=(\mu_D^G-\mu_N^G)+\alpha(\mu_N^C-\mu_D^C)-\kappa\\
  &\quad+\tfrac12\bigl(\|\sigma_D^G\|^2-\|\sigma_N^G\|^2\bigr)
     +\tfrac12\alpha^2\bigl(\|\sigma_D^C\|^2-\|\sigma_N^C\|^2\bigr)
  -\alpha\bigl[(\sigma_D^C)^\t\sigma_D^G
                         -(\sigma_N^C)^\t\sigma_N^G\bigr].
\end{align*}
The last term records the change in the instantaneous covariance of consumption
and cash-flow growth. This extension separates cash-flow deterioration from
consumption risk compensation. The main-text model sets
$\mu_i^G=\vartheta\mu_i^C$, $\sigma_i^G=\vartheta\sigma_i^C$, and
$\chi_{ij}^G=\chi_{ij}^{\vartheta}$.

\subsection{Long-run sensitivity to effective discounting}
\label{app:local-pass-through}

The decomposition \eqref{eq:FK-appendix} also links the long-run pricing law to
changes in effective discounting. Increasing $\rho$ by $\delta\in\RR^n$ while
holding $\Lx^{\QQ}$ fixed replaces $\Lp$ by $\Lp-\diag(\delta)$.

\begin{proposition}[Effective-discount-rate sensitivity] \label{prop:pass-through}
  Suppose that $\cC^*=\{\Ccal^*\}$, and let $\phi$ and $\psi$ be the right and left
  eigenvectors of $\Lp$ in Theorem~\ref{thm:Pi-asym}. For $\delta\in\RR^n$ near
  zero, define
  \[
    \Lp(\delta)\coloneqq\Lp-\diag(\delta),
    \qquad
    y^*(\delta)\coloneqq-s\bigl(\Lp(\delta)\bigr).
  \]
  Then $y^*$ is differentiable at zero and, for every $h\in\RR^n$,
  \begin{equation} \label{eq:pass-through}
    \left.\frac{\diff}{\diff\theta}y^*(\theta h)\right|_{\theta=0}
    =
    \sum_{i=1}^n\pi_i^{\mathrm{LR}}h_i,
    \qquad
    \pi^{\mathrm{LR}}\coloneqq\phi\odot\psi.
  \end{equation}
  In particular,
  \(
    \partial y^*(0)/\partial\delta_i=\pi_i^{\mathrm{LR}}
  \)
  and
  \(
    \sum_{i=1}^n\pi_i^{\mathrm{LR}}=1.
  \)
  More generally, if a real matrix-valued perturbation satisfies
  \(
    \Lp(\theta)=\Lp+\theta H+o(\theta)
  \)
  in matrix norm, then
  \begin{equation}
  \label{eq:general-sensitivity}
    \left.\frac{\diff}{\diff\theta}s\bigl(\Lp(\theta)\bigr)\right|_{\theta=0}
    =
    \psi^\t H\phi.
  \end{equation}
\end{proposition}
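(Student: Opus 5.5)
The plan is to prove the general formula \eqref{eq:general-sensitivity} first, by first-order perturbation of an algebraically simple eigenvalue, and then obtain \eqref{eq:pass-through} by specializing to $H=-\diag(h)$. By Theorem~\ref{thm:Pi-asym}, $\cC^*=\{\Ccal^*\}$ makes $\lambda^*=s(\Lp)$ an algebraically simple eigenvalue. It has nonnegative right and left eigenvectors $\phi,\psi$ with $\psi^\t\phi=1$, and $\phi\psi^\t$ is the associated rank-one spectral projection. By \eqref{eq:gsv}, every other eigenvalue $\mu$ satisfies $\Re\mu<\lambda^*$.

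Let $\gamma$ be a small circle around $\lambda^*$ separating it from the rest of $\sigma(\Lp)$. Eigenvalues depend continuously on the matrix, so for $|\theta|$ small exactly one eigenvalue $\lambda(\theta)$ of $\Lp(\theta)$ (counted with multiplicity) lies inside $\gamma$. All other eigenvalues stay in a neighborhood of $\sigma(\Lp)\setminus\{\lambda^*\}$, whose real parts are bounded away below $\lambda^*$. Since $\Lp(\theta)$ is real, $\overline{\lambda(\theta)}$ is also an eigenvalue inside $\gamma$, so by uniqueness $\lambda(\theta)$ is real. It follows that $s(\Lp(\theta))=\lambda(\theta)$.

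The Riesz projection $P(\theta)=\frac{1}{2\pi i}\oint_\gamma(z-\Lp(\theta))^{-1}\diff z$ is continuous at $\theta=0$, with $P(0)=\phi\psi^\t$, and has rank one. Therefore
\[
  \lambda(\theta)=\tr\bigl(\Lp(\theta)P(\theta)\bigr).
\]
Rather than differentiating this trace directly, we use the standard eigenvector argument. Set $\phi(\theta)=P(\theta)\phi$, which is nonzero for small $\theta$ and satisfies $\Lp(\theta)\phi(\theta)=\lambda(\theta)\phi(\theta)$. Left-multiplying by $\psi^\t$ and using $\psi^\t\Lp=\lambda^*\psi^\t$ gives
\[
  [\lambda(\theta)-\lambda^*]\,\psi^\t\phi(\theta)=\psi^\t[\Lp(\theta)-\Lp]\phi(\theta)=\theta\,\psi^\t H\phi(\theta)+o(\theta).
\]
Because $\psi^\t\phi(\theta)\to\psi^\t\phi=1$ and $\phi(\theta)\to\phi$, dividing by $\theta$ and letting $\theta\to0$ yields \eqref{eq:general-sensitivity}. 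The main point needing care is the continuity of $P(\theta)$ and of $\phi(\theta)$ under a perturbation that is only $o(\theta)$, not analytic. This holds because the resolvent is uniformly bounded on $\gamma$ for small perturbations, which gives $\|P(\theta)-P(0)\|=O(\|\Lp(\theta)-\Lp\|)\to0$.

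For the specialization, take $\Lp(\theta h)=\Lp-\theta\diag(h)$, so $H=-\diag(h)$. Then
\[
  \frac{\diff}{\diff\theta}y^*(\theta h)\Big|_{\theta=0}=\psi^\t\diag(h)\phi=\sum_i\phi_i\psi_ih_i,
\]
which is \eqref{eq:pass-through}. Differentiability of $y^*$ at zero (in the Fréchet sense) follows from the same bound applied to a general $\delta$: since $\Lp(\delta)=\Lp-\diag(\delta)$ is linear in $\delta$, the argument above with $\theta h$ replaced by $\delta$ gives $y^*(\delta)-y^*(0)=\psi^\t\diag(\delta)\phi+o(\|\delta\|)$. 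The partial derivatives follow by taking $h=e_i$. Finally, $\sum_i\pi_i^{\mathrm{LR}}=\psi^\t\phi=1$ by the normalization. Equivalently, this is the case $h=\1$, where $\Lp-\theta I$ has spectral bound exactly $\lambda^*-\theta$.
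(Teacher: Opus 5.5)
Your proof is correct and follows essentially the same route as the paper: treat $\lambda^*$ as an algebraically simple, strictly separated eigenvalue, premultiply the perturbed eigenvalue equation by $\psi^\t$, and specialize to $H=-\diag(h)$. The paper's short proof only asserts two steps that you make rigorous. Your identity $[\lambda(\theta)-\lambda^*]\,\psi^\t\phi(\theta)=\psi^\t[\Lp(\theta)-\Lp]\phi(\theta)$ needs only continuity of $\phi(\theta)=P(\theta)\phi$, not differentiability, and you check explicitly that the perturbed eigenvalue stays real and remains the spectral bound.
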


\begin{proof}
By Theorem~\ref{thm:Pi-asym}, $\lambda^*=s(\Lp)$ is algebraically simple and
strictly separated in real part from the remaining spectrum. Its continuation under
a sufficiently small perturbation is therefore differentiable and remains the
spectral bound. Differentiating the eigenvalue equation for
$\Lp(\theta)=\Lp+\theta H+o(\theta)$ at zero and premultiplying by $\psi^\t$ gives
\eqref{eq:general-sensitivity}, because $\psi^\t\Lp=\lambda^*\psi^\t$ and
$\psi^\t\phi=1$. Taking $H=-\diag(h)$ and changing sign yields
\eqref{eq:pass-through}; moreover,
$\sum_i\pi_i^{\mathrm{LR}}=\psi^\t\phi=1$.
\end{proof}

Thus $\pi_i^{\mathrm{LR}}$ is the marginal pass-through from the effective discount rate in
state $x_i$ to the long-run envelope yield, as well as the stationary occupation
weight of $x_i$ under either Doob chain. Proposition~\ref{prop:valuation-laws}
implies that $\supp(\pi^{\mathrm{LR}})=\Ccal^*$, so sufficiently small effective-rate
changes confined to classes outside $\Ccal^*$ leave the global rate unchanged,
although they may alter finite-maturity prices and asymptotic loadings. By contrast,
\eqref{eq:general-sensitivity} shows that changes in transition or risk-adjustment
terms generally involve the two-sided weights $\psi_i\phi_j$, rather than an
expectation under $\pi^{\mathrm{LR}}$.

The same calculation applies claim by claim. Fix a current class $\Ccal$ and a
nonnegative payoff $g$ with nonempty corridor $\kK(\Ccal,g)$, and suppose that this
corridor has a unique critical class $\Ccal_c$. Let $L_c$ be its diagonal block and
let $\phi_c,\psi_c\gg0$ be corresponding right and left Perron vectors, normalized
by $\psi_c^\t\phi_c=1$. If $L_c$ is replaced by
$L_c-\theta\diag(h)$ for $h\in\RR^{\Ccal_c}$, while all other corridor blocks are
held fixed, then $\Ccal_c$ remains critical for all sufficiently small $\theta$ and
\[
  \left.\frac{\diff}{\diff\theta}
  [-\lambda(\Ccal,g;\theta)]\right|_{\theta=0}
  =
  \sum_{x\in\Ccal_c}\psi_c(x)\phi_c(x)h(x),
\]
where $\lambda(\Ccal,g;\theta)$ is the resulting corridor rate.
Sufficiently small effective-rate changes confined to classes outside $\Ccal_c$
leave the corridor rate unchanged. If several corridor-critical classes are tied,
the corridor rate is the maximum of their perturbed block rates; the corresponding
long-run yield need not be differentiable when those rates have different
first-order responses.

\subsection{Rare-entry specialization}
\label{app:rare-entry-primitive}

Because $\e^{\xi(x_i,x_j)}>0$, $\Lp$ and $\Lx$ have the same off-diagonal zero
pattern. Thus the primitive specification preserves every physical edge and, in
particular, the class accessibility relation used in Section~\ref{sec:network}.

The fixed-recovery model in Section~\ref{sec:rare-fixed-recovery} follows directly
from Proposition~\ref{prop:param-generator}. Partition the states into
$\mathcal N$ and $\mathcal D$, hold the other coefficients in
\eqref{eq:param-appendix} fixed, and vary only the physical normal-to-disaster
transition block $F_\epsilon$. Set
\(
  D_\epsilon=\diag(F_\epsilon\1_{\mathcal D})
\)
and
\(
  \Gamma_{ij}=\e^{\xi(x_i,x_j)}
\)
for $x_i\in\mathcal N$ and $x_j\in\mathcal D$. If $\LpN$, $\LpDN$, and $\LpD$
denote the pricing blocks at $\epsilon=0$, then
\[
  \Lx(\epsilon)
  =
  \begin{pmatrix}
    Q-D_\epsilon&F_\epsilon\\
    H&U
  \end{pmatrix},
  \qquad
  \Lp(\epsilon)
  =
  \begin{pmatrix}
    \LpN-D_\epsilon&F_\epsilon\odot\Gamma\\
    \LpDN&\LpD
  \end{pmatrix}.
\]
Thus rare physical entry changes only the normal-state exit diagonals and the
normal-to-disaster pricing block; disaster dynamics and recovery remain fixed. If
$F_\epsilon=\epsilon F+O(\epsilon^{1+\eta_F})$, the pricing entry block is
$\epsilon(F\odot\Gamma)+O(\epsilon^{1+\eta_F})$ and has the same rarity exponent.


For the consumption-based application in Section~\ref{sec:crra-rare},
the physical entry intensity is $\epsilon\ell$ and $\Delta=\lambdaD-\lambdaN>0$.
Physical flow balance gives
$\pi_\epsilon^P(N)\epsilon\ell=\pi_\epsilon^P(D)\kappa$, yielding the first
identity in \eqref{eq:crra-rare-masses}. The pricing generator has the two
distinct eigenvalues in \eqref{eq:crra-rare-eigenvalues}, and its spectral
projection at $\lambda_+(\epsilon)$ is
\[
  \frac{\Lp(\epsilon)-\lambda_-(\epsilon)I}
       {\Delta(\epsilon)}.
\]
The disaster-state diagonal of this rank-one projection is the product of the
corresponding normalized left and right Perron eigenvector entries, hence equals
$\pi_\epsilon^{\mathrm{LR}}(D)$. Substituting the eigenvalues gives the second
identity in \eqref{eq:crra-rare-masses}. Expanding at $\epsilon=0$ gives
\[
  \pi_\epsilon^P(D)=\frac{\ell}{\kappa}\epsilon+O(\epsilon^2),
  \qquad
  \pi_\epsilon^{\mathrm{LR}}(D)
  =1-\frac{\ell\kappa(\chi_{\sss ND}\chi_{\sss DN})^{\vartheta-\alpha}}{\Delta^2}
    \epsilon+O(\epsilon^2).
\]
Thus the constants in Corollary~\ref{cor:rare-disaster-concentration} are
$c_P=\ell/\kappa$ and
$c_{\mathrm{LR}}=\ell\kappa(\chi_{\sss ND}\chi_{\sss DN})^{\vartheta-\alpha}/\Delta^2$;
in particular,
$\pi_\epsilon^{\mathrm{LR}}(N)=c_{\mathrm{LR}}\epsilon+O(\epsilon^2)$.

The spectral decomposition expresses $z_t(N;\epsilon)$ as a linear combination
of $\e^{\lambda_+(\epsilon)t}$ and $\e^{\lambda_-(\epsilon)t}$. Its coefficients
are fixed by
\[
  z_0(N;\epsilon)=1,
  \qquad
  \left.\partial_tz_t(N;\epsilon)\right|_{t=0}
  =\lambdaN+\epsilon\ell(\chi_{\sss ND}^{\vartheta-\alpha}-1),
\]
which give \eqref{eq:crra-rare-price}. In particular,
\[
  \begin{aligned}
  \omega_\epsilon
  &=\frac{\lambdaN+\epsilon\ell(\chi_{\sss ND}^{\vartheta-\alpha}-1)-\lambda_-(\epsilon)}
        {\Delta(\epsilon)}\\
  &=\frac{\lambda_+(\epsilon)-\lambdaD+\epsilon\ell\chi_{\sss ND}^{\vartheta-\alpha}}
        {\Delta(\epsilon)}>0.
  \end{aligned}
\]
With the primitives fixed,
\[
  \begin{aligned}
  \lambda_+(\epsilon)
  &=\lambdaD+\frac{\ell\kappa(\chi_{\sss ND}\chi_{\sss DN})^{\vartheta-\alpha}}{\Delta}
    \epsilon+O(\epsilon^2),\\
  \Delta(\epsilon)
  &=\Delta+\ell\left(1+
    \frac{2\kappa(\chi_{\sss ND}\chi_{\sss DN})^{\vartheta-\alpha}}{\Delta}\right)\epsilon
    +O(\epsilon^2).
  \end{aligned}
\]
These expansions yield \eqref{eq:crra-rare-loading}. Equating the two
exponentials in \eqref{eq:crra-rare-price}, including their coefficients, gives
the exact crossover in \eqref{eq:crra-exact-crossover} when
$0<\omega_\epsilon<1/2$. Finally,
\[
  \log\frac{1-\omega_\epsilon}{\omega_\epsilon}
  =\log(1/\epsilon)-\log\omega+O(\epsilon),
  \qquad
  \Delta(\epsilon)=\Delta+O(\epsilon),
\]
give its stated expansion, since $\epsilon\log(1/\epsilon)\to0$.

These expansions hold with the primitives and $\Delta>0$ fixed and are not
uniform as $\Delta\downarrow0$. Changing primitives affects both the loading
and the eigenvalue gap, so a larger limiting gap need not imply an earlier
finite-entry crossover. Small annual entry probabilities need not make the
first-order loading or crossover approximations accurate. For example, at the
1.75\% annual entry probability and $\kappa=0.16$ in
Figure~\ref{fig:rare-entry-crossover}, $\epsilon\omega\approx1.3473$, whereas
$\omega_\epsilon\approx0.4241$; the coefficient-corrected crossover approximation
is negative, whereas the exact crossover is $2.36$ years. The figure therefore
uses exact formulas throughout.

In the numerical benchmark, the reciprocal consumption multipliers satisfy
$\chi_{\sss ND}\chi_{\sss DN}=1$. The recovery jump reverses the discrete entry
loss, not consumption-growth differences accumulated during disaster. Jump
amplitude cancels from the eigenvalues in \eqref{eq:crra-rare-eigenvalues} but
still affects loadings and crossover maturities; nonreciprocal multipliers can
also affect the eigenvalues.

\section{Notation and Preliminaries}
\label{app:preliminaries}

\subsection{Notation}
\label{app:notation}

For scalar functions $f$ and $h$, we write $f\sim h$, or say that $f$ is
\emph{asymptotic to} $h$, if $f/h\to1$ in the indicated limit, with $h$ nonzero
sufficiently near that limit. We use this notation both as maturity $t\to\infty$
and as the rare-entry parameter $\epsilon\downarrow0$.

The state space of the economic environment is the finite set
$\Xsf=\{x_1,\ldots,x_n\}$. Let $\RR^\Xsf$ denote the vector space of real-valued
functions on $\Xsf$. For $g\in\RR^\Xsf$, write $g\geq0$ if $g(x)\geq0$ for every
$x\in\Xsf$, write $g\gg0$ if $g(x)>0$ for every $x\in\Xsf$, and write $g>0$ if $g\geq0$
and $g\neq0$. The positive cone is $\RR_+^\Xsf \coloneqq \{g\in\RR^\Xsf: g\geq0\}$.
A linear operator $A:\RR^\Xsf\to\RR^\Xsf$ is \emph{positive}, written $A\geq0$, if $A\,
\RR_+^\Xsf\subseteq\RR_+^\Xsf$. Equivalently, $A\geq0$ if and only if $Ag\geq 0$
whenever $g\geq0$.

For $g\in\RR^\Xsf$, the supremum norm is
\(
  \|g\|_\infty\coloneqq\max_{x\in\Xsf}|g(x)|.
\)
For a linear operator $A$ on $\RR^\Xsf$, $\|A\|_\infty$ denotes the operator norm
induced by $\|\cdot\|_\infty$. Elsewhere, unsubscripted norms are fixed norms on the
relevant finite-dimensional spaces, with matrix norms taken to be submultiplicative;
on the shock space $\RR^d$, $\|\cdot\|$ denotes the Euclidean norm.

For each $x\in\Xsf$, let \(e_x \in \RR^\Xsf\) denote the indicator
function of state \(x\): \(e_{x}(x')\coloneqq 1_{\{x\}}(x')\) for \(x'\in\Xsf\).
Throughout the paper, we identify \(\RR^\Xsf\) with \(\RR^n\) using the ordered
basis \((e_{x_1},\ldots,e_{x_n})\), and use the same symbol for a function in
$\RR^\Xsf$ and its coordinate column vector. Thus, for $g\in\RR^\Xsf$, 
\[
  g_i\coloneqq g(x_i),
  \qquad
  g=\sum_{i=1}^n g_i\,e_{x_i}.
\]
For $v\in\RR^\Xsf$, let $\diag(v)$ denote the $n\times n$ diagonal matrix formed
from the coordinates of $v$; that is, the $(i,j)$-th entry of $\diag(v)$ is $v_i$ if
$i=j$ and zero otherwise.
For $u,v\in\RR^\Xsf$, their \emph{Hadamard product}, denoted by $u\odot v$, is the
componentwise product defined by
\(
  (u\odot v)_i\coloneqq u_i v_i
\)
for $i=1,\ldots,n$.
Moreover, for a subset $\mathcal X\subseteq\Xsf$, let $v_{\mathcal X}$ denote the
coordinate subvector indexed by $\mathcal X$. For subsets
$\mathcal X,\mathcal Y\subseteq\Xsf$, let $A(\mathcal X,\mathcal Y)$ denote the
submatrix of $A$ with rows indexed by $\mathcal X$ and columns indexed by
$\mathcal Y$; in particular, $A(\mathcal X,\mathcal X)$ is the principal submatrix
indexed by $\mathcal X$.

We similarly use the same symbol for a linear operator on \(\RR^\Xsf\) and its matrix
representation. For \(A:\RR^\Xsf\to\RR^\Xsf\), the \((i,j)\)-th matrix entry is
\[
  A_{ij}\coloneqq(Ae_{x_j})(x_i),
  \qquad i,j=1,\ldots,n.
\]
Hence, for \(g\in\RR^\Xsf\), \((Ag)(x_i)=(Ag)_i=\sum_{j=1}^n A_{ij}g_j\) for
\(i=1,\ldots,n\).
Under this identification, $A$ is positive if and only if its matrix is entrywise
nonnegative, that is, $A\geq0$ if and only if \(A_{ij}\geq0\) for all
$i,j=1,\ldots,n$. Moreover, we write $A\gg0$ whenever $A_{ij}>0$ for all
$i,j=1,\ldots,n$.

\subsection{Spectral preliminaries}
\label{app:spectral}

For a finite-dimensional linear operator $A$, we use $\sigma(A)$ to denote its
spectrum. An eigenvalue in the spectrum is \emph{semisimple} if its algebraic and
geometric multiplicities coincide, or equivalently, if every Jordan block associated
with it has size one. It is \emph{algebraically simple} if its algebraic multiplicity
is one. Since geometric multiplicity cannot exceed algebraic multiplicity, every
algebraically simple eigenvalue is semisimple, but the converse need not hold.

Furthermore, we use \( s(A)\coloneqq\max\{\Re\lambda:\lambda\in\sigma(A)\} \)
to denote the spectral bound of $A$.

\begin{lemma}[Spectral bound of a Metzler matrix]
\label{lem:msb}
If $A$ is Metzler, meaning that $A_{ij}\geq0$ for $i\neq j$, then $s(A)$ is a real
eigenvalue of $A$ and is the unique spectral value with maximal real part.
Equivalently,
\( s(A)\in\sigma(A)\cap\RR \) and \( \Re\mu<s(A) \) for every 
\( \mu\in\sigma(A)\setminus\{s(A)\}. \)
\end{lemma}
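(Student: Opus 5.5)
The plan is to reduce to the classical Perron--Frobenius theorem for nonnegative matrices by a scalar shift. Since $A$ is Metzler, choose $c\geq\max_i|A_{ii}|$, so that $B\coloneqq A+cI$ is entrywise nonnegative. The spectra are related by $\sigma(B)=\sigma(A)+c$, and real parts shift by the same constant. By Perron--Frobenius for nonnegative matrices (no irreducibility needed), the spectral radius $r(B)$ is an eigenvalue of $B$ with a nonnegative eigenvector. Every $\mu\in\sigma(B)$ satisfies $|\mu|\leq r(B)$, hence $\Re\mu\leq r(B)$. Therefore $s(B)=r(B)\in\sigma(B)\cap\RR$, and so $s(A)=r(B)-c$ is a real eigenvalue of $A$.

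For strict dominance, suppose $\mu\in\sigma(A)$ with $\Re\mu=s(A)$ and $\mu\neq s(A)$, so that $\Im\mu\neq0$. I would increase the shift: take $c'>c$ and set $B'=A+c'I\geq0$. Then $r(B')=s(A)+c'$ by the same argument, and $\mu+c'$ lies in $\sigma(B')$ with
\[
  |\mu+c'|^2=(s(A)+c')^2+(\Im\mu)^2>r(B')^2,
\]
since $s(A)+c'=r(B')\geq0$. This contradicts the definition of the spectral radius. Hence every $\mu\neq s(A)$ in the spectrum has $\Re\mu<s(A)$.

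The only delicate point is the step just handled. Perron--Frobenius alone permits other eigenvalues on the spectral circle (periodicity), so the uniqueness claim must come from the geometry of the shift. Any peripheral eigenvalue of $B$ other than $r(B)$ has real part strictly less than $r(B)$, because it lies on the circle of radius $r(B)$ but is not equal to $r(B)$. Equivalently, the single extra shift argument above shows that no non-real eigenvalue can share the maximal real part. I expect no further obstacles: finiteness of the dimension makes the spectrum a finite set of eigenvalues, so $s(A)$ is attained.
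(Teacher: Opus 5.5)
Your argument is correct. The paper gives no argument of its own here: it simply cites Theorem~7.2 of \citet{batkai2017positive}.

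Your route is a self-contained derivation from the weak (reducible) Perron--Frobenius theorem for nonnegative matrices, using a diagonal shift. This is the same device the paper uses in the proof of Lemma~\ref{lem:mr}.

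What your version buys is transparency. It shows that strict real-part dominance needs neither irreducibility nor aperiodicity. The reason is geometric: $\sigma(A)$ lies in the closed disc of radius $r(B)$ centered at $-c$, and that disc meets the vertical line $\Re z=s(A)$ only at the point $s(A)=r(B)-c$. Your argument also yields a nonnegative eigenvector for $s(A)$ at no extra cost.

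The citation, by contrast, keeps the appendix short. It also places the result in the positive-semigroup framework from which the paper draws its other finite-dimensional facts: positivity of $\e^{tA}$, the growth bound, and the Laplace representation of the resolvent.

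One small simplification: the second shift $c'>c$ is unnecessary. With the original $c$ you already have $|\mu+c|^2=r(B)^2+(\Im\mu)^2>r(B)^2$ whenever $\Re\mu=s(A)$ and $\Im\mu\neq0$. This is the same observation your last paragraph makes about peripheral eigenvalues of $B$.
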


\begin{proof}[Proof of Lemma~\ref{lem:msb}]
  See Theorem 7.2 of \citet[][p. 82]{batkai2017positive}.
\end{proof}

For $z\notin\sigma(A)$, the \emph{resolvent} of $A$ is
$R(z;A)\coloneqq(zI-A)^{-1}$.
The \emph{Riesz spectral projection} of $A$ associated with an eigenvalue
$\lambda\in\sigma(A)$ is defined as
\[
  P_\lambda(A)
  \coloneqq
  \frac{1}{2\pi\mathrm i} \int_{\Gamma_\lambda} R(z;A)\diff z,
\]
where $\Gamma_\lambda$ is a positively oriented simple closed contour that encloses
$\lambda$ and no other point of $\sigma(A)$. $P_\lambda(A)$ is a projection
satisfying $P_\lambda(A)^2=P_\lambda(A)$. It is independent of the choice of
$\Gamma_\lambda$ and commutes with $A$. When the operator is clear from context,
we suppress its argument and write $P_\lambda$. In general, the range and rank of
\(P_\lambda(A)\) are the generalized eigenspace associated with \(\lambda\) and the
algebraic multiplicity of \(\lambda\), respectively. If $\lambda$ is algebraically 
simple and $\phi$ and $\psi$ are corresponding right and left eigenvectors normalized
by $\psi^\t\phi=1$, then \(P_\lambda(A)=\phi\psi^\t\).

\subsection{Graph structure and Frobenius form of Metzler matrices}
\label{app:metzler-graphs}

We collect several graph-theoretic concepts used in combinatorial matrix theory and
finite-state Markov-chain analysis. Much of this subsection is adapted from
\cite{rothblum1975algebraic} and \cite{schneider1986influence}; see also
\citet[][Section 3.1]{beare2026reducible} for a review.

Let \(A:\RR^\Xsf\to\RR^\Xsf\) be a linear operator whose matrix representation
\(A=(A_{ij})\) is Metzler, meaning that \(A_{ij}\geq0\) for \(i\neq j\). 
The \emph{directed graph} associated with \(A\), denoted by \(\Gcal(A)\),
records the off-diagonal zero pattern of $A$. It has vertex set \(\Xsf =
\{x_1,\ldots,x_n\}\) and a directed edge from \(x_i\) to \(x_j\), written $x_i\to
x_j$, if and only if \(i\neq j\) and \(A_{ij}>0\).

A \emph{directed walk} of length $p\in\NN_0$ is a sequence $(x_{i_0},\ldots,x_{i_p})$
of vertices of $\Gcal(A)$ such that $x_{i_{r-1}}\to x_{i_r}$ for every
$r=1,\ldots,p$. Directed walks with length zero are allowed. We say that \(x_i\) has
\emph{access to} \(x_j\), or equivalently that \(x_j\) is \emph{accessible from}
\(x_i\), written \(x_i\leadsto x_j\), if there exists a directed walk from \(x_i\) to
\(x_j\). Thus every state has access to itself through the walk of length zero. A
\emph{directed path} is a directed walk with no repeated vertices. By deleting cycles
from a walk, any directed walk from $x_i$ to $x_j$ can be reduced to a directed path
with the same endpoints. Hence $x_i\leadsto x_j$ if and only if there is a directed
path from $x_i$ to $x_j$.

We say that two states $x_i,x_j\in\Xsf$ \emph{communicate}, written $x_i\sim x_j$, if
$x_i\leadsto x_j$ and $x_j\leadsto x_i$. Thus, communication is mutual accessibility.
The relation $\sim$ is an equivalence relation and partitions $\Xsf$ into disjoint
equivalence classes $\Ccal_1,\ldots,\Ccal_m\subseteq\Xsf$, for some $m\in\NN$.
These equivalence classes are called \emph{classes}. Each class is a maximal subset
of states whose elements communicate pairwise; equivalently, the classes are the
\emph{strongly connected components} of $\Gcal(A)$.

A class $\Ccal\subseteq\Xsf$ is \emph{closed}, also called \emph{final} in
combinatorial matrix theory, if no state in $\Ccal$ has access to any state outside
$\Ccal$. Equivalently, $A_{ij}=0$ whenever $x_i\in\Ccal$ and
$x_j\notin\Ccal$. Thus, closed classes are precisely the \emph{terminal strongly
connected components} of $\Gcal(A)$.
When \(A\) is a Markov generator, this graph-theoretic
classification coincides with the usual probabilistic one: a state is
\emph{recurrent} if and only if it belongs to a closed class, and \emph{transient} if
and only if it belongs to a nonclosed class.

If $\Xsf$ consists of a single class, then that class is necessarily closed.
The graph $\Gcal(A)$ is then strongly connected, and the Metzler matrix $A$
is called \emph{irreducible}. In addition to the general spectral conclusion of
Lemma~\ref{lem:msb}, Perron--Frobenius theory for irreducible Metzler matrices
implies that $s(A)$ is an algebraically simple eigenvalue and there
exist strictly positive left and right eigenvectors associated with $s(A)$.

The \emph{reduced graph} of \(A\), also called its \emph{condensation graph} and
denoted by \(\Gcal_{\mathrm r}(A)\), is obtained by collapsing each class into a
single vertex while retaining all directed connections between distinct classes.
Its vertex set is therefore \(\cC(A)\coloneqq\{\Ccal_1,\ldots,\Ccal_m\}\), and it
contains a directed edge \( \Ccal_k\to\Ccal_\ell \) if and only if \(k\neq\ell\) and
\(x_i\to x_j\) for some \(x_i\in\Ccal_k\) and \(x_j\in\Ccal_\ell\).

In the reduced graph, a \emph{directed path} of length $p\in\NN_0$ is a sequence 
$\gamma = (\Ccal_{k_0},\ldots,\Ccal_{k_p})$ of distinct vertices
of $\Gcal_{\mathrm r}(A)$ such that $\Ccal_{k_{r-1}}\to\Ccal_{k_r}$ for every $r=1,\ldots,p$.
We write $|\gamma|=p$ for the length of $\gamma$, that is, its number of directed edges.
In particular, paths of length zero are allowed.
We say that $\Ccal_k$ has \emph{access to}
$\Ccal_\ell$, written $\Ccal_k\leadsto\Ccal_\ell$, if there exists a directed
path from $\Ccal_k$ to $\Ccal_\ell$. Allowing paths of length zero makes each
class have access to itself. In particular, class-level accessibility agrees with
state-level accessibility:
\[
  \Ccal_k\leadsto\Ccal_\ell
  \quad\Longleftrightarrow\quad
  x_i\leadsto x_j
  \text{ for some }x_i\in\Ccal_k,\ x_j\in\Ccal_\ell.
\]
Because all states within a class communicate, ``some'' may equivalently be
replaced by ``every.''

State-level accessibility is reflexive and transitive on $\Xsf$ but need not be
antisymmetric, because distinct states in the same class are mutually accessible.
It therefore defines a preorder on $\Xsf$. After quotienting by communication,
class-level accessibility is antisymmetric and hence defines a partial order on
$\cC(A)$. Equivalently, the reduced graph $\Gcal_{\mathrm r}(A)$ is acyclic.

Two distinct classes $\Ccal_k,\Ccal_\ell\in\cC(A)$ are said to be \emph{comparable}
if $\Ccal_k\leadsto\Ccal_\ell$ or $\Ccal_\ell\leadsto\Ccal_k$; otherwise they are
\emph{incomparable}. Following \citet{rothblum1975algebraic}, a \emph{chain of
classes} is a nonempty collection of pairwise comparable classes; in particular, any
singleton class forms a chain. Equivalently, its elements can be ordered as
\(
  \Ccal_{k_0}\leadsto\Ccal_{k_1}\leadsto\cdots\leadsto\Ccal_{k_q},
\)
for some $q\in\NN_0$.

Since the class-level accessibility defines a partial order on $\cC(A)$, we may label
the classes so that
\begin{equation}\label{eq:po}
  \Ccal_k\leadsto\Ccal_\ell \text{ for } k\neq\ell
  \quad\Longrightarrow\quad
  k>\ell.
\end{equation}
After arranging the state coordinates class by class, with an arbitrary ordering
within each class, \(A\) has the block lower-triangular Frobenius form
\begin{equation}\label{eq:aff}
  A=
  \begin{pmatrix}
    A_{11} & 0      & \cdots & 0 \\
    A_{21} & A_{22} & \cdots & 0 \\
    \vdots & \vdots & \ddots & \vdots \\
    A_{m1} & A_{m2} & \cdots & A_{mm}
  \end{pmatrix},
\end{equation}
where \(A_{k\ell}=A(\Ccal_k,\Ccal_\ell)\) is the submatrix with rows indexed by
\(\Ccal_k\) and columns indexed by \(\Ccal_\ell\). 
The ordering of the blocks in the Frobenius form is generally nonunique because
some classes may be incomparable under accessibility.

Each diagonal block \(A_{kk}\) in \eqref{eq:aff} is Metzler and irreducible because
its associated graph is the strongly connected subgraph induced by the class
\(\Ccal_k\). For \(k\neq\ell\), the off-diagonal block \(A_{k\ell}\) is nonzero if
and only if \(\Ccal_k\to\Ccal_\ell\). Due to \eqref{eq:po}, every nonzero
off-diagonal block lies strictly below the diagonal. Moreover, \(\Ccal_k\) is closed
if and only if \(A_{k\ell}=0\) for every \(\ell<k\).


\subsection{Positivity and class-path expansions}

A fundamental property of Metzler matrices is that they generate positive
semigroups: if \(A\) is Metzler, then \(\e^{tA}\geq0\) for every \(t\geq0\);
see, for example, \citet[Theorem~7.1]{batkai2017positive}. The following lemma
strengthens this result by characterizing the entrywise positivity pattern of
\(\e^{tA}\) in terms of accessibility in \(\Gcal(A)\).
For any $g\in\RR^\Xsf$, let \( \supp(g) \coloneqq \{x\in\Xsf:g(x)\neq0\} \subseteq
\Xsf \) denote its support.

\begin{lemma} \label{lem:mr}
  Let \(A:\RR^\Xsf\to\RR^\Xsf\) be a linear operator whose matrix representation
  is Metzler, and let \(\Gcal(A)\) be the associated directed graph.
  For \(i,j=1,\ldots,n\), the following statements are equivalent:
  \begin{enumerate}
    \item $x_i\leadsto x_j$ in  $\Gcal(A)$.
    \item $(\e^{t A})_{ij}>0$ for every $t>0$.
    \item $(\e^{t A})_{ij}>0$ for some $t>0$.
  \end{enumerate}
  Moreover, for every $t>0$, \(g\in\RR_+^\Xsf\), and \(x\in\Xsf\), 
  $(\e^{tA} g)(x) > 0$ if and only if there exists $y\in\supp(g)$ such that
  $x\leadsto y$ in $\Gcal(A)$.
\end{lemma}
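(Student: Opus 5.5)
The plan is to reduce to a nonnegative matrix by a scalar shift and then use the power series. Choose $c\geq0$ with $B\coloneqq A+cI\geq0$ entrywise; such $c$ exists because $A$ is Metzler, e.g.\ $c=\max_i|A_{ii}|$. Since $cI$ commutes with $A$,
\[
  \e^{tA}=\e^{-ct}\e^{tB}=\e^{-ct}\sum_{p=0}^\infty \frac{t^p}{p!}B^p .
\]
Every term in this series is entrywise nonnegative. Moreover, $B$ has the same off-diagonal pattern as $A$, so $\Gcal(B)=\Gcal(A)$. The diagonal entries of $B$ may be positive, but that affects only self-loops, which do not matter for accessibility.

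The key combinatorial fact is that $(B^p)_{ij}>0$ holds whenever there is a directed walk of length $p$ from $x_i$ to $x_j$ in $\Gcal(A)$. To see this, expand $(B^p)_{ij}=\sum B_{i i_1}B_{i_1i_2}\cdots B_{i_{p-1}j}$ over index sequences. All summands are nonnegative, and the summand along a walk's off-diagonal edges is strictly positive. The $p=0$ case is $(B^0)_{ii}=1$, which covers the length-zero walk.

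Conversely, a positive summand uses only factors $B_{kl}>0$. Such a factor is either an edge $k\to l$ or a diagonal factor with $k=l$. Deleting the diagonal steps leaves a walk from $x_i$ to $x_j$, so $(B^p)_{ij}>0$ for some $p$ implies $x_i\leadsto x_j$.

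With this in hand, the equivalences follow as below.
\begin{enumerate}
  \item[] (i)$\Rightarrow$(ii): if there is a walk of length $p$, then for every $t>0$ the $p$-th term of the series contributes $\e^{-ct}t^p(B^p)_{ij}/p!>0$. All other terms are $\geq0$, so $(\e^{tA})_{ij}>0$.
  \item[] (ii)$\Rightarrow$(iii) is trivial.
  \item[] (iii)$\Rightarrow$(i): if $(\e^{tA})_{ij}>0$ for some $t>0$, then some term of the nonnegative series is positive. Hence $(B^p)_{ij}>0$ for some $p$, and the converse direction above gives $x_i\leadsto x_j$.
\end{enumerate}

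The statement about payoffs follows. For $g\geq0$,
\[
  (\e^{tA}g)(x_i)=\sum_j(\e^{tA})_{ij}g_j
\]
is a sum of nonnegative terms. It is positive if and only if some $j$ has $g_j>0$ and $(\e^{tA})_{ij}>0$. By the equivalence of (i) and (ii), this holds if and only if $x_i\leadsto y$ for some $y\in\supp(g)$.

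The only delicate point is the bookkeeping with diagonal factors in the converse direction: one must check that positive diagonal entries of $B$ neither create spurious accessibility nor spoil the walk extraction. Handling them as self-loops that are deleted resolves it. Convergence and termwise evaluation of the series are routine, since a finite matrix exponential converges absolutely.
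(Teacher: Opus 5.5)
Your proposal is correct and follows essentially the same route as the paper: shift to the nonnegative matrix $B=A+cI$, expand $\e^{tB}$ as a nonnegative power series, read positivity of $(B^p)_{ij}$ off the walk expansion, and obtain the converse by deleting diagonal (self-loop) factors. The payoff statement then follows from the (i)$\Leftrightarrow$(ii) equivalence applied to the nonnegative sum over $\supp(g)$, exactly as in the paper.
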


\begin{proof}[Proof of Lemma~\ref{lem:mr}]
  Choose a scalar $a\geq0$ sufficiently large that $B=A+aI\geq0$. Since
  $A$ and $I$ commute, $\e^{tA}=\e^{-at}\e^{tB}$.
  Hence, for every $t>0$, $(\e^{tA})_{ij}>0$ if and only if
  $(\e^{tB})_{ij}>0$.
  Moreover,
  \(
    \e^{tB} = \sum_{k=0}^\infty t^k B^k / k!.
  \)
  Because \(B\geq0\), every term in this series is entrywise nonnegative. Hence
  for any $t>0$, \((\e^{tB})_{ij}>0\) if and only if
  \((B^k)_{ij}>0\) for some $k\in\NN_0$.

  We first prove ${\rm (i)}\Rightarrow{\rm (ii)}$. Suppose that
  \(x_i\leadsto x_j\), and let
  \(
    x_{r_0}\to x_{r_1}\to\cdots\to x_{r_k}
  \)
  be a directed walk from $x_i=x_{r_0}$ to $x_j=x_{r_k}$. If $k=0$, then
  $(B^0)_{ij}=1$. If $k\geq1$, then, since \(B\) and \(A\) have the same
  off-diagonal entries,
  \(
    B_{r_{\ell-1} r_\ell} = A_{r_{\ell-1} r_\ell} >0,
  \)
  for $\ell=1,\ldots,k$. The expansion
  \begin{equation}\label{eq:Bk}
    (B^k)_{ij} 
    =
    \sum_{s_1,\,\ldots,\,s_{k-1} = 1}^n 
    B_{i s_1}B_{s_1s_2}\cdots B_{s_{k-1}j},
  \end{equation}
  therefore gives
  \(
    (B^k)_{ij} \geq B_{r_0 r_1} B_{r_1 r_2}\cdots B_{r_{k-1} r_k} >0.
  \)
  In either case, $(B^k)_{ij}>0$ for some $k\in\NN_0$, and therefore
  \((\e^{tA})_{ij}>0\) for every $t>0$.

  The implication ${\rm (ii)}\Rightarrow{\rm (iii)}$ is immediate.
  It remains to prove ${\rm (iii)}\Rightarrow{\rm (i)}$. Suppose that
  \((\e^{tA})_{ij}>0\) for some $t>0$. Then \((B^k)_{ij}>0\) for some
  \(k\in\NN_0\). If \(k=0\), necessarily \(i=j\), so the length-zero path gives
  \(x_i\leadsto x_j\). If \(k\geq1\), the nonnegativity of the summands in
  \eqref{eq:Bk} implies that there exist indices \(s_1,\ldots,s_{k-1}\) such that
  \(
    B_{s_0s_1}B_{s_1s_2}\cdots B_{s_{k-1}s_k}>0,
  \)
  where $s_0 \equiv i$ and $s_k \equiv j$.
  Whenever \(s_{\ell-1}\neq s_\ell\), \( A_{s_{\ell-1}s_\ell} =
  B_{s_{\ell-1}s_\ell}>0\), so \(x_{s_{\ell-1}}\to x_{s_\ell}\) is an edge of
  \(\Gcal(A)\). Removing all consecutive repetitions \(s_{\ell-1}=s_\ell\) therefore
  produces a directed walk from \(x_i\) to \(x_j\) in \(\Gcal(A)\), possibly of
  length zero. Hence \(x_i\leadsto x_j\).

  For the last statement, note that
  \(
    (\e^{tA} g)(x_i)
    =
    \sum_{j=1}^n (\e^{tA})_{ij} g(x_j)
    =
    \sum_{x_j\in\supp(g)}(\e^{tA})_{ij} g(x_j)
  \)
  for some \(i=1,\ldots,n\), $t>0$ and $g\geq0$.
  Because \(\e^{tA}\geq0\) and \(g(x_j)>0\) for every \(x_j\in\supp(g)\), the sum is
  positive if and only if \( (\e^{tA})_{ij}>0 \) for some \( x_j\in\supp(g) \). The
  statement then follows from the equivalence between (i) and (ii).
\end{proof}

When $A$ is written in the Frobenius form \eqref{eq:aff}, Lemma~\ref{lem:mr}
also shows that $\e^{tA}$ is block lower triangular for every $t\geq0$. 
The claim is immediate at $t=0$; if $t>0$ and $k<\ell$, then
$\Ccal_k\not\leadsto\Ccal_\ell$ by \eqref{eq:po}, and hence
$\bigl(\e^{tA}\bigr)_{\s\Ccal_k,\Ccal_\ell}=0$. Its block zero pattern need not,
however, coincide with that of $A$: an indirect path from $\Ccal_k$ to
$\Ccal_\ell$ makes the corresponding block of $\e^{tA}$ entrywise strictly
positive for every $t>0$, even when $A_{k\ell}=0$.
Lastly, if $A$ is Metzler and irreducible, all states are mutually accessible, so
Lemma~\ref{lem:mr} implies that $\e^{tA}\gg0$ for every $t>0$.

We next derive a path expansion for the resolvent of a Metzler matrix $A$ in the
Frobenius form \eqref{eq:aff}. Because $A$ is block triangular, $z\notin\sigma(A)$
also lies outside the spectrum of every diagonal block $A_{kk}$. Hence
$R(z;A)$ and $R(z;A_{kk})$, $k=1,\ldots,m$, are all well defined.
For $\Ccal,\Ccal'\in\cC(A)$, let $\dD(\Ccal,\Ccal')$ be the set of all directed paths
in $\Gcal_{\mathrm r}(A)$ from $\Ccal$ to $\Ccal'$.
For a directed path $\gamma=(\Ccal_{k_0},\ldots,\Ccal_{k_p})$, define its
\emph{resolvent contribution} by
\begin{equation} \label{eq:path-res}
  \mathcal R_\gamma(z;A)
  \coloneqq
  R(z;A_{k_0k_0}) A_{k_0k_1} R(z;A_{k_1k_1}) \cdots A_{k_{p-1}k_p} R(z;A_{k_pk_p}).
\end{equation}
For a path of length zero, we have $\mathcal R_{(\Ccal_k)}(z;A) = R(z;A_{kk})$.

\begin{lemma}[Resolvent path expansion] \label{lem:res-exp}
  Let $A:\RR^\Xsf\to\RR^\Xsf$ be a linear operator whose matrix representation is
  Metzler in its Frobenius form \eqref{eq:aff}. Let $\mathcal R_\gamma(z;A)$ be the
  resolvent contribution from a path $\gamma$ in $\Gcal_{\mathrm r}(A)$, defined
  in \eqref{eq:path-res}.
  For any $\Ccal,\Ccal'\in\cC(A)$, the submatrix of the resolvent $R(z;A)$ with rows
  indexed by $\Ccal$ and columns indexed by $\Ccal'$ is given by
  \begin{equation} \label{eq:bre}
    \big[R(z;A)\big]_{\s\Ccal,\Ccal'}
    =
    \sum_{\gamma\in\dD(\Ccal,\,\Ccal')}
    \mathcal R_\gamma(z;A),
    \qquad z\in\CC\setminus\sigma(A),
  \end{equation}
  with the convention that the sum is zero when $\dD(\Ccal,\Ccal')=\varnothing$.
  Moreover, if $z>s(A)$ is real, every resolvent contribution
  $\mathcal R_\gamma(z;A)$ is entrywise strictly positive, and
  \begin{equation} \label{eq:bre2}
    \big[R(z;A)\big]_{\s\Ccal,\Ccal'} 
    \begin{cases}
      \gg 0, & \text{if }\Ccal\leadsto\Ccal',\\
      = 0,   & \text{if }\Ccal\not\leadsto\Ccal'.
    \end{cases}
  \end{equation} 
\end{lemma}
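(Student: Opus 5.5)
The plan is to use the block lower-triangular structure of $zI-A$ in the Frobenius form \eqref{eq:aff} and invert it by block forward substitution. Write $R=R(z;A)$ and let $R_{k\ell}$ denote its $(\Ccal_k,\Ccal_\ell)$ block. Since $zI-A$ is block lower triangular with invertible diagonal blocks $zI-A_{kk}$ for $z\notin\sigma(A)$, its inverse is also block lower triangular. The identity $(zI-A)R=I$, read block by block, gives
\[
  R_{kk}=R(z;A_{kk}),
  \qquad
  R_{k\ell}=R(z;A_{kk})\sum_{\ell\leq j<k}A_{kj}R_{j\ell},\quad k>\ell .
\]
I will prove \eqref{eq:bre} by induction on $k-\ell$, with $\ell$ fixed. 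The case $k=\ell$ is the length-zero path. In the inductive step, substitute the path expansion for each $R_{j\ell}$ with $j<k$. Since $A_{kj}\neq0$ exactly when $\Ccal_k\to\Ccal_j$, each term $R(z;A_{kk})A_{kj}\mathcal R_{\gamma'}(z;A)$ with $\gamma'\in\dD(\Ccal_j,\Ccal_\ell)$ equals $\mathcal R_\gamma(z;A)$, where $\gamma=(\Ccal_k,\gamma')$. Terms with $A_{kj}=0$ vanish.

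The map $(j,\gamma')\mapsto(\Ccal_k,\gamma')$ is a bijection onto $\dD(\Ccal_k,\Ccal_\ell)$. Vertices stay distinct because the reduced graph is acyclic, so $\Ccal_k$ cannot reappear in $\gamma'$. Every path from $\Ccal_k$ to $\Ccal_\ell\neq\Ccal_k$ has a first edge $\Ccal_k\to\Ccal_j$, and \eqref{eq:po} gives $j<k$. Blocks with $k<\ell$ are zero, consistent with $\dD=\varnothing$ there. The index conventions in \eqref{eq:po} need care in this step, but I see no real obstacle.

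For positivity, fix real $z>s(A)$. By \eqref{eq:bs}, applied to general Metzler $A$ via $\sigma(A)=\bigcup_k\sigma(A_{kk})$, we have $z>s(A_{kk})$ for every $k$. Each $A_{kk}$ is irreducible Metzler, so
\[
  R(z;A_{kk})=\int_0^\infty \e^{-zt}\e^{tA_{kk}}\diff t .
\]
The integral converges because $z>s(A_{kk})$. The integrand is $\gg0$ for $t>0$ by the remark after Lemma~\ref{lem:mr}, so $R(z;A_{kk})\gg0$.

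Next take a product $\mathcal R_\gamma$ along a path. Each $A_{k_{r-1}k_r}$ is nonnegative and nonzero, hence has a positive entry. For a strictly positive matrix $P$ and a nonzero nonnegative matrix $N$, we have $PN\ge0$ with at least one strictly positive column, and then $PNQ\gg0$ for any $Q\gg0$. Applying this inductively along the path shows that $\mathcal R_\gamma(z;A)\gg0$.

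Finally, \eqref{eq:bre2} follows from \eqref{eq:bre}. A sum of finitely many $\gg0$ terms is $\gg0$ when $\dD(\Ccal,\Ccal')\neq\varnothing$, that is, when $\Ccal\leadsto\Ccal'$. The sum is empty, and the block is $0$, otherwise.
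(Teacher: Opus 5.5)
Your proof is correct and is essentially the paper's argument. Your forward-substitution recursion $R_{k\ell}=R(z;A_{kk})\sum_{\ell\leq j<k}A_{kj}R_{j\ell}$ is the recursive form of the paper's finite Neumann expansion $R(z;A)=R(z;D)\sum_{p=0}^{m-1}(KR(z;D))^p$ with $A=D+K$; the only difference is that you induct on $k-\ell$ and prepend the first edge, while the paper inducts on path length and appends the last edge. The positivity part also matches the paper's (Laplace representation of the irreducible diagonal-block resolvents, then alternation with nonzero nonnegative coupling blocks), and your observation that $PNQ\gg0$ for $P,Q\gg0$ and nonzero $N\geq0$ spells out the alternation step that the paper only asserts.
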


\begin{proof}[Proof of Lemma~\ref{lem:res-exp}]
  Write $A=D+K$, where $D=\diag(A_{11},\ldots,A_{mm})$ and $K=A-D$. Then
  \begin{equation}\label{eq:RzD}
    R(z;D) = \diag\bigl(R(z;A_{11}),\ldots,R(z;A_{mm})\bigr),
    \qquad z\in\CC\setminus\sigma(A).
  \end{equation}
  Since $A$ is block triangular, $\sigma(A)=\bigcup_{k=1}^m\sigma(A_{kk})$, so
  $R(z;D)$ is well defined whenever $z\notin\sigma(A)$.

  Since $R(z;D)$ is block diagonal and $K$ is strictly lower block
  triangular---that is, block lower triangular with zero diagonal blocks---the
  product $KR(z;D)$ is also strictly lower block triangular. For a strictly lower
  block triangular matrix, each successive power moves its potentially nonzero blocks
  at least one block farther below the diagonal.
  Since there are only $m$ diagonal blocks, $\bigl(KR(z;D)\bigr)^m=0$.
  Thus $KR(z;D)$ is nilpotent, with nilpotency index at most $m$.

  Taking inverses in the identity \(zI-A=\bigl(I-KR(z;D)\bigr)(zI-D)\) and using
  the nilpotency of $KR(z;D)$ yields the finite resolvent expansion
  \begin{equation} \label{eq:neumann}
    R(z;A)
    =
    R(z;D)\bigl(I-KR(z;D)\bigr)^{-1}
    =
    R(z;D)\sum_{p=0}^{m-1}\bigl(KR(z;D)\bigr)^p,
    \qquad z\in\CC\setminus\sigma(A).
  \end{equation}
  Fix $z\in\CC\setminus\sigma(A)$, and abbreviate $R_{\s D}\coloneqq R(z;D)$.
  It follows from \eqref{eq:neumann} and \eqref{eq:RzD} that the $(k,\ell)$ block of
  $R(z;A)$ is
  \begin{equation}\label{eq:RzA}
    \big[R(z;A)\big]_{k\ell} = \sum_{p=0}^{m-1} \mathcal R_{k\ell}(p),
  \end{equation}
  where
  \begin{equation}\label{eq:Rkl}
    \mathcal R_{k\ell}(p)
    \coloneqq 
    \big[R_{\s D}(KR_{\s D})^p\big]_{k\ell} 
    =
    R(z;A_{kk})\big[(KR_{\s D})^p\big]_{k\ell}.
  \end{equation}

  We claim that, for $p=0,\ldots, m-1$,
  \begin{equation}\label{eq:Rklp}
    \mathcal R_{k\ell}(p)
    =
    \sum_{\substack{
      \gamma\in\dD(\Ccal_k,\Ccal_\ell)\\
      |\gamma|=p
    }}
    \mathcal R_\gamma(z;A).
  \end{equation}
  Together with \eqref{eq:RzA}, identity \eqref{eq:Rklp} yields \eqref{eq:bre}.

  We prove \eqref{eq:Rklp} by induction on $p$, beginning with the case $p=0$. 
  Under the length-zero-path convention, the right-hand side of \eqref{eq:Rklp} gives
  $R(z;A_{kk})$ if $k=\ell$ and zero otherwise. This agrees with $\mathcal
  R_{k\ell}(0)$ as defined in \eqref{eq:Rkl}. 

  Now suppose that \eqref{eq:Rklp} holds for some $0\leq p\leq m-2$. It follows from
  \eqref{eq:Rkl} that
  \begin{align*}
    \mathcal R_{k\ell}(p+1)
    =
    \big[R_{\s D}(KR_{\s D})^p(KR_{\s D})\big]_{k\ell} 
    =
    \sum_{s=1}^m \big[R_{\s D}(KR_{\s D})^p\big]_{ks} \big[KR_{\s D}]_{s\ell}
    =
    \sum_{s=1}^m \mathcal R_{ks}(p) \big[KR_{\s D}]_{s\ell}.
  \end{align*}
  Since $R_{\s D}$ is block diagonal and $K$ is strictly lower block triangular,
  \[
    \bigl[KR_{\s D}\bigr]_{s\ell}
    =
    K_{s\ell}R(z;A_{\ell\ell})
    =
    \begin{cases}
      A_{s\ell}R(z;A_{\ell\ell}), & s>\ell,\\
      0,                           & s\leq\ell.
    \end{cases}
  \]
  Therefore, using the induction hypothesis,
  \begin{align*}
    \mathcal R_{k\ell}(p+1)
    =
    \sum_{s>\ell}
    \mathcal R_{ks}(p)A_{s\ell}R(z;A_{\ell\ell}) 
    =
    \sum_{s>\ell}
    \sum_{\substack{
      \gamma\in\dD(\Ccal_k,\Ccal_s)\\
      |\gamma|=p
    }}
    \mathcal R_\gamma(z;A) A_{s\ell}R(z;A_{\ell\ell}).
  \end{align*}
  Appending the edge $\Ccal_s\to\Ccal_\ell$ to a length-$p$ path from $\Ccal_k$ to
  $\Ccal_s$ produces a length-$(p+1)$ path from $\Ccal_k$ to $\Ccal_\ell$.
  Conversely, every length-$(p+1)$ path from $\Ccal_k$ to $\Ccal_\ell$ has a unique
  final edge $\Ccal_s\to\Ccal_\ell$ and is obtained by appending this edge to its
  length-$p$ initial segment. Consequently, the product in the summation is precisely
  $\mathcal R_{\gamma'}(z;A)$ with some $\gamma'\in\dD(\Ccal_k,\Ccal_\ell)$ and
  $|\gamma'|=p+1$. Hence
  \[
    \mathcal R_{k\ell}(p+1)
    =
    \sum_{\substack{
      \gamma\in\dD(\Ccal_k,\Ccal_\ell)\\
      |\gamma|=p+1
    }}
    \mathcal R_\gamma(z;A).
  \]
  This completes the induction and proves \eqref{eq:Rklp}.

  Finally, let $z>s(A)$ be real. Since $z>s(A_{kk})$ for every $k=1,\ldots,m$,
  each diagonal-block resolvent admits the Laplace representation 
  \citep[e.g.,][Proposition~9.33(a)]{batkai2017positive}
  \( R(z;A_{kk}) = \int_0^\infty \e^{-zt}\e^{tA_{kk}}\,\diff t.\)
  Because $A_{kk}$ is irreducible and Metzler, Lemma~\ref{lem:mr} implies that 
  $\e^{tA_{kk}}\gg0$ for every $t>0$. Hence $R(z;A_{kk})\gg0$. Moreover, 
  every coupling block along a directed path is nonzero and nonnegative. Alternating
  these coupling blocks with the strictly positive diagonal-block resolvents in
  \eqref{eq:path-res} shows that every resolvent contribution
  $\mathcal R_\gamma(z;A)$ is entrywise strictly positive. The expansion
  \eqref{eq:bre}, together with the empty-sum convention, now yields
  \eqref{eq:bre2}.
\end{proof}

Equation~\eqref{eq:bre} implies that the $(k,\ell)$ block of $R(z;A)$ vanishes
whenever $\dD(\Ccal_k,\Ccal_\ell)=\varnothing$. In particular, because the
Frobenius form \eqref{eq:aff} is block lower triangular, there is no directed path
from $\Ccal_k$ to $\Ccal_\ell$ when $k<\ell$. Hence $R(z;A)$ is block lower
triangular for every $z\in\CC\setminus\sigma(A)$.

By the Perron--Frobenius properties recorded in
Appendix~\ref{app:metzler-graphs}, each irreducible block $L_k$ has right and left
Perron vectors $\phi_k,\psi_k\gg0$, which we normalize by
$\psi_k^\t\phi_k=1$. Since $\lambda_k=s(L_k)$ is algebraically simple,
\begin{equation} \label{eq:block-laurent}
  R(z;L_k)
  =
  \frac{\phi_k\psi_k^\t}{z-\lambda_k} +H_k(z),
\end{equation}
where $H_k$ is analytic near $\lambda_k$. Moreover, Lemma~\ref{lem:mr} and the
semigroup--resolvent formula imply that $R(\mu;L_k)\gg0$ whenever
$\mu>\lambda_k$; this is also the one-class case of
Lemma~\ref{lem:res-exp}.

\section{The Irreducible Hansen--Scheinkman Benchmark}
\label{app:hs}

Suppose that the pricing generator $\Lp$ is irreducible. Perron--Frobenius theory
then implies that $\lambda^*=s(\Lp)$ is algebraically simple and admits strictly
positive right and left eigenvectors. Let $\phi,\psi\gg0$ be normalized by
$\1^\t\phi=1$ and $\psi^\t\phi=1$, so that
\[
  \Lp\phi=\lambda^*\phi,
  \qquad
  \Lp^\t\psi=\lambda^*\psi.
\]
These normalizations uniquely determine both vectors. The Riesz projection at
$\lambda^*$ is $P_{\lambda^*}=\phi\psi^\t$, and
\[
  \lim_{t\to\infty}\e^{-\lambda^*t}\Pi_t
  =
  P_{\lambda^*}.
\]
Indeed, Theorem~7.6(a) and Corollary~7.5 of
\citet[][p.~84]{batkai2017positive}, applied to the irreducible Metzler matrix
$\Lp-\lambda^*I$, identify this limit with its spectral projection at zero.
Consequently, for every $g\in\RR^\Xsf$,
\[
  \lim_{t\to\infty}\e^{-\lambda^*t}(\Pi_tg)(x)
  =
  \phi(x)\,\psi^\t g,
  \qquad x\in\Xsf.
\]
The right eigenvector $\phi$ is the state-dependent long-run pricing factor, while
$\psi^\t g$ is the payoff's loading on the dominant component. Since both
eigenvectors are strictly positive, every nonzero $g\geq0$ has a positive dominant
component from every initial state and hence the common long-run yield
$-\lambda^*$.

The right eigenvector also gives the finite-state Hansen--Scheinkman factorization
\citep{hansen2009long}. Since $\Pi_t\phi=\e^{\lambda^*t}\phi$, the process
\[
  M_t^\phi
  \coloneqq
  \e^{-\lambda^*t}S_t^{\Gs}
  \frac{\phi(X_t)}{\phi(X_0)},
  \qquad t\geq0,
\]
is a strictly positive unit-mean martingale under every $\PP_x$, and the
growth-adjusted SDF factors as
\begin{equation} \label{eq:HS-SG}
  S_t^{\Gs}
  =
  \e^{\lambda^*t}
  \frac{\phi(X_0)}{\phi(X_t)}M_t^\phi.
\end{equation}
The factors in \eqref{eq:HS-SG} separate the long-run exponential rate, a bounded
state adjustment, and a martingale change of measure. Defining
$\diff\PP_x^\phi/\diff\PP_x|_{\fF_t}=M_t^\phi$ reweights state histories by their
importance for long-run pricing.

Equivalently, the Doob $h$-transform of $\Pi$ with $h=\phi$
\citep[see, e.g.,][]{doob1957conditional} is
\begin{equation} \label{eq:hs-doob-semigroup}
  P_t^\phi
  \coloneqq
  \e^{-\lambda^*t}\diag(\phi)^{-1}\Pi_t\diag(\phi),
  \qquad t\geq0.
\end{equation}
It is a Markov semigroup and is the transition semigroup of $X$ under
$(\PP_x^\phi)_{x\in\Xsf}$. Its generator and unique stationary distribution are
\[
  L^\phi
  =
  \diag(\phi)^{-1}(\Lp-\lambda^*I)\diag(\phi),
  \qquad
  \pi^\phi
  =
  \phi\odot\psi.
\]
Thus the transformed generator incorporates physical transition possibilities,
discounting, cash-flow growth, and risk adjustment through the long-run pricing
eigenfunction. The construction relies on $\phi\gg0$: under reducibility the right
eigenvector can vanish, so neither the density nor the whole-space Doob transform is
generally defined. Section~\ref{sec:unique-dominance} instead uses a transform
on the exact support of the dominant eigenvector; Appendix~\ref{app:doob}
develops this transform and the corresponding adjoint construction.
In the notation of Appendix~\ref{app:doob}, irreducibility gives
$\Lp^\rightarrow=L^\phi$ and
\[
  \Lp^\leftarrow
  =
  \diag(\pi^\phi)^{-1}(\Lp^\rightarrow)^\t\diag(\pi^\phi).
\]
Thus $\Lp^\leftarrow$ is the stationary time reversal of the transformed pricing
dynamics, not of the original physical dynamics generated by $\Lx$.
\citet[Section~2.1, p.~1506]{christensen2017nonparametric} calls the adjoint pricing
operator \emph{time reversed}. Here \emph{graph reversal} refers only to edge
reversal; under reducibility, the two transforms generally act on different
supports and are not global time reversals.

\section{Proofs}
\label{app:proofs}

\subsection{Global valuation stability} \label{app:gvs}

\begin{proof}[Proof of Proposition~\ref{prop:ge}]
  Because $\Pi_t$ is positive and the growth-adjusted SDF $S^{\Gs}$ is strictly
  positive,
  \(
    z_t^{\mathrm{env}} = \|\Pi_t\1\|_\infty = \|\Pi_t\|_\infty > 0
  \)
  for all $t\geq0$.
  Finite-dimensional semigroup theory \citep[see, e.g.][p. 48, Proposition
  4.7]{batkai2017positive} gives
  \(
    \lim_{t\to\infty}t^{-1}\log\|\Pi_t\|_\infty = s(\Lp)=\lambda^*.
  \)
\end{proof}

\begin{corollary}[Ex ante pricing rate] \label{cor:ex-ante-rate}
  Suppose that the initial state has a full-support probability distribution $\mu$
  on $\Xsf$. Define the ex ante growth-adjusted zero-coupon price and yield by
  $z_t^\mu \coloneqq \sum_{x\in\Xsf}\mu(x)z_t(x)$ for $t\geq0$ and
  $y_t^\mu \coloneqq -t^{-1}\log z_t^\mu$ for $t>0$.
  Under Assumptions~\ref{ass:markov}--\ref{ass:continuity},
  \[
    \lim_{t\to\infty}
    \frac{1}{t}\log z_t^\mu
    =
    \lambda^*,
    \qquad
    \lim_{t\to\infty}
    y_t^\mu
    =
    -\lambda^*.
  \]
\end{corollary}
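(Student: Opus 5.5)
The plan is to sandwich the ex ante price between two multiples of the statewise envelope and then invoke Proposition~\ref{prop:ge}. Set $\mu_{\min}\coloneqq\min_{x\in\Xsf}\mu(x)$. Because $\mu$ has full support on the finite set $\Xsf$, we have $\mu_{\min}>0$. Since $z_t(x)>0$ for every $x$, the weights $\mu(x)$ sum to one, and $z_t(x)\leq z_t^{\mathrm{env}}$, we get
\[
  \mu_{\min}\, z_t^{\mathrm{env}}
  \leq
  z_t^\mu
  \leq
  z_t^{\mathrm{env}},
  \qquad t\geq0.
\]
For the lower bound, keep only the term at a maximizing state $x_t\in\arg\max_x z_t(x)$ and drop the other nonnegative terms. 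The upper bound is immediate. In particular $z_t^\mu>0$, so $y_t^\mu$ is well defined for $t>0$.

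Next take logarithms and divide by $t>0$:
\[
  \frac{\log\mu_{\min}}{t}+\frac1t\log z_t^{\mathrm{env}}
  \leq
  \frac1t\log z_t^\mu
  \leq
  \frac1t\log z_t^{\mathrm{env}}.
\]
As $t\to\infty$, the term $t^{-1}\log\mu_{\min}$ tends to zero. By Proposition~\ref{prop:ge}, both outer expressions converge to $\lambda^*$, so the squeeze gives $t^{-1}\log z_t^\mu\to\lambda^*$. The yield statement then follows because $y_t^\mu=-t^{-1}\log z_t^\mu$.

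There is no real obstacle here. The only points requiring care are that full support is exactly what makes $\mu_{\min}>0$ and hence gives the lower bound, and that strict positivity of $z_t$ keeps the logarithms finite.
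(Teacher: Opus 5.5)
Your proposal is correct and follows essentially the same route as the paper's proof: both bound $\mu_{\min} z_t^{\mathrm{env}} \le z_t^\mu \le z_t^{\mathrm{env}}$ (the lower bound by keeping only a maximizing state), take logarithms, and apply the squeeze theorem with Proposition~\ref{prop:ge}.
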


\begin{proof}[Proof of Corollary~\ref{cor:ex-ante-rate}]
  Since $\mu$ has full support, we have $\mu_\circ\coloneqq\min_x\mu(x)>0$.
  It follows that
  \(
    z_t^\mu
    =
    \sum_{x\in\Xsf}\mu(x)z_t(x) 
    =
    \sum_{x\in\Xsf}\mu(x)(\Pi_t\1)(x) 
    \leq  
    \max_{x\in\Xsf}(\Pi_t\1)(x)
    =
    \|\Pi_t\1\|_\infty
    =
    z_t^{\mathrm{env}}.
  \)
  On the other hand, let $x_t^*\in\Xsf$ denote a state such that
  $z^{\mathrm{env}}_t=z_t(x_t^*)$. Then,
  \(
    z_t^\mu
    =
    \sum_{x\in\Xsf}\mu(x)z_t(x) 
    \geq
    \mu(x_t^*)z_t(x_t^*) 
    \geq
    \mu_\circ z^{\mathrm{env}}_t.
  \)
  Consequently, we have $0 < \mu_\circ z^{\mathrm{env}}_t \leq z_t^\mu \leq
  z_t^{\mathrm{env}}$, and hence
  \( 
    t^{-1} \log\mu_\circ + t^{-1} \log z^{\mathrm{env}}_t
    \leq t^{-1} \log z_t^\mu
    \leq t^{-1} \log z^{\mathrm{env}}_t.
  \)
  The corollary then follows from the squeeze theorem and Proposition~\ref{prop:ge}.
\end{proof}

\begin{proof}[Proof of Theorem~\ref{thm:gs}]
  Since $\Pi_t$ is positive, its operator norm induced by the supremum vector norm
  satisfies
  \begin{equation} \label{eq:pn}
    \|\Pi_t\|_\infty
    =
    \|\Pi_t\1\|_\infty
    =
    \max_{x\in\Xsf}z_t(x) = z^{\mathrm{env}}_t,
    \qquad t\geq0.
  \end{equation}
  Moreover, Proposition~4.7 of \citet{batkai2017positive} gives
  \begin{equation} \label{eq:sinf}
    \lambda^*
    =
    s(\Lp)
    =
    \inf_{t>0}
    \frac{1}{t}\log\|\Pi_t\|_\infty.
  \end{equation}
  It follows from \eqref{eq:pn} and \eqref{eq:sinf} that $\lambda^*<0$ if and only if
  there exists $t>0$ such that $\|\Pi_t\|_\infty<1$, which is equivalent to
  $z^{\mathrm{env}}_t<1$.
  This proves the equivalence between (G2) and (G4). The equivalence between (G2) and
  (G3) follows immediately from Proposition~\ref{prop:ge}.
  
  We next show that (G4) implies both (G5) and (G1). Suppose that (G4) holds and
  hence there exists some $T>0$ such that
  \(
    q \coloneqq z_T^{\mathrm{env}} = \|\Pi_T\|_\infty < 1.
  \)
  By continuity of the pricing semigroup, it follows that
  \(
    C \coloneqq \max_{0\leq r\leq T}\|\Pi_r\|_\infty <\infty.
  \)
  For any $t\geq0$, write $t=kT+r$, where $k\in\NN_0$ and $0\leq r<T$. The semigroup
  property gives
  \(
    \Pi_t = \Pi_T^k\Pi_r,
  \)
  and therefore
  \(
    \|\Pi_t\|_\infty \leq Cq^k.
  \)
  Since $q<1$, we have $\|\Pi_t\|_\infty\to0$ as $t\to\infty$ (equivalently, as
  $k\to\infty$). By \eqref{eq:pn}, $z_t^{\mathrm{env}}\to0$, proving (G5).
  Next, partitioning $[0,\infty)$ into intervals of length $T$, together with 
  $\|\Pi_t\|_\infty \leq Cq^k$, gives
  \begin{equation} \label{eq:ipi}
    \int_0^\infty\|\Pi_t\|_\infty\,\diff t
    =
    \sum_{k=0}^\infty
    \int_{kT}^{(k+1)T}\|\Pi_t\|_\infty\,\diff t
    \leq
    CT\sum_{k=0}^\infty q^k
    =
    \frac{CT}{1-q}
    <\infty.
  \end{equation}
  Hence, for every $g\in\RR_+^\Xsf$ and $x\in\Xsf$,
  \(
    0
    \leq
    \int_0^\infty(\Pi_tg)(x)\,\diff t
    \leq
    \|g\|_\infty
    \int_0^\infty\|\Pi_t\|_\infty\,\diff t
    <\infty.
  \)
  Thus, global valuation stability holds, proving (G1).
  
  Conversely, (G5) implies (G4): if $z^{\mathrm{env}}_t\to0$ as $t\to\infty$,
  then $z_T^{\mathrm{env}}<1$ for all sufficiently large $T$.
  It remains to show that (G1) implies (G4). Under global valuation stability,
  $\int_0^\infty z_t(x)\,\diff t<\infty$ for every $x\in\Xsf$.
  Using \eqref{eq:pn} and the finiteness of $\Xsf$,
  \[
    \int_0^\infty\|\Pi_t\|_\infty\,\diff t
    =
    \int_0^\infty
    \max_{x\in\Xsf}z_t(x)\,\diff t
    \leq
    \sum_{x\in\Xsf}
    \int_0^\infty z_t(x)\,\diff t
    <\infty.
  \]
  Consequently, $\|\Pi_t\|_\infty<1$ for some $t>0$; otherwise the integral would be
  infinite. Equation~\eqref{eq:pn} then gives $z_t^{\mathrm{env}}<1$, proving (G4).
  This establishes the equivalence of (G1)--(G5).
  
  Under these conditions, \eqref{eq:ipi} shows that
  \(
    \int_0^\infty\|\Pi_t\|_\infty\diff t < \infty.
  \)
  Hence, the operator-valued function $t\mapsto\Pi_t$ is Bochner integrable on
  $[0,\infty)$. Define $\tilde V:\RR^\Xsf\to\RR^\Xsf$ by
  $\tilde Vg \coloneqq \int_0^\infty\Pi_tg\diff t$, for $g\in\RR^\Xsf$.
  Norm integrability makes $\tilde V$ a finite linear operator, and positivity
  of $\Pi_t$ makes $\tilde V$ positive. For $g\in\RR_+^\Xsf$,
  equation~\eqref{eq:ihf} gives $\tilde Vg=Vg$. Thus $\tilde V$ is a
  positive linear extension of $V$, and the extension is unique because
  $\RR^\Xsf=\RR_+^\Xsf-\RR_+^\Xsf$. Relabeling $\tilde V$ as $V$ gives
  \(
    V=\int_0^\infty\Pi_t\,\diff t.
  \)
  Since $\Pi_t=\exp(t\Lp)$, we have
  \(
    \frac{\diff}{\diff t}\Pi_t
    =
    \Lp\Pi_t
    =
    \Pi_t\Lp,
  \)
  for $t>0$; see, for example \citet[][Theorem 4.2]{batkai2017positive}.
  Therefore, for every $\tau>0$,
  \(
    \Lp\int_0^\tau\Pi_t\,\diff t
    =
    \left(\int_0^\tau\Pi_t\,\diff t\right)\Lp
    =
    \Pi_\tau-I.
  \)
  Letting $\tau\to\infty$ and using $\Pi_\tau\to0$ gives
  \(
    \Lp V
    =
    V\Lp
    =
    -I.
  \)
  Consequently, $\Lp$ is invertible and $V=-\Lp^{-1}$.
\end{proof}

\subsection{Global spectral structure}

\begin{proof}[Proof of Theorem~\ref{thm:Pi-asym}]
  Let $k_*\in\{1,\ldots,m\}$ satisfy $\Ccal_{k_*}=\Ccal^*$ in the ordering compatible
  with the Frobenius form of $\Lp$ in \eqref{eq:frobenius}. We have 
  \begin{equation}\label{eq:l*}
    \lambda^* = s(L_{k_*}) > s(L_k)
    \quad\mbox{for all }k\in\{1,\ldots,m\}\setminus \{k_*\}.
  \end{equation}
  Since $L_{k_*}$ is irreducible Metzler, Perron--Frobenius theory implies that
  $\lambda^*$ is an algebraically simple eigenvalue of $L_{k_*}$ and the
  one-dimensional right and left eigenspaces admit strictly positive eigenvectors.
  Moreover, as the characteristic polynomial of $\Lp$ is the product of those of
  its diagonal blocks and $\lambda^*\notin\sigma(L_k)$ for $k\neq k_*$, by
  \eqref{eq:l*}, $\lambda^*$ is algebraically simple for $\Lp$.

  By \eqref{eq:l*}, the resolvent identity $R(\lambda^*;L_k) = \int_0^\infty
  \e^{-\lambda^*t}\e^{tL_k}\diff t$ holds for $k\neq k_*$. 
  Since $L_k$ is irreducible Metzler, $\e^{t L_k}\gg0$ for all $t>0$ by
  Lemma~\ref{lem:mr}. Consequently,
  \begin{equation}\label{eq:res-sp}
    R(\lambda^*;L_k) \gg0, \qquad k\in\{1,\ldots,m\}\setminus \{k_*\}.
  \end{equation}

  Let $\phi=(\phi_{\scriptscriptstyle\Ccal_1}^\t,\ldots,\phi_{\scriptscriptstyle
  \Ccal_m}^\t)^\t$, partitioned conformably with the Frobenius form of $\Lp$ in
  \eqref{eq:frobenius}, be a right eigenvector of $\Lp$ associated with
  $\lambda^*$ such that $(\lambda^* I - \Lp)\phi = 0$. Then, it follows that
  \begin{equation}\label{eq:reig}
    (\lambda^*I-L_k)\phi_{\scriptscriptstyle \Ccal_k}
    =
    \sum_{\ell=1}^{k-1} L_{k\ell} \, \phi_{\scriptscriptstyle \Ccal_\ell},
    \qquad 
    k=1,\ldots,m,
  \end{equation}
  where the sum is zero when $k=1$. First, we have
  \begin{equation}\label{eq:phi0}
    \phi_{\scriptscriptstyle\Ccal_k}=0,
    \qquad k\in\{1,\ldots,k_*-1\}.
  \end{equation}
  If $k_*=1$, \eqref{eq:phi0} is vacuous. Otherwise, for $k=1$,
  $\phi_{\scriptscriptstyle\Ccal_1}=0$ by the first equation in
  \eqref{eq:reig} and the invertibility of $\lambda^*I-L_1$. For $2\leq k<k_*$, note that
  if $\phi_{\scriptscriptstyle\Ccal_\ell}=0$ for all $\ell=1,\ldots,k-1$, then the
  $k$-th equation in \eqref{eq:reig} and the invertibility of $\lambda^*I-L_k$ imply
  that \( \phi_{\scriptscriptstyle\Ccal_k} = R(\lambda^*;L_k)
  \sum_{\ell=1}^{k-1} L_{k\ell} \, \phi_{\scriptscriptstyle \Ccal_\ell} =0.\)
  Hence \eqref{eq:phi0} holds by induction.

  For $k=k_*$, \eqref{eq:reig} and \eqref{eq:phi0} imply that 
  $(\lambda^*I-L_{k_*})\phi_{\scriptscriptstyle \Ccal_{k_*}}=0$. Since $L_{k_*}$ is
  irreducible, we can choose $\phi_{\scriptscriptstyle \Ccal_{k_*}}\gg0$ to be a
  strictly positive right eigenvector of $L_{k_*}$ associated with $\lambda^*$.
  Since $\Ccal_{k_*}\leadsto\Ccal_{k_*}$, the right-eigenvector support formula in
  \eqref{eq:supports} holds for $\Ccal=\Ccal_{k_*}$.

  Next, we show, again by induction, that
  \begin{equation}\label{eq:phi0-2}
    \phi_{\scriptscriptstyle\Ccal_k}
    \begin{cases}
      \gg0, & \text{if } \Ccal_k\leadsto\Ccal_{k_*},\\
      =0, & \text{otherwise},
    \end{cases}
    \qquad k\in\{k_*+1,\ldots,m\}.
  \end{equation}
  If $k_*=m$, \eqref{eq:phi0-2} is vacuous. Otherwise, for $k=k_*+1$, the
  $(k_*+1)$-th equation in \eqref{eq:reig}, the invertibility of
  $\lambda^*I-L_{k_*+1}$, and \eqref{eq:phi0} jointly imply that
  \(
    \phi_{\scriptscriptstyle \Ccal_{k_*+1}}
    =
    R(\lambda^*;L_{k_*+1}) L_{k_*+1,\, k_*} \,
    \phi_{\scriptscriptstyle \Ccal_{k_*}}.
  \)
  By \eqref{eq:res-sp} and $\phi_{\scriptscriptstyle \Ccal_{k_*}}\gg0$, it follows
  that $\phi_{\scriptscriptstyle \Ccal_{k_*+1}}\gg0$ if $L_{k_*+1,\,k_*}\neq0$,
  which holds exactly when $\Ccal_{k_*+1}\to\Ccal_{k_*}$, or equivalently
  $\Ccal_{k_*+1}\leadsto\Ccal_{k_*}$; otherwise, $L_{k_*+1,\,k_*}=0$ implies
  $\phi_{\scriptscriptstyle \Ccal_{k_*+1}}=0$. Therefore, \eqref{eq:phi0-2} holds for
  $k=k_*+1$.

  Now, suppose that \eqref{eq:phi0-2} holds for $k\in\{k_*+1,\ldots,k'\}$ with
  $k_*+1\leq k'<m$. The $(k'+1)$-th equation in \eqref{eq:reig} and \eqref{eq:res-sp}
  give
  \(
    \phi_{\scriptscriptstyle \Ccal_{k'+1}}
    =
    R(\lambda^*;L_{k'+1}) \sum_{\ell=k_*}^{k'} 
    L_{k'+1, \ell} \, \phi_{\scriptscriptstyle \Ccal_\ell}
  \)
  with $R(\lambda^*;L_{k'+1})\gg0$. Then, $\phi_{\scriptscriptstyle \Ccal_{k'+1}}\gg0$
  if there exists some $\ell\in\{k_*,\ldots,k'\}$ such that
  $L_{k'+1, \ell}\neq0$ and $\phi_{\scriptscriptstyle \Ccal_\ell}\gg0$.
  The latter conditions mean precisely $\Ccal_{k'+1}\to\Ccal_\ell$ and
  $\Ccal_{\ell}\leadsto\Ccal_{k_*}$ for some $\ell\in\{k_*,\ldots,k'\}$, which is
  equivalent to $\Ccal_{k'+1}\leadsto\Ccal_{k_*}$. If $\Ccal_{k_*}$ is not accessible
  from $\Ccal_{k'+1}$, then
  $L_{k'+1,\ell}\phi_{\scriptscriptstyle\Ccal_\ell}=0$ for every
  $\ell\in\{k_*,\ldots,k'\}$, so
  $\phi_{\scriptscriptstyle \Ccal_{k'+1}}=0$. Consequently, \eqref{eq:phi0-2}
  holds for $k=k'+1$, and the full result holds by induction.

  Since the ordering of distinct classes rules out $\Ccal_k\leadsto\Ccal_{k_*}$ for
  $k < k_*$, the right-eigenvector support formula in \eqref{eq:supports} holds due
  to \eqref{eq:phi0} and \eqref{eq:phi0-2}.

  Similarly, let $\psi=(\psi_{\scriptscriptstyle\Ccal_1}^\t,\ldots,
  \psi_{\scriptscriptstyle\Ccal_m}^\t)^\t$ be a left eigenvector of $\Lp$ associated
  with $\lambda^*$ such that $(\lambda^*I-\Lp^\t)\psi=0$. Its block equations are
  \begin{equation}\label{eq:leig}
    (\lambda^*I-L_k^\t)\psi_{\scriptscriptstyle\Ccal_k}
    =
    \sum_{\ell=k+1}^m
    L_{\ell k}^\t\psi_{\scriptscriptstyle\Ccal_\ell},
    \qquad k=1,\ldots,m,
  \end{equation}
  where the sum is zero when $k=m$. Starting from $k=m$ and proceeding backward,
  \eqref{eq:leig} and the invertibility in \eqref{eq:res-sp} give
  $\psi_{\scriptscriptstyle\Ccal_k}=0$ for every $k>k_*$. Choose
  $\psi_{\scriptscriptstyle\Ccal_{k_*}}\gg0$ to be a left eigenvector of $L_{k_*}$
  associated with $\lambda^*$. For $k<k_*$, \eqref{eq:leig} becomes
  \[
    \psi_{\scriptscriptstyle\Ccal_k}
    =
    R(\lambda^*;L_k^\t)
    \sum_{\ell=k+1}^{k_*}
    L_{\ell k}^\t\psi_{\scriptscriptstyle\Ccal_\ell}.
  \]
  Since $R(\lambda^*;L_k^\t)\gg0$, backward induction shows that
  $\psi_{\scriptscriptstyle\Ccal_k}\gg0$ exactly when there exists
  $\ell\in\{k+1,\ldots,k_*\}$ such that
  $\Ccal_{k_*}\leadsto\Ccal_\ell$ and $\Ccal_\ell\to\Ccal_k$, or equivalently,
  when $\Ccal_{k_*}\leadsto\Ccal_k$; otherwise,
  $\psi_{\scriptscriptstyle\Ccal_k}=0$. The class ordering also rules out
  $\Ccal_{k_*}\leadsto\Ccal_k$ for $k>k_*$. This proves the left-eigenvector support
  formula and hence \eqref{eq:supports}.

  The vectors constructed above are nonnegative and strictly positive on
  $\Ccal^*$, so $\1^\t\phi>0$ and
  $\psi^\t\phi\geq
  \psi_{\scriptscriptstyle\Ccal^*}^\t
  \phi_{\scriptscriptstyle\Ccal^*}>0$. We may therefore normalize them by
  $\1^\t\phi=1$ and $\psi^\t\phi=1$. Since $\lambda^*$ is algebraically simple,
  its right and left eigenspaces are one-dimensional, and hence these normalizations
  make $\phi$ and $\psi$ unique. Furthermore, the Riesz spectral projection is rank
  one, and is given by $P_{\lambda^*}=\phi\psi^\t$. Moreover,
  \[
    \e^{-\lambda^*t}\Pi_t
    =
    \e^{-\lambda^*t}\e^{t\Lp} [P_{\lambda^*} + (I-P_{\lambda^*})]
    =
    P_{\lambda^*}
    +
    \e^{-\lambda^*t}\e^{t\Lp}(I-P_{\lambda^*}),
    \qquad t\geq0.
  \]
  By \eqref{eq:gsv}, the restriction of $\Lp$ to the range of
  $I-P_{\lambda^*}$ has spectral bound strictly below $\lambda^*$. The second term
  therefore converges to zero as $t\to\infty$, proving
  \eqref{eq:r1lim}.
\end{proof}

\begin{proof}[Proof of Theorem~\ref{thm:Pi-asym-tied}]
  Lemma~\ref{lem:res-exp} writes every block of
  $R(z;\Lp)$ as a finite sum of class-path products. At $z=\lambda^*$, the
  pole order of such a product is the number of classes from $\cC^*$ on that path:
  each dominant diagonal-block resolvent has a simple Perron pole, whereas every
  lower-rate block resolvent is analytic and strictly positive at $\lambda^*$ by
  Lemma~\ref{lem:res-exp}, applied to the corresponding diagonal block.
  The leading coefficients are nonnegative and nonzero, so positivity rules out
  cancellation. The largest resolvent pole order is therefore $\nu^*$.

  For a finite matrix, the pole order at an eigenvalue equals the size of its largest
  Jordan block. Set $N^*\coloneqq(\Lp-\lambda^*I)P^*$. This matrix agrees with
  $\Lp-\lambda^*I$ on the range of $P^*$ and is zero on $\ker P^*$, so it is
  nilpotent of index $\nu^*$. Thus
  \[
    \e^{t(\Lp-\lambda^*I)}P^*
    =
    \sum_{j=0}^{\nu^*-1}\frac{t^j}{j!}(N^*)^jP^*.
  \]
  By \eqref{eq:gsv}, the restriction of $\Lp$ to $\ker P^*$ has spectral bound
  strictly below $\lambda^*$. Its normalized semigroup is therefore bounded by
  $C\e^{-\eta t}$ for suitable $C,\eta>0$.
  Therefore, there exist constants $C,\eta>0$ such that
  \begin{equation} \label{eq:tieds}
    \e^{-\lambda^*t}\Pi_t
    =
    \sum_{j=0}^{\nu^*-1}\frac{t^j}{j!}(N^*)^jP^*+R_t,
    \qquad
    \|R_t\|\leq C\e^{-\eta t}.
  \end{equation}
  Multiplication by $t^{1-\nu^*}$ gives \eqref{eq:price-glim}. The limit is
  nonnegative because $\Pi_t\geq0$, and it is nonzero because
  $(N^*)^{\nu^*-1}P^*\neq0$. Finally, $N^*B^*=B^*N^*=0$, so $\Lp
  B^*=B^*\Lp=\lambda^*B^*$. Semisimplicity is equivalent to $\nu^*=1$, yielding the
  final claim.
\end{proof}

\subsection{Doob transforms and the long-run pricing law}
\label{app:doob}

This subsection gives the two-transform construction underlying the Markov
interpretation in Section~\ref{sec:unique-dominance}. Maintain
Assumptions~\ref{ass:markov}--\ref{ass:continuity} and suppose that
$\cC^*=\{\Ccal^*\}$, with eigenvectors normalized as in
Theorem~\ref{thm:Pi-asym}.

Restricting the dominant eigenvectors to their supports permits Doob $h$-transforms
of the pricing semigroup and its adjoint \citep[see, e.g.,][]{doob1957conditional}.
Let $\phi_{\scriptscriptstyle+}$ and $\psi_{\scriptscriptstyle+}$ be the restrictions
of $\phi$ and $\psi$ to their respective supports, and let
$\Lp^{\phi_+}$ and $\Lp^{\psi_+}$ be the principal submatrices of $\Lp$ indexed by
$\supp(\phi)$ and $\supp(\psi)$, respectively. Define
\begin{align}
  \Lp^{\rightarrow}
  &\coloneqq
  \diag(\phi_{\scriptscriptstyle+})^{-1}
  \bigl(\Lp^{\phi_+}-\lambda^*I\bigr)
  \diag(\phi_{\scriptscriptstyle+}),
  \label{eq:ft}
  \\
  \Lp^{\leftarrow}
  &\coloneqq
  \diag(\psi_{\scriptscriptstyle+})^{-1}
  \bigl[(\Lp^{\psi_+})^\t-\lambda^*I\bigr]
  \diag(\psi_{\scriptscriptstyle+}).
  \label{eq:dt}
\end{align}

\begin{figure}[t!]
\centering
\begin{subfigure}[t]{0.35\textwidth}
\centering
\begin{minipage}[c][22mm][c]{\linewidth}
\centering
\begin{tikzpicture}[node distance=5mm,scale=0.75,transform shape]
  \node[classnode,fill=black!10] (D) {Disaster\\$\mathcal D$};
  \node[classnode,right=of D] (R) {Recovery\\$\mathcal R$};
  \node[classnode,very thick,double,right=of R] (N) {Normal\\$\mathcal N$};
  \draw[flow] (D) -- (R);
  \draw[flow] (R) -- (N);
\end{tikzpicture}
\end{minipage}
\caption{Original reduced graph.}
\end{subfigure}\hfill
\begin{subfigure}[t]{0.26\textwidth}
\centering
\begin{minipage}[c][22mm][c]{\linewidth}
\centering
\begin{tikzpicture}[scale=0.75,transform shape]
  \node[classnode,fill=black!10,very thick,double] (D)
    {Disaster\\$\mathcal D$};
\end{tikzpicture}
\end{minipage}
\caption{Right Doob chain.}
\end{subfigure}\hfill
\begin{subfigure}[t]{0.35\textwidth}
\centering
\begin{minipage}[c][22mm][c]{\linewidth}
\centering
\begin{tikzpicture}[node distance=5mm,scale=0.75,transform shape]
  \node[classnode,fill=black!10,very thick,double] (D)
    {Disaster\\$\mathcal D$};
  \node[classnode,right=of D] (R) {Recovery\\$\mathcal R$};
  \node[classnode,right=of R] (N) {Normal\\$\mathcal N$};
  \draw[flow] (N) -- (R);
  \draw[flow] (R) -- (D);
\end{tikzpicture}
\end{minipage}
\caption{Left Doob chain.}
\end{subfigure}
\caption{Original and transformed reduced graphs in the
disaster-dominant economy. Panels~(a)--(c) show
$\Gcal_{\mathrm r}(\Lp)$, $\Gcal_{\mathrm r}(\Lp^\rightarrow)$, and
$\Gcal_{\mathrm r}(\Lp^\leftarrow)$, respectively. Light shading identifies the
globally pricing-dominant class $\mathcal D$. Double borders identify the unique
closed class: $\mathcal N$ in panel~(a) and $\mathcal D$ in panels~(b) and~(c).
The graph in panel~(b) has one vertex and no edges.}
\label{fig:flow-reversal}
\end{figure}

Both transforms remove the common exponential rate $\lambda^*$; the transpose in
\eqref{eq:dt} reverses the retained edges. The following proposition describes the
resulting Markov dynamics.

\begin{proposition}[Doob chains] \label{prop:valuation-laws}
  Under Assumptions~\ref{ass:markov}--\ref{ass:continuity}, suppose that
  $\cC^*=\{\Ccal^*\}$ and let $\phi$ and $\psi$ be the right and left eigenvectors of
  $\Lp$ associated with $\lambda^*$, normalized as in
  Theorem~\ref{thm:Pi-asym}. Then $\Lp^{\rightarrow}$ and $\Lp^{\leftarrow}$,
  defined in \eqref{eq:ft} and \eqref{eq:dt}, are Markov generators on $\supp(\phi)$
  and $\supp(\psi)$, respectively;
  we call the associated Markov chains the \emph{right Doob chain} and \emph{left Doob chain}.
  Moreover,
  \begin{enumerate}
    \item the reduced graph $\Gcal_{\mathrm r}(\Lp^\rightarrow)$ is the subgraph of
      $\Gcal_{\mathrm r}(\Lp)$ induced by the classes that can reach $\Ccal^*$,
      whereas $\Gcal_{\mathrm r}(\Lp^\leftarrow)$ is the edge reversal of the
      subgraph of $\Gcal_{\mathrm r}(\Lp)$ induced by the classes accessible
      from $\Ccal^*$;

    \item $\Ccal^*$ is the unique closed, and hence recurrent, class of both Doob
      chains;

    \item the unique stationary distributions of $\Lp^\rightarrow$ and
      $\Lp^\leftarrow$, extended by zero outside their respective state spaces, both
      equal \( \pi^{\mathrm{LR}}=\phi\odot\psi, \) which is supported on $\Ccal^*$.
  \end{enumerate}
\end{proposition}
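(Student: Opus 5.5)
I will verify each claim directly from the definitions \eqref{eq:ft}--\eqref{eq:dt} and the support description \eqref{eq:supports} in Theorem~\ref{thm:Pi-asym}.

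\emph{Markov generators.} Let $S_\phi=\supp(\phi)$. Off-diagonal entries of $\Lp^\rightarrow$ are $\phi(x_i)^{-1}(\Lp)_{ij}\phi(x_j)\geq0$, so it is Metzler. For zero row sums, compute $\Lp^\rightarrow\1=\diag(\phi_+)^{-1}(\Lp^{\phi_+}\phi_+-\lambda^*\phi_+)$. For $x_i\in S_\phi$ we have $(\Lp^{\phi_+}\phi_+)(x_i)=\sum_{x_j\in S_\phi}(\Lp)_{ij}\phi(x_j)=(\Lp\phi)(x_i)=\lambda^*\phi(x_i)$, because $\phi$ vanishes off $S_\phi$. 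Hence $\Lp^\rightarrow\1=0$. The same computation with $\Lp^\t\psi=\lambda^*\psi$ gives the corresponding statement for $\Lp^\leftarrow$ on $\supp(\psi)$.

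\emph{Graph structure, part (i).} Conjugation by a positive diagonal matrix and a scalar shift preserve the off-diagonal zero pattern. So $\Gcal(\Lp^\rightarrow)$ is the subgraph of $\Gcal(\Lp)$ induced by $S_\phi$, and $\Gcal(\Lp^\leftarrow)$ is the edge reversal of the subgraph induced by $\supp(\psi)$. By \eqref{eq:supports}, $S_\phi$ is the union of the classes $\Ccal$ with $\Ccal\leadsto\Ccal^*$. This set is closed under following paths toward $\Ccal^*$: if $\Ccal\leadsto\Ccal'\leadsto\Ccal^*$, then every intermediate class also reaches $\Ccal^*$. Hence any path in $\Gcal(\Lp)$ between two states of $S_\phi$ stays inside $S_\phi$. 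It follows that the induced subgraph has the same communicating classes and the same class edges as the original graph restricted to these classes. The analogous argument applies to $\supp(\psi)$, whose classes are those accessible from $\Ccal^*$. Reversing edges preserves strongly connected components, which gives (i). I expect this step to be the main, though mild, obstacle: one must ensure that the restriction neither splits a class nor drops accessibility. The "path-closed" (convex) property above handles exactly this.

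\emph{Closed classes, part (ii).} In $\Gcal_{\mathrm r}(\Lp^\rightarrow)$, every class reaches $\Ccal^*$. So any closed class must be $\Ccal^*$ itself, and $\Ccal^*$ is closed because any class it reaches inside the induced subgraph must also reach $\Ccal^*$, which by acyclicity forces that class to be $\Ccal^*$. For the reversed graph, every class is reachable from $\Ccal^*$ in the original graph, so every class reaches $\Ccal^*$ after reversal, and the same argument applies. Recurrence then follows from the closed/recurrent identification for Markov generators in Appendix~\ref{app:metzler-graphs}.

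\emph{Stationary distribution, part (iii).} Since a unique closed class exists, the stationary distribution is unique, and it suffices to check that $\pi=\phi\odot\psi$ is invariant. For the right chain, the restriction $\pi_+$ equals $\diag(\phi_+)\psi_{S_\phi}$. We have $\pi_+^\t\Lp^\rightarrow=\psi_{S_\phi}^\t(\Lp^{\phi_+}-\lambda^*I)\diag(\phi_+)$. Its entry at $x_j$ is $\phi(x_j)\bigl[\sum_{x_i\in S_\phi}\psi(x_i)(\Lp)_{ij}-\lambda^*\psi(x_j)\bigr]$. This expression vanishes because $\psi^\t\Lp=\lambda^*\psi^\t$ together with $\psi(x_i)(\Lp)_{ij}=0$ whenever $x_i\notin S_\phi$ and $x_j\in S_\phi$. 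That last fact holds since $x_i\to x_j\leadsto\Ccal^*$ would put $x_i$ in $S_\phi$.

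The left chain is symmetric: there one uses $\Lp\phi=\lambda^*\phi$ and the fact that edges from $\supp(\psi)$ leave only to states where $\phi$ vanishes, because reaching both $\Ccal^*$ and a state accessible from $\Ccal^*$ would create a cycle. Normalization is $\1^\t\pi=\psi^\t\phi=1$. Finally, \eqref{eq:supports} gives $\supp(\phi)\cap\supp(\psi)=\{x:\Ccal\leadsto\Ccal^*\leadsto\Ccal\}=\Ccal^*$ by acyclicity.
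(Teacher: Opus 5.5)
Your proof is correct and follows essentially the same route as the paper's: zero row sums from the restricted eigenvector identities, preservation of the off-diagonal pattern under positive diagonal similarity (reversed by the transpose for the left chain), $\Ccal^*$ as the unique closed class by acyclicity, and invariance of $\phi\odot\psi$ because no pricing edge enters $\supp(\phi)$ from outside. Your path-closedness check makes explicit what the paper only asserts about the induced reduced graphs. For the left chain, the paper uses the simpler fact that no pricing edge leaves $\supp(\psi)$ at all, so your detour through the vanishing of $\phi$ at the targets is valid but unnecessary.
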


\begin{proof}[Proof of Proposition~\ref{prop:valuation-laws}]
  The identities $\Lp^{\phi_+}\phi_+=\lambda^*\phi_+$ and
  $(\Lp^{\psi_+})^\t\psi_+=\lambda^*\psi_+$, together with \eqref{eq:ft} and
  \eqref{eq:dt}, imply \(\Lp^\rightarrow\1=0\) and \(\Lp^\leftarrow\1=0\). Both
  restricted matrices $\Lp^{\phi_+}$ and $\Lp^{\psi_+}$ are Metzler; positive
  diagonal similarity preserves their off-diagonal zero patterns, while transposition
  reverses those patterns. Therefore, $\Lp^\rightarrow$ and $\Lp^\leftarrow$ are
  Markov generators on $\supp(\phi)$ and $\supp(\psi)$ with the stated reduced
  graphs. Both reduced graphs are acyclic, being respectively an induced subgraph
  of \(\Gcal_{\mathrm r}(\Lp)\) and the edge reversal of one. In each graph,
  \(\Ccal^*\) is closed and accessible from every class, and hence is the unique
  closed class.

  Set $\Xsf_\phi\coloneqq\supp(\phi)$ and
  $\Xsf_\psi\coloneqq\supp(\psi)$. No pricing edge enters $\Xsf_\phi$ from
  outside: otherwise, the class containing its initial vertex would have access to
  $\Ccal^*$ and would belong to $\Xsf_\phi$. Similarly, no pricing edge leaves
  $\Xsf_\psi$, since the class containing its terminal vertex would then be
  accessible from $\Ccal^*$ and would belong to $\Xsf_\psi$. Consequently, taking
  the $\Xsf_\phi$ coordinates of $\psi^\t\Lp=\lambda^*\psi^\t$ and the
  $\Xsf_\psi$ coordinates of $\Lp\phi=\lambda^*\phi$ gives
  \(
    \psi_{\Xsf_\phi}^\t\Lp^{\phi_+} = \lambda^*\psi_{\Xsf_\phi}^\t,
  \)
  and
  \(
    \Lp^{\psi_+}\phi_{\Xsf_\psi} = \lambda^*\phi_{\Xsf_\psi}.
  \)
  Let $D_\phi\coloneqq\diag(\phi_+)$ and
  $D_\psi\coloneqq\diag(\psi_+)$. Using \eqref{eq:ft} and \eqref{eq:dt}, we obtain
  \begin{align*}
    (\phi_+\odot\psi_{\Xsf_\phi})^\t\Lp^\rightarrow
    &=
    \psi_{\Xsf_\phi}^\t
    (\Lp^{\phi_+}-\lambda^*I)D_\phi
    =0,
    \\
    (\phi_{\Xsf_\psi}\odot\psi_+)^\t\Lp^\leftarrow
    &=
    \phi_{\Xsf_\psi}^\t
    \bigl[(\Lp^{\psi_+})^\t-\lambda^*I\bigr]D_\psi
    =0.
  \end{align*}
  Thus these vectors are stationary for $\Lp^\rightarrow$ and
  $\Lp^\leftarrow$, respectively. Their zero extensions to $\Xsf$ both equal
  $\phi\odot\psi$, because $\phi$ vanishes outside $\Xsf_\phi$ and $\psi$
  vanishes outside $\Xsf_\psi$. Moreover,
  $\1^\t(\phi\odot\psi)=\psi^\t\phi=1$, and
  $\supp(\phi)\cap\supp(\psi)=\Ccal^*$: any other common class would communicate
  with $\Ccal^*$. Because $\Ccal^*$ is the unique closed class of each transformed
  graph, each generator has a unique stationary distribution. It is therefore
  $\pi^{\mathrm{LR}}=\phi\odot\psi$.
\end{proof}

For $x\in\supp(\phi)$, the right Doob chain admits a change-of-measure
interpretation. The process
\[
  M_t^\phi
  \coloneqq
  \e^{-\lambda^*t}S_t^{\Gs}\frac{\phi(X_t)}{\phi(x)},
  \qquad t\geq0,
\]
is a nonnegative unit-mean martingale and therefore defines a family of
probability measures $(\PP_x^\phi)_{x\in\supp(\phi)}$ by
$\diff\PP_x^\phi/\diff\PP_x|_{\fF_t} = M_t^\phi$.
Under $\PP_x^\phi$, the state process is Markov on $\supp(\phi)$ with generator
$\Lp^\rightarrow$. The left Doob chain comes from the adjoint construction and
does not generally correspond to another forward change of physical measure.
When the economy is irreducible, the right transform is the whole-state
Hansen--Scheinkman transform and the left is its stationary time reversal;
Appendix~\ref{app:hs} reviews this benchmark.

By Proposition~\ref{prop:valuation-laws}(ii)--(iii) and finite-state Markov-chain
convergence, the distributions of both Doob chains converge from every initial
state in their respective state spaces to $\pi^{\mathrm{LR}}$.
Thus a physically transient class can be the
unique recurrent class under both transformed dynamics.

Figure~\ref{fig:flow-reversal} illustrates these results for the three-class graph
$\mathcal D\to\mathcal R\to\mathcal N$ with $\Ccal^*=\mathcal D$.
Then $\supp(\phi)=\mathcal D$ and $\supp(\psi)=\Xsf$: the global rate is absent
from prices starting in recovery or normal states, but prices starting in disaster
can inherit it even when the payoff is delivered only after recovery. The right
Doob chain is restricted to $\mathcal D$, while the left has the reversed reduced
graph $\mathcal N\to\mathcal R\to\mathcal D$. Both have limiting distributions concentrated
on $\mathcal D$, although the physical dynamics eventually enter the recurrent
normal class $\mathcal N$.
If the uniquely dominant class moves downstream along this chain, the right
eigenvector's support expands and the left eigenvector's support contracts.

\subsection{Pricing-corridor restriction and asymptotics}
\label{app:corridor-proofs}

\begin{proof}[Proof of Proposition~\ref{prop:corridor}]
  Suppose first that $\kK(\Ccal,g)=\varnothing$. Then no state in $\Ccal$ has
  access to a state in $\supp(g)$. Lemma~\ref{lem:mr}, applied to $\Lp$, gives
  $q_t(x,g)= (\e^{t\Lp} g)(x) = 0$ for every $x\in\Ccal$ and $t>0$.
  The same conclusion holds at $t=0$, because $g$ must vanish on $\Ccal$; otherwise
  $\Ccal\in\sS(g)$ and the fact that $\Ccal\leadsto\Ccal$ would place $\Ccal$ in the
  corridor.

  Now suppose that $\kK=\kK(\Ccal,g) \neq \varnothing$. Choose
  $\Ccal'\in\kK$. By definition, there is a class
  $\Ccal''\in\sS(g)$ such that
  $\Ccal\leadsto\Ccal'\leadsto\Ccal''$. Hence
  $\Ccal\leadsto\Ccal''$, which implies both
  $\Ccal\in\kK$ and
  $\Ccal''\in\sS(g)\cap\kK$. In particular,
  $g_{\kK}\neq0$.
  The principal submatrix $L_{\kK}$ retains exactly the pricing edges whose
  endpoints lie in $\Xsf_{\kK}$. Because $\kK$ is a union of whole
  classes, these classes remain the classes of
  $L_{\kK}$. Moreover, every class on a path between two corridor classes also
  lies in the corridor: it is accessible from $\Ccal$ and has access to
  $\sS(g)$. Thus the reduced graph of $L_{\kK}$ is the subgraph of the
  reduced pricing graph induced by $\kK$, with the same accessibility relation
  among its vertices.

  To prove the semigroup restriction, let $z$ be any sufficiently large real number
  and apply
  Lemma~\ref{lem:res-exp} to $\Lp$ and $L_{\kK}$. A class in
  $\sS(g)$ that is accessible from $\Ccal$ belongs to $\kK$, and every class on
  every path from $\Ccal$ to that payoff-support class also belongs to $\kK$.
  Hence the path contributions to the two scalar resolvents coincide, giving
  \[
    e_x^\t R(z;\Lp)g
    =
    e_x^\t R(z;L_{\kK})g_{\kK}.
  \]
  For $z>\max\{s(\Lp),s(L_{\kK})\}$, the two sides of the above equation are the
  Laplace transforms of $(\e^{t\Lp}g)(x)$ and $(\e^{tL_{\kK}}g_{\kK})(x)$,
  respectively. Uniqueness of the Laplace transform, together with continuity in $t$,
  proves \eqref{eq:corridor-restriction} for every $t\geq0$.
  Lastly, $\kK(\Ccal,g)\neq\varnothing$ means that there exists \(y\in\supp(g)\) such
  that $x\leadsto y$ for some, and hence every, $x\in\mathcal C$. Lemma~\ref{lem:mr}
  then implies
  \(
    q_t(x,g) = (\e^{t\Lp}g)(x) > 0
  \) 
  for every \(t>0\) and every $x\in\Ccal$.
\end{proof}

\begin{lemma}[Positive leading corridor coefficient] \label{lem:positive-corridor-coefficient}
  Let $g\in\RR_+^\Xsf$ be nonzero, fix $x\in\Ccal$, and suppose that
  $\kK=\kK(\Ccal,g)$ is nonempty. With $\lambda$, $\nu$, $P_{\kK}$,
  and $B_{\kK}$ as in \eqref{eq:local-rate}--\eqref{eq:BK}, set
  $N_{\kK}\coloneqq(L_{\kK}-\lambda I)P_{\kK}$. Then
  \begin{equation} \label{eq:positive-corridor-coefficient}
    e_x^\t N_{\kK}^{\nu-1}P_{\kK}g_{\kK}>0,
    \qquad
    \bigl(B_{\kK}g_{\kK}\bigr)(x)>0.
  \end{equation}
\end{lemma}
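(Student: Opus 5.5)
The second inequality is immediate from the first, since $B_{\kK}=N_{\kK}^{\nu-1}P_{\kK}/(\nu-1)!$. So the work is to prove $e_x^\t N_{\kK}^{\nu-1}P_{\kK}g_{\kK}>0$. I will do this through the resolvent of $L_{\kK}$ and its Laurent expansion at $\lambda$. Standard finite-dimensional spectral theory gives
\[
  R(z;L_{\kK})
  =
  \sum_{j=0}^{\nu-1}\frac{N_{\kK}^jP_{\kK}}{(z-\lambda)^{j+1}}
  +\text{(analytic near }\lambda),
\]
so $N_{\kK}^{\nu-1}P_{\kK}=\lim_{z\downarrow\lambda}(z-\lambda)^{\nu}R(z;L_{\kK})$ along real $z>\lambda=s(L_{\kK})$. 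It therefore suffices to show
\[
  \liminf_{z\downarrow\lambda}(z-\lambda)^\nu\, e_x^\t R(z;L_{\kK})g_{\kK}>0.
\]

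Next I apply Lemma~\ref{lem:res-exp} to $L_{\kK}$, whose reduced graph is the corridor subgraph by Proposition~\ref{prop:corridor}. This gives
\[
  e_x^\t R(z;L_{\kK})g_{\kK}
  =
  \sum_{\Ccal''\in\sS(g)\cap\kK}\ \sum_{\gamma\in\dD(\Ccal,\Ccal'')}
  e_x^\t\mathcal R_\gamma(z;L_{\kK})g_{\Ccal''}.
\]
For real $z>\lambda$, every $\mathcal R_\gamma$ is entrywise strictly positive and every $g_{\Ccal''}\geq0$, so each summand is nonnegative. Hence the whole sum is bounded below by any single summand, and there is no cancellation.

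The main step is choosing a good path $\gamma$. By definition of $\nu$, some chain in $\kK$ contains $\nu$ critical classes. Because corridor classes lie between $\Ccal$ and $\sS(g)$, I can extend that chain into a directed path $\gamma$ from $\Ccal$ through those $\nu$ critical classes to some $\Ccal''\in\sS(g)$, staying inside $\kK$. I then need three facts about the factors of $\mathcal R_\gamma$:
\begin{itemize}
  \item For each critical block $L_k$, \eqref{eq:block-laurent} gives $(z-\lambda)R(z;L_k)\to\phi_k\psi_k^\t\gg0$.
  \item Noncritical blocks on the path have $s(L_k)<\lambda$, so $R(z;L_k)\to R(\lambda;L_k)\gg0$.
  \item The coupling blocks along the path are nonzero and nonnegative.
\end{itemize}
Multiplying, $(z-\lambda)^\nu\mathcal R_\gamma(z)$ converges to an alternating product of strictly positive matrices and nonzero nonnegative matrices. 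Such a product is entrywise strictly positive, because a strictly positive matrix times a nonzero nonnegative matrix times a strictly positive matrix is strictly positive. The final factor is strictly positive and $g_{\Ccal''}\neq0$ with $g_{\Ccal''}\ge0$, so the scalar limit $e_x^\t(\cdot)g_{\Ccal''}$ is strictly positive. That gives the required $\liminf>0$.

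The remaining checks are:
\begin{itemize}
  \item Every other path contributes a nonnegative quantity with pole order at most $\nu$, so the $\liminf$ is not spoiled.
  \item The endpoint classes $\Ccal$ and $\Ccal''$ may themselves be critical or noncritical, and the count must still come out to exactly $\nu$ poles. This holds because the factor count is by classes on the path.
  \item $\Ccal''$ can be taken in $\sS(g)$ once the chain is extended. This holds because the last critical class in the chain lies in $\kK$, so it has access to $\sS(g)$ within $\kK$.
\end{itemize}
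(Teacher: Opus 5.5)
Your proof is correct and follows essentially the same route as the paper. Both arguments use the resolvent path expansion of Lemma~\ref{lem:res-exp} for $L_{\kK}$, pick a relevant directed path through $\nu$ corridor-critical classes whose leading factor is an alternating product of strictly positive resolvent or Perron blocks and nonzero nonnegative couplings, and identify the $(z-\lambda)^{-\nu}$ coefficient with $e_x^\t N_{\kK}^{\nu-1}P_{\kK}g_{\kK}$.

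Your lower bound along real $z\downarrow\lambda$, using nonnegativity of every path term, is only a slightly different way of ruling out cancellation. Truncating the Laurent series at $j=\nu-1$ does need justification. You can get it from Theorem~\ref{thm:Pi-asym-tied} applied to $L_{\kK}$, as the paper does. Alternatively, for this scalar entry, it follows from your own observation that every path in the corridor contains at most $\nu$ critical classes.
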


\begin{proof}
  Apply Lemma~\ref{lem:res-exp} to the corridor generator
  $L_{\kK}$. By Proposition~\ref{prop:corridor}, its reduced graph is
  induced by $\kK$, and
  \begin{equation} \label{eq:corridor-scalar-resolvent}
    e_x^\t R(z;L_{\kK})g_{\kK}
    =
    \sum_{\Ccal_\ell\in\sS(g)\cap\kK}
    \sum_{\gamma:\Ccal\leadsto\Ccal_\ell}
    e_x^\t\mathcal R_\gamma(z;L_{\kK})g_{\Ccal_\ell}.
  \end{equation}
  Call the paths in this sum \emph{relevant}. Every relevant path lies in
  $\kK$, so it contains at most $\nu$ corridor-critical classes. Conversely, let
  \(
    \Ccal_{\ell_1}\leadsto\cdots\leadsto\Ccal_{\ell_\nu}
  \)
  be the ordered corridor-critical classes in a chain attaining length $\nu$. Since
  its endpoints lie in the corridor,
  $\Ccal\leadsto\Ccal_{\ell_1}$ and
  $\Ccal_{\ell_\nu}\leadsto\Ccal_h$ for some
  $\Ccal_h\in\sS(g)$. Concatenating paths that realize these relations
  produces a directed walk through all $\nu$ critical classes. The reduced graph is
  acyclic, so this walk is a relevant directed path. Thus at least one relevant
  path contains exactly $\nu$ critical classes.

  Fix such a path
  $\gamma=(\Ccal_{k_0}\to\cdots\to\Ccal_{k_p})$. For each position $h$, set
  \[
    G_{\gamma,h}
    \coloneqq
    \begin{cases}
      \phi_{k_h}\psi_{k_h}^\t,
        & \lambda_{k_h}=\lambda,\\[3pt]
      R(\lambda;L_{k_h}),
        & \lambda_{k_h}<\lambda.
    \end{cases}
  \]
  Substituting the block Laurent expansion~\eqref{eq:block-laurent} into
  \eqref{eq:path-res} shows that the coefficient of
  $(z-\lambda)^{-\nu}$ contributed by $\gamma$ to
  \eqref{eq:corridor-scalar-resolvent} is
  \begin{equation} \label{eq:path-leading-coefficient}
    e_x^\t
    G_{\gamma,0}L_{k_0k_1}G_{\gamma,1}\cdots
    L_{k_{p-1}k_p}G_{\gamma,p}g_{\Ccal_{k_p}}.
  \end{equation}
  This scalar is strictly positive. Indeed, every noncritical resolvent is strictly
  positive by Lemma~\ref{lem:res-exp}, applied to the corresponding
  diagonal block; every block Perron vector is strictly positive; every coupling
  block along the path is nonzero and nonnegative; and $g_{\Ccal_{k_p}}$ is nonzero
  and nonnegative. The same argument applies to every relevant path containing $\nu$
  critical classes. All remaining paths contribute a pole of lower order or a
  function analytic at $\lambda$.
  Hence the coefficient of $(z-\lambda)^{-\nu}$ in
  \eqref{eq:corridor-scalar-resolvent} is strictly positive.

  Theorem~\ref{thm:Pi-asym} in the unique-critical case and
  Theorem~\ref{thm:Pi-asym-tied} in the tied case identify $\nu$ as the nilpotency
  index of $N_{\kK}$. On its dominant spectral subspace,
  \[
    R(z;L_{\kK})P_{\kK}
    =
    \sum_{j=0}^{\nu-1}
    \frac{N_{\kK}^jP_{\kK}}{(z-\lambda)^{j+1}},
  \]
  while $R(z;L_{\kK})(I-P_{\kK})$ is analytic at $\lambda$. Therefore, the
  strictly positive coefficient just identified equals
  $e_x^\t N_{\kK}^{\nu-1}P_{\kK}g_{\kK}$. Dividing by $(\nu-1)!$ proves
  \eqref{eq:positive-corridor-coefficient}.
\end{proof}

\begin{proof}[Proof of Theorem~\ref{thm:pyc}]
  Write $\kK=\kK(\Ccal,g)$, $\lambda=\lambda(\Ccal,g)$, and
  $\nu=\nu(\Ccal,g)$. The corridor-critical classes are exactly the globally dominant
  classes of $L_{\kK}$, and their maximum chain length is $\nu$. Hence
  Theorem~\ref{thm:Pi-asym-tied}, applied to $L_{\kK}$, gives
  \[
    \lim_{t\to\infty}
    t^{1-\nu}\e^{-\lambda t}\e^{tL_{\kK}}
    =
    \frac{N_{\kK}^{\nu-1}P_{\kK}}{(\nu-1)!}
    =B_{\kK}.
  \]
  Combining this limit with the exact restriction
  \eqref{eq:corridor-restriction} yields
  \(
    t^{1-\nu}\e^{-\lambda t}q_t(x,g)
    \to
    (B_{\kK}g_{\kK})(x)
    \eqqcolon c_x(g).
  \)
  Lemma~\ref{lem:positive-corridor-coefficient} shows that $c_x(g)>0$. Thus
  \(
    q_t(x,g)=c_x(g)t^{\nu-1}\e^{\lambda t}(1+o(1)).
  \)
  Proposition~\ref{prop:corridor}(ii) ensures that $q_t(x,g)>0$ for $t>0$, so
  taking logarithms is valid. Dividing by $-t$ gives
  \eqref{eq:yield-clim}.
\end{proof}

The leading coefficient also admits an eigenvector representation. Write
$\kK=\kK(\Ccal,g)$ and $\lambda=\lambda(\Ccal,g)$.
Theorem~\ref{thm:Pi-asym-tied} gives $B_{\kK}\neq0$ and
$L_{\kK}B_{\kK}=B_{\kK}L_{\kK}=\lambda B_{\kK}$. Therefore, if
$d_{\kK}=\operatorname{rank}(B_{\kK})$, then
$\operatorname{ran}(B_{\kK})\subseteq\ker(L_{\kK}-\lambda I)$ and
$\operatorname{ran}(B_{\kK}^{\t})\subseteq\ker(L_{\kK}^{\t}-\lambda I)$.
Hence a rank factorization may be chosen in terms of right and left eigenvectors
$\phi_{\kK,a}$ and $\psi_{\kK,a}$ of $L_{\kK}$ at $\lambda$:
\begin{equation} \label{eq:corridor-eigenfactorization}
  B_{\kK}
  =
  \sum_{a=1}^{d_{\kK}}
  \phi_{\kK,a}\psi_{\kK,a}^{\t}.
\end{equation}
Although the individual factors need not be unique, their sum $B_{\kK}$ is the
canonical leading spectral coefficient. In particular,
\[
  c_x(g)
  =\sum_{a=1}^{d_{\kK}}
    \phi_{\kK,a}(x)\,\psi_{\kK,a}^{\t}g_{\kK}.
\]

\begin{proof}[Proof of Corollary~\ref{cor:corridor-stability}]
  If the corridor is empty, Proposition~\ref{prop:corridor} makes the
  strip price, and hence its integral, identically zero. Otherwise,
  Theorem~\ref{thm:pyc} gives
  \(
    q_t(x,g) \sim c_x(g)t^{\nu(\Ccal,g)-1}\e^{\lambda(\Ccal,g)t}
  \)
  for $c_x(g)>0$. The integral over any bounded maturity interval is finite by
  continuity of the pricing semigroup, while the displayed tail is integrable
  exactly when $\lambda(\Ccal,g)<0$. This proves the equivalence.
\end{proof}

\subsection{One-way disaster-recovery economy}
\label{app:illustration-proofs}

\begin{proof}[Proof of Proposition~\ref{prop:owr}]
  We first establish \eqref{eq:etLp}. Conformably with the block representation
  of $\Lp$ in \eqref{eq:Lp2}, write
  \[
    \Pi_t = \e^{t\Lp}
    =
    \begin{pmatrix} A_t & B_t\\ H_t & C_t \end{pmatrix},
    \qquad t\geq0.
  \]
  The pricing semigroup satisfies the evolution equation
  \[
    \frac{\diff}{\diff t}\Pi_t=\Lp\Pi_t, \qquad\Pi_0=I.
  \]
  Together with the block representation of $\Lp$ in \eqref{eq:Lp2}, this gives
  \[
    \frac{\diff}{\diff t} A_t = \LpN A_t,
    \quad
    \frac{\diff}{\diff t} B_t = \LpN B_t,
    \quad
    \frac{\diff}{\diff t} H_t = \LpD H_t + \LpDN A_t,
    \quad
    \frac{\diff}{\diff t} C_t = \LpD C_t + \LpDN B_t,
  \]
  with initial conditions $A_0 = I$, $B_0=0$, $H_0=0$ and $C_0=I$.
  It follows immediately that $A_t = \e^{t\LpN}$, $B_t=0$ and $C_t=\e^{t\LpD}$.
  Applying the variation-of-constants formula (Duhamel's formula) to the equation for
  \(H_t\) yields
  \(
    H_t
    =
    \e^{t\LpD} H_0 + \int_0^t \e^{(t-v)\LpD}\LpDN A_v\diff v.
  \)
  Substituting $H_0=0$ and $A_v=\e^{v\LpN}$, then changing variables to
  $s=t-v$, gives
  \[
    H_t=\int_0^t\e^{s\LpD}\LpDN\e^{(t-s)\LpN}\diff s,
  \]
  with the order of the matrix factors unchanged. This proves
  \eqref{eq:etLp}.

  For part~(i), fix $x\in\mathcal N$. If $\gN=0$, the block representation
  gives $q_t(x,g)=0$ for every $t\geq0$. If $\gN\neq0$, the pricing corridor is
  $\{\mathcal N\}$ and has rate $\lambdaN$. Theorem~\ref{thm:pyc} therefore gives
  $\lim_{t\to\infty}y_t(x,g)=-\lambdaN$.

  For part~(ii), fix $x\in\mathcal D$. If $\gN=0$, then $\gD\neq0$ because
  $g$ is nonzero, and the pricing corridor is $\{\mathcal D\}$. If $\gN\neq0$,
  the nonzero recovery block and irreducibility of the two diagonal blocks give
  the corridor $\{\mathcal D,\mathcal N\}$. The corresponding corridor rates are
  $\lambdaD$ and $\max\{\lambdaD,\lambdaN\}$, respectively.
  The yield expansion in Theorem~\ref{thm:pyc} gives the negatives of these rates
  as the limiting yields, including when $\lambdaD=\lambdaN$, since the
  $\log t/t$ correction vanishes as $t\to\infty$.
\end{proof}

\begin{proof}[Proof of Corollary~\ref{cor:owr}]
  Under \eqref{eq:dd}, $\mathcal D$ is the unique globally
  dominant class. Only $\mathcal D$ can reach $\mathcal D$, whereas both
  $\mathcal D$ and $\mathcal N$ are accessible from $\mathcal D$.
  Theorem~\ref{thm:Pi-asym} therefore implies
  that $\lambdaD$ is algebraically simple for $\Lp$, that the dominant right
  eigenvector is supported on $\mathcal D$, that the dominant left eigenvector has
  full support, and that the normalized semigroup converges to the associated
  rank-one Riesz projection.

  With the state ordering in \eqref{eq:Lp2}, the right eigenvector is
  $\PhiD=(0,\phiD^{\t})^{\t}$. Writing the left eigenvector as
  $(\xi^{\t},\psiD^{\t})^{\t}$, the normal-class block of
  $\Lp^{\t}(\xi^{\t},\psiD^{\t})^{\t}
  =\lambdaD(\xi^{\t},\psiD^{\t})^{\t}$ gives
  \(
    (\lambdaD I-\LpN^{\t})\xi=\LpDN^{\t}\psiD.
  \)
  Since $\lambdaD>s(\LpN)$, the matrix on the left is invertible and
  \(
    \xi=R(\lambdaD;\LpN^{\t})\LpDN^{\t}\psiD.
  \)
  Thus the full-state eigenvectors are
  \begin{equation}\label{eq:illustration-eigenvectors}
    \PhiD\coloneqq\begin{pmatrix}0\\ \phiD\end{pmatrix},
    \qquad
    \PsiD\coloneqq
    \begin{pmatrix}
      R(\lambdaD;\LpN^\t)\LpDN^\t\psiD\\[0.2em]\psiD
    \end{pmatrix}\gg0.
  \end{equation}
  The block normalizations $\1^\t\phiD=1$ and $\psiD^\t\phiD=1$ give
  $\PsiD^\t\PhiD=1$. Hence the Riesz projection is
  $P_{\scriptscriptstyle\mathcal D}\coloneqq\PhiD\PsiD^\t$, and
  \eqref{eq:r1lim} gives
  \begin{equation}\label{eq:illustration-rank-one}
    \lim_{t\to\infty}\e^{-\lambdaD t}\Pi_t
    =P_{\scriptscriptstyle\mathcal D}.
  \end{equation}
  Finally,
  \[
    \PsiD^{\t}g
    =
    \psiD^{\t}\gD
    +
    \psiD^{\t}\LpDN R(\lambdaD;\LpN)\gN,
  \]
  where we used $R(\lambdaD;\LpN^{\t})^{\t}=R(\lambdaD;\LpN)$.
  Since $\lambdaD>s(\LpN)$, the Laplace representation
  \[
    R(\lambdaD;\LpN)
    =\int_0^\infty\e^{-\lambdaD v}\e^{v\LpN}\diff v
  \]
  is convergent. Substituting this identity and applying
  $P_{\scriptscriptstyle\mathcal D}$ to $g$ proves
  \eqref{eq:dlim}. Its coefficient is strictly positive
  for $x\in\mathcal D$, since $\phiD(x)>0$, $\PsiD\gg0$, and $g\geq0$ is
  nonzero.

  For $x\in\mathcal N$, the block representation \eqref{eq:etLp} gives
  $q_t(x,g)=(\e^{t\LpN}\gN)(x)$. Since $\LpN$ is irreducible Metzler,
  \(
    \lim_{t\to\infty}\e^{-\lambdaN t}\e^{t\LpN}
    = \phiN\psiN^\t.
  \)
  The limit is positive when $\gN\neq0$, because $\phiN,\psiN\gg0$ and $\gN\geq0$.
  This proves part~(i) of the corollary. 
\end{proof}

\subsection{Rare-edge perturbations and crossover asymptotics}
\label{app:rare-proofs}

For each $\epsilon>0$ sufficiently small, let $\lambda^*(\epsilon)\coloneqq
s(\Lp(\epsilon))$ and let $\phi(\epsilon), \psi(\epsilon)\gg0$ be corresponding right
and left eigenvectors of $\Lp(\epsilon)$ associated with $\lambda^*(\epsilon)$.
For each limiting class $\Ccal\in\cC$, write $\phi_{\s\Ccal}(\epsilon)$ and
$\psi_{\s\Ccal}(\epsilon)$ for the restrictions of these vectors to $\Ccal$.
Normalize the vectors by
\(
  \1^\t\phi_{\scriptscriptstyle\Ccal^*}(\epsilon)=1,
\)
and
\(
  \phi(\epsilon)^\t\psi(\epsilon) = 1.
\)

In what follows, Lemma~\ref{lem:rp*} establishes the convergence of
$\lambda^*(\epsilon)$, $\phi_{\s\Ccal^*}(\epsilon)$ and $\psi_{\s\Ccal^*}(\epsilon)$ as
$\epsilon\downarrow0$, and Lemma~\ref{lem:rp} establishes the rarity orders of the 
$\phi_{\s\Ccal}(\epsilon)$ and $\psi_{\s\Ccal}(\epsilon)$ for $\Ccal\neq\Ccal^*$.
Then, we use these two lemmas to prove Theorem~\ref{thm:rare-distance} in
Section~\ref{sec:rare-entry}.

\begin{lemma}[Eigenvalue and dominant-block eigenvector limits]
  \label{lem:rp*}
  Assume the unique global pricing dominance condition \eqref{eq:gvd}.
  Let $\phi_*,\psi_*\gg0$ be the right and left Perron eigenvectors of
  $L_{k^*}$ associated with $\lambda^*$, normalized by
  \(
    \1^\t\phi_*=1
  \)
  and
  \(
    \phi_*^\t\psi_*=1.
  \)
  Then
  \[
    \lambda^*(\epsilon)\to\lambda^*,
    \qquad
    \phi_{\s\Ccal^*}(\epsilon)\to\phi_*,
    \qquad
    \psi_{\s\Ccal^*}(\epsilon)\to\psi_*,
    \qquad
    \text{as }\epsilon\downarrow0.
  \]
\end{lemma}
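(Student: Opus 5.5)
The plan is to obtain all three limits from norm continuity of the rank-one Riesz projection at the Perron eigenvalue, and then to read off the $\Ccal^*$-blocks of the eigenvectors from the $\Ccal^*\Ccal^*$ block of that projection.

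\emph{Step 1 (eigenvalue).} Assumption~\ref{ass:edge} gives $\Lp(\epsilon)\to\Lp(0)=\Lp$ in norm, since every diagonal block differs from its limit by $O(\eta(\epsilon))$ and every off-diagonal block is either $O(\epsilon^{w_{k\ell}})$ with $w_{k\ell}>0$ or unchanged. By \eqref{eq:gvd} and Theorem~\ref{thm:Pi-asym}, $\lambda^*$ is an algebraically simple eigenvalue of $\Lp$, and by Lemma~\ref{lem:msb} every other eigenvalue has real part strictly below $\lambda^*$. Take a small circle $\Gamma$ around $\lambda^*$ that separates it from the rest of $\sigma(\Lp)$.

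By continuity of the spectrum, equivalently by uniform convergence of the resolvent on $\Gamma$ and on a vertical line to the right of the remaining eigenvalues, for small $\epsilon$ exactly one eigenvalue of $\Lp(\epsilon)$, counted with multiplicity, lies inside $\Gamma$. All other eigenvalues of $\Lp(\epsilon)$ have real part below $\lambda^*-\delta$ for some fixed $\delta>0$. Since $s(\Lp(\epsilon))$ is an eigenvalue (Lemma~\ref{lem:msb}) and has the largest real part, it must be the eigenvalue inside $\Gamma$. Letting the radius of $\Gamma$ shrink gives $\lambda^*(\epsilon)\to\lambda^*$.

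\emph{Step 2 (projection).} Define
\[
  P_*(\epsilon)=\frac{1}{2\pi\mathrm i}\int_\Gamma R(z;\Lp(\epsilon))\,\diff z .
\]
The resolvents converge uniformly on $\Gamma$, so $P_*(\epsilon)\to P_*=\phi\psi^\t$ in norm, where $\phi,\psi$ are the full-state eigenvectors of Theorem~\ref{thm:Pi-asym}. Since $\lambda^*(\epsilon)$ is simple, $P_*(\epsilon)=\phi(\epsilon)\psi(\epsilon)^\t$ under the normalization $\phi(\epsilon)^\t\psi(\epsilon)=1$. This product is invariant under the rescaling fixed by $\1^\t\phi_{\s\Ccal^*}(\epsilon)=1$.

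Next identify the $\Ccal^*\Ccal^*$ block of the limit. By \eqref{eq:supports}, $\phi_{\s\Ccal^*}=c\,\phi_*$ and $\psi_{\s\Ccal^*}=c'\psi_*$ for some $c,c'>0$. The supports of $\phi$ and $\psi$ intersect exactly in $\Ccal^*$, so
\[
  1=\psi^\t\phi=cc'\,\psi_*^\t\phi_*=cc' .
\]
Hence $P_{*,\s\Ccal^*\Ccal^*}=\phi_*\psi_*^\t$, and therefore $P_{*,\s\Ccal^*\Ccal^*}(\epsilon)=\phi_{\s\Ccal^*}(\epsilon)\psi_{\s\Ccal^*}(\epsilon)^\t\to\phi_*\psi_*^\t$.

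\emph{Step 3 (eigenvector blocks).} Using the normalization $\1^\t\phi_{\s\Ccal^*}(\epsilon)=1$,
\[
  P_{*,\s\Ccal^*\Ccal^*}(\epsilon)^\t\1=\psi_{\s\Ccal^*}(\epsilon)\to\psi_*\,(\phi_*^\t\1)=\psi_* .
\]
Similarly,
\[
  P_{*,\s\Ccal^*\Ccal^*}(\epsilon)\1=\phi_{\s\Ccal^*}(\epsilon)\,\bigl(\psi_{\s\Ccal^*}(\epsilon)^\t\1\bigr).
\]
The scalar factor converges to $\psi_*^\t\1>0$, so dividing by it gives $\phi_{\s\Ccal^*}(\epsilon)\to\phi_*$.

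The main point needing care is Step 1: showing that the perturbed Perron eigenvalue is the continuation of $\lambda^*$ and not some other eigenvalue overtaking it. The uniform real-part gap in Lemma~\ref{lem:msb}, together with upper semicontinuity of the spectrum on a compact region, settles this. The remaining steps are bookkeeping with the rank-one projection.
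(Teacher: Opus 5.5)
Your proof is correct and follows essentially the paper's route: continuity of the spectral bound, norm convergence of the rank-one Riesz projection $P_*(\epsilon)=\phi(\epsilon)\psi(\epsilon)^\t$, and the support structure of Theorem~\ref{thm:Pi-asym}, which reduces $\psi^\t\phi=1$ to $\phi_*^\t\psi_{\s\Ccal^*}=1$. The only difference is in how the limits are extracted. You multiply the block $P_{*,\s\Ccal^*\Ccal^*}(\epsilon)$ by $\1$ on each side and use $\1^\t\phi_{\s\Ccal^*}(\epsilon)=1$, which is slightly more symmetric and gives exactly the $\Ccal^*$-block limits the lemma claims. The paper instead first gets $\phi(\epsilon)\to\phi$ from convergence of the one-dimensional ranges, then divides the row $e_x^\t P_*(\epsilon)$ by $\phi_{\s\Ccal^*}(\epsilon)(x)$, which also yields convergence of the full vectors.
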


\begin{proof}[Proof of Lemma~\ref{lem:rp*}]
  Since $\Lp(\epsilon)\to\Lp$ and the spectral bound is continuous on
  finite-dimensional matrix spaces,
  $\lambda^*(\epsilon) = s(\Lp(\epsilon))\to s(\Lp) = \lambda^*$
  as $\epsilon\downarrow0$.
  The irreducibility of $L_{k^*}$ implies that $\lambda^*=s(L_{k^*})$
  is an algebraically simple eigenvalue of $L_{k^*}$. By
  \eqref{eq:gvd}, $\lambda^*$ does not belong to the spectrum of any
  other diagonal block in the Frobenius form of $\Lp$. Therefore, $\lambda^*$ is an
  algebraically simple eigenvalue of $\Lp$.

  Let $P_*$ be the Riesz spectral projection of $\Lp$ associated with
  $\lambda^*$. Finite-dimensional perturbation theory implies that, for
  all sufficiently small $\epsilon>0$, the corresponding rank-one
  projection $P_*(\epsilon)$ is well defined and $P_*(\epsilon)\to P_*$;
  see \citet[Chapter~II, \S~5.1, especially Theorem~5.1 and equation~(5.2),
  pp.~107--108]{kato1995perturbation}. Hence the
  one-dimensional eigenspaces $\ran P_*(\epsilon)$ converge to $\ran P_*$.
  Let $\phi$ be the right eigenvector of $\Lp$ spanning $\ran P_*$ and
  normalized by $\1^\t\phi_{\s\Ccal^*}=1$.
  The block eigenvector equation on $\Ccal^*$ gives
  \(
    L_{k^*}\phi_{\s\Ccal^*}
    =
    \lambda^*\phi_{\s\Ccal^*}.
  \)
  By the uniqueness of the normalized Perron eigenvector of $L_{k^*}$,
  it follows that $\phi_{\s\Ccal^*}=\phi_*$.
  Since the normalizing functional $v\mapsto\1^\t v_{\s\Ccal^*}$ is nonzero on
  $\ran P_*$, convergence of the one-dimensional right eigenspaces, together with the
  common normalization $\1^\t\phi_{\s\Ccal^*}(\epsilon)=1$, yields
  $\phi(\epsilon)\to\phi$. Restricting this convergence to $\Ccal^*$ gives
  \(
    \phi_{\s\Ccal^*}(\epsilon)\to\phi_{\s\Ccal^*}=\phi_*
  \)
  as $\epsilon\downarrow0$.

  It remains to identify the limit of the left eigenvectors under the normalization
  $\phi(\epsilon)^\t\psi(\epsilon)=1$. Since $\lambda^*(\epsilon)$ is algebraically simple,
  \(
    P_*(\epsilon)
    =
    \phi(\epsilon)\psi(\epsilon)^\t.
  \)
  Fix any $x\in\Ccal^*$. The convergence $\phi_{\s\Ccal^*}(\epsilon)\to\phi_*$ implies
  \(
    \phi_{\s\Ccal^*}(\epsilon)(x)\to\phi_*(x).
  \)
  Since $\phi_*(x)>0$, $\phi_{\s\Ccal^*}(\epsilon)(x)$ is bounded away from zero for all sufficiently
  small $\epsilon>0$. Hence, as $\epsilon\downarrow0$,
  \[
    \psi(\epsilon)^\t
    =
    \frac{e_{x}^\t P_*(\epsilon)}{\phi_{\s\Ccal^*}(\epsilon)(x)}
    \to
    \frac{e_{x}^\t P_*}{\phi_*(x)}
    \eqqcolon
    \psi^\t.
  \]
  Since $\lambda^*$ is an algebraically simple eigenvalue of $\Lp$, the
  generalized left eigenspace coincides with the left eigenspace: $\ran(P_*^\t) =
  \ker(\Lp^\t-\lambda^* I)$. Therefore, $\psi = P_*^\t e_x/\phi_*(x) \in
  \ker(\Lp^\t-\lambda^* I)$, which implies that $\psi$ is a left eigenvector of $\Lp$
  associated with $\lambda^*$. Moreover, $\phi\in\ran(P_*)$ implies $P_*\phi=\phi$.
  Since $x\in\Ccal^*$ and $\phi_{\s\Ccal^*}=\phi_*$, it follows that
  $\psi^\t\phi = e_x^\t P_* \phi/\phi_*(x) = e_x^\t\phi/\phi_*(x) = 1$.
  The block left-eigenvector equations imply that $\psi_{\s\Ccal^*}$ is a strictly
  positive left Perron eigenvector of $L_{k^*}$. Moreover, the support
  characterization in Theorem~\ref{thm:Pi-asym} shows that $\phi$ and $\psi$
  have overlapping class support only on $\Ccal^*$. Consequently,
  \(
    1 = \phi^\t\psi = \phi_*^\t\psi_{\s\Ccal^*}.
  \)
  The normalization and uniqueness of the left Perron vector of $L_{k^*}$ thus
  give $\psi_{\s\Ccal^*}=\psi_*$. Restricting $\psi(\epsilon)\to\psi$ to $\Ccal^*$
  completes the proof.
\end{proof}

\begin{lemma}[Rarity orders on nondominant eigenvector blocks] \label{lem:rp}
  Under condition \eqref{eq:gvd} and Assumption~\ref{ass:edge}, for every
  $\Ccal\in\cC\setminus\{\Ccal^*\}$, there exist
  $\bar\phi_{\s\Ccal},\bar\psi_{\s\Ccal}\gg0$ such that
  \begin{equation} \label{eq:rare-path-lemma}
    \phi_{\s\Ccal}(\epsilon)
    =
    \epsilon^{d_{\s\Ccal\Ccal^*}}\big(\bar\phi_{\s\Ccal} + o(1)),
    \qquad
    \psi_{\s\Ccal}(\epsilon)
    =
    \epsilon^{d_{\s\Ccal^*\Ccal}}\big(\bar\psi_{\s\Ccal} + o(1)),
    \qquad
    \text{as }\epsilon\downarrow0,
  \end{equation}
  where $d_{\s\Ccal\Ccal^*},d_{\s\Ccal^*\Ccal}\geq0$ are the exponents defined in
  \eqref{eq:rdm}.
\end{lemma}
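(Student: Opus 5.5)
The plan is to obtain the orders from the block eigenvector equations by a fixed-point, or ``Schur complement'', representation on the complement of $\Ccal^*$. I would then induct along minimum-weight paths, using positivity of every term to rule out cancellation. I treat $\phi$ in detail; the $\psi$ statement follows by applying the same argument to $\Lp(\epsilon)^\t$. Transposition reverses the edges of $\Gcal_w$ and swaps $d_{\s\Ccal\Ccal^*}$ with $d_{\s\Ccal^*\Ccal}$, while Lemma~\ref{lem:rp*} gives $\psi_{\s\Ccal^*}(\epsilon)\to\psi_*\gg0$ as the anchor.

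Let $\mathsf Y\coloneqq\Xsf\setminus\Ccal^*$ and write $L_{\mathsf Y}(\epsilon)$ for the principal submatrix of $\Lp(\epsilon)$ on $\mathsf Y$. The spectral bound satisfies $s(L_{\mathsf Y}(0))=\max_{k\neq k^*}\lambda_k<\lambda^*$ by \eqref{eq:gvd}. Since $L_{\mathsf Y}(\epsilon)\to L_{\mathsf Y}(0)$ and $\lambda^*(\epsilon)\to\lambda^*$ by Lemma~\ref{lem:rp*}, the gap $\lambda^*(\epsilon)-s(L_{\mathsf Y}(\epsilon))$ stays bounded below for small $\epsilon$. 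Hence the resolvent $R_\epsilon\coloneqq R(\lambda^*(\epsilon);L_{\mathsf Y}(\epsilon))=\int_0^\infty \e^{-\lambda^*(\epsilon)t}\e^{tL_{\mathsf Y}(\epsilon)}\diff t$ is nonnegative. The rows of the eigenvector equation indexed by $\mathsf Y$ then give the exact identity
\[
  \phi_{\mathsf Y}(\epsilon)=R_\epsilon\,\Lp(\epsilon)(\mathsf Y,\Ccal^*)\,\phi_{\s\Ccal^*}(\epsilon).
\]
Because $\mathsf Y$ decomposes into limiting classes, I would expand $R_\epsilon$ as a Neumann/path sum in the spirit of Lemma~\ref{lem:res-exp}. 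Splitting $L_{\mathsf Y}(\epsilon)=D_\epsilon+K_\epsilon$ into block-diagonal and off-diagonal parts, with $\rho(K_\epsilon R(\lambda^*(\epsilon);D_\epsilon))<1$ not guaranteed uniformly because of cycles at small weight, I would use $R_\epsilon=\sum_{p\geq0}(R_{D}K)^pR_D$. The series converges since $R_\epsilon$ exists and all terms are nonnegative: the partial sums are bounded by $R_\epsilon$ via the usual monotone argument for nonnegative splittings. Each block $(\Ccal,\Ccal')$ of $R_\epsilon$ is then a nonnegative sum over walks in $\Gcal_w$ restricted to $\mathsf Y$. Every walk term is $\epsilon^{(\text{walk weight})}$ times a product of block resolvents $R(\lambda^*(\epsilon);L_{kk}(\epsilon))\to R(\lambda^*;L_k)\gg0$ and coupling blocks $K_{k\ell}+O(\eta)$.

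Substituting the resolvent expansion, the $\Ccal$-block of $\phi(\epsilon)$ becomes a nonnegative sum over walks from $\Ccal$ through $\mathsf Y$ and then one final edge into $\Ccal^*$. The walks of minimal weight are exactly the paths realizing $d_{\s\Ccal\Ccal^*}$; any walk containing a cycle has strictly larger weight, because every cycle through distinct limiting classes must contain a rare edge, the zero-weight graph being acyclic. The leading term is therefore $\epsilon^{d_{\s\Ccal\Ccal^*}}$ times a finite sum, over minimal paths, of products of the form $R(\lambda^*;L_{k_0})K_{k_0k_1}R(\lambda^*;L_{k_1})\cdots K_{k_{p}k^*}\phi_*$. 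Each such product is $\gg0$ by the argument of Lemma~\ref{lem:res-exp}: a strictly positive matrix times a nonzero nonnegative matrix times a strictly positive vector is strictly positive. This defines $\bar\phi_{\s\Ccal}\gg0$.

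The main obstacle is uniform control of the remainder. I must show that the total of all non-minimal walks is $o(\epsilon^{d_{\s\Ccal\Ccal^*}})$, even though there are infinitely many walks once rare cycles exist. I would handle this by grouping walks according to their simple-path skeleton and bounding the cycle excursions by a geometric series in $\epsilon^{w_{\min}}$. Concretely, I would show that $\|K_\epsilon R_{D_\epsilon}\|$ restricted to rare edges is $O(\epsilon^{w_{\min}})$, while the zero-weight part is nilpotent. Then, for small $\epsilon$, the contribution of any walk returning to a visited class is bounded by $C\epsilon^{w_{\min}}$ times a skeleton term. Together with the $O(\eta(\epsilon))$ coefficient errors and $\phi_{\s\Ccal^*}(\epsilon)\to\phi_*$, this yields the $o(1)$ term in \eqref{eq:rare-path-lemma}.
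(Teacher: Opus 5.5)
Your proposal is correct and follows essentially the paper's proof: the same Schur-complement identity on $\Xsf\setminus\Ccal^*$, a class-path resolvent expansion whose minimal-weight terms are strictly positive, a tail bound geometric in $\epsilon^{w_{\min}}$, and transposition for $\psi$. The paper does your remainder regrouping up front. It resolves the zero-weight part exactly through $R_0(\epsilon)=R(\lambda^*(\epsilon);M_0(\epsilon))$ and Neumann-expands only in the rare blocks $E(\epsilon)$, so $\|E(\epsilon)R_0(\epsilon)\|=O(\epsilon^{w_{\min}})$ gives the uniform tail at once (incidentally, $\rho(K_\epsilon R(\lambda^*(\epsilon);D_\epsilon))\to0$ because its limit is nilpotent, so only the norm fails to be small).
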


\begin{proof}[Proof of Lemma~\ref{lem:rp}]
  Let $\cC^\circ\coloneqq\cC\setminus\{\Ccal^*\}$ and $\Xsf^\circ =
  \Xsf\setminus\Ccal^*$. After a fixed simultaneous permutation of block rows and
  columns, place the block indexed by $\Ccal^*$ first and write
  \[
    \Lp(\epsilon)
    =
    \begin{pmatrix}
      L_{k^*k^*}(\epsilon) & c(\epsilon)\\
      b(\epsilon)          & M(\epsilon)
    \end{pmatrix},
    \qquad
    \phi(\epsilon)
    =
    \begin{pmatrix}
      \phi_{\s\Ccal^*}(\epsilon)\\ \phi_{\s\Xsf^\circ}(\epsilon)
    \end{pmatrix}.
  \]
  Here, $M(\epsilon)$ is the principal submatrix indexed by $\Xsf^\circ$
  and the block column $b(\epsilon)$ has entries $L_{k k^*}(\epsilon)$ for
  $\Ccal_k\in\cC^\circ$. The lower block of the eigenvector equation is
  \begin{equation}\label{eq:rce}
    \big(\lambda^*(\epsilon)I-M(\epsilon)\big)\,\phi_{\s\Xsf^\circ}(\epsilon) 
    =
    b(\epsilon)\, \phi_{\s\Ccal^*}(\epsilon).
  \end{equation}

  Let $M_0(\epsilon)$ retain the diagonal blocks of $M(\epsilon)$ and those
  off-diagonal blocks whose edge-rarity exponent is zero, and set
  \[
    E(\epsilon)\coloneqq M(\epsilon)-M_0(\epsilon).
  \]
  Thus $E(\epsilon)$ has zero diagonal blocks and retains exactly the
  positive finite-weight interclass blocks of $M(\epsilon)$. It represents the rare
  edges among the classes in $\cC^\circ$, whereas $M_0(\epsilon)$
  contains their within-class dynamics and non-rare interclass edges.
  In particular, $M_0(0)=M(0)$.

  By \eqref{eq:gvd}, $s(M(0)) =\max_{\Ccal_k\in\cC^\circ}s(L_k) <\lambda^*$.
  Lemma~\ref{lem:rp*} gives $\lambda^*(\epsilon)\to\lambda^*$, while
  $M_0(\epsilon)\to M(0)$ and continuity of the spectral bound give
  $s(M_0(\epsilon))\to s(M(0))$. Therefore,
  $\lambda^*(\epsilon)>s(M_0(\epsilon))$ for all sufficiently small $\epsilon>0$.
  Hence the resolvent of $M_0(\epsilon)$ at $\lambda^*(\epsilon)$ admits the Laplace
  representation:
  \[
    R_0(\epsilon)
    \coloneqq
    R\bigl(\lambda^*(\epsilon);M_0(\epsilon)\bigr)
    = \int_0^\infty \e^{-\lambda^*(\epsilon)\,t} \e^{t M_0(\epsilon)}\diff t
    \geq0,
    \qquad \epsilon>0\text{ sufficiently small.}
  \]
  Moreover,
  \(
    R_0(\epsilon) \to R(\lambda^*;M(0))\eqqcolon R_0(0)
  \)
  as $\epsilon\downarrow0$, so the family $\{R_0(\epsilon)\}$ is uniformly bounded.

  For all sufficiently small $\epsilon\geq0$, the positive within-class edges and
  the zero-weight interclass blocks of the limiting matrix persist. Hence the
  diagonal blocks of $M_0(\epsilon)$ remain irreducible, and its class-level graph
  is exactly the zero-weight subgraph induced by $\cC^\circ$. After a simultaneous
  permutation of these classes into Frobenius order, Lemma~\ref{lem:res-exp}
  implies that the $(\Ccal,\Ccal')$ block of $R_0(\epsilon)$ is strictly positive
  if $\Ccal\leadsto_0\Ccal'$ and is zero otherwise.

  Let $w_{\mathrm m} = \min\{w_{h\ell}:h\neq\ell,\ 0<w_{h\ell}<\infty\}>0$. Such a positive
  minimum exists because $\Gcal_w$ is strongly connected whereas its
  zero-weight subgraph is reducible. Since
  $\|E(\epsilon)\|=O(\epsilon^{w_{\mathrm m}})$, the resolvent identity gives the
  uniformly convergent Neumann expansion
  \begin{equation} \label{eq:rare-neumann}
    R\bigl(\lambda^*(\epsilon);M(\epsilon)\bigr)
    =
    R_0(\epsilon) \sum_{n=0}^{\infty} \big(E(\epsilon)R_0(\epsilon)\big)^n.
  \end{equation}
  Indeed, uniform boundedness of $R_0(\epsilon)$ gives
  $\|E(\epsilon)R_0(\epsilon)\|=O(\epsilon^{w_{\mathrm m}})<1$ for all sufficiently
  small $\epsilon$.

  Combining \eqref{eq:rce} and \eqref{eq:rare-neumann} gives
  \begin{equation} \label{eq:rare-right-series}
    \phi_{\s\Xsf^\circ}(\epsilon)
    =
    \sum_{n=0}^\infty
    R_0(\epsilon) \big(E(\epsilon)R_0(\epsilon)\big)^n\,
    b(\epsilon)\phi_{\s\Ccal^*}(\epsilon).
  \end{equation}
  Because the series converges in norm, each block may be expanded term by term
  over the finitely many intermediate class indices.
  For $k\neq\ell$, write
  $w_{\s\Ccal_k\Ccal_\ell}\coloneqq w_{k\ell}$. Fix
  $\Ccal\in\cC^\circ$. For $n\geq0$, let
  $\mathscr A_{\s\Ccal,n}$ consist of the tuples
  \(
    \iota=(\Ccal_0^+,\Ccal_1^-,\Ccal_1^+,\ldots,\Ccal_n^-,\Ccal_n^+)
  \)
  in $(\cC^\circ)^{2n+1}$ for which
  \begin{align*}
    \Ccal\leadsto_0\Ccal_0^+,
    \quad
    \Ccal_j^-\neq\Ccal_{j-1}^+,
    \quad \ 0<w_{\s\Ccal_{j-1}^+\Ccal_j^-}<\infty,
    \quad
    \Ccal_j^-\leadsto_0\Ccal_j^+,
    \quad j=1,\ldots,n, \quad
    w_{\s\Ccal_n^+\Ccal^*}<\infty.
  \end{align*}
  where $\leadsto_0$ denotes accessibility in the zero-weight subgraph induced by
  $\cC^\circ$, including a path of length zero. Here $\Ccal_j^-$ is the class
  reached by the $j$-th rare edge, and $\Ccal_j^+$ is the class reached after the
  following zero-weight path segment. For $n\geq1$, $\iota$ represents a walk of
  the form
  \[
    \Ccal
    \leadsto_0\Ccal_0^+
    \xrightarrow{\,w>0\,}\Ccal_1^-
    \leadsto_0\Ccal_1^+
    \xrightarrow{\,w>0\,}\cdots\xrightarrow{\,w>0\,}
    \Ccal_n^-
    \leadsto_0\Ccal_n^+
    \xrightarrow{\,w\geq0\,}\Ccal^*,
  \]
  where the labeled arrows are single edges. For $n=0$, the walk is
  \(
    \Ccal\leadsto_0\Ccal_0^+
    \xrightarrow{\,w\geq0\,}\Ccal^*.
  \)
  The block products in \eqref{eq:rare-right-series} are precisely
  \begin{equation} \label{eq:rare-block-product}
    T_\iota(\epsilon)
    \coloneqq
    [R_0(\epsilon)]_{\s\Ccal\Ccal_0^+}
    \prod_{j=1}^n
    \left(
      [E(\epsilon)]_{\s\Ccal_{j-1}^+\Ccal_j^-}
      [R_0(\epsilon)]_{\s\Ccal_j^-\Ccal_j^+}
    \right)
    b_{\s\Ccal_n^+}(\epsilon)
    \phi_{\s\Ccal^*}(\epsilon),
  \end{equation}
  with the product ordered from $j=1$ to $j=n$ and interpreted as the identity
  when $n=0$. Thus
  \begin{equation} \label{eq:rare-block-sum}
    \phi_{\s\Ccal}(\epsilon)
    =
    \sum_{n=0}^\infty
    \sum_{\iota\in\mathscr A_{\s\Ccal,n}}
    T_\iota(\epsilon).
  \end{equation}
  Associate with $\iota\in\mathscr A_{\s\Ccal,n}$ the total edge weight
  \begin{equation} \label{eq:rare-block-exponent}
    w(\iota)
    \coloneqq
    \sum_{j=1}^n w_{\s\Ccal_{j-1}^+\Ccal_j^-}
    +w_{\s\Ccal_n^+\Ccal^*}.
  \end{equation}
  Assumption~\ref{ass:edge}, Lemma~\ref{lem:rp*}, and
  $R_0(\epsilon)\to R_0(0)$ imply, for each fixed $\iota$,
  \begin{equation} \label{eq:rare-block-limit}
    T_\iota(\epsilon)
    =
    \epsilon^{w(\iota)}
    \bigl(H_\iota+o(1)\bigr),
    \qquad H_\iota\gg0.
  \end{equation}
  Reading the limiting product from right to left proves $H_\iota\gg0$: a nonzero
  nonnegative edge matrix sends a strictly positive vector to a nonzero nonnegative
  vector, and the next strictly positive resolvent block restores strict positivity.
  Each $\iota$ also encodes at least one directed walk from $\Ccal$ to $\Ccal^*$ with
  weight $w(\iota)$; deleting cycles produces a path with no greater weight.
  Conversely, separate the final edge into $\Ccal^*$ and decompose the preceding
  part of any directed path into maximal zero-weight segments and intervening
  positive-weight edges. This produces an element of some
  $\mathscr A_{\s\Ccal,n}$ with the same weight. Therefore,
  \begin{equation} \label{eq:rare-minimum-block-exponent}
    \min_{\iota\in\bigcup_{n\geq0}\mathscr A_{\s\Ccal,n}}w(\iota)
    =
    \min_{\gamma\in\dD_w(\Ccal,\Ccal^*)}W(\gamma)
    =
    d_{\s\Ccal\Ccal^*}.
  \end{equation}
  Choose $N_{\s\Ccal}\in\NN_0$ such that
  $(N_{\s\Ccal}+1)w_{\mathrm m}>d_{\s\Ccal\Ccal^*}$. Uniform boundedness of
  $R_0(\epsilon)$, $b(\epsilon)$, and $\phi_{\s\Ccal^*}(\epsilon)$, together with
  $\|E(\epsilon)\|=O(\epsilon^{w_{\mathrm m}})$, gives
  \begin{equation} \label{eq:rare-neumann-tail}
    \sum_{n=N_{\s\Ccal}+1}^\infty
    \left\|\left[
      R_0(\epsilon)\big(E(\epsilon)R_0(\epsilon)\big)^n
      b(\epsilon)\phi_{\s\Ccal^*}(\epsilon)
    \right]_{\s\Ccal}\right\|
    \leq
    C_0\sum_{n=N_{\s\Ccal}+1}^\infty
    \bigl(C_1\epsilon^{w_{\mathrm m}}\bigr)^n
    =O\bigl(\epsilon^{(N_{\s\Ccal}+1)w_{\mathrm m}}\bigr)
    =o\bigl(\epsilon^{d_{\s\Ccal\Ccal^*}}\bigr).
  \end{equation}
  Since $w(\iota)\geq n w_{\mathrm m}$ for
  $\iota\in\mathscr A_{\s\Ccal,n}$, every minimizer in
  \eqref{eq:rare-minimum-block-exponent} occurs with $n\leq N_{\s\Ccal}$. Define
  \begin{equation*}
    \mathscr A_{\s\Ccal}^{\min}
    \coloneqq
    \left\{
      \iota\in\bigcup_{n=0}^{N_{\s\Ccal}}\mathscr A_{\s\Ccal,n}:
      w(\iota)=d_{\s\Ccal\Ccal^*}
    \right\},
    \qquad
    \bar\phi_{\s\Ccal}
    \coloneqq
    \sum_{\iota\in\mathscr A_{\s\Ccal}^{\min}}H_\iota.
  \end{equation*}
  This set is nonempty because the minimum over directed paths is attained, and
  \eqref{eq:rare-block-limit} gives $\bar\phi_{\s\Ccal}\gg0$. The part of
  \eqref{eq:rare-block-sum} with $n\leq N_{\s\Ccal}$ is finite: after division by
  $\epsilon^{d_{\s\Ccal\Ccal^*}}$, its minimum-weight terms converge to
  $\bar\phi_{\s\Ccal}$ and all other terms vanish. Combining this fact with
  \eqref{eq:rare-neumann-tail} yields
  \(
    \epsilon^{-d_{\s\Ccal\Ccal^*}}
    \phi_{\s\Ccal}(\epsilon) \to \bar\phi_{\s\Ccal}\gg0,
  \)
  proving the first expansion in \eqref{eq:rare-path-lemma}.

  For the left eigenvector, apply the same path-expansion argument to
  $\Lp(\epsilon)^\t$. Transposition preserves the limiting classes and their
  spectral bounds while reversing every edge without changing its weight. Thus
  the relevant minimum weight is $d_{\s\Ccal^*\Ccal}$ in the original graph.
  Lemma~\ref{lem:rp*} also gives
  $\psi_{\s\Ccal^*}(\epsilon)\to\psi_*\gg0$, the only dominant-block convergence
  used above. Consequently, there exists $\bar\psi_{\s\Ccal}\gg0$ such that
  \(
    \epsilon^{-d_{\s\Ccal^*\Ccal}}
    \psi_{\s\Ccal}(\epsilon) \to \bar\psi_{\s\Ccal},
  \)
  proving the second expansion in \eqref{eq:rare-path-lemma}.
\end{proof}

\begin{proof}[Proof of Theorem~\ref{thm:rare-distance}]
  Lemma~\ref{lem:rp*} gives $\lambda^*(\epsilon)\to\lambda^*$ as
  $\epsilon\downarrow0$. For the dominant class, set
  \(
    \bar\phi_{\s\Ccal^*}\coloneqq\phi_*
  \)
  and
  \(
    \bar\psi_{\s\Ccal^*}\coloneqq\psi_*,
  \)
  where $\phi_*$ and $\psi_*$ are defined in Lemma~\ref{lem:rp*}.
  Then, Lemma~\ref{lem:rp*}, together with $d_{\s\Ccal^*\Ccal^*}=0$, and
  Lemma~\ref{lem:rp} imply that for each $\Ccal\in\cC$, there are vectors
  $\bar\phi_{\s\Ccal},\bar\psi_{\s\Ccal}\gg0$ such that
  \begin{equation}\label{eq:pps}
    \phi_{\s\Ccal}(\epsilon)
    =
    \epsilon^{d_{\s\Ccal\Ccal^*}} \bigl(\bar\phi_{\s\Ccal}+o(1)\bigr),
    \qquad
    \psi_{\s\Ccal}(\epsilon)
    =
    \epsilon^{d_{\s\Ccal^*\Ccal}} \bigl(\bar\psi_{\s\Ccal}+o(1)\bigr),
    \qquad
    \text{as }\epsilon\downarrow0.
  \end{equation}
  The normalization imposed at the beginning of this subsection gives
  \(
    P_*(\epsilon)=\phi(\epsilon)\psi(\epsilon)^\t
  \)
  and
  \(
    P_{*,\s\Ccal\Ccal'}(\epsilon)
    =\phi_{\s\Ccal}(\epsilon)\psi_{\s\Ccal'}(\epsilon)^\t.
  \)
  Therefore, defining
  \(
    \bar P_{\s\Ccal\Ccal'}
    \coloneqq
    \bar\phi_{\s\Ccal}\bar\psi_{\s\Ccal'}^\t\gg0,
  \)
  equation~\eqref{eq:pps} yields
  \[
    \epsilon^{-\left(
      d_{\s\Ccal\Ccal^*}+d_{\s\Ccal^*\Ccal'}
    \right)}
    P_{*,\s\Ccal\Ccal'}(\epsilon)
    \to
    \bar P_{\s\Ccal\Ccal'},
    \qquad\text{as } \epsilon\downarrow0,
  \]
  which proves \eqref{eq:rp}.
  By the definition $\pi_\epsilon^{\mathrm{LR}}(x)=P_*(\epsilon)(x,x)$,
  \(
    \sum_{x\in\Ccal}\pi_\epsilon^{\mathrm{LR}}(x)
    =
    \operatorname{tr}P_{*,\s\Ccal\Ccal}(\epsilon).
  \)
  Taking $\Ccal'=\Ccal$ in \eqref{eq:rp} and taking traces then gives \eqref{eq:rv}.

  Finally, if $\Ccal\neq\Ccal^*$, then $d_{\s\Ccal\Ccal^*}+d_{\s\Ccal^*\Ccal}>0$.
  Otherwise, nonnegativity of the edge weights would give zero-weight paths in both
  directions between two distinct limiting classes, which yields a contradiction.
  Equation~\eqref{eq:rv} therefore implies
  \(
    \sum_{x\in\Ccal}\pi_\epsilon^{\mathrm{LR}}(x)\to0
  \)
  for every $\Ccal\neq\Ccal^*$. Since $P_*(\epsilon)$ is a rank-one projection,
  \(
    \sum_{x\in\Xsf}\pi_\epsilon^{\mathrm{LR}}(x)
    =\operatorname{tr}P_*(\epsilon)=1.
  \)
  Hence
  \(
    \sum_{x\in\Ccal^*}\pi_\epsilon^{\mathrm{LR}}(x)
    =1-\sum_{x\in\Xsf\setminus\Ccal^*}\pi_\epsilon^{\mathrm{LR}}(x) \to 1,
  \)
  as $\epsilon\downarrow0$.
\end{proof}

We summarize the projection notation introduced in Sections~\ref{sec:rare-entry} and
\ref{sec:crossover} that will be used below. For every sufficiently small
$\epsilon>0$, $P_*(\epsilon)$ is the rank-one Riesz projection of $\Lp(\epsilon)$
associated with its Perron eigenvalue $\lambda^*(\epsilon)$, whereas
$P_\circ(\epsilon)$ is the Riesz projection of $\Lp(\epsilon)$ associated with the
spectral set $\Sigma_\circ(\epsilon)$ defined in \eqref{eq:Sc}. At $\epsilon=0$,
these projections satisfy $P_*(0)=P_*$ and $P_\circ(0)=P_\circ$, where $P_*$ and
$P_\circ$ are Riesz projections of $\Lp$ associated with its eigenvalue $\lambda^*$
and spectral set $\Sigma_\circ$, respectively.
Moreover, $s_\circ = s\big(\Lp\,|_{\ker P_*}\big)$.

\begin{lemma} \label{lem:sd}
  Under Assumption~\ref{ass:edge} and the unique-dominant-class condition
  \eqref{eq:gvd}, there exist constants $\epsilon_0,C,\delta>0$ such that, for
  every $\epsilon\in(0,\epsilon_0]$ and $t\geq0$, setting
  \(
    \mathcal E_\epsilon(t)
    \coloneqq
    \e^{t\Lp(\epsilon)}\big(I-P_*(\epsilon)-P_\circ(\epsilon)\big),
  \)
  we have
  \begin{equation} \label{eq:cs}
    \e^{t\Lp(\epsilon)}
    =
    \e^{\lambda^*(\epsilon)t}P_*(\epsilon) 
    +
    \e^{t\Lp(\epsilon)}P_\circ(\epsilon) 
    +
    \mathcal E_\epsilon(t),
    \qquad
    \|\mathcal E_\epsilon(t)\| \leq C\e^{(s_\circ-\delta)t}.
  \end{equation}
  Moreover, as $\epsilon\downarrow0$,
  \begin{equation} \label{eq:cp}
    \|P_*(\epsilon)-P_*\| +\|P_\circ(\epsilon)-P_\circ\|
    =
    O(\|\Lp(\epsilon)-\Lp\|),
    \qquad
    \lambda^*(\epsilon)
    =
    \lambda^*+O(\|\Lp(\epsilon)-\Lp\|).
  \end{equation}
\end{lemma}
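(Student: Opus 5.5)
The plan is to use finite-dimensional Riesz-projection perturbation theory, working throughout with one fixed set of contours for all small $\epsilon$. First choose a small circle $\Gamma_*$ around $\lambda^*$ and the boundary $\Gamma_\circ=\partial U_\circ$, so that both stay at positive distance from $\sigma(\Lp)$. The resolvent $R(z;\Lp)$ is then uniformly bounded on the compact set $K=\Gamma_*\cup\Gamma_\circ$. For $\|\Lp(\epsilon)-\Lp\|\sup_K\|R(z;\Lp)\|<1/2$ we have the Neumann identity $R(z;\Lp(\epsilon))=R(z;\Lp)\sum_n[(\Lp(\epsilon)-\Lp)R(z;\Lp)]^n$, which gives
\[
\sup_{z\in K}\|R(z;\Lp(\epsilon))-R(z;\Lp)\|\le C\|\Lp(\epsilon)-\Lp\|.
\]
Integrating this bound over the contours yields the projection estimate in \eqref{eq:cp}, and the enclosed spectrum keeps its total algebraic multiplicity, so $P_*(\epsilon)$ has rank one. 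The eigenvalue estimate follows from the trace identity $\lambda^*(\epsilon)=\tr(\Lp(\epsilon)P_*(\epsilon))$, since both factors are Lipschitz in $\|\Lp(\epsilon)-\Lp\|$. I also need the enclosed eigenvalue to be the Perron root $s(\Lp(\epsilon))$. This follows from continuity of the spectral bound, because the eigenvalues outside $\Gamma_*$ stay near points with real part at most $s_\circ<\lambda^*$.

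The decomposition in \eqref{eq:cs} is an algebraic identity: $P_*(\epsilon)$ and $P_\circ(\epsilon)$ are complementary commuting projections for disjoint spectral sets, and $\e^{t\Lp(\epsilon)}P_*(\epsilon)=\e^{\lambda^*(\epsilon)t}P_*(\epsilon)$ because the range is one-dimensional. So the only real content is the uniform bound on $\mathcal E_\epsilon(t)$. Set $Q(\epsilon)=I-P_*(\epsilon)-P_\circ(\epsilon)$. At $\epsilon=0$, the restriction of $\Lp$ to $\ran Q(0)$ has spectral bound strictly below $s_\circ$, because $\overline{U_\circ}\cap\sigma(\Lp)=\Sigma_\circ$ and finitely many eigenvalues remain. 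Fix $\delta>0$ with $2\delta$ less than that gap. Then choose a contour $\Gamma_r$ enclosing $\sigma(\Lp)\setminus(\{\lambda^*\}\cup\Sigma_\circ)$, lying in the half-plane $\Re z\le s_\circ-\delta$, and avoiding $\overline{U_\circ}$ and the disk inside $\Gamma_*$. The Riesz representation gives
\[
\e^{t\Lp(\epsilon)}Q(\epsilon)=\frac{1}{2\pi\mathrm i}\int_{\Gamma_r}\e^{zt}R(z;\Lp(\epsilon))\,\diff z,
\]
valid once the perturbed spectrum outside $\Gamma_*$ and $U_\circ$ lies inside $\Gamma_r$, which again follows from resolvent continuity on compacts. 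With $C=\mathrm{length}(\Gamma_r)\sup_{\epsilon,\,z\in\Gamma_r}\|R(z;\Lp(\epsilon))\|/(2\pi)$, this gives $\|\mathcal E_\epsilon(t)\|\le C\e^{(s_\circ-\delta)t}$ for all $t\ge0$.

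The main obstacle is ensuring that no perturbed eigenvalue escapes into the strip between $\Gamma_r$ and $U_\circ$, so that the contour $\Gamma_r$ stays valid. I would handle this by choosing $U_\circ$, or shrinking it if permitted, so that $\overline{U_\circ}$ together with the region inside $\Gamma_r$ and the disk inside $\Gamma_*$ covers a neighborhood of $\sigma(\Lp)$. Upper semicontinuity of the spectrum under the norm convergence $r_L(\epsilon)\to0$ then places $\sigma(\Lp(\epsilon))$ in this neighborhood for all $\epsilon\le\epsilon_0$.
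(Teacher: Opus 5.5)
Your proposal is correct and is essentially the paper's proof. Both use fixed contours around $\lambda^*$, around $\Sigma_\circ$, and around the remaining eigenvalues inside $\{\Re z\le s_\circ-\delta\}$; a Neumann-series argument giving uniform boundedness and an $O(\|\Lp(\epsilon)-\Lp\|)$ difference bound for $R(z;\Lp(\epsilon))$ on those contours; the contour integral of $\e^{tz}R(z;\Lp(\epsilon))$ for the remainder bound; and the rank-one trace identity for $\lambda^*(\epsilon)$.

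The differences are cosmetic. The paper draws $\Gamma_\circ$ strictly inside $U_\circ$ rather than using $\partial U_\circ$, which need not be a rectifiable contour. Also, your ``main obstacle'' needs no shrinking of $U_\circ$: small disks around the points of $\sigma(\Lp)$ already lie in the disk bounded by $\Gamma_*$, in $U_\circ$, or inside your $\Gamma_r$, and for small $\epsilon$ the perturbed spectrum stays in those disks.
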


\begin{proof}[Proof of Lemma~\ref{lem:sd}]
  Choose disjoint positively oriented contours $\Gamma_*$ and $\Gamma_\circ$
  such that $\Gamma_*$ encloses $\{\lambda^*\}$ and no other point of
  $\sigma(\Lp)$, while $\Gamma_\circ$ satisfies $\Sigma_\circ \subset
  \operatorname{int}(\Gamma_\circ)$ and $\overline{\operatorname{int}(\Gamma_\circ)}
  \subset U_\circ$.
  Let 
  $\Sigma_\dagger \coloneqq \sigma(\Lp)\setminus(\{\lambda^*\}\,\cup\,\Sigma_\circ)$.
  If $\Sigma_\dagger\neq\varnothing$, finiteness of the spectrum permits a
  $\delta>0$ and a positively oriented contour $\Gamma_\dagger$, disjoint from
  $\Gamma_*$ and $\Gamma_\circ$, such that
  \begin{equation} \label{eq:rcb}
    \Sigma_\dagger\subset\operatorname{int}(\Gamma_\dagger),
    \qquad
    \max_{z\in\Gamma_\dagger}\Re z\leq s_\circ-\delta.
  \end{equation}
  If $\Sigma_\dagger=\varnothing$, fix any $\delta>0$ and set $\Gamma_\dagger =\varnothing$.

  Since $\lambda^*(\epsilon)\to\lambda^*$, as shown in
  Theorem~\ref{thm:rare-distance}, and the spectrum is continuous under
  finite-dimensional matrix perturbations, there exists $\epsilon_0>0$ such that,
  for every $\epsilon\in[0,\epsilon_0]$,
  \[
    \sigma(\Lp(\epsilon))
    \cap\operatorname{int}(\Gamma_*)
    =
    \{\lambda^*(\epsilon)\},
    \qquad
    \sigma(\Lp(\epsilon))
    \cap\operatorname{int}(\Gamma_\circ)
    =
    \Sigma_\circ(\epsilon).
  \]
  When $\Sigma_\dagger\neq\varnothing$, we may also choose $\epsilon_0$ so that
  \(
    \sigma(\Lp(\epsilon))
    \cap\operatorname{int}(\Gamma_\dagger)
    =
    \sigma(\Lp(\epsilon))
    \setminus
    \bigl(\{\lambda^*(\epsilon)\}\cup\Sigma_\circ(\epsilon)\bigr).
  \)
  Consequently, for every $\epsilon\in[0,\epsilon_0]$, the Riesz projections
  $P_*(\epsilon)$ and $P_\circ(\epsilon)$ have the contour integral representations
  \begin{equation}\label{eq:crep}
    P_*(\epsilon)
    =
    \frac{1}{2\pi i}
    \int_{\Gamma_*}R(z;\Lp(\epsilon))\,\diff z,
    \qquad
    P_\circ(\epsilon)
    =
    \frac{1}{2\pi i}
    \int_{\Gamma_\circ}R(z;\Lp(\epsilon))\,\diff z.
  \end{equation}
  When $\Sigma_\dagger\neq\varnothing$, define additionally
  \(
    P_\dagger(\epsilon) \coloneqq
    (2\pi i)^{-1} \int_{\Gamma_\dagger} R(z;\Lp(\epsilon))\,\diff z
  \)
  for $\epsilon\in[0,\epsilon_0]$.
  In particular, for every $\epsilon\in(0,\epsilon_0]$, the spectral sets enclosed by $\Gamma_*$,
  $\Gamma_\circ$, and, when $\Sigma_\dagger\neq\varnothing$, by $\Gamma_\dagger$ form
  a partition of $\sigma(\Lp(\epsilon))$.
  Therefore, $I-P_*(\epsilon)-P_\circ(\epsilon)=P_\dagger(\epsilon)$ if
  $\Sigma_\dagger\neq\varnothing$ and equals zero if $\Sigma_\dagger=\varnothing$.
  Moreover, $\Lp(\epsilon)P_*(\epsilon) =\lambda^*(\epsilon)P_*(\epsilon)$.
  Multiplying the resulting decomposition of the identity by $\e^{t\Lp(\epsilon)}$
  proves the identity in \eqref{eq:cs}.

  Next, we prove the bound in \eqref{eq:cs}.
  If $\Sigma_\dagger=\varnothing$, then $I-P_*(\epsilon)-P_\circ(\epsilon)=0$, and
  hence $\mathcal E_\epsilon(t)=0$, so the bound is immediate. 
  Suppose now that
  $\Sigma_\dagger\neq\varnothing$ and hence 
  $I-P_*(\epsilon)-P_\circ(\epsilon) = P_\dagger(\epsilon)$. Let
  \[
    A_\epsilon
    \coloneqq
    \Lp(\epsilon)\big|_{\ran P_\dagger(\epsilon)},
    \qquad
    B_\epsilon
    \coloneqq
    \Lp(\epsilon)\big|_{\ker P_\dagger(\epsilon)}.
  \]
  Since $P_\dagger(\epsilon)$ is the Riesz projection associated with the spectrum
  enclosed by $\Gamma_\dagger$, under the invariant decomposition
  \(
    \CC^\Xsf
    =
    \ran P_\dagger(\epsilon)
    \oplus
    \ker P_\dagger(\epsilon),
  \)
  the resolvent decomposes into
  \begin{equation}\label{eq:Rz1}
    R(z;\Lp(\epsilon))
    =
    R(z;A_\epsilon)P_\dagger(\epsilon)
    +
    R(z;B_\epsilon)(I-P_\dagger(\epsilon)),
    \qquad z\in\CC\setminus\sigma(\Lp(\epsilon)).
  \end{equation}
  Since the contour \(\Gamma_\dagger\) encloses \(\sigma(A_\epsilon)\) but no point of
  \(\sigma(B_\epsilon)\), \(R(z;B_\epsilon)\) is holomorphic on
  an open neighborhood of \(\overline{\operatorname{int}(\Gamma_\dagger)}\).
  It follows from Cauchy’s theorem that
  \begin{equation}\label{eq:Rz2}
    \int_{\Gamma_\dagger} \e^{tz}
    R(z;B_\epsilon) \bigl(I-P_\dagger(\epsilon)\bigr)\diff z = 0.
  \end{equation}
  By \eqref{eq:Rz1}, \eqref{eq:Rz2}, and the Dunford--Taylor formula applied to
  $A_\epsilon$ with $f(z)=\e^{tz}$
  \citep[see, e.g.,][Chapter~I, equation~(5.47), p.~44]{kato1995perturbation}, we have
  \begin{equation}\label{eq:mE}
    \frac{1}{2\pi i} \int_{\Gamma_\dagger} \e^{tz} R(z;\Lp(\epsilon)) \diff z
    =
    \frac{1}{2\pi i} \int_{\Gamma_\dagger} \e^{tz} R(z;A_\epsilon) \diff z\,
    P_\dagger(\epsilon)
    =
    \e^{tA_\epsilon} P_\dagger(\epsilon)
    =
    \e^{t\Lp(\epsilon)} P_\dagger(\epsilon)
    =
    \mathcal E_\epsilon(t).
  \end{equation}

  To analyze the resolvent $R(z;\Lp(\epsilon))$ on the left side of \eqref{eq:mE}, we 
  set $\Delta_\epsilon\coloneqq\Lp(\epsilon)-\Lp$, and note 
  \begin{equation}\label{eq:zIL}
    zI-\Lp(\epsilon) = \bigl(I-\Delta_\epsilon R(z;\Lp)\bigr)(zI-\Lp).
  \end{equation}
  Let $\Gamma = \Gamma_*\cup\Gamma_\circ\cup\Gamma_\dagger$. Because \(\Gamma\) is
  compact and does not intersect \(\sigma(\Lp)\), the resolvent $R(z;\Lp)$ is
  continuous on \(\Gamma\). Hence 
  \begin{equation}\label{eq:mg}
    M_{\s\Gamma}\coloneqq\sup_{z\in\Gamma}\|R(z;\Lp)\|<\infty.
  \end{equation}
  Since \(\|\Delta_\epsilon\|\to0\) as $\epsilon\downarrow0$, reducing
  $\epsilon_0$ if necessary gives
  \[
    \sup_{0<\epsilon\leq\epsilon_0}
    \sup_{z\in\Gamma}
    \|\Delta_\epsilon R(z;\Lp)\|
    \leq
    M_{\s\Gamma}
    \sup_{0<\epsilon\leq\epsilon_0}
    \|\Delta_\epsilon\|
    <\frac12.
  \]
  Consequently, for every $\epsilon\in (0,\epsilon_0]$ and $z\in\Gamma$, $I-\Delta_\epsilon
  R(z;\Lp)$ is invertible, with
  \(
    \bigl(I-\Delta_\epsilon R(z;\Lp)\bigr)^{-1}
    =
    \sum_{n=0}^{\infty} \bigl(\Delta_\epsilon R(z;\Lp)\bigr)^n.
  \)
  Therefore, uniformly over $\epsilon\in (0,\epsilon_0]$ and $z\in\Gamma$,
  \[
    \big\|\big(I-\Delta_\epsilon R(z;\Lp)\big)^{-1}\big\|
    \leq \sum_{n=0}^\infty \|\Delta_\epsilon R(z;\Lp)\|^n 
    =\frac{1}{1-\|\Delta_\epsilon R(z;\Lp)\|}
    \leq 2.
  \]
  It then follows from \eqref{eq:zIL} that
  \(
    R(z;\Lp(\epsilon)) = R(z;\Lp)\bigl(I-\Delta_\epsilon R(z;\Lp)\bigr)^{-1}
  \)
  for every $\epsilon\in(0,\epsilon_0]$ and $z\in\Gamma$, and
  \begin{equation}\label{eq:urb1}
    \sup_{0<\epsilon\leq\epsilon_0}
    \sup_{z\in\Gamma}\|R(z;\Lp(\epsilon))\|
    \leq
    \sup_{z\in\Gamma}\|R(z;\Lp)\|\,
    \sup_{0<\epsilon\leq\epsilon_0}
    \sup_{z\in\Gamma}\big\|\big(I-\Delta_\epsilon R(z;\Lp)\big)^{-1}\big\|
    \leq 2 M_{\s\Gamma}.
  \end{equation}
  By \eqref{eq:rcb}, \eqref{eq:mE} and \eqref{eq:urb1}, there exists a constant $C>0$
  such that, for every $t\geq0$, 
  \begin{equation*}
    \sup_{0<\epsilon\leq\epsilon_0}
    \|\mathcal E_\epsilon(t)\|
    \leq
    \frac{|\Gamma_\dagger|}{2\pi}
    \sup_{0<\epsilon\leq\epsilon_0}
    \sup_{z\in\Gamma_\dagger}\|R(z;\Lp(\epsilon))\|
    \,
    \e^{t\max_{z\in\Gamma_\dagger}\Re z}
    \leq C\e^{(s_\circ-\delta)t},
  \end{equation*}
  where $|\Gamma_\dagger|$ denotes the total arclength of $\Gamma_\dagger$.
  This proves the bound in \eqref{eq:cs}.

  Finally, we prove the two estimates in \eqref{eq:cp}. For $j\in\{*,\circ\}$, it
  follows from \eqref{eq:crep} that $P_j(\epsilon)-P_j(0) = (2\pi
  i)^{-1}\int_{\Gamma_j} [R(z;\Lp(\epsilon)) - R(z;\Lp)]\diff z$.
  By the resolvent identity
  \(
    R(z;\Lp(\epsilon))-R(z;\Lp) = R(z;\Lp(\epsilon))\Delta_\epsilon R(z;\Lp),
  \)
  \eqref{eq:mg}, and \eqref{eq:urb1}, we have 
  \(
    \sup_{z\in\Gamma} \|R(z;\Lp(\epsilon))-R(z;\Lp)\|
    \leq
    2M_{\s\Gamma}^2\|\Delta_\epsilon\|
  \)
  for every $\epsilon\in(0,\epsilon_0]$. Therefore,
  \[
    \|P_j(\epsilon)-P_j(0)\|
    \leq
    \frac{|\Gamma_j|}{2\pi}
    \sup_{z\in\Gamma} \|R(z;\Lp(\epsilon))-R(z;\Lp)\|
    \leq
    |\Gamma_j| M_{\s\Gamma}^2\|\Delta_\epsilon\|/\pi 
    =
    O(\|\Delta_\epsilon\|),
  \]
  which proves the first estimate in \eqref{eq:cp}.
  For the second estimate in \eqref{eq:cp}, note that both $P_*(\epsilon)$ and $P_*$
  have rank one. Moreover, it follows from 
  $\Lp(\epsilon)P_*(\epsilon) = \lambda^*(\epsilon)P_*(\epsilon)$ and $\Lp P_* = \lambda^*P_*$
  that
  \[
    \lambda^*(\epsilon)
    =\tr\bigl(\Lp(\epsilon)P_*(\epsilon)\bigr),
    \qquad
    \lambda^*=\tr(\Lp P_*).
  \]
  Consequently,
  \(
    \lambda^*(\epsilon) - \lambda^*
    =
    \tr\big(\Lp(\epsilon)P_*(\epsilon) - \Lp P_* \big)
    =
    \tr\big(\Delta_\epsilon P_*(\epsilon)\big) 
    +
    \tr\big( \Lp (P_*(\epsilon)-P_*) \big),
  \)
  and hence 
  \[
    |\lambda^*(\epsilon) - \lambda^*|
    \leq
    \left|\tr\bigl(\Delta_\epsilon P_*(\epsilon)\bigr)\right|
    +
    \left|\tr\big( \Lp (P_*(\epsilon)-P_*) \big)\right|
    = O(\|\Delta_\epsilon\|).
  \]
  This proves the second estimate in \eqref{eq:cp} and completes the proof.
\end{proof}

\begin{proof}[Proof of Theorem~\ref{thm:cd}]
  Apply the operator identity \eqref{eq:cs} in Lemma~\ref{lem:sd} to $g$, evaluate
  at $x$, and set $R_{\epsilon,t}\coloneqq (\mathcal E_\epsilon(t)g)(x)$.
  This gives \eqref{eq:usg}; the bound on $\mathcal E_\epsilon(t)$ gives its uniform
  remainder bound after enlarging the constant. The eigenvalue estimate in \eqref{eq:ggo} follows from
  \eqref{eq:cp} and the definition of $r_L(\epsilon)$, while \eqref{eq:rpl} gives
  $a_*(x,g;\epsilon)=a\epsilon^d(1+o(1))$ for some $a>0$. This proves
  \eqref{eq:ggo}.

  Fix $T<\infty$. Duhamel's identity and
  $\|\Lp(\epsilon)-\Lp\|=r_L(\epsilon)$ give
  $\e^{t\Lp(\epsilon)}-\e^{t\Lp}=O(r_L(\epsilon))$ uniformly over
  $0\leq t\leq T$. Together with the projection estimate in \eqref{eq:cp}, this
  proves the stated compact-maturity convergence of $b_t(x,g;\epsilon)$.

  To identify its limit, fix $\lambda\in\Sigma_\circ$ and set
  $N_\lambda\coloneqq(\Lp-\lambda I)P_\lambda$. On the invariant subspace
  $\ran P_\lambda$, the operator $\Lp$ equals $\lambda I+N_\lambda$, where
  $N_\lambda^{\nu_\lambda}=0$. Hence
  \(
    \e^{t\Lp}P_\lambda
    =
    \e^{\lambda t}
    \sum_{j=0}^{\nu_\lambda-1}
    \frac{t^j}{j!}(\Lp-\lambda I)^jP_\lambda.
  \)
  Summing over $\lambda\in\Sigma_\circ$, using
  $P_\circ=\sum_{\lambda\in\Sigma_\circ}P_\lambda$, and applying
  $e_x^\t(\cdot)g$ proves \eqref{eq:lsg}.

  Suppose now that $\lambda_{\kK}<s_\circ$. Exact corridor restriction gives
  $q_t(x,g;0)=e_{x,\kK}^\t e^{tL_{\kK}}g_{\kK}$ for every $t\geq0$, where
  $e_{x,\kK}$ and $g_{\kK}$ denote the restrictions to $\Xsf_{\kK}$. Taking
  Laplace transforms for $\Re z$ sufficiently large yields
  \begin{equation} \label{eq:cdcr}
    e_x^\t R(z;\Lp)g
    =
    e_{x,\kK}^\t R(z;L_{\kK})g_{\kK}.
  \end{equation}
  Both sides are rational functions of $z$ and agree on a right half-plane, so
  \eqref{eq:cdcr} extends as an identity wherever both sides are
  defined. For $\lambda\in\Sigma_\circ$, we have
  $\Re\lambda=s_\circ>\lambda_{\kK}=s(L_{\kK})$; hence
  $\lambda\notin\sigma(L_{\kK})$, and the right-hand side is holomorphic near
  $\lambda$. The Laurent expansion of the left side of \eqref{eq:cdcr} is
  \[
    e_x^\t R(z;\Lp)g
    =
    \sum_{j=0}^{\nu_\lambda-1}
    \frac{e_x^\t(\Lp-\lambda I)^jP_\lambda g}{(z-\lambda)^{j+1}}
    +h_\lambda(z),
  \]
  where $h_\lambda$ is holomorphic near $\lambda$. Holomorphy of the right-hand side
  of \eqref{eq:cdcr} therefore forces every coefficient in the
  principal part above to vanish. Equation~\eqref{eq:lsg} then gives
  $b_t(x,g;0)\equiv0$.
\end{proof}

\begin{proof}[Proof of Corollary~\ref{cor:noncommuting}]
  Fix a sufficiently small $\epsilon>0$. By Assumption~\ref{ass:edge},
  $\Lp(\epsilon)$ is irreducible. Theorem~\ref{thm:Pi-asym}, specifically the
  rank-one Perron limit \eqref{eq:r1lim}, applied to $\Lp(\epsilon)$,
  therefore gives
  \[
    \e^{-\lambda^*(\epsilon)t}q_t(x,g;\epsilon)
    \to
    \big(P_*(\epsilon)g\big)(x)
    =a_*(x,g;\epsilon)>0.
  \]
  Here strict positivity follows because the right and left Perron vectors of the
  irreducible Metzler matrix $\Lp(\epsilon)$ are strictly positive and $g\geq0$ is
  nonzero.
  Thus $q_t(x,g;\epsilon)=\e^{\lambda^*(\epsilon)t}
  (a_*(x,g;\epsilon)+o(1))$, and taking logarithms gives
  \[
    \lim_{t\to\infty}
    \left[-\frac1t\log q_t(x,g;\epsilon)\right]
    =-\lambda^*(\epsilon).
  \]
  Taking $\epsilon\downarrow0$ on both sides proves the first limit in
  \eqref{eq:noncommuting}.

  For the reverse order, fix $t>0$ and set
  $\Delta_\epsilon\coloneqq\Lp(\epsilon)-\Lp$. Duhamel's identity, also known as
  the variation-of-constants identity, gives
  \[
    \e^{t\Lp(\epsilon)}-\e^{t\Lp}
    =
    \int_0^t
    \e^{(t-s)\Lp(\epsilon)}\Delta_\epsilon \e^{s\Lp}\diff s.
  \]
  Since $\Lp(\epsilon)\to\Lp$, the two semigroup factors in the integrand are
  uniformly bounded for sufficiently small $\epsilon$ and $0\leq s\leq t$.
  Hence, for some $C_t<\infty$ independent of $\epsilon$,
  \(
    \|\e^{t\Lp(\epsilon)}-\e^{t\Lp}\| \leq C_t\|\Delta_\epsilon\|\to0.
  \)
  Applying this operator convergence to $g$ and evaluating at $x$ gives
  $q_t(x,g;\epsilon)\to q_t(x,g;0)$. The corridor is nonempty by hypothesis, so
  Proposition~\ref{prop:corridor}(ii) gives
  $q_t(x,g;0)>0$. Continuity of the logarithm therefore yields
  \[
    \lim_{\epsilon\downarrow0}
    \left[-\frac1t\log q_t(x,g;\epsilon)\right]
    =
    -\frac1t\log q_t(x,g;0).
  \]
  Finally, the yield expansion \eqref{eq:yield-clim} in
  Theorem~\ref{thm:pyc} implies that the right-hand side converges to
  $-\lambda(\Ccal,g)=-\lambda_{\kK}$ as $t\to\infty$. This proves the second limit in
  \eqref{eq:noncommuting}.
\end{proof}

\begin{proof}[Proof of Theorem~\ref{thm:crossover}]
  Under Assumption~\ref{ass:single-corridor-eigenvalue},
  $\Sigma_\circ=\{\lambda_\circ\}$ and
  $\lambda_\circ=\lambda_{\kK}$ is algebraically simple. Hence \eqref{eq:lsg}
  gives $b_t(x,g;0)=\e^{\lambda_\circ t}b_\circ(x,g;0)$. Moreover, $d>0$ makes
  the global coefficient $\big(P_*g\big)(x)$ vanish, while every remaining
  spectral term has real part strictly below $\lambda_\circ$. The full spectral
  expansion therefore gives
  \[
    q_t(x,g;0)
    =\e^{\lambda_\circ t}b_\circ(x,g;0)
      +o(\e^{\lambda_\circ t}).
  \]
  On the other hand, Theorem~\ref{thm:pyc} gives
  $q_t(x,g;0)=c_x(g)t^{\nu(\Ccal,g)-1}\e^{\lambda_\circ t}(1+o(1))$, with
  $c_x(g)>0$. Comparing the two expansions yields $\nu(\Ccal,g)=1$ and
  $b_\circ(x,g;0)=c_x(g)>0$. Combining
  \eqref{eq:usg} and \eqref{eq:corridor-exponential-term} proves
  \eqref{eq:uts}.
  
  For all small $\epsilon$, both coefficients are positive and
  $\lambda^*(\epsilon)>\lambda_\circ(\epsilon)$, so equality of the modes has the unique
  solution \eqref{eq:crossover-main}.
  By \eqref{eq:ggo} and \eqref{eq:corridor-eigenvalue-rates},
  \[
    \log\frac{b_\circ(x,g;\epsilon)}{a_*(x,g;\epsilon)}
    =d\log(1/\epsilon)+\log\frac{b_\circ(x,g;0)}{a}+o(1).
  \]
  Since $\Delta(\epsilon)\to\Delta>0$, substitution gives
  \eqref{eq:crossover-loading-correction} and
  $t_\epsilon^c\sim(d/\Delta)\log(1/\epsilon)$. In particular,
  $t_\epsilon^c\to\infty$, so the crossover maturity is positive for all
  sufficiently small $\epsilon$.

  For the crossover-window limit, the defining equality at $t_\epsilon^c$ implies
  that, for every $s\in\RR$,
  \[
    \frac{
      a_*(x,g;\epsilon)\,\e^{\lambda^*(\epsilon)(t_\epsilon^c+s)}
    }{
      b_\circ(x,g;\epsilon)\,\e^{\lambda_\circ(\epsilon)(t_\epsilon^c+s)}
    }
    =
    \e^{[\lambda^*(\epsilon)-\lambda_\circ(\epsilon)]s}.
  \]
  Since
  $\lambda^*(\epsilon)-\lambda_\circ(\epsilon)\to\Delta$, the right-hand side
  converges to $e^{\Delta s}$ uniformly for $s$ in compact sets. Moreover,
  $b_\circ(x,g;\epsilon)\to b_\circ(x,g;0)>0$ and, for all sufficiently small $\epsilon$,
  \(
    |\lambda_\circ(\epsilon)-\lambda_\circ|<\delta/2.
  \)
  Hence, for each fixed $S<\infty$ and all $|s|\leq S$,
  \[
    \frac{|R_{\epsilon,t_\epsilon^c+s}|}
         {b_\circ(x,g;\epsilon)\,\e^{\lambda_\circ(\epsilon)(t_\epsilon^c+s)}}
    \leq
    C_S \e^{-(\delta/2)(t_\epsilon^c-S)}
    \to 0.
  \]
  Dividing \eqref{eq:uts} by the corridor term therefore shows that the full price
  divided by that term converges uniformly to $1+\e^{\Delta s}$. Dividing the
  global-to-corridor ratio by this limit proves \eqref{eq:crossover}.

  For the refinement following the theorem, the eigenvalue bounds in
  \eqref{eq:ggo} and \eqref{eq:corridor-eigenvalue-rates} give
  $\Delta(\epsilon)=\Delta+O(r_L(\epsilon))$. Under
  \eqref{eq:crossover-rate-condition},
  \[
    \left(\frac{1}{\Delta(\epsilon)}-\frac{1}{\Delta}\right)
    \log(1/\epsilon)
    =O\!\left(r_L(\epsilon)\log(1/\epsilon)\right)=o(1).
  \]
  Replacing the denominator in \eqref{eq:crossover-loading-correction} by
  $\Delta$ therefore proves \eqref{eq:crossover-constant-correction}.
\end{proof}

We conclude this subsection with the fixed-recovery specialization used to prove
Corollaries~\ref{cor:rare-disaster-concentration} and
\ref{cor:rare-disaster-crossover}. Retain the notation from
Section~\ref{sec:rare-fixed-recovery}. For every sufficiently small $\epsilon>0$,
let $\phi_\epsilon,\psi_\epsilon\gg0$ be right and left Perron vectors of
$\Lp(\epsilon)$, normalized by $\1^\t\phi_{\epsilon,\mathcal D}=1$ and
$\psi_\epsilon^\t\phi_\epsilon=1$. Under this normalization,
$\pi_\epsilon^{\mathrm{LR}}=\phi_\epsilon\odot\psi_\epsilon$.

\begin{lemma}[Fixed-recovery first-order expansion]
\label{lem:fixed-recovery-expansion}
  Under Assumption~\ref{ass:illustration-near} and $\lambdaD>\lambdaN$, let
  $\bar\pi_{\mathcal N}$ be the stationary distribution of $Q$. Then
  \begin{equation} \label{eq:illustration-physical-mass}
    \pi_\epsilon^P(\mathcal D)
    =
    \epsilon c_P+o(\epsilon),
    \qquad
    c_P
    =
    -\bar\pi_{\mathcal N}^{\t}F U^{-1}\1_{\mathcal D}
    >0.
  \end{equation}
  Moreover,
  \begin{equation}
  \label{eq:illustration-loading-limit}
    \lambda^*(\epsilon)\to\lambdaD,
    \qquad
    \phi_{\epsilon,\mathcal D}\to\phiD,
    \qquad
    \epsilon^{-1}\phi_{\epsilon,\mathcal N}
    \to R(\lambdaD;\LpN)J\phiD\gg0,
  \end{equation}
  while $\psi_\epsilon\to\PsiD$, with $\PsiD$ defined in
  \eqref{eq:illustration-eigenvectors}.
\end{lemma}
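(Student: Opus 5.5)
The proof splits into a physical part and a pricing part. Both are treated as perturbations of the block structure in \eqref{eq:rare-fixed-recovery-generators}, and the pricing part is deduced from Lemmas~\ref{lem:rp*} and~\ref{lem:rp}.

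\emph{Physical mass.} Let $\pi_\epsilon^P=(p_\epsilon^\t,r_\epsilon^\t)$, with $p_\epsilon$ indexed by $\mathcal N$ and $r_\epsilon$ by $\mathcal D$. The disaster block of $\pi_\epsilon^{P\t}\Lx(\epsilon)=0$ reads $p_\epsilon^\t F_\epsilon+r_\epsilon^\t U=0$. Since $U$ is a transient subgenerator, it is invertible and $-U^{-1}\geq0$. Hence
\[
  r_\epsilon^\t=-p_\epsilon^\t F_\epsilon U^{-1}=O(\epsilon).
\]
The normal block gives $p_\epsilon^\t(Q-D_\epsilon)+r_\epsilon^\t H=0$. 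Letting $\epsilon\downarrow0$ shows that any limit point $p_0$ of $p_\epsilon$ satisfies $p_0^\t Q=0$ and $\1^\t p_0=1$. Irreducibility of $Q$ then forces $p_\epsilon\to\bar\pi_{\mathcal N}$, and we may use compactness of the simplex here. Substituting back gives $\epsilon^{-1}r_\epsilon^\t\1\to-\bar\pi_{\mathcal N}^\t FU^{-1}\1_{\mathcal D}$.

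For positivity, irreducibility of $U$ makes $-U^{-1}\gg0$. Since $F\geq0$ is nonzero and $\bar\pi_{\mathcal N}\gg0$, the row vector $\bar\pi_{\mathcal N}^\t F$ is nonzero and nonnegative, so $c_P>0$.

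\emph{Pricing vectors.} This part applies the rare-edge machinery with $\Ccal^*=\mathcal D$. Assumption~\ref{ass:edge} holds with $w_{\mathcal N\mathcal D}=1$, $w_{\mathcal D\mathcal N}=0$, and $\eta(\epsilon)=\epsilon^{\eta_F}$; the term $D_\epsilon=O(\epsilon)$ is absorbed into the diagonal error. Lemma~\ref{lem:rp*} gives $\lambda^*(\epsilon)\to\lambdaD$, $\phi_{\epsilon,\mathcal D}\to\phiD$ and $\psi_{\epsilon,\mathcal D}\to\psiD$, since the normalizations coincide.

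To identify the limits explicitly, I would not quote Lemma~\ref{lem:rp} abstractly but use the block eigen-equations directly. The normal block of $\Lp(\epsilon)\phi_\epsilon=\lambda^*(\epsilon)\phi_\epsilon$ gives
\[
  \phi_{\epsilon,\mathcal N}=R(\lambda^*(\epsilon);\LpN-D_\epsilon)\,E_\epsilon\,\phi_{\epsilon,\mathcal D}.
\]
For small $\epsilon$ we have $\lambda^*(\epsilon)>s(\LpN-D_\epsilon)$, so the resolvent is well defined and converges to $R(\lambdaD;\LpN)$. With $E_\epsilon=\epsilon J+o(\epsilon)$ this yields
\[
  \epsilon^{-1}\phi_{\epsilon,\mathcal N}\to R(\lambdaD;\LpN)J\phiD.
\]
This limit is strictly positive because $R(\lambdaD;\LpN)\gg0$ by Lemma~\ref{lem:res-exp} in the one-class case, and $J\phiD$ is nonzero and nonnegative: $J=F\odot\Gamma$ is nonzero with $\Gamma\gg0$, and $\phiD\gg0$.

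For the left vector, the normal-class block of the transposed equation gives
\[
  \psi_{\epsilon,\mathcal N}=R(\lambda^*(\epsilon);(\LpN-D_\epsilon)^\t)\,\LpDN^\t\,\psi_{\epsilon,\mathcal D}\to R(\lambdaD;\LpN^\t)\LpDN^\t\psiD.
\]
This is exactly the normal block of $\PsiD$ in \eqref{eq:illustration-eigenvectors}. Together with $\psi_{\epsilon,\mathcal D}\to\psiD$, it gives $\psi_\epsilon\to\PsiD$.

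\emph{Main obstacle.} The delicate point is the normalization of $\psi_\epsilon$. In the lemma it is tied to $\phi_\epsilon$ through $\psi_\epsilon^\t\phi_\epsilon=1$, which mixes the two blocks. The plan is to check that $\psi_\epsilon^\t\phi_\epsilon=\psi_{\epsilon,\mathcal D}^\t\phi_{\epsilon,\mathcal D}+O(\epsilon)$, which holds because $\phi_{\epsilon,\mathcal N}=O(\epsilon)$ and $\psi_\epsilon$ is bounded. The latter bound follows from the proof of Lemma~\ref{lem:rp*}, where $\psi_\epsilon^\t=e_x^\t P_*(\epsilon)/\phi_\epsilon(x)$ for a fixed $x\in\mathcal D$. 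With this estimate, the limits above follow from Lemma~\ref{lem:rp*} together with the uniform boundedness of the resolvents.
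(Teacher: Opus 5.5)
Your proposal is correct and follows essentially the paper's argument. The physical-mass part is identical. For the pricing vectors you use the same block Perron equations, with two variations: you import $\lambda^*(\epsilon)\to\lambdaD$ and the $\mathcal D$-block limits from Lemma~\ref{lem:rp*} (whose normalization already settles the issue you flag as the main obstacle), and you compute the $\mathcal N$-block of $\psi_\epsilon$ explicitly as $R(\lambda^*(\epsilon);(\LpN-D_\epsilon)^\t)\LpDN^\t\psi_{\epsilon,\mathcal D}$. The paper instead rederives these limits by a limit-point argument and continuity of the rank-one Riesz projection.

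One harmless slip: absorbing $D_\epsilon=O(\epsilon)$ into the diagonal error requires $\eta(\epsilon)=\epsilon^{\min\{\eta_F,1\}}$, not $\epsilon^{\eta_F}$. Your argument does not actually need Assumption~\ref{ass:edge}, though, since Lemma~\ref{lem:rp*} relies only on \eqref{eq:gvd} and the irreducibility and convergence of $\Lp(\epsilon)$.
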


\begin{proof}
  Write the physical stationary distribution conformably as
  \[
    (\pi_\epsilon^P)^\t
    =
    \bigl((\pi_{\epsilon,\mathcal N}^P)^\t,
          (\pi_{\epsilon,\mathcal D}^P)^\t\bigr).
  \]
  The disaster block of $(\pi_\epsilon^P)^\t\Lx(\epsilon)=0$ is
  \[
    (\pi_{\epsilon,\mathcal N}^P)^\t F_\epsilon
    +(\pi_{\epsilon,\mathcal D}^P)^\t U
    =0.
  \]
  Since $s(U)<0$, $U$ is invertible and
  \[
    (\pi_{\epsilon,\mathcal D}^P)^\t
    =
    -(\pi_{\epsilon,\mathcal N}^P)^\t F_\epsilon U^{-1}.
  \]
  Hence $\pi_\epsilon^P(\mathcal D)=O(\epsilon)$. The normal block of the stationary
  equation is
  \(
    (\pi_{\epsilon,\mathcal N}^P)^\t(Q-D_\epsilon)
    +(\pi_{\epsilon,\mathcal D}^P)^\t H=0.
  \)
  Since $D_\epsilon=O(\epsilon)$, every limit point of
  $\pi_{\epsilon,\mathcal N}^P$ has unit mass and is stationary for $Q$.
  Irreducibility therefore gives
  $\pi_{\epsilon,\mathcal N}^P\to\bar\pi_{\mathcal N}$. Dividing by $\epsilon$
  and using \eqref{eq:rare-entry-main} yields
  \[
    \epsilon^{-1}(\pi_{\epsilon,\mathcal D}^P)^\t
    \longrightarrow
    -\bar\pi_{\mathcal N}^\t F U^{-1}.
  \]
  Because $-U^{-1}=\int_0^\infty e^{tU}\,\diff t\gg0$, the limiting row vector is strictly positive and \eqref{eq:illustration-physical-mass} follows.
  
  Continuity of the spectral bound gives $\lambda^*(\epsilon)\to\lambdaD$, since
  $\lambdaD>\lambdaN$ is the unique spectral bound of $\Lp(0)$. The right Perron
  equation for \eqref{eq:rare-fixed-recovery-generators} is
  \begin{align}
    (\lambda^*(\epsilon) I_{\mathcal N}-\LpN+D_\epsilon)
    \phi_{\epsilon,\mathcal N}
    &=
    E_\epsilon\phi_{\epsilon,\mathcal D},
    \label{eq:proof-rare-first}\\
    \LpDN\phi_{\epsilon,\mathcal N}
    +\LpD\phi_{\epsilon,\mathcal D}
    &=
    \lambda^*(\epsilon)\phi_{\epsilon,\mathcal D}.
    \label{eq:proof-rare-second}
  \end{align}
  For small $\epsilon$, the inverse in the first equation exists and converges to
  $R(\lambdaD;\LpN)$. Since $E_\epsilon=O(\epsilon)$, we obtain
  $\phi_{\epsilon,\mathcal N}=O(\epsilon)$. By the second equation, every limit
  point of $\phi_{\epsilon,\mathcal D}$ is a right eigenvector of $\LpD$ at
  $\lambdaD$. The normalization $\1^\t\phi_{\epsilon,\mathcal D}=1$ and uniqueness
  of the normalized Perron vector therefore give
  $\phi_{\epsilon,\mathcal D}\to\phiD$. Dividing
  \eqref{eq:proof-rare-first} by $\epsilon$ and using
  $\epsilon^{-1}E_\epsilon\to J$ gives
  \[
    \epsilon^{-1}\phi_{\epsilon,\mathcal N}
    \longrightarrow
    R(\lambdaD;\LpN)J\phiD
    \gg0.
  \]
  
  Since $\lambdaD$ is a simple eigenvalue of $\Lp(0)$, continuity of its rank-one
  Riesz projection implies convergence of the corresponding left eigenspaces. We
  have $\phi_\epsilon\to\PhiD$, and the normalization
  $\psi_\epsilon^\t\phi_\epsilon=1$ therefore selects the limit $\PsiD$. Hence
  $\psi_\epsilon\to\PsiD$, which proves
  \eqref{eq:illustration-loading-limit}.
\end{proof}

\begin{proof}[Proof of Corollary~\ref{cor:rare-disaster-concentration}]
Equation~\eqref{eq:illustration-physical-mass} gives the physical-mass expansion
with $c_P>0$. Define, within this proof,
\(
  \bar\phi_{\mathcal N}
  =R(\lambdaD;\LpN)J\phiD
\)
and
\(
  c_{\mathrm{LR}}=(\PsiD)_{\mathcal N}^\t\bar\phi_{\mathcal N}.
\)
Lemma~\ref{lem:fixed-recovery-expansion} gives
$\bar\phi_{\mathcal N}\gg0$ and $\psi_{\epsilon,\mathcal N}\longrightarrow
(\PsiD)_{\mathcal N}\gg0$, so $c_{\mathrm{LR}}>0$ and
\[
  \epsilon^{-1}\pi_\epsilon^{\mathrm{LR}}(\mathcal N)
  =
  \epsilon^{-1}\psi_{\epsilon,\mathcal N}^\t
  \phi_{\epsilon,\mathcal N}
  \longrightarrow
  (\PsiD)_{\mathcal N}^\t\bar\phi_{\mathcal N}
  =c_{\mathrm{LR}}.
\]
Since $\pi_\epsilon^{\mathrm{LR}}$ has total mass one, its disaster mass is
$1-c_{\mathrm{LR}}\epsilon+o(\epsilon)$, as claimed.
\end{proof}

\begin{proof}[Proof of Corollary~\ref{cor:rare-disaster-crossover}]
Assumption~\ref{ass:illustration-near} satisfies Assumption~\ref{ass:edge} with
$\eta(\epsilon)=\epsilon^{\min\{\eta_F,1\}}$: the normal diagonal changes by
$O(\epsilon)$, the rare entry block opens at order $\epsilon$ with relative
remainder $O(\epsilon^{\eta_F})$, and recovery is a zero-weight edge. Moreover,
$r_L(\epsilon)=O(\epsilon)$ by
\eqref{eq:rare-fixed-recovery-generators}--\eqref{eq:rare-entry-main}.

For $x\in\mathcal N$ and $g=\1$, the rare entry edge gives
$d_{\sss \mathcal N\mathcal D}=1$, and the payoff-side term is zero because
$\mathcal D\in\sS(\1)$. Hence $d(\mathcal N,\1)=1$. The limiting corridor is
$\kK(\mathcal N,\1)=\{\mathcal N\}$ and has rate $\lambda_{\kK}=\lambdaN$.
Since $\LpN$ is irreducible, $\lambdaN$ is algebraically simple and strictly
dominates the real parts of its other eigenvalues. The spectrum of the limiting
block-triangular generator is
$\sigma(\LpN)\cup\sigma(\LpD)$, so
\eqref{eq:rds} makes $\lambdaN$ the unique eigenvalue,
other than $\lambdaD$, with maximal real part. Thus
Assumption~\ref{ass:single-corridor-eigenvalue} holds with
$\lambda^*=\lambdaD$ and $\lambda_{\kK}=\lambda_\circ=\lambdaN$.

Theorem~\ref{thm:crossover} now applies. Equations~\eqref{eq:ggo} and
\eqref{eq:corridor-eigenvalue-rates}, together with $r_L(\epsilon)=O(\epsilon)$,
give the strictly positive coefficients $a_x$ and $b_x$ in the statement and
\[
  a_*(x,\1;\epsilon)=a_x\epsilon(1+o(1)),
  \qquad
  b_\circ(x,\1;\epsilon)=b_x+O(\epsilon),
  \qquad
  \lambda^*(\epsilon)-\lambda_\circ(\epsilon)
  =\lambdaD-\lambdaN+O(\epsilon).
\]
Since $\epsilon\log(1/\epsilon)\to0$, condition
\eqref{eq:crossover-rate-condition} holds. Substituting $d=1$, $a=a_x$,
$b_\circ(x,\1;0)=b_x$, and $\Delta=\lambdaD-\lambdaN$ into
\eqref{eq:crossover-constant-correction} proves
\eqref{eq:rare-disaster-crossover-main}.
\end{proof}

\bibliographystyle{apalike}
\bibliography{main}

\end{document}